\newtheorem{theorem}{Theorem}
\newtheorem{definition}{Definition}
\newtheorem{lemma}{Lemma}
\newtheorem{corollary}{Corollary}
\newtheorem{proposition}{Proposition}
\newcounter{savesection}
\newcounter{apdxsection}
\renewcommand\appendix{\par
  \setcounter{savesection}{\value{section}}%
  \setcounter{section}{\value{apdxsection}}%
  \setcounter{subsection}{0}%
  \gdef\thesection{\@Alph\c@section}}
\newcommand\unappendix{\par
  \setcounter{apdxsection}{\value{section}}%
  \setcounter{section}{\value{savesection}}%
  \setcounter{subsection}{0}%
  \gdef\thesection{\@arabic\c@section}}
\begin{document}


\title{Characterizing symmetry-protected thermal equilibrium by work extraction}


\author{Yosuke Mitsuhashi}
\affiliation{Department of Applied Physics, The University of Tokyo, Tokyo 113-8656, Japan}

\author{Kazuya Kaneko}
\affiliation{Department of Applied Physics, The University of Tokyo, Tokyo 113-8656, Japan}

\author{Takahiro Sagawa}
\affiliation{Department of Applied Physics, The University of Tokyo, Tokyo 113-8656, Japan}
\affiliation{Quantum-Phase Electronics Center (QPEC), The University of Tokyo, Tokyo 113-8656, Japan}




\begin{abstract}
The second law of thermodynamics states that work cannot be extracted from thermal equilibrium,
whose quantum formulation is known as complete passivity;
A state is called completely passive if work cannot be extracted from any number of copies of the state by any unitary operations.
It has been established that a quantum state is completely passive if and only if it is a Gibbs ensemble.
In physically plausible setups, however, the class of possible operations is often restricted by fundamental constraints such as symmetries imposed on the system. 
In the present work, we investigate the concept of complete passivity under symmetry constraints.
Specifically, we prove that a quantum state is completely passive under a symmetry constraint described by a connected compact Lie group, if and only if it is a generalized Gibbs ensemble (GGE) including conserved charges associated with the symmetry.
Remarkably, our result applies to non-commutative symmetry such as $SU(2)$ symmetry, suggesting an unconventional extension of the notion of GGE.
Furthermore, we consider the setup where a quantum work storage is explicitly included, and prove that the characterization of complete passivity remains unchanged.
Our result extends the notion of thermal equilibrium to systems protected by symmetries, and would lead to flexible design principles of quantum heat engines and batteries.
Moreover, our approach serves as a foundation of the resource theory of thermodynamics in the presence of symmetries.
\end{abstract}



\maketitle

\section{Introduction}
The second law of thermodynamics, also known as \textit{Kelvin's principle}, dictates that a positive amount of work can never be extracted by any cyclic operation from a single heat bath at a uniform temperature \cite{Callen1985}.
Ever since its establishment in the nineteenth century, the second law has served as the most fundamental constraint on our capability of energy harvesting, which prohibits the perpetual motion of the second kind. 
In recent years, the frontier of thermodynamics is extended to the quantum regime due to the development of quantum technologies.
Quantum heat engines have been experimentally realized by quantum technologies such as ion traps \cite{An2015, Lindenfels2019}, superconducting qubits \cite{Cottet2017, Masuyama2018, Naghiloo2020}, and NMR \cite{Batalhao2014, Camati2016}, which have opened new opportunities of power generation by utilizing quantum effects.

On the theory side, quantum information theory sheds new light on quantum thermodynamics.
In particular, an information-theoretic framework called resource theory has attracted much attention \cite{Horodecki2013, Brandao2013, Aberg2013, Brandao2015, Weilenmann2016, Faist2018, Faist2019, Gour2015, Chitambar2019, Lostaglio2019, Sagawa2020}, which identifies work with resources and thermal equilibrium states with resource-free states.
From this perspective, a concept called \textit{passive} state plays a key role \cite{Pusz1978, Lenard1978}, from which positive work cannot be extracted by any unitary operation.
It is known, however, that one can extract a positive amount of work from multiple copies of a certain passive state, and thus the concept of passivity is not sufficient to characterize thermal equilibrium from which energy harvesting should be strictly prohibited.
The full characterization of thermal equilibrium is given by \textit{complete passivity}: A positive amount of work cannot be extracted from any number of copies of a completely passive state.
It is known that a state is completely passive \textit{if and only if} it is a Gibbs ensemble, which suggests that complete passivity provides a physically meaningful, as well as information-theoretically accurate, definition of thermal equilibrium.

In the above approach to characterize thermal equilibrium, a central assumption is that all unitary operations are allowed for work extraction. 
In real physical situations, on the other hand, several constraints are often imposed on possible unitary operations, which often make the class of physically plausible unitary operations strictly smaller than all unitary operations.
Among such constraints, we here focus on the symmetry of quantum systems.
There are various kinds of symmetry and the corresponding conservation laws \cite{Sakurai1985}, such as $U(1)$ symmetry and particle number conservation, $SU(2)$ symmetry and spin (magnetization) conservation, $\mathbb{Z}_2$ symmetry and parity conservation.

Once the class of possible unitary operations is restricted by such symmetries, thermal equilibrium states are no longer necessarily Gibbs ensembles.
In fact, there are some non-Gibbs ensembles from which one cannot extract positive work by symmetry-respecting unitaries.
In other words, a broader class of states looks like effective thermal equilibrium, as long as the symmetry is respected (see Fig.~\ref{fig1}).
We call this extended notion of thermal equilibrium as \textit{symmetry-protected} thermal equilibrium. 
In terms of resource theory~\cite{Takagi2019}, the above observation implies that the class of free states is expanded if the class of free operations is restricted. 
Therefore, the conventional Gibbs ensemble would be insufficient to represent all symmetry-protected thermal equilibrium states.  
Then, a natural question raised is: What are concrete expressions of symmetry-protected thermal equilibrium states?  More specifically, how should the notion of complete passivity be extended if only symmetry-respecting unitaries are considered?

\begin{figure}
\begin{center}
\includegraphics[width=\columnwidth]{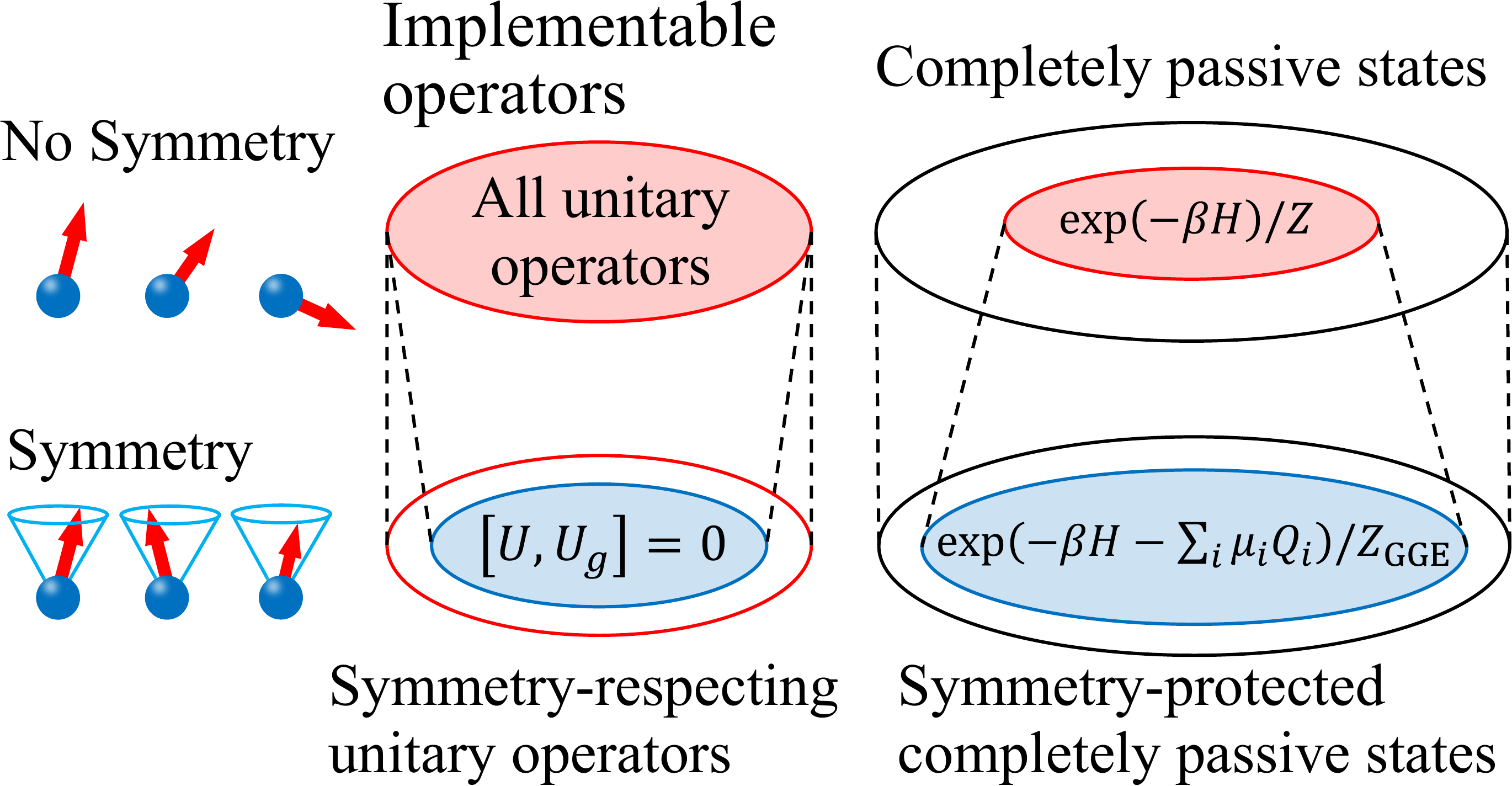}
\caption{Symmetry constraints on unitary operations and completely passive states. 
When the class of allowed operations is restricted, the class of quantum states from which positive work cannot be extracted is expanded.
This leads to the question to identify the states that behave as effective thermal equilibrium under symmetry constraints.
In this paper, we completely identify the class of such effective thermal equilibrium states, which we name as symmetry-protected thermal equilibrium states, by proving that those states are always given by GGEs of the form (\ref{GGE_main}), including the case that the conserved charges are non-commutative.  }
\label{fig1}
\end{center}
\end{figure}

In this paper, we answer this question by proving that 
a quantum state is completely passive under a symmetry constraint,
\textit{if and only if} it is a generalized Gibbs ensemble (GGE) that involves conserved charges associated with the symmetry.
This result is applicable to non-commutative symmetry such as $SU(2)$ symmetry, which leads to an unconventional extension of the notion of GGE.

More explicitly, consider a continuous symmetry  described by a connected compact Lie group, and let $Q_i$'s be its conserved charges such as the particle number operator or the spin operators.  
Then, what we prove is that any completely passive state is given in the form
\begin{equation}
    \rho_{\rm GGE} := \frac{1}{Z_{\rm GGE}} \exp\left( -\beta H - \sum_i \mu_i Q_i\right),
    \label{GGE_main}
\end{equation}
where $Z_\mathrm{GGE} :=\mathrm{tr}(\exp(-\beta H-\sum_i \mu_i Q_i))$ is the generalized partition function, $\beta \geq 0$ is the inverse temperature, and $\mu_i$'s are generalized ``chemical potentials.''
In the commutative $U(1)$ case, the GGE is reduced to the conventional grand canonical ensemble. 
Furthermore, we consider the setup where a quantum storage that stores work is explicitly introduced \cite{Aberg2014, Skrzypczyk2014}, and prove that the above characterization of symmetry-protected complete passivity remains unchanged from the setup where work is treated as a classical variable.

Our result establishes that any state other than the GGE is not completely passive and thus cannot be regarded as thermal equilibrium in terms of work extraction.
From the experimental point of view, this would lead to a more flexible design principle of quantum heat engines \cite{An2015, Lindenfels2019, Cottet2017, Masuyama2018, Naghiloo2020, Batalhao2014, Camati2016} and quantum batteries \cite{Alicki2013, Hovhannisyan2013, Campaioli2017, Andolina2019}.
For example, if one attempts to construct a heat engine operating under a symmetry constraint, the equilibrium state can be more flexibly chosen than in the conventional case, which is not necessarily the Gibbs ensemble but only needs to be a GGE.
From the theoretical point of view, our result would serve as a foundation of the resource theory of thermodynamics under symmetry constraints, as our result specifies the free states of such a resource theory.

In the context of thermalization, the GGE has been investigated as a state describing equilibration in integrable systems~\cite{Rigol2007, Rigol2006, Cazalilla2006, Vidmar2016}.
Most of previous works consider the case where conserved charges are commutative with each other (but see also Refs.~\cite{Halpern2020, Fukai2020}).
A non-commutative extension of the GGE has been proposed by Yunger Halpern \textit{et al.} in Ref.~\cite{Halpern2016}, and our result supports that the expression proposed by them is a proper form of the non-commutative GGE.
We emphasize, however, that our setup is fundamentally different from the setup of Ref.~\cite{Halpern2016};
In the present paper, we consider the purely energetic work extraction (instead of the chemical work extraction)
under a symmetry constraint that is imposed only on the system (instead of including charge storages).
Our setup is physically plausible given that heat engines and external systems are often very different (e.g., matter and light) \cite{An2015, Lindenfels2019, Cottet2017, Masuyama2018, Naghiloo2020, Batalhao2014, Camati2016}, where it would be natural to suppose that symmetries are imposed only on the system of interest.

This paper is organized as follows.
In Sec.~\ref{sec:setup}, we give the definition of symmetry-protected complete passivity and show our main theorem, stating that completely passive states under symmetry constraints are only GGEs.
Since the proof of this theorem is highly involved, we leave the full description of the proof to Supplemental Material.
Instead, we illustrate the physical implications of the theorem by some examples.
In Sec.~\ref{sec:work_storage_main}, we consider a setup including a quantum work storage.
In Sec.~\ref{sec:discussion}, we discuss the relation between the present study and other relevant previous studies. 
In Appendix~\ref{sec:passivity}, we show the condition for symmetry-protected passivity.
In Appendix~\ref{sec:proof}, we describe the full proof of the main theorem in Sec.~\ref{sec:setup} only for a simplest nontrivial example of a dimer model.
In Appendix~\ref{sec:discrete_symmetry}, we deal with the cases of some finite-group symmetries and time-reversal symmetry, where symmetry-protected completely passive states are only conventional Gibbs states.


\section{Setup and the main theorem} \label{sec:setup}

In this section, we discuss the setup and the main result of this paper.
In Sec.~\ref{subsec:symmetry}, we describe our definition of passivity and complete passivity under symmetry constraints. 
In Sec.~\ref{subsec:complete_passivity}, we present our main theorem, stating that any completely passive state under a symmetry constraint is a GGE.
In Sec.~\ref{subsec:examples}, we illustrate the setup and the theorem by some examples.

\subsection{Complete passivity under symmetry constraints}
\label{subsec:symmetry}

As a preliminary, we first formalize ordinary passivity without symmetry constraints~\cite{Pusz1978, Lenard1978}.
Let $H$ be the initial and final Hamiltonians of the system, which should be the same because we consider cyclic processes.
The time evolution of the system is represented by a unitary operator $U$.
Note that $U$ is an arbitrary unitary operator and is not necessarily given by $\exp(-\mathrm{i}tH)$, because the Hamiltonian can be time-dependent during the operation.
Then, the average work extracted from the system is defined as
\begin{align}
	W(\rho, U):=\mathrm{tr}(\rho H) -\mathrm{tr}\left(U\rho U^\dag H\right).
\label{work_def}
\end{align} 
Now, a state $\rho$ is called passive if $W(\rho, U)\leq0$ holds for all $U$. 
It is proved \cite{Pusz1978, Lenard1978} that a state is passive, if and only if the state $\rho$ is diagonal in the energy eigenbasis as $\rho=\sum_j p_j\ket{E_j}\bra{E_j}$, and the probabilities $\{p_j\}$ and the energy eigenvalues $\{E_j\}$ satisfy $p_1\geq p_2\geq\cdots$ and $E_1\leq E_2\leq\cdots$.

Even if a state $\rho$ is passive, there remains possibility to extract positive work from multiple copies of $\rho$. 
In such a case, $\rho$ cannot be regarded as truly thermal equilibrium, because one must not extract positive work from any number of copies of an equilibrium state.
We therefore define $\rho$ as completely passive, if $\rho^{\otimes N}$ is passive for all $N\in\mathbb{N}$.
It has been proved in Refs.~\cite{Pusz1978, Lenard1978} that $\rho$ is completely passive if and only if it is the Gibbs ensemble $\rho = e^{-\beta H} / Z$ for some $\beta \geq 0$.

In the foregoing conventional definition of passivity, any unitary operators are allowed to be implemented for work extraction.
In order to describe symmetry constraints, we will restrict the class of possible unitary operations in the following manner.
Consider a group $G$ that describes a symmetry, and fix a unitary representation of $G$. 
Let $U_g$ be the unitary labelled by $g \in G$, which should satisfy $U_g U_{g'}=U_{gg'}$ for any $g, g'\in G$. 
We do not assume that the unitary representation is irreducible, but technically, we assume that the representation is faithful.
(If a unitary representation is not faithful, the structure of $G$ is not fully represented by the representation. 
Physically, therefore, we can suppose that a unitary representation is faithful without loss of generality.)

When $G$ is a Lie group (a smooth continuous group), we can introduce the generators of the symmetry operators, which is the representation of the basis of the Lie algebra.
Physically, those generators are conserved charges $\{Q_i\}_{i=1}^n$, which are Hermitian operators linearly independent of each other.
If $G$ is connected and compact, they can generate all the symmetry operators as $U_g=\exp(\mathrm{i}\sum_{i=1}^n \alpha_i Q_i)$ with some $\alpha_1, \cdots, \alpha_n \in\mathbb{R}$ determined by $g$.
In the following, we assume that $G$ is a connected compact Lie group unless stated otherwise.

Now we say that a unitary $U$ respects the symmetry, if it commutes with all symmetry operators, that is, $[U, U_g]=0$ holds for all $g\in G$. Or equivalently, $U$ commutes with all conserved charges, that is, $[U, Q_i]=0$ holds for all $i$.
Moreover, we suppose that the Hamiltonian $H$ also respects the symmetry: $[H, U_g]=0$ for all $g\in G$, or equivalently $[H, Q_i]=0$ for all $i$.
We then define $\rho$ as \textit{symmetry-protected passive} if $W(\rho, U)\leq0$ for all symmetry-respecting unitaries $U$, where  $W(\rho, U)$ is defined by Eq.~\eqref{work_def}.

\begin{figure}
\begin{center}
\includegraphics[width=\columnwidth]{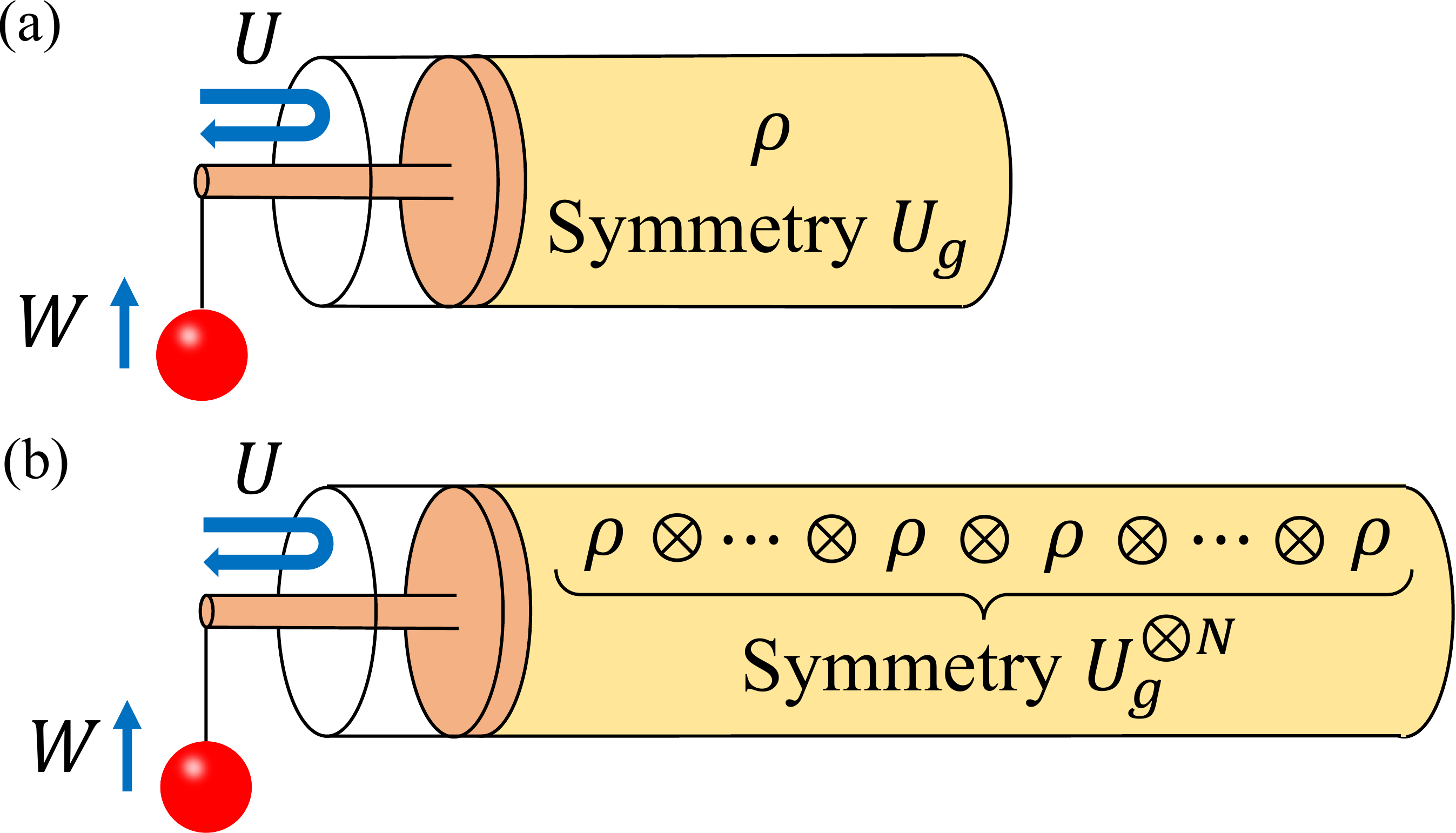}
\caption{Schematics of passivity and complete passivity under symmetry constraints.
The work $W$ is extracted by the unitary operation $U$.
(a) A state $\rho$ is called symmetry-protected passive, if one cannot extract positive work from a single state by any symmetry-respecting unitary $U$, which commutes with all $U_g$.
(b) A state $\rho$ is called symmetry-protected completely passive, if one cannot extract positive work from any number of copies of it by any symmetry-respecting unitary $U$, which commutes with all $U_g^{\otimes N}$ for all $N\in\mathbb{N}$.
Here, the symmetry is supposed to be global, i.e., the symmetry operations collectively act on all the copies.}
\label{fig2}
\end{center}
\end{figure}

It is more nontrivial to properly define complete passivity under symmetry constraints, because we need to specify the physically feasible class of symmetry operations on the multiple copies $\rho^{\otimes N}$. 
For this purpose, we here adopt the \textit{global} symmetry, which collectively acts on all the copies.
That is, we consider the unitary representation of the form $U_g^{\otimes N}$, by which all the copies are independently operated with the same symmetry operation $U_g$ (see also Fig.~\ref{fig2}).
Then, we say that an operator $U$ acting on $N$ copies respects the global symmetry, if it satisfies $[U , U_g^{\otimes N}] = 0$ for all $g\in G$.
We can also introduce the total charges 
\begin{align}
    Q_i^{(N)}:=\sum_{k=1}^N I^{\otimes k-1}\otimes Q_i \otimes I^{\otimes N-k} \label{eq:extensive_charge}
\end{align}
with $I$ being the identity operator.  
By using this notation, we can say that $U$ respects the global symmetry if the total charges are conserved: for all $i$,
\begin{equation}
    \left[U, Q_i^{(N)}\right] = 0.
    \label{total_charge_conservation}
\end{equation}

Meanwhile, the work extracted from the $N$ copies is defined by
\begin{equation}
    W^{(N)}(\rho^{\otimes N}, U)=\mathrm{tr}\left(\rho^{\otimes N} H^{(N)}\right)-\mathrm{tr}\left(U\rho^{\otimes N} U^\dag H^{(N)}\right),
    \label{multi_work}
\end{equation}
where 
\begin{align}
    H^{(N)} := \sum_{k=1}^N I^{\otimes k-1}\otimes H \otimes I^{\otimes N-k} \label{eq:extensive_Hamiltonian}
\end{align}
is the total Hamiltonian without interaction (as is the case for ordinary complete passivity).
We also suppose that the Hamiltonian $H$ respects the symmetry for individual copies, i.e., $[H , U_g ] = 0$ holds for all $g \in G$.
Finally, $\rho$ is called \textit{symmetry-protected completely passive} if $W^{(N)}(\rho^{\otimes N}, U)\leq 0$ holds for all $N \in \mathbb{N}$ and for all symmetry-respecting unitaries $U$ satisfying Eq.~\eqref{total_charge_conservation}.

\subsection{Main theorem} \label{subsec:complete_passivity}

We now state our main theorem of this paper, which gives the complete characterization of symmetry-protected complete passivity.
Note that the characterization of symmetry-protected (not complete) passivity is given in Appendix~\ref{sec:passivity}.

We first exclude the trivial situation where the Hamiltonian $H$ is of the form of $\alpha_0 I+ \sum_{i=1}^n \alpha_i Q_i$ with some $\alpha_0, \alpha_1, \cdots, \alpha_n \in\mathbb{R}$.
This is because in such a case, the energy is conserved under any symmetry-respecting unitary and thus all states are trivially completely passive.
We also suppose that  $\rho$ is positive-definite.
Then, the main theorem is stated as follows.

\begin{theorem} \label{thm:GFHCPmain}
	Let $G$ be a connected compact Lie group, $\{U_g\}_{g\in G}$ be its unitary (faithful) representation and $\{ Q_i \}$ be the corresponding conserved charges.
	A state $\rho$ is symmetry-protected completely passive with respect to a symmetry-respecting Hamiltonian $H$, if and only if $\rho$ is given by the GGE \eqref{GGE_main} with some $\beta \geq 0$ and $\mu_i \in \mathbb R$.
\end{theorem}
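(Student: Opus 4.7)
The ``if'' direction follows from ordinary passivity. If $\rho\propto\exp(-\beta H-\sum_i\mu_iQ_i)$ with $\beta\geq 0$, then $\rho^{\otimes N}$ is the unit-temperature Gibbs state for the effective operator $K^{(N)}:=\beta H^{(N)}+\sum_i\mu_iQ_i^{(N)}$ and is therefore passive with respect to $K^{(N)}$. For any symmetry-respecting $U$ we have $[U,Q_i^{(N)}]=0$, so conjugation by $U$ leaves each $\langle Q_i^{(N)}\rangle$ unchanged; subtracting these null changes from the $K^{(N)}$-passivity inequality yields $\beta\,W^{(N)}(\rho^{\otimes N},U)\le 0$, and hence $W^{(N)}\le 0$. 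The $\beta=0$ case is immediate, since then $\rho$ is a function only of the $Q_i$'s and is preserved by every symmetry-respecting $U$.

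For the ``only if'' direction my plan is to generalize the Pusz--Woronowicz--Lenard strategy around the isotypic decomposition $\mathcal{H}=\bigoplus_\lambda V_\lambda\otimes M_\lambda$ under $G$, in which $H$ acts as $I_{V_\lambda}\otimes h_\lambda$, $Q_i$ acts as $q_i^\lambda\otimes I_{M_\lambda}$, and symmetry-respecting unitaries take the form $\bigoplus_\lambda I_{V_\lambda}\otimes U_\lambda$. In the same decomposition the target GGE~\eqref{GGE_main} factorizes sector-wise as $e^{-\sum_i\mu_iq_i^\lambda}\otimes e^{-\beta h_\lambda}$, with a single $\beta$ and a single tuple $(\mu_i)$ shared across all $\lambda$. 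The goal thus reduces to extracting, from the passivity of every $\rho^{\otimes N}$, three structural features: block-diagonality in the isotypic decomposition, tensor-product factorization within each block, and log-linearity of the two factors with shared coefficients. I would first use infinitesimal symmetry-respecting perturbations at $N=1$ to show that the $G$-twirl of $\rho$ commutes with $H$ and that each multiplicity-space marginal is passive for $h_\lambda$; then at $N\geq 2$ I would invoke the refined isotypic decomposition of $\mathcal{H}^{\otimes N}$, whose global $G$-sectors mix different local configurations of single-copy irreps, to obtain cross-sector inequalities on the eigenvalues. A Cauchy-type functional-equation argument in the $N\to\infty$ limit finally pins $-\log\rho$ to be affine in $H$ and linear in the $Q_i$'s, with the common coefficients $\beta$ and $\mu_i$ forced by the cross-sector comparisons.

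The hardest part is the \emph{non-commutative} case. When the $Q_i$'s commute one can jointly diagonalize them with $H$ and the proof reduces to a combinatorial argument on classical probability distributions over joint eigenvalue tuples, closely mirroring the original Pusz--Woronowicz--Lenard proof with an enlarged label set. When the $Q_i$'s do not commute---$SU(2)$ being the paradigm---the GGE is not diagonal in any common eigenbasis of the charges, so the matching between $\rho$ and the candidate GGE must be carried out at the operator level on each $V_\lambda$. The delicate step is to show that a \emph{single} linear combination $\sum_i\mu_iq_i^\lambda$ governs the $V_\lambda$ factor for every irrep simultaneously; ruling out irrep-dependent ``chemical potentials'' appears to require mixings available only on many copies, and to rest on the faithfulness of the representation together with the connectedness of $G$, which ensures that $U_g=\exp(\mathrm{i}\sum_i\alpha_iQ_i)$ sweeps out the full symmetry group.
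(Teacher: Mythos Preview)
Your ``if'' direction is correct and essentially identical to the paper's argument.

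For the ``only if'' direction, your high-level plan is reasonable and you correctly locate the hard step, but there is a genuine gap: you have no concrete mechanism for the non-commutative case, and the one you hint at would not close it. The paper's key device is the Casimir-type operator $C=\sum_i Q_i\otimes Q^i$ on two copies (the dual basis taken with respect to the trace form). It constructs an explicit sequence of symmetry-respecting unitaries on $\rho^{\otimes(4m+3)}$, built from the spectral projections of $C$, swap operators between pairs of copies, and certain totally antisymmetric ``singlet-like'' reference states, whose extracted work is strictly positive for large $m$ unless $[\rho^{\otimes 2},C]=0$. A short partial-trace identity then converts $[\xi^{(2)},C]=0$ with $\xi=-\log\rho$ into $[\xi-\mathcal P(\xi),Q_i]=0$ for all $i$, where $\mathcal P$ is the Hilbert--Schmidt projection of $\xi$ onto $\mathrm{span}\{Q_i\}$. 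This single step simultaneously delivers block-diagonality, the tensor factorization, and---crucially---a \emph{single} $\lambda$-independent tuple $(\mu_i)$ for the $V_\lambda$ factor. In your sketch, nothing plays this role: passivity at $N=1$ constrains only the $G$-twirl of $\rho$ (the multiplicity-space marginals), and the ``cross-sector inequalities'' you envisage from the isotypic decomposition of $\mathcal H^{\otimes N}$ compare scalar weights of irreps but give no handle on the operator on $V_\lambda$; they cannot by themselves rule out the irrep-dependent chemical potentials you yourself flag as the obstacle.

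After the Casimir step one is reduced to $\rho=e^{-\sum_i\mu_iQ_i}\tau$ with $\tau$ commuting with $F(G)$, and only the abelian/center part remains. The paper handles this with a second, separate family of symmetry-respecting unitaries built from tensor products of antisymmetric ``determinant'' states $|\Phi(\bm n)\rangle$ that are simultaneous eigenvectors of $F^{\otimes L}(G)$ and $H^{(L)}$; a virtual-temperature comparison across such pairs then forces $-\log\tau$ to be affine in $H$ and in the generators of the connected center, with $\beta\ge 0$. Your Cauchy-type limit is in the right spirit for this abelian remainder, but note that it does not start until after the Casimir argument has already disposed of the semisimple part.
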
 

The mathematically rigorous proof of this theorem is presented in Supplemental Material (Theorem~S3).
In particular, the proof of the \textit{only if} part is quite complicated and requires advanced tools from mathematical theory of Lie groups.
However, we will describe the proof for a special example in Appendix~\ref{sec:proof}.

At this stage, we only mention the proof of the \textit{if} part, which is much easier than the \textit{only if} part.
That is, we here show that the GGE \eqref{GGE_main} is symmetry-protected completely passive.
To see this, we remark that for given $\beta, \mu_i$, the GGE \eqref{GGE_main} can be seen as the Gibbs ensemble of the ``Hamiltonian'' $H' := H - \beta^{-1}\sum_i \mu_i Q_i$ and is completely passive with respect to $H'$.
If unitary $U$ respects the symmetry, it does not change the expectation value of the second term of $H'$, and thus the extracted work defined by $H'$ and $H$ are the same.  
This implies that the GGE \eqref{GGE_main} is symmetry-protected completely passive with respect to $H$.
Note that here we did not use the assumption that $H$ also respects the symmetry.
The above argument is essentially the same as a part of Ref.~\cite{Halpern2016}, while in our setup, this kind of argument cannot be applied to the converse part (i.e., the main part of this paper).

Meanwhile, we can determine the parameters $\beta$ and $\{ \mu_i \}$ in the GGE \eqref{GGE_main} in the following manner.
First, we consider the Hilbert-Schmidt inner product and orthonormalize the conserved charges $I, H, Q_1, \cdots Q_n$ by the Gram-Schmidt orthonormalization into $I/\sqrt{d}, Q'_0, Q'_1, \cdots Q'_n$ that satisfy $\mathrm{tr}(Q'_i)=0$ and $\mathrm{tr}(Q'_i Q'_j)=\delta_{ij}$ for $i, j=0, 1, \cdots, n$, where $d$ is the dimension of the Hilbert space of the system.
Then, the GGE is written as $\rho_\mathrm{GGE}=\exp(-\sum_{i=0}^n \mu'_i Q'_i) / Z'_\mathrm{GGE}$ with $\mu'_0, \mu'_1, \cdots\mu'_n\in\mathbb{R}$ and the normalization constant $Z'_\mathrm{GGE}$.
Here, $\mu'_0, \mu'_1, \cdots\mu'_n$ can be regarded as the coefficients of the orthonormalized basis $\{Q'_i\}$ in $-\log(\rho_\mathrm{GGE})$, and are given by $\mu'_i=-\mathrm{tr}(\log(\rho_\mathrm{GGE}) Q'_i)$.
By expressing $Q'_i$ by a linear combination of $I$, $H$, and $Q_i$, we obtain  $\beta$ and $\{\mu_i\}$.

\subsection{Examples}
\label{subsec:examples}

We show some illustrative examples.
In the case of $G=U(1)$, there is a single charge $Q$.
It often describes the particle number $N$, where Eq.~\eqref{total_charge_conservation} means the conservation of total particle number (see Fig.~\ref{fig3} (a)).
We note that the particle number of an individual $\rho$ is not necessarily conserved, but that of the multiple copies is globally conserved.
In this case, Theorem~\ref{thm:GFHCPmain} states that a state is symmetry-protected completely passive, if and only if it is the grand canonical ensemble 
\begin{align}
    \rho_\mathrm{GGE}=\frac{1}{Z_\mathrm{GGE}}\exp(-\beta H-\mu N)
\end{align}
with inverse temperature $\beta>0$ and chemical potential $\mu\in\mathbb{R}$.

In the case of $G=SU(2)$, there are three charges $Q^x, Q^y, Q^z$.
They often describe the spin operators in the $x, y, z$-directions, where Eq.~\eqref{total_charge_conservation} means the total spin (magnetization) conservation in all the directions.
We write $\bm{Q}:=(Q^x, Q^y, Q^z)$.

\begin{figure}
\begin{center}
\includegraphics[width=\columnwidth]{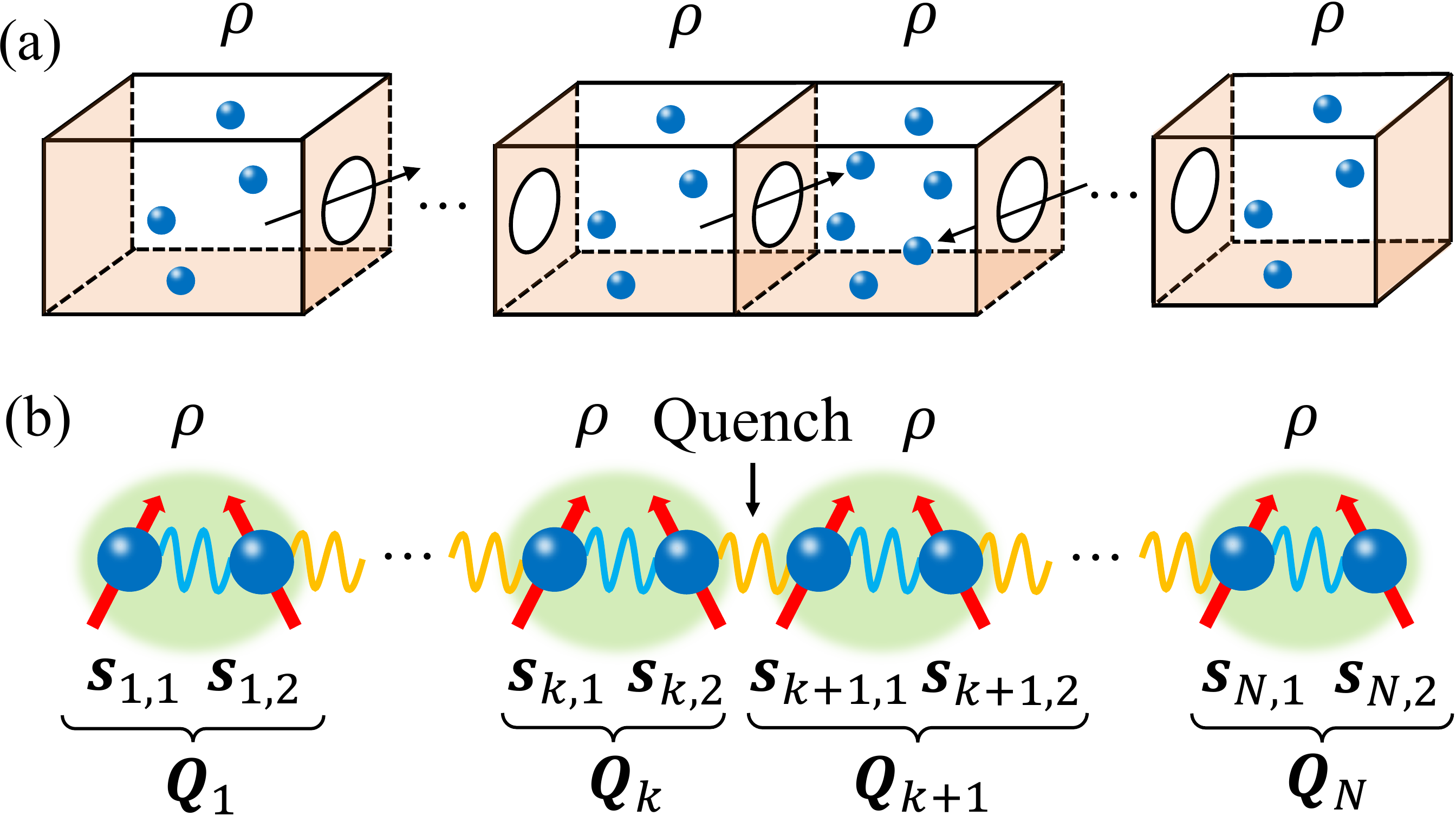}
\caption{Toy examples of multiple copies of the system under global symmetry constraints.
(a) $U(1)$ symmetry and the particle number conservation.
The total particle number should be conserved, while individual copies can exchange particles with each other.
(b) $SU(2)$ symmetry and the spin conservation.
Each copy of the system is represented by a ``dimer'' consisting of two spin-$1/2$ systems.
We consider the situation where the total spin is conserved in all the spatial directions, while each spin is not necessarily conserved.
Such symmetry-respecting interaction between the dimers can be induced, for example, by quenching the second term on the Hamiltonian~\eqref{dimer_Hamiltonian}.}
\label{fig3}
\end{center}
\end{figure}

In the following, let us elucidate the $SU(2)$ case by considering a ``dimer'' model (see Fig.~\ref{fig3}(b)). 
Suppose that the system consists of two spin-$1/2$ systems with the spin operators $\bm s_1, \bm s_2$.
The unitary representation of $SU(2)$ on this system is generated by $\bm{Q} := \bm s_1 \otimes I + I \otimes \bm s_2$, which consists of two irreducible sectors with total spin $0$ and $1$. 

We consider the Hamiltonian of the system with the isotropic Heisenberg-type interaction between the two spins, $H = \bm s_1 \cdot \bm s_2$. 
It is straightforward to check that $H$ commutes with all the components of $\bm{Q}$, implying the total spin conservation.
We consider $N$ copies of this system (i.e., $2N$ spin-$1/2$ systems).
Let $\bm s_{1, k}, \bm s_{2, k}$ be the spin operators of the $k$th copy.
The total Hamiltonian is given by $H^{(N)} = \sum_{k=1}^N \bm s_{1, k} \cdot \bm s_{2, k}$, which describes a trivial sequence of the dimers without interaction.

A simple example of symmetry-respecting unitaries is given by $U = \exp (-\mathrm{i} H')$ with $H'$ being the one-dimensional XXX Hamiltonian.
That is, $H'$ is obtained by quenching the interaction between the dimers:
\begin{align}
	H' = H^{(N)} + \sum_{k=1}^{N-1} \bm s_{2, k} \cdot \bm s_{1, k+1}. 
	\label{dimer_Hamiltonian}
\end{align}
It is again straightforward to check that $H'$ conserves the total spin of $N$ copies.

In this dimer example, Theorem~\ref{thm:GFHCPmain} states that a state is symmetry-protected completely passive, if and only if it is the GGE
\begin{align}
    \rho_\mathrm{GGE}=\frac{1}{Z_\mathrm{GGE}}\exp\left(-\beta H-\sum_{\alpha=x, y, z} \mu_\alpha Q^\alpha\right)
\end{align}
with  inverse temperature $\beta\geq0$ and generalized chemical potentials $\mu_x, \mu_y, \mu_z\in\mathbb{R}$.
We will give a full proof of Theorem~\ref{thm:GFHCPmain} for the case of this dimer model in Appendix~\ref{sec:proof}.

We note that the conserved charges already satisfy the orthogonal relation, and thus we only need to normalize the charges. 
Then, $\beta$ and $\mu_\alpha$'s are simply given by $\beta=-\mathrm{tr}(\log(\rho_\mathrm{GGE}) H)/\mathrm{tr}(H^2)=-4\mathrm{tr}(\log(\rho_\mathrm{GGE}) H)/3$, $\mu_\alpha=-\mathrm{tr}(\log(\rho_\mathrm{GGE}) Q^\alpha)/\mathrm{tr}((Q^{\alpha})^ 2)=-\mathrm{tr}(\log(\rho_\mathrm{GGE}) Q^\alpha)/2$.

\section{Role of work storage} \label{sec:work_storage_main}

So far, we have considered the setup where the  work is defined as the difference of the average energies of the system before and after a unitary operation.
On the other hand, there is another setup that reflects the first law of thermodynamics more explicitly~\cite{Horodecki2013, Skrzypczyk2013, Aberg2014, Brandao2013, Aberg2013, Skrzypczyk2014, Lieb1999,Malabarba2015}, where a quantum work storage is introduced in addition to the system of interest, and unitary operations on the total system should conserve the total energy. 
In this section, we consider this setup with the fully quantum treatment of the work storage.

In Sec.~\ref{subsec:qw_setup}, we describe our setup of the work storage.
In Sec.~\ref{subsec:qw_no_symmetry}, in the absence of symmetry constraints, we show the relationship between the maximal works with and without the work storage, which is of separate interest. 
In Sec.~\ref{subsec:qw_complete_passivity}, we prove that even in the presence of the work storage, a state is completely passive under a symmetry constraint if and only if it is a GGE, independently of the initial state of the work storage.

\subsection{Setup}
\label{subsec:qw_setup}


We introduce a work storage attached to the system of interest in line with Refs.~\cite{Lieb1999,Aberg2014, Skrzypczyk2014,Skrzypczyk2013,Malabarba2015}.  
The work storage is a continuous system described by position and momentum, and its Hamiltonian is given by the position operator $x$.

We impose the following two conditions on  implementable unitary operator $V$ that acts on the composite system including the work storage: I) $V$ conserves the total energy, i.e., 
\begin{equation}
    [V, H\otimes I+I\otimes x]=0;
\label{strict_conservation}
\end{equation}
II) $V$ is invariant under energy translation of the work storage, i.e., 
\begin{align}
    [V, I\otimes p]=0,
\end{align}
where $p$ is the momentum operator (the generator of energy translation) of the work storage.
Note that the canonical commutation relation is given by $[x,p] = {\rm i}$.

We also suppose that there is no correlation between the system and the work storage  in the initial state.
Then, we define the extracted work as the difference of the average energies of the work storage before and after an operation $V$:
\begin{align}
    &W^\mathrm{WS}(\rho, \rho_\mathrm{W}, V) \nonumber\\
    :=&\mathrm{tr}(V(\rho\otimes\rho_\mathrm{W})V^\dag(I\otimes x))-\mathrm{tr}((\rho\otimes\rho_\mathrm{W})(I\otimes x)),
\end{align}
where $\rho$ and $\rho_\mathrm{W}$ are the initial states of the system and the work storage, respectively.
From Condition II), the extracted work is invariant under energy translation of the initial state of the work storage.


We note that the average work extraction adopted here is different from a protocol investigated in Ref.~\cite{Horodecki2013}, which can be referred to as almost deterministic work extraction.
We also note that we adopted strict energy conservation (\ref{strict_conservation}) as in Refs.~\cite{Skrzypczyk2013, Aberg2014}, but not the average energy conservation in the sense of Ref.~\cite{Skrzypczyk2014}.


\subsection{Without symmetry constraints} \label{subsec:qw_no_symmetry}


We consider work extraction in the case without symmetry constraints.
We clarify the relationship between the maximal extracted work with and without work storage.
Moreover, we show that in the presence of the work storage, completely passive states are only Gibbs states.

First, we consider the correspondence between the implementable unitary operations in the setups with and without the work storage.
For a unitary operator $U$ on the system of interest, we can construct a unitary operator that acts on the composite system including the work storage and satisfies Conditions I) and II) by the following Kitaev construction~\cite{Kitaev2004}:
\begin{align}
	\mathcal{C}(U)=\int_{-\infty}^\infty dq\ e^{\mathrm{i}qH}U e^{-\mathrm{i}qH}\otimes\ket{q}\bra{q},
	\label{eq:unitary_correspondence}
\end{align}
where $\ket{q}$ represents the momentum eigenstate of the work storage with eigenvalue $q$.
This unitary operator is equivalent to the one that appears in Refs.~\cite{Skrzypczyk2013, Malabarba2015}.
It is shown in Ref.~\cite{Skrzypczyk2013} that all the unitary operators that satisfies Conditions I) and II) are represented by $\mathcal{C}(U)$ with some unitary operator $U$.
In Ref.~\cite{Kitaev2004}, the operator $\mathcal{C}(U)$ is introduced as an operator that simulates $U$ by giving the same action on $\rho$ if $\rho$ is symmetry-respecting.
However, we note that if the state $\rho$ is not symmetry-respecting, the action of $\mathcal{C}(U)$ on the system of interest is not necessarily the same as that of $U$, and moreover, the extracted work from $\rho\otimes\rho_\mathrm{W}$ by the action of $\mathcal{C}(U)$ is not necessarily the same as that from $\rho$ by the action of $U$.

\begin{figure}
\begin{center}
\includegraphics[width=\columnwidth]{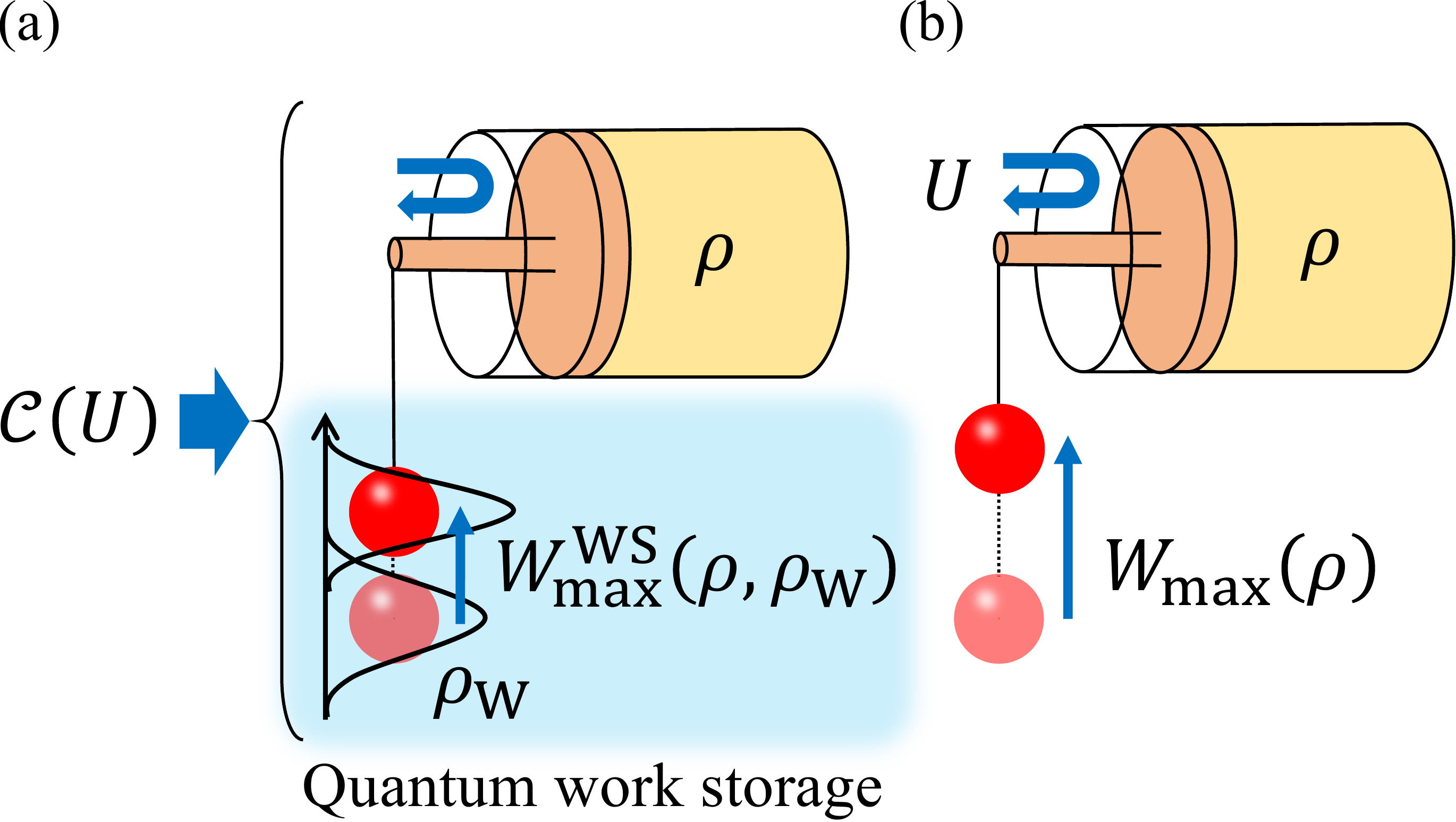}
\caption{Work extraction in the setups (a) with and (b) without the work storage.
$\mathcal{C}(U)$ is a unitary operator acting on the composite system of the system and the work storage, which is defined by a unitary operator $U$ acting only on the system through the Kitaev construction~\eqref{eq:unitary_correspondence}.
$\mathcal{C}(U)$ is constructed such that it satisfies energy conservation of the composite system and energy translation invariance of the work storage.
Moreover, $\mathcal{C}(U)$ gives a one-to-one correspondence between the possible operations in the setups with and without the work storage.
As shown in Eqs.~\eqref{eq:ergotropy} and \eqref{eq:ergotropy_inequalty}, the maximal extracted work with the work storage is no greater than that without the work storage.
If $U$ satisfies $[U^\dag HU, H]=0$, the extracted work from $\rho$ without the work storage is the same as that from $\rho$ with the work storage by unitary operation $\mathcal{C}(U)$ independently of the work storage $\rho_\mathrm{W}$, which is stated in Proposition~\ref{prop:extracted_work_correspondence}.
}
\label{fig6}
\end{center}
\end{figure}

By using the Kitaev construction (\ref{eq:unitary_correspondence}), we derive the relation between the maximal extracted works  with and without the work storage.
The maximal extracted work from a state $\rho$ in the setups with and without the work storage are respectively defined as
\begin{align}
    &W_\mathrm{max}^\mathrm{WS}(\rho, \rho_\mathrm{W}):=\max_{V} W^\mathrm{WS}(\rho, \rho_\mathrm{W}, V),\\
    &W_\mathrm{max}(\rho):=\max_{U} W(\rho, U),
\end{align}
where $V$ ranges over all the unitaries on the composite system that satisfy Conditions I) and II), and $U$ ranges over all the unitaries on the system of interest. 
Since any $V$ can be represented as $\mathcal{C}(U)$ for some $U$, $W_\mathrm{max}^\mathrm{WS}(\rho, \rho_\mathrm{W})$ is also written as 
\begin{align}
    W_\mathrm{max}^\mathrm{WS}(\rho, \rho_\mathrm{W})=\max_{U} W^\mathrm{WS}(\rho, \rho_\mathrm{W}, \mathcal{C}(U)).
    \label{eq:WS_ergotropy2}
\end{align}
From Eq.~\eqref{eq:unitary_correspondence}, the extracted work is given as
\begin{align} 
    &W^\mathrm{WS}(\rho, \rho_\mathrm{W}, \mathcal{C}(U)) \nonumber\\
    =&\int_{-\infty}^\infty dq\ \braket{q|\rho_\mathrm{W}|q}W\left(e^{-\mathrm{i}qH}\rho e^{\mathrm{i}qH}, U\right) \nonumber\\
    =&W\left(\mathcal{D}_{\rho_\mathrm{W}}(\rho), U\right),
    \label{eq:WS_ergotropy_relation}
\end{align}
where $\mathcal{D}_{\rho_\mathrm{W}}$ is defined as
\begin{align}
    \mathcal{D}_{\rho_\mathrm{W}}(\rho):=\int_{-\infty}^\infty dq\ \braket{q|\rho_\mathrm{W}|q} e^{-\mathrm{i}qH}\rho e^{\mathrm{i}qH}.
\end{align}
From Eqs.~\eqref{eq:WS_ergotropy2} and \eqref{eq:WS_ergotropy_relation}, we obtain
\begin{align}
    W_\mathrm{max}^\mathrm{WS}(\rho, \rho_\mathrm{W})
    =W_\mathrm{max}\left(\mathcal{D}_{\rho_\mathrm{W}}(\rho)\right).
    \label{eq:ergotropy}
\end{align}

We next prove that this value is no greater than the maximal extracted work from $\rho$ in the setup without the work storage:
\begin{align}
    W_\mathrm{max}\left(\mathcal{D}_{\rho_\mathrm{W}}(\rho)\right)\leq W_\mathrm{max}(\rho).
    \label{eq:ergotropy_inequalty}
\end{align}
This inequality is proved as follows:
\begin{align}
    &W_\mathrm{max}\left(\mathcal{D}_{\rho_\mathrm{W}}(\rho)\right) \nonumber\\
    =&W_\mathrm{max}\left(\int_{-\infty}^\infty dq\ \braket{q|\rho_\mathrm{W}|q} e^{-\mathrm{i}qH}\rho e^{\mathrm{i}qH}\right) \nonumber\\
    \leq&\int_{-\infty}^\infty dq\ \braket{q|\rho_\mathrm{W}|q} W_\mathrm{max}(e^{-\mathrm{i}qH}\rho e^{\mathrm{i}qH}) \nonumber\\
    =&\int_{-\infty}^\infty dq\ \braket{q|\rho_\mathrm{W}|q} W_\mathrm{max}(\rho) \nonumber\\
    =&W_\mathrm{max}(\rho),
\end{align}
where we used the concavity of $W_\mathrm{max} (\rho)$ to obtain the third line.
Here, the concavity of $W_\mathrm{max}(\rho)$ is shown as follows.
Take arbitrary $s\in[0, 1]$ and arbitrary states $\rho_1$ and $\rho_2$.
We take one of the unitary operators $U_0$ that extract the maximal extracted work from $\rho=s\rho_1+(1-s)\rho_2$.
Then, 
\begin{align}
    &W_\mathrm{max}(\rho)\nonumber\\
    =&\mathrm{tr}((\rho-U_0\rho U_0^\dag)H) \nonumber\\
    =&s\mathrm{tr}((\rho_1-U_0\rho_1 U_0^\dag)H)+(1-s)\mathrm{tr}((\rho_2-U_0\rho_2 U_0^\dag)H) \nonumber\\
    \leq& sW_\mathrm{max}(\rho_1)+(1-s)W_\mathrm{max}(\rho_2),
\end{align}
which implies the concavity of $W_\mathrm{max}(\rho)$.

We consider two extreme examples of $\mathcal{D}_{\rho_\mathrm{W}}$.
When $\rho_\mathrm{W}$ is a momentum eigenstate, $\mathcal{D}_{\rho_\mathrm{W}}$ is the identical mapping.
In this case, Eq.~\eqref{eq:ergotropy} states that $W_\mathrm{max}^\mathrm{WS}(\rho, \rho_\mathrm{W})=W_\mathrm{max}(\rho)$, i.e., the extracted works are the same in the setups with and without work storage, which is consistent with the result in Ref.~\cite{Aberg2014}. 
On the other hand, when $\rho_\mathrm{W}$ is a position eigenstate, $\mathcal{D}_{\rho_\mathrm{W}}$ is the dephasing mapping $\Delta$ defined as $\Delta(\rho):=\sum_E \Pi_E \rho\Pi_E$, where $\Pi_E$ is the projection operator onto the energy eigenspace of $E$.
In this case, Eq.~\eqref{eq:ergotropy} states that $W_\mathrm{max}^\mathrm{WS}(\rho, \rho_\mathrm{W})=W_\mathrm{max}(\Delta(\rho))$.
This means that we cannot extract work from coherence, which is reminiscent of a phenomenon called work locking~\cite{Horodecki2013}.
It is shown, however, in Refs.~\cite{Horodecki2013, Skrzypczyk2013} that if we have infinitely many copies of the state, we can again extract work from coherence.

We now prove that completely passive states are only Gibbs states in the presence of the work storage, which is of separate interest.
For that purpose, we prepare the following two propositions.

The first proposition states that in the setup without work storage, we can extract positive work from multiple copies of any state other than the Gibbs state, even if we further impose constraints $[U^\dag HU, H]=0$ on the possible operations $U$.
This is a stronger statement than the conventional characterization of complete passivity~\cite{Pusz1978, Lenard1978}.

\begin{proposition}
    Let $\rho$ be a state such that $W(\rho^{\otimes N}, U)\leq0$ holds
    for any $N\in\mathbb{N}$ and any unitary operator $U$ acting on $\rho^{\otimes N}$ satisfying $[U^\dag H^{(N)}U, H^{(N)}]=0$, where $H^{(N)}$ is defined by Eq.~\eqref{eq:extensive_Hamiltonian}.
    Then, $\rho$ is the Gibbs ensemble at positive temperature.
    \label{prop:U_complete}
\end{proposition}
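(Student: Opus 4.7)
The plan is to show that if $\rho$ fails to be the Gibbs ensemble $e^{-\beta H}/Z$, one can produce an $N$ and an admissible unitary $U$ on the $N$-copy system, with $[U^\dag H^{(N)} U, H^{(N)}]=0$, that extracts positive work. The key observation is that the commutator constraint still admits two broad classes of unitaries: (i) every $V$ with $[V, H^{(N)}]=0$, which then satisfies $V^\dag H^{(N)} V=H^{(N)}$; and (ii) every unitary that permutes an orthonormal basis diagonalizing $H^{(N)}$, for which $U^\dag H^{(N)} U$ is diagonal in the same basis. Composing a class-(i) unitary that diagonalizes $\rho^{\otimes N}$ within each eigenspace of $H^{(N)}$ with a class-(ii) permutation of the resulting common eigenbasis yields another admissible operation. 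A rearrangement-inequality argument then reduces passivity of $\rho^{\otimes N}$ under the constrained class to the spectral ordering
\begin{align}
\lambda_{\min}^{(N)}(E)\geq \lambda_{\max}^{(N)}(E')\quad \text{for all }E<E',
\end{align}
where $\lambda_{\min}^{(N)}(E)$ and $\lambda_{\max}^{(N)}(E)$ denote the smallest and largest eigenvalues of $\rho^{\otimes N}$ within the $E$-eigenspace of $H^{(N)}$.

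Specialized to the product energy basis (a canonical common diagonalizer of $H^{(N)}$), this recovers the multiplicative--additive inequality $\prod_k\rho_{i_ki_k}\geq\prod_k\rho_{j_kj_k}$ whenever $\sum_k E_{i_k}\leq\sum_k E_{j_k}$. The Pusz--Woronowicz--Lenard argument iterated over all $N$ then pins the energy-diagonal entries to the Gibbs form $\rho_{ii}=e^{-\beta E_i}/Z$ for some $\beta\geq 0$.

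The remaining step, which I expect to be the main obstacle, is to rule out all off-diagonal structure of $\rho$ in the energy basis. Suppose $\rho_{ab}\neq 0$ for some $E_a<E_b$. Restrict $\rho^{\otimes N}$ to the $H^{(N)}$-eigenspace of energy $(N-1)E_a+E_b$, using the $N$ product states $|E_a\rangle^{\otimes(k-1)}\otimes|E_b\rangle\otimes|E_a\rangle^{\otimes(N-k)}$ ($k=1,\dots,N$) as a basis. The resulting $N\times N$ matrix is $\rho_{aa}^{N-1}\rho_{bb}\,I_N+|\rho_{ab}|^2\rho_{aa}^{N-2}(J_N-I_N)$ with $J_N$ the all-ones matrix; its largest eigenvalue $\rho_{aa}^{N-1}\rho_{bb}+(N-1)|\rho_{ab}|^2\rho_{aa}^{N-2}$ must remain below $\rho_{aa}^N$, the sole eigenvalue on the $NE_a$-eigenspace (under the generic assumption that $NE_a$ is achieved only by the all-$a$ tuple). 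This gives $|\rho_{ab}|^2\leq \rho_{aa}(\rho_{aa}-\rho_{bb})/(N-1)$ for every $N$, hence $\rho_{ab}=0$. An analogous comparison, pitting $(\rho|_{\Pi_E})^{\otimes N}$ against a mixed block of total energy $(N-1)E+E'$ with $E'>E$, eliminates any non-scalar component of $\rho$ within a degenerate $H$-eigenspace via a spectral gap that grows geometrically in $N$. Combining, $\rho=e^{-\beta H}/Z$.
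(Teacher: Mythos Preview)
Your overall strategy is sound, and the key structural observation---that $U=PV$ with $[V,H^{(N)}]=0$ and $P$ a basis permutation satisfies $[U^\dag H^{(N)}U,H^{(N)}]=0$---is correct (since $P^\dag H^{(N)}P$ lies in the commutant of $H^{(N)}$ and conjugation by $V$ preserves that commutant). The spectral-ordering condition you derive is therefore a genuine consequence of the hypothesis, and the diagonal-entry argument via permutations of the product energy basis is valid.

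The gaps are in the off-diagonal step. First, your type-(a) bound is stated under a genericity hypothesis (``$NE_a$ is achieved only by the all-$a$ tuple'') that the proposition does not grant. In fact this hypothesis is unnecessary: once the diagonal entries are Gibbs, every product state in the $NE_a$-eigenspace has diagonal weight $\rho_{aa}^{N}$, so $\lambda_{\min}^{(N)}(NE_a)\le\rho_{aa}^{N}$ always, and your inequality $|\rho_{ab}|^{2}\le\rho_{aa}(\rho_{aa}-\rho_{bb})/(N-1)$ follows without genericity. You should say this rather than invoke an unproved assumption. Second, the type-(b) argument (non-scalar blocks on degenerate eigenspaces) is only sketched; ``a spectral gap that grows geometrically in $N$'' does not by itself yield a contradiction with the spectral ordering, because the comparison block at energy $(N-1)E+E'$ is not $(\rho|_{\Pi_E})^{\otimes(N-1)}\otimes\rho|_{\Pi_{E'}}$. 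A cleaner fix: the permutation argument works for \emph{every} orthonormal energy eigenbasis, so $\langle\psi|\rho|\psi\rangle=e^{-\beta E}/Z$ for all unit $|\psi\rangle$ in each $E$-eigenspace (the value of $\beta$ is pinned basis-independently by the trace of each block), forcing $\Pi_E\rho\Pi_E$ to be scalar.

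For comparison, the paper takes a different and more direct route for the off-diagonal part: it builds an explicit family $U_m$ on $2m+1$ copies from swap-(anti)symmetrizers $R_{ij}=\tfrac12(I-(-1)^iT)[\,|E_l\rangle\langle E_l|\otimes(I-|E_l\rangle\langle E_l|)\,](I-(-1)^jT)$ tensored with a two-level transition $|E_{k_i}\rangle\langle E_{k_j}|$, computes the extracted work in closed form, and reads off that nonpositivity for all $m$ forces $[\rho,|E_l\rangle\langle E_l|]=0$ for every $l$. This kills both inter- and intra-eigenspace coherences in one stroke, after which the virtual-temperature argument of Skrzypczyk--Silva--Brunner finishes the diagonal case. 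Your approach is more structural (the spectral ordering is a clean invariant), but as written it needs the two patches above to be complete; the paper's construction avoids those edge cases entirely.
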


The proof is the simplest case of that of Theorem~\ref{thm:GFHCPmain} without symmetry constraints (see also Appendix~\ref{sec:proof}).

\begin{proof}
We define a sequence of unitary operators $\{ U_m \}_{m \in \mathbb N}$ that satisfies $[U_m^\dag H^{(2m+1)}U_m, H^{(2m+1)}]=0$, and consider the extracted work from $\rho^{\otimes 2m+1}$ by  $U_m$.  
Since the Hamiltonian is not trivial, there exist energy eigenstates $\ket{E_{k_0}}$, $\ket{E_{k_1}}$ with different eigenvalues $E_{k_0}<E_{k_1}$.
For any $l$, we define $R_{ij}:=\frac{1}{2}(I-(-1)^i T)[\ket{E_l}\bra{E_l}\otimes(I-\ket{E_l}\bra{E_l})](I-(-1)^j T)$ with the swapping operator $T$ between two copies of the system.
We consider unitary operator $U_m:=I-\sum_{i,j\in\{0, 1\}} (-1)^{i-j} R_{ij}^{\otimes m}\otimes \ket{E_{k_i}}\bra{E_{k_j}}$, which satisfies $[U_m^\dag H^{(2m+1)}U_m, H^{(2m+1)}]=0$ and 
\begin{align}
    &H^{(2m+1)}-U_m^\dag H^{(2m+1)}U_m \nonumber\\
    =&(E_{k_1}-E_{k_0})(R_{11}^{\otimes m}\otimes\ket{E_{k_1}}\bra{E_{k_1}}-R_{00}^{\otimes m}\otimes\ket{E_{k_0}}\bra{E_{k_0}}).
\end{align}
Therefore, the extracted work is given by
\begin{align}
    &W(\rho^{\otimes 2m+1}, U_m) \nonumber\\
    =&\mathrm{tr}(\rho^{\otimes 2m+1}(H^{(2m+1)}-U_m^\dag H^{(2m+1)}U_m)) \nonumber\\
    =&(E_{k_1}-E_{k_0})\left[\mathrm{tr}(\rho^{\otimes 2}R_{11})^m\braket{E_{k_1}|\rho|E_{k_1}}\right. \nonumber\\
    &\ \ \ \ \ \ \ \ \ \ \ \ \ \ \  -\mathrm{tr}\left.(\rho^{\otimes 2}R_{00})^m\braket{E_{k_0}|\rho|E_{k_0}}\right] \nonumber\\
    =&a\left[1-b\left(\frac{\mathrm{tr}(\rho^{\otimes 2}R_{00})}{\mathrm{tr}(\rho^{\otimes 2}R_{11})}\right)^m\right] \nonumber\\
    =&a[1-b(1+c\|[\rho, \ket{E_l}\bra{E_l}]\|_\mathrm{HS}^2)^{-m}],
    \label{eq:ext_work}
\end{align}
where $a:=(E_{k_1}-E_{k_0})(\mathrm{tr}(\rho^{\otimes 2}R_{11}))^m\braket{E_{k_1}|\rho|E_{k_1}} > 0$, $b:=\braket{E_{k_0}|\rho|E_{k_0}}/\braket{E_{k_1}|\rho|E_{k_1}} > 0$, $c:=[\mathrm{tr}(\rho^{\otimes 2}R_{00})]^{-1} > 0$, $\|\cdot\|_\mathrm{HS}$ is the Hilbert-Schmidt norm, and we used 
\begin{align}
    \mathrm{tr}(\rho^{\otimes 2}R_{11})-\mathrm{tr}(\rho^{\otimes 2}R_{00})=\|[\rho, \ket{E_l}\bra{E_l}]\|_\mathrm{HS}^2.
\end{align}

From Eq.~\eqref{eq:ext_work}, if $W(\rho^{\otimes 2m+1}, U_m)\leq0$ holds for all $m$, then $\|[\rho, \ket{E_l}\bra{E_l}]\|_\mathrm{HS}=0$, i.e., $[\rho, \ket{E_l}\bra{E_l}]=0$ must be satisfied.
Therefore, $\rho$ can be written as $\rho=\sum_{l} p_l\ket{E_l}\bra{E_l}$ with $p_l\in(0, 1)$.
It is proved in Ref.~\cite{Skrzypczyk2015} that we can extract positive work from multiple copies of such a state by a unitary operator $U$ that satisfies $[U^\dag H^{(N)}U, H^{(N)}]=0$ for some $N\in\mathbb{N}$, unless the state is the Gibbs ensemble at positive temperature. 
\end{proof}

The second proposition states that if a unitary operator $U$ satisfies $[U^\dag HU, H]=0$, we can extract the same amount of work for both the cases where $U$ is implemented without the work storage and  $\mathcal{C}(U)$ is implemented  with the work storage, independently of its initial state.

\begin{proposition} \label{prop:extracted_work_correspondence}
If $U$ satisfies $[U^\dag HU, H]=0$, the extracted work from a state $\rho$ of the system of interest by the action of $U$ without the work storage is the same as that from the state $\rho$ with the work storage by the action of $\mathcal{C}(U)$ acting on the composite system with any state $\rho_\mathrm{W}$ of the work storage. 
\end{proposition}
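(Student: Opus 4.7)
The plan is to reduce the statement to the identity the paper has already established in Eq.~\eqref{eq:WS_ergotropy_relation},
\begin{align}
W^{\mathrm{WS}}(\rho,\rho_\mathrm{W},\mathcal{C}(U))=\int_{-\infty}^\infty dq\,\langle q|\rho_\mathrm{W}|q\rangle\,W\!\left(e^{-\mathrm{i}qH}\rho e^{\mathrm{i}qH},U\right),
\end{align}
and then to show that, under the hypothesis $[U^\dag HU,H]=0$, the integrand $W(e^{-\mathrm{i}qH}\rho e^{\mathrm{i}qH},U)$ is actually independent of $q$ and equal to $W(\rho,U)$. Since $\langle q|\rho_\mathrm{W}|q\rangle$ is a normalized probability density, pulling the $q$-independent factor out of the integral immediately yields $W^{\mathrm{WS}}(\rho,\rho_\mathrm{W},\mathcal{C}(U))=W(\rho,U)$, which is the desired conclusion.

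To prove that the integrand is $q$-independent, I would expand $W(e^{-\mathrm{i}qH}\rho e^{\mathrm{i}qH},U)$ using definition \eqref{work_def} and treat the two terms separately. The first term, $\mathrm{tr}(e^{-\mathrm{i}qH}\rho e^{\mathrm{i}qH}H)$, is trivially equal to $\mathrm{tr}(\rho H)$ because $H$ commutes with $e^{\pm\mathrm{i}qH}$ and the trace is cyclic. The nontrivial step is for the second term, $\mathrm{tr}(Ue^{-\mathrm{i}qH}\rho e^{\mathrm{i}qH}U^\dag H)$. After cycling the trace, this can be rewritten as $\mathrm{tr}(\rho\, e^{\mathrm{i}qH}U^\dag HU\, e^{-\mathrm{i}qH})$. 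Now the hypothesis $[U^\dag HU,H]=0$ implies that $U^\dag HU$ also commutes with $e^{\pm\mathrm{i}qH}$, so the conjugation by $e^{\pm\mathrm{i}qH}$ is trivial, leaving $\mathrm{tr}(\rho\,U^\dag HU)=\mathrm{tr}(U\rho U^\dag H)$.

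The main conceptual step is thus recognizing that the condition $[U^\dag HU,H]=0$ is precisely the statement that the Heisenberg-picture Hamiltonian $U^\dag HU$ is simultaneously diagonalizable with $H$, which is exactly what is needed to commute it through the free-evolution phase factors coming from the Kitaev construction \eqref{eq:unitary_correspondence}. I would not expect any serious obstacle: the proof is a short two-line calculation once the decohered-state formula \eqref{eq:WS_ergotropy_relation} is invoked. The only subtlety worth flagging is that, in general, the action of $\mathcal{C}(U)$ on $\rho\otimes\rho_\mathrm{W}$ depends on the coherences of $\rho_\mathrm{W}$ in the energy basis (a fact that underlies work locking), but the condition $[U^\dag HU,H]=0$ is strong enough to wipe out exactly that dependence, which is why the conclusion holds for every choice of $\rho_\mathrm{W}$.
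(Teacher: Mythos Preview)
Your proposal is correct and essentially identical to the paper's own proof: the paper also invokes Eq.~\eqref{eq:WS_ergotropy_relation}, then shows $W(e^{-\mathrm{i}qH}\rho e^{\mathrm{i}qH},U)=\mathrm{tr}\!\left(\rho\, e^{\mathrm{i}qH}(H-U^\dag HU)e^{-\mathrm{i}qH}\right)=\mathrm{tr}\!\left(\rho(H-U^\dag HU)\right)=W(\rho,U)$ using $[U^\dag HU,H]=0$, and integrates against the normalized density $\langle q|\rho_\mathrm{W}|q\rangle$. The only cosmetic difference is that the paper handles the two terms of the work as the single combination $H-U^\dag HU$ rather than separately.
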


It is proved in Ref.~\cite{Malabarba2015} that if a work-extracting unitary operator $U$ only permutes energy eigenstates, the extracted work  without the work storage equals  the corresponding extracted work with the work storage.
Since we find that a unitary operator $U$ satisfies $[U^\dag HU, H]=0$ if and only if $U$ only permutes energy eigenstates, we can prove Proposition~\ref{prop:extracted_work_correspondence}.
However, we can also prove it without using the fact that $U$ only permutes energy eigenstates, as shown in the direct proof below.

\begin{proof}
    Since $U$ satisfies $[U^\dag HU, H]=0$, we obtain for any $q\in\mathbb{R}$, 
    \begin{align}
        W\left(e^{-\mathrm{i}qH}\rho e^{\mathrm{i}qH}, U\right)&=\mathrm{tr}\left(\rho e^{\mathrm{i}qH}(H-U^\dag HU)e^{-\mathrm{i}qH}\right)\nonumber\\
        &=\mathrm{tr}\left(\rho (H-U^\dag HU)\right)\nonumber\\
        &=W(\rho, U).
    \end{align}
    Therefore, for any $\rho_\mathrm{W}$, we get
    \begin{align}
        &W^\mathrm{WS}(\rho, \rho_\mathrm{W}, \mathcal{C}(U))\nonumber\\
        =&\int_{-\infty}^\infty dq\ \braket{q|\rho_\mathrm{W}|q}W\left(e^{-\mathrm{i}qH}\rho e^{\mathrm{i}qH}, U\right)\nonumber\\
        =&\int_{-\infty}^\infty dq\ \braket{q|\rho_\mathrm{W}|q}W(\rho, U)\nonumber\\
        =&W(\rho, U).
    \end{align}
\end{proof}

From Proposition~\ref{prop:U_complete} and Proposition~\ref{prop:extracted_work_correspondence}, if a state is completely passive in the setup with the work storage, the state is the Gibbs ensemble independently of the initial state of the work storage.
The converse is obvious  from Eqs.~\eqref{eq:ergotropy}~and~\eqref{eq:ergotropy_inequalty}.
Therefore, we finally obtain the following proposition.

\begin{proposition} 
\label{prop:WS_complete_passivity}
     For any initial state $\rho_\mathrm{W}$ of the work storage,
     a state $\rho$ of the system of interest is completely passive in the setup with the work storage, if and only if the state $\rho$ is the Gibbs ensemble at positive temperature.
\end{proposition}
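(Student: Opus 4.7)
The plan is to prove the two directions separately by combining Proposition~\ref{prop:U_complete} with Proposition~\ref{prop:extracted_work_correspondence} for the ``only if'' part, and by invoking the chain of identities and inequalities~\eqref{eq:ergotropy} and~\eqref{eq:ergotropy_inequalty}, extended to multiple copies, for the ``if'' part. The payoff of this two-step structure is that the storage dependence is eliminated in both directions by the same mechanism: Proposition~\ref{prop:extracted_work_correspondence} says the storage is irrelevant for the specific class of unitaries used to prove Proposition~\ref{prop:U_complete}, while Eq.~\eqref{eq:ergotropy_inequalty} says that the storage can only reduce the maximal extractable work.

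For the ``only if'' direction, I would assume that $\rho$ is completely passive in the setup with the work storage, so that $W^{\mathrm{WS}}(\rho^{\otimes N}, \rho_\mathrm{W}, V) \leq 0$ for every $N \in \mathbb{N}$ and every $V$ on the composite of $N$ copies and the storage satisfying Conditions~I) and~II). Since every such $V$ is of the form $\mathcal{C}(U)$ for some unitary $U$ acting on $\rho^{\otimes N}$, the assumption in particular holds whenever $[U^\dag H^{(N)} U, H^{(N)}] = 0$. For such $U$, Proposition~\ref{prop:extracted_work_correspondence}, applied with $H^{(N)}$ in place of $H$, implies that the work extracted with the storage equals $W(\rho^{\otimes N}, U)$ independently of $\rho_\mathrm{W}$. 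Hence $W(\rho^{\otimes N}, U) \leq 0$ for every $N$ and every such $U$, and Proposition~\ref{prop:U_complete} forces $\rho$ to be the Gibbs ensemble at positive temperature.

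For the ``if'' direction, suppose $\rho = e^{-\beta H} / Z$ with $\beta > 0$. The derivation of Eqs.~\eqref{eq:ergotropy} and~\eqref{eq:ergotropy_inequalty} applies verbatim with $H$ replaced by $H^{(N)}$ and $\rho$ replaced by $\rho^{\otimes N}$, since it only uses the Kitaev decomposition~\eqref{eq:unitary_correspondence}, the diagonal expansion of $\rho_\mathrm{W}$ in momentum, and concavity of $\sigma \mapsto W_\mathrm{max}(\sigma)$. This yields $W^{\mathrm{WS}}_\mathrm{max}(\rho^{\otimes N}, \rho_\mathrm{W}) \leq W_\mathrm{max}(\rho^{\otimes N})$, and ordinary complete passivity of the Gibbs state makes the right-hand side non-positive. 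Thus $W^{\mathrm{WS}}(\rho^{\otimes N}, \rho_\mathrm{W}, V) \leq 0$ for all admissible $V$ and all $N$, which is exactly complete passivity in the work-storage setup.

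The main technical point, rather than a genuine obstacle, is the tacit lift of the single-copy identities \eqref{eq:ergotropy} and \eqref{eq:ergotropy_inequalty} to the $N$-copy setting with a single shared work storage; one has to verify that the Kitaev construction and the concavity argument still go through when the system Hamiltonian is the extensive $H^{(N)}$ rather than $H$, but this is routine since $e^{-\mathrm{i}q H^{(N)}}$ factorizes across copies while the storage remains a single register. Apart from this bookkeeping, the argument is a clean assembly of Propositions~\ref{prop:U_complete} and~\ref{prop:extracted_work_correspondence} on one side and of the chain~\eqref{eq:ergotropy}--\eqref{eq:ergotropy_inequalty} on the other, and no additional structural ingredient seems needed.
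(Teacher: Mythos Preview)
Your proof is correct and follows exactly the approach the paper takes: the ``only if'' direction combines Proposition~\ref{prop:U_complete} with Proposition~\ref{prop:extracted_work_correspondence}, and the ``if'' direction invokes Eqs.~\eqref{eq:ergotropy} and~\eqref{eq:ergotropy_inequalty}. Your explicit remark about lifting to $N$ copies with a single storage just spells out a step the paper leaves implicit.
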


We here remark on the relation between the above proposition and some known results.
In fact, essentially the same result has been proved in Ref.~\cite{Horodecki2013} by allowing for the introduction of any number of auxiliary heat baths, while in our setup we do not allow for it.
Ref.~\cite{Sparaciari2017} also addresses a similar problem without allowing for auxiliary heat baths but by assuming that the work storage is initially in the uniform superposition of energy eigenstates, while in our setup we consider an arbitrary initial state of the work storage.

\subsection{With symmetry constraints} \label{subsec:qw_complete_passivity}


We consider symmetry-protected completely passive states in the setup with the work storage and show that they are only GGEs.
Since we consider the work extraction that is purely defined by the energy, we introduce the work storage that only stores the energy but does not store other conserved charges associated with symmetry constraints, in contrast to Ref.~\cite{Halpern2016}.
For example, we can imagine the situation where the system consists of atoms with the conserved particle number, which is coupled to light in a cavity as an external system.
In such a situation, it is natural to impose $U(1)$ symmetry only on the system of interest, instead of the total system.

We consider a symmetry-respecting unitary $V$ in the setup with the work storage, which is supposed to satisfy not only Conditions I) and II) mentioned in Sec.~\ref{subsec:qw_setup}, but also the following: III) $V$ respects the symmetry of the system of interest, i.e., 
\begin{align}
    [V, U_g\otimes I]=0.
\end{align}
This reflects the fact that the work storage does not store the conserved charges.
Then, we define that a state is symmetry-protected completely passive, if we cannot extract positive work from any number of multiple copies of the state by any unitary $V$ that satisfies Conditions I), II) and III).

The condition for symmetry-protected complete passivity in the setup with the work storage is now stated as the following theorem.

\begin{theorem} 
\label{thm:WS_GFHCP}
    For any initial state of the work storage, a state of the system of interest is symmetry-protected completely passive in the setup with the work storage, if and only if the state is the GGE. 
\end{theorem}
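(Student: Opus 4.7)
The plan is to reduce the problem to Theorem~\ref{thm:GFHCPmain} via the Kitaev correspondence \eqref{eq:unitary_correspondence}, in the same spirit that Proposition~\ref{prop:WS_complete_passivity} is deduced from Proposition~\ref{prop:U_complete} in Sec.~\ref{subsec:qw_no_symmetry}. The first step is to characterize the admissible unitaries on the composite system: by the result of Ref.~\cite{Skrzypczyk2013}, Conditions I and II force $V = \mathcal{C}(U)$ for some unitary $U$ acting on the $N$ copies of the system. Since $[H, U_g] = 0$ implies that $e^{\mathrm{i}qH^{(N)}}$ commutes with $U_g^{\otimes N}$, expanding $[V, U_g^{\otimes N} \otimes I]$ in the momentum eigenbasis shows that Condition III is equivalent to $[U, U_g^{\otimes N}] = 0$; that is, admissible $V$'s are exactly the $\mathcal{C}(U)$'s with $U$ symmetry-respecting on the $N$ copies.

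For the \emph{if} direction, suppose $\rho = \rho_\mathrm{GGE}$. Since $[H, Q_i] = 0$ for all $i$, the GGE commutes with $H$, hence $\rho_\mathrm{GGE}^{\otimes N}$ commutes with $H^{(N)}$. Therefore $e^{-\mathrm{i}qH^{(N)}} \rho_\mathrm{GGE}^{\otimes N} e^{\mathrm{i}qH^{(N)}} = \rho_\mathrm{GGE}^{\otimes N}$ for every $q$, so $\mathcal{D}_{\rho_\mathrm{W}}(\rho_\mathrm{GGE}^{\otimes N}) = \rho_\mathrm{GGE}^{\otimes N}$ for every initial work-storage state $\rho_\mathrm{W}$. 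Combining this with \eqref{eq:WS_ergotropy_relation} gives $W^\mathrm{WS}(\rho_\mathrm{GGE}^{\otimes N}, \rho_\mathrm{W}, \mathcal{C}(U)) = W(\rho_\mathrm{GGE}^{\otimes N}, U)$, which is non-positive for every symmetry-respecting $U$ by Theorem~\ref{thm:GFHCPmain}.

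For the \emph{only if} direction, I would restrict to the subclass of symmetry-respecting unitaries $U$ that additionally satisfy $[U^\dag H^{(N)} U, H^{(N)}] = 0$, i.e., that permute the energy eigenspaces of $H^{(N)}$. For such $U$, the proof of Proposition~\ref{prop:extracted_work_correspondence} applies verbatim with $H$ replaced by $H^{(N)}$, yielding $W^\mathrm{WS}(\rho^{\otimes N}, \rho_\mathrm{W}, \mathcal{C}(U)) = W(\rho^{\otimes N}, U)$ independently of $\rho_\mathrm{W}$. Consequently, if $\rho$ is symmetry-protected completely passive in the presence of the work storage, then $W(\rho^{\otimes N}, U) \leq 0$ for all $N$ and all energy-permuting symmetry-respecting $U$.

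The main obstacle is the final step: showing that this restricted form of passivity already pins $\rho$ down as a GGE. What is needed is a strengthened version of Theorem~\ref{thm:GFHCPmain} in which the class of work-extracting unitaries is narrowed to those satisfying $[U^\dag H^{(N)} U, H^{(N)}] = 0$, exactly in parallel with how Proposition~\ref{prop:U_complete} strengthens Pusz--Lenard in the non-symmetric case. I expect this to be obtainable by revisiting the proof of Theorem~\ref{thm:GFHCPmain} in the Supplemental Material and verifying that its explicit work-extracting constructions — built out of swap operators, projectors onto simultaneous eigenspaces of $H$ and a mutually commuting Cartan subalgebra of the charges, and phase-compensating factors — can be arranged to commute with $H^{(N)}$ in the same manner as the unitaries used in Proposition~\ref{prop:U_complete}. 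Once such a strengthened Theorem~\ref{thm:GFHCPmain} is in hand, the three implications above close and Theorem~\ref{thm:WS_GFHCP} follows.
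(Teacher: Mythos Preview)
Your proposal is correct and follows the paper's own argument: the \emph{only if} direction hinges on the strengthened Theorem~\ref{thm:GFHCPmain} you anticipate, which is exactly Proposition~S8 of the Supplemental Material (for any non-GGE $\rho$ it exhibits a symmetry-respecting $U$ with $[U^\dag H^{(N)} U, H^{(N)}]=0$ and $W(\rho^{\otimes N},U)>0$), and the transfer to the work-storage setup then goes through Proposition~\ref{prop:extracted_work_correspondence} together with the observation that $\mathcal{C}$ preserves the symmetry constraint (Lemma~S12). One small correction: your closing phrase ``commute with $H^{(N)}$'' should read $[U^\dag H^{(N)} U, H^{(N)}]=0$ as you wrote earlier (a $U$ commuting with $H^{(N)}$ would extract zero work), and the Supplemental-Material constructions project onto eigenspaces of a Casimir-type operator $C=\sum_i f(V_i)\otimes f(V^i)$ rather than of a Cartan subalgebra---but your expectation that the resulting unitaries satisfy the energy-permuting condition is borne out (Lemma~S8).
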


We can prove Theorem~\ref{thm:WS_GFHCP} in the following manner.
From Eqs.~\eqref{eq:ergotropy}~and~\eqref{eq:ergotropy_inequalty}, it is obvious that we cannot extract positive work from multiple copies of the GGE  with the work storage under symmetry constraints.
Then, we only need to show that we can extract positive work from multiple copies of any other state than the GGE by some unitary that satisfies Conditions I), II) and III).
In the proof of Theorem~\ref{thm:GFHCPmain} (see Proposition~S8 of Supplemental Material), we construct a unitary operator $U$ that extracts positive work from any state other than the GGE and satisfies $[U^\dag HU, H]=0$,
which is the generalization of Proposition~\ref{prop:U_complete} to the setup under symmetry constraints.
From Proposition~\ref{prop:extracted_work_correspondence}, in the setup with the work storage, we can extract the same amount of work by implementing $\mathcal{C}(U)$ for any initial state of the work storage.
We can also check that if $U$ is symmetry-respecting, $\mathcal{C}(U)$ satisfies Condition III). 
Therefore, if a state is not symmetry-protected completely passive in the setup without the work storage, then the state is not symmetry-protected completely passive in the setup with the work storage independently of the state of the work storage.

\section{Discussion} \label{sec:discussion}

In this paper, we have provided the characterization of complete passivity for systems under symmetry constraints, which is referred to as symmetry-protected thermal equilibrium.
We proved that a state is symmetry-protected completely passive if and only if it is a GGE of the form \eqref{GGE_main}, which is the main result of this paper (Theorem~\ref{thm:GFHCPmain}).
Remarkably, our result applies to non-commutative symmetries, as illustrated by the dimer model with $SU(2)$ symmetry discussed in Sec.~\ref{subsec:examples}.
While we leave the full proof of Theorem~\ref{thm:GFHCPmain} to Supplemental Material, that for the special case of the dimer model is provided in Appendix~\ref{sec:proof}.
In Appendix~\ref{sec:discrete_symmetry}, we also show that under a certain class of finite group symmetry constraints, completely passive states are only Gibbs states.

Moreover, we proved that the same characterization of symmetry-protected complete passivity holds true, by explicitly including the work storage as a quantum system (Theorem~\ref{thm:WS_GFHCP}).
As a bi-product (Proposition~\ref{prop:WS_complete_passivity}), we proved that, in a stronger form than the known results in literature, completely passive states without symmetry constraints are only Gibbs ensembles in the presence of the work storage.
We note that the energy levels of the work storage introduced in our setup is unbounded from below, but we expect that we can extend our argument for the work storage bounded from below by following the idea of Ref.~\cite{Aberg2014}.

We discuss the relationship between the present work and other approaches to symmetries and GGEs.
Let us first clarify the fundamental difference between our study and a previous study by Yunger Halpern \textit{et al.}~\cite{Halpern2016}, where non-commutative GGE \eqref{GGE_main} also appears.
In our study, a symmetry constraint is imposed solely on the system of interest, while in Ref.~\cite{Halpern2016}, it is imposed on the entire system including external charge storages.
That is, the charges of the system are solely conserved in our setup, while the charges can be transferred to the storages in their setup.
It should also be emphasized that our definition of work is given purely by the energy (i.e., the Hamiltonian), while they adopted a generalized notion called chemical work.
Related to this point, a characteristic of our approach to symmetry-protected (complete) passivity lies in the fact that its definition (Sec.~\ref{subsec:symmetry}) itself does not involve the parameters $\mu_i$ of the GGE \eqref{GGE_main}.
To summarize, our setup is fundamentally different from theirs, and thus complements their result by providing a further support that Eq.~(\ref{GGE_main}) is a proper form of the GGE including the non-commutative cases.

In Refs.~\cite{Guryanova2016, Lostaglio2017}, other types of passivity in the presence of conserved charges are defined and investigated.
In these studies, however, (complete) passivity is defined with a focus on extracting charges themselves instead of the energy (see also Ref.~\cite{Sparaciari2017}), which is the opposite case to our setup of purely extracting energy.

We finally note that the resource theory of asymmetry adopts a broader class of free operations~\cite{Marvian2013, Marvian2014} than our setup.
The resource theory that adopts our smaller class of free operations would also be useful in quantum thermodynamics in the  presence of conserved charges, because our setup requires the conservation of the expectation values of the charges, but their setup does not.
For example, in their setup the operation that increases the particle number independently of the initial state is allowed, while in our setup such operation is not allowed.
From a general perspective of resource theories, we can say that our result has determined the class of free states of the resource theory of thermodynamics with conserved charges, and thus would serve as a foundation of a  new class of resource theories in the presence of symmetries.


\textbf{Acknowledgment.}
The authors are grateful to 
Nicole Yunger Halpern, Philippe Faist, Marcos Rigol, and Yasunobu Nakamura for valuable discussions.
The authors are also grateful to Kousuke Kumasaki and Taro Sawada
for useful comments on an early draft.
Y.M. is supported by World-leading Innovative Graduate
Study Program for Materials Research, Industry, and Technology (MERIT-WINGS) of the University of Tokyo.
T.S. is supported by JSPS KAKENHI Grant No. JP16H02211 and No. JP19H05796,
and Institute of AI and Beyond of the University of Tokyo.

\appendix

\section{Symmetry-protected passivity} \label{sec:passivity}

In this Appendix, we reveal the characterization of symmetry-protected (not complete) passivity.
Symmetry-protected passive states are defined as the states from which no positive amount of work can be extracted by any symmetry-respecting unitary operations (see also Fig.~\ref{fig2}(a)). 
We will prove that a state is symmetry-protected passive, if and only if every sector of the symmetrized density operator of the state is passive with respect to the corresponding Hamiltonian.



First, we specify the form of symmetry-respecting operators.
We follow the method in Ref.~\cite{Bartlett2007} and make use of the decomposition of the Hilbert space induced by a representation of a group:
\begin{align}
\mathcal{H}=\bigoplus_{\lambda\in\Lambda} \mathcal{R}_\lambda\otimes\mathcal{M}_\lambda,
\end{align}
where $\Lambda$ is the set of the labels of inequivalent irreducible representations that appear in a given representation,  $\mathcal{R}_\lambda$ is a space carrying an irreducible representation, and $\mathcal{M}_\lambda$ is a space carrying a trivial representation.
We note that a representation is called irreducible if it cannot be seen as the composition of simpler representations, or equivalently, it does not have any invariant subspace.
Correspondingly, the symmetry representation $U_g$ can be decomposed in the following form (see Fig.~\ref{fig4}(a)): 
\begin{align}
    U_g=\bigoplus_{\lambda\in\Lambda} U_{\lambda g}\otimes I_{\mathcal{M}_\lambda}, \label{eq:irreducibledecomposition_main}
\end{align}
where $\{U_{\lambda g}\}_{g\in G}$ is an irreducible representation of $G$ acting on $\mathcal{R}_\lambda$, and $I_\mathcal{\mathcal{M}_\lambda}$ is the identity operator on $\mathcal{M}_\lambda$.
For example, $SU(2)$ symmetry representation on the system composed of two spin-1/2 systems can be decomposed into spin-0 representation on the singlet space and spin-1 representation on the triplet space.
Under the decomposition~\eqref{eq:irreducibledecomposition_main}, Schur's lemma (e.g., Proposition~4.8 of Ref.~\cite{Knapp2002}) states that every symmetry-respecting Hamiltonian $H$ and symmetry-respecting unitary operator $U$ can be written as 
\begin{align}
	H=\bigoplus_{\lambda\in\Lambda} I_{\mathcal{R}_\lambda}\otimes H_\lambda,\ U=\bigoplus_{\lambda\in\Lambda} I_{\mathcal{R}_\lambda}\otimes U_\lambda \label{eq:symmetry_respecting_HU}
\end{align}
with some Hermitian operator $H_\lambda$ and unitary operator $U_\lambda$ on $\mathcal{M}_\lambda$,
where $I_\mathcal{\mathcal{R}_\lambda}$ is the identity operator on $\mathcal{R}_\lambda$ (see Fig.~\ref{fig4}(b)).

In order to describe the characterization of symmetry-protected passivity, we introduce the following symmetrized state:
\begin{align}
    \sigma:=\int_G dg\ U_g \rho U_g^\dag, \label{eq:symmetrization}
\end{align}
where $dg$ is the group-invariant (Haar) measure over $G$. 
This symmetrizing mapping is studied in the resource theory of asymmetry~\cite{Bartlett2007}. 
Since $\sigma$ is symmetry-respecting, $\sigma$ can be written as
\begin{align}
    \sigma=\bigoplus_{\lambda\in\Lambda} I_{\mathcal{R}_\lambda}\otimes \sigma_\lambda \label{eq:symmetry_respecting_sigma}
\end{align}
with some Hermitian operator $\sigma_\lambda$ on $\mathcal{M}_\lambda$ (see Fig.~\ref{fig4}(c)).

\begin{figure}
\begin{center}
\includegraphics[width=\columnwidth]{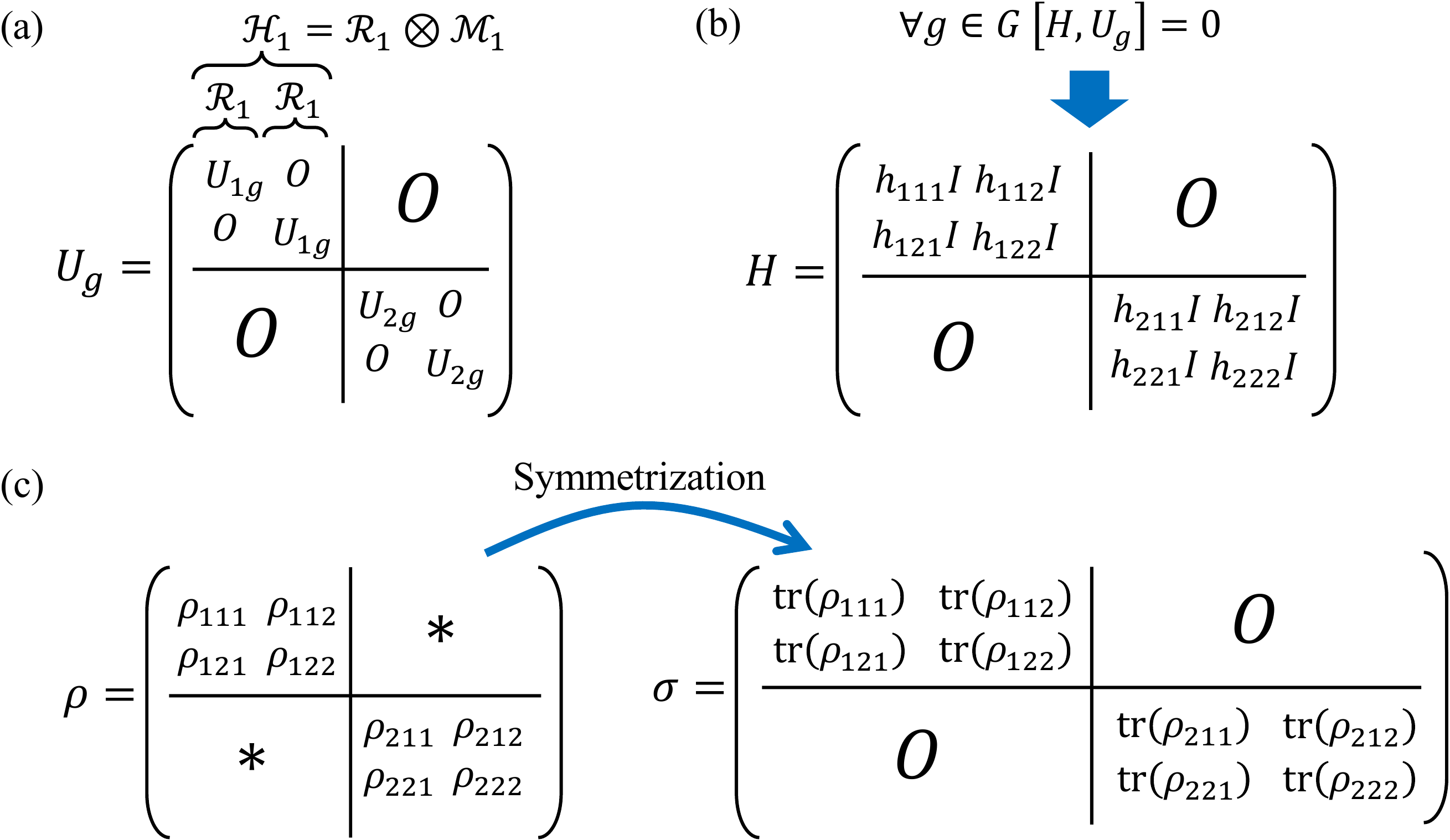}
\caption{(a) Schematic of the irreducible decomposition of a unitary representation.
(b) The corresponding form of a symmetry-respecting Hamiltonian.
(c) The effective density operator corresponding to the irreducible representation.}
\label{fig4}
\end{center}
\end{figure}

Now, symmetry-protected passivity of $\rho$ is equivalent to passivity of all $\sigma_\lambda$'s.
This can be formally stated as the following theorem. 

\begin{theorem}
    Let $G$ be a group and $\{U_g\}_{g\in G}$ be its unitary  representation.
	A state $\rho$ is symmetry-protected passive with respect to a symmetry-respecting Hamiltonian $H$, if and only if  $\sigma_\lambda$ defined by Eq.~\eqref{eq:symmetry_respecting_sigma} is passive with respect to $H_\lambda$ for all $\lambda\in\Lambda$.
\end{theorem}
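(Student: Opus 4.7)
The plan is to exploit the commutation structure of symmetry-respecting operators to replace $\rho$ by its twirled image $\sigma$, and then use the block-diagonal decomposition already established to reduce the statement to independent passivity conditions on each multiplicity space $\mathcal{M}_\lambda$.

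First I would prove that, for every symmetry-respecting unitary $U$, the extracted work depends only on the symmetrized state:
\begin{equation}
W(\rho, U) \;=\; W(\sigma, U). \label{eq:plan_twirl}
\end{equation}
This follows from two observations. Because $[H, U_g] = 0$ for all $g$, one has $\mathrm{tr}(\rho H) = \mathrm{tr}(U_g \rho U_g^\dagger H)$; averaging this identity against the group-invariant measure used in Eq.~\eqref{eq:symmetrization} gives $\mathrm{tr}(\rho H) = \mathrm{tr}(\sigma H)$. Because $U$ commutes with every $U_g$, the same averaging applied to $U\rho U^\dagger$ yields $\int_G dg\, U_g (U\rho U^\dagger) U_g^\dagger = U\sigma U^\dagger$, and combining this with $[H,U_g]=0$ once more gives $\mathrm{tr}(U\rho U^\dagger H) = \mathrm{tr}(U\sigma U^\dagger H)$. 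Subtracting proves \eqref{eq:plan_twirl}.

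Next I would substitute the block-diagonal forms \eqref{eq:symmetry_respecting_HU} and \eqref{eq:symmetry_respecting_sigma} and compute the trace sector by sector. Because the factor on $\mathcal{R}_\lambda$ is the identity in all of $\sigma$, $H$, and $U$, tracing over $\mathcal{R}_\lambda$ simply produces $d_\lambda := \dim \mathcal{R}_\lambda$, so
\begin{equation}
W(\sigma, U) \;=\; \sum_{\lambda \in \Lambda} d_\lambda\, W_\lambda(\sigma_\lambda, U_\lambda),\qquad W_\lambda(\sigma_\lambda, U_\lambda) := \mathrm{tr}(\sigma_\lambda H_\lambda) - \mathrm{tr}(U_\lambda \sigma_\lambda U_\lambda^\dagger H_\lambda),
\end{equation}
which is the sector-wise analog of the work functional~\eqref{work_def} applied to the (generally subnormalized) operator $\sigma_\lambda$ on $\mathcal{M}_\lambda$ with effective Hamiltonian $H_\lambda$.

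For the \emph{if} direction, passivity of every $\sigma_\lambda$ with respect to $H_\lambda$ means $W_\lambda(\sigma_\lambda, U_\lambda)\le 0$ for every unitary $U_\lambda$ on $\mathcal{M}_\lambda$. Since $d_\lambda > 0$, the displayed formula above together with \eqref{eq:plan_twirl} immediately yields $W(\rho, U)\le 0$ for every symmetry-respecting $U$. For the \emph{only if} direction, fix any $\lambda_0 \in \Lambda$ and any unitary $V$ on $\mathcal{M}_{\lambda_0}$; by Schur's lemma the operator obtained by choosing $U_{\lambda_0}=V$ and $U_\lambda = I_{\mathcal{M}_\lambda}$ for $\lambda\ne\lambda_0$ is a symmetry-respecting unitary, and all but one term on the right-hand side vanishes, giving $W(\rho,U) = d_{\lambda_0} W_{\lambda_0}(\sigma_{\lambda_0}, V)$. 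Symmetry-protected passivity of $\rho$ combined with $d_{\lambda_0}>0$ therefore forces $W_{\lambda_0}(\sigma_{\lambda_0}, V)\le 0$ for every $V$, which is the passivity of $\sigma_{\lambda_0}$ with respect to $H_{\lambda_0}$.

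There is no substantive obstacle here: the argument is essentially a bookkeeping consequence of the linearity of the work functional plus the representation-theoretic structure already encoded in Eqs.~\eqref{eq:irreducibledecomposition_main}--\eqref{eq:symmetry_respecting_sigma}. The only subtlety worth flagging is the interpretation of passivity for the possibly subnormalized operators $\sigma_\lambda$, which I adopt as the inequality $W_\lambda(\sigma_\lambda, \cdot)\le 0$; this reduces to ordinary passivity after rescaling whenever $\mathrm{tr}(\sigma_\lambda)\ne 0$ and is automatic when $\sigma_\lambda = 0$.
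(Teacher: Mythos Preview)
Your proof is correct and follows essentially the same approach as the paper: first reduce $W(\rho,U)$ to $W(\sigma,U)$ via the twirling argument using $[H,U_g]=[U,U_g]=0$, then block-decompose $W(\sigma,U)=\sum_\lambda r_\lambda\,W(\sigma_\lambda,U_\lambda)$ using Eqs.~\eqref{eq:symmetry_respecting_HU}--\eqref{eq:symmetry_respecting_sigma}, and conclude by noting that the sector unitaries $U_\lambda$ can be chosen independently. Your explicit choice $U_\lambda=I_{\mathcal{M}_\lambda}$ for $\lambda\neq\lambda_0$ in the \emph{only if} direction makes the argument a touch more concrete than the paper's phrasing via $W_{\max,G}(\rho)=\sum_\lambda r_\lambda W_{\max}(\sigma_\lambda)$, but the content is identical.
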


\begin{proof}
    First, for any symmetry-respecting unitary operator $U$, we show that the extracted work from $\rho$ and $\sigma$ by  $U$ are the same.
    From Eq.~\eqref{eq:symmetrization},
    \begin{align}
        &W(\sigma, U) \nonumber\\
        =&\mathrm{tr}\left(\sigma \left(H-U^\dag HU\right)\right) \nonumber\\
        =&\mathrm{tr}\left(\left(\int_G dg\ U_g \rho U_g^\dag\right) \left(H-U^\dag HU\right)\right) \nonumber\\
        =&\int_G dg\ \mathrm{tr}\left(U_g \rho U_g^\dag \left(H-U^\dag HU\right)\right) \nonumber\\
        =&\int_G dg\ \mathrm{tr}\left(\rho U_g^\dag \left(H-U^\dag HU\right)U_g\right) \nonumber\\
        =&\int_G dg\ \mathrm{tr}\left(\rho \left(H-U^\dag HU\right)\right) \nonumber\\
        =&\mathrm{tr}\left(\rho \left(H-U^\dag HU\right)\right) \nonumber\\
        =&W(\rho, U). \label{eq:sigma_rho}
    \end{align}
    We next show that the extracted work from $\sigma$ can be written by the extracted work from $\sigma_\lambda$.
    From Eqs.~\eqref{eq:symmetry_respecting_HU}~and~\eqref{eq:symmetry_respecting_sigma},
    \begin{align}
        &W(\sigma, U) \nonumber\\
        =&\mathrm{tr}\left(\sigma \left(H-U^\dag HU\right)\right) \nonumber\\
        =&\mathrm{tr}\left(\left(\bigoplus_{\lambda\in\Lambda} I_{\mathcal{R}_\lambda}\otimes \sigma_\lambda\right) \left[\bigoplus_{\lambda\in\Lambda} I_{\mathcal{R}_\lambda}\otimes \left(H_\lambda-U_\lambda^\dag H_\lambda U_\lambda\right) \right]\right) \nonumber\\
        =&\mathrm{tr}\left(\bigoplus_{\lambda\in\Lambda} I_{\mathcal{R}_\lambda}\otimes \sigma_\lambda\left(H_\lambda-U_\lambda^\dag H_\lambda U_\lambda\right)\right) \nonumber\\
        =&\sum_{\lambda\in\Lambda} \mathrm{tr}\left(I_{\mathcal{R}_\lambda}\otimes \sigma_\lambda\left(H_\lambda-U_\lambda^\dag H_\lambda U_\lambda\right)\right) \nonumber\\
        =&\sum_{\lambda\in\Lambda} \mathrm{tr}(I_{\mathcal{R}_\lambda})\mathrm{tr}\left(\sigma_\lambda\left(H_\lambda-U_\lambda^\dag H_\lambda U_\lambda\right)\right) \nonumber\\
        =&\sum_{\lambda\in\Lambda} r_\lambda W(\sigma_\lambda, U_\lambda), \label{eq:sigma_lambda}
    \end{align}
    where $r_\lambda$ is the dimension of $\mathcal{R}_\lambda$.
    By comparing Eqs.~\eqref{eq:sigma_rho}~and~\eqref{eq:sigma_lambda}, the extracted work from $\rho$ is written as 
    \begin{align}
        W(\rho, U)=\sum_{\lambda\in\Lambda} r_\lambda W(\sigma_\lambda, U_\lambda).
    \end{align}
    Therefore, the maximal extracted work from $\rho$ under the symmetry constraint is given by 
    \begin{align}
        W_{\mathrm{max}, G}(\rho)=\sum_{\lambda\in\Lambda} r_\lambda W_\mathrm{max}(\sigma_\lambda),
    \end{align}
    where $W_\mathrm{max}(\sigma_\lambda)$ is the maximal extracted work from $\sigma_\lambda$ under no symmetry constraints and $W_{\mathrm{max}, G}(\rho)$ is the maximal extracted work from $\rho$ under the symmetry constraint.
    This shows that $\rho$ is symmetry-protected passive with respect to $H$, if and only if $\sigma_\lambda$ is passive in the ordinary sense with respect to $H_\lambda$.
\end{proof}

Finally, we remark on the setup with the work storage discussed in Sec.~\ref{sec:work_storage_main}. 
From Eq.~\eqref{eq:WS_ergotropy_relation}, we can prove that $\rho$ is symmetry-protected passive with the initial state $\rho_\mathrm{W}$ of the work storage, if and only if $\mathcal{D}_{\rho_\mathrm{W}}(\rho)$ is symmetry-protected passive in the setup without the work storage.
In order to prove this, we can prove (Lemma~S12 of Supplemental Material) that the Kitaev construction~\eqref{eq:unitary_correspondence} gives a one-to-one correspondence between symmetry-respecting unitaries in the setups with and without the work storage.
Therefore, the proof goes in the same way as that in Sec.~\ref{subsec:qw_no_symmetry}, and even under symmetry constraints, the maximal extracted work from $\rho$ with the work storage $\rho_\mathrm{W}$ equals the maximal extracted work from $\mathcal{D}_{\rho_\mathrm{W}}(\rho)$ without the work storage.

\section{Proof of Theorem~\ref{thm:GFHCPmain} for the dimer model} \label{sec:proof}

In this Appendix, we present a full proof of Theorem~\ref{thm:GFHCPmain} in the special case of the dimer model introduced in Sec.~\ref{subsec:examples}, as a simplest nontrivial example that has non-commutative symmetry.
See Supplemental Material for the complete proof for the general case.

In the proof of the \textit{only if} part of Theorem~\ref{thm:GFHCPmain}, we consider work extraction by a series of symmetry-respecting unitary operators.
We prove that if we cannot extract positive work from multiple copies of a state by any of those operations, then the state is a GGE at positive temperature.
This is a generalization of the proof of Proposition~\ref{prop:U_complete} in Sec.~\ref{subsec:qw_no_symmetry}.

We use the same notations as in Sec.~\ref{subsec:examples} for the dimer model.
Specifically, the Hamiltonian is given by $H = \bm s_1 \cdot \bm s_2$,
and we denote  the total spin operator in the $\alpha$-direction of the dimer by $Q^\alpha:=s_1^\alpha\otimes I+I\otimes s_2^\alpha$ for $\alpha=x, y, z$.
Let $\rho$ be the initial state of the dimer.

\begin{proof}

To prove Theorem~\ref{thm:GFHCPmain} for the dimer setup, we consider the following three steps.

\underline{Step 1.} (Proposition~S3 of Supplemental Material)
First, we prove that if a state $\rho$ is completely passive, then $\rho^{\otimes 2}$ commutes with the spin inner product $\bm{Q}\cdot\bm{Q}:=\sum_{\alpha=x, y, z} Q^\alpha\otimes Q^\alpha$.
For that purpose, we construct a series of unitary operators such that if positive work cannot be extracted from any number of multiple copies of a state $\rho$ by the operations, then $\rho^{\otimes 2}$ commutes with $\bm{Q}\cdot\bm{Q}$.

Let the spectral decomposition of $\bm{Q}\cdot\bm{Q}$ be written as $\bm{Q}\cdot\bm{Q}=\sum_{\omega} \omega P_\omega$, where $\omega$ is an eigenvalue and $P_\omega$ is the projection operator onto the eigenspace of $\omega$.
We take arbitrary $P_\omega$ and consider unitary operators $U_m:=I-\sum_{i, j\in\{0, 1\}} (-1)^{i-j}R_{ij}^{\otimes m}\otimes\ket{\Psi_i}\bra{\Psi_j}$ acting on  $4m+3$ copies of the system for $m\in\mathbb{N}$, where
$R_{ij}:=\frac{1}{2}[I-(-1)^i T][P_\omega\otimes(I-P_\omega)][I-(-1)^j T]$ with the swapping operator $T$ of the states of the two pairs of dimers, and
$\ket{\Psi_0}:=\ket{s}\ket{s}\ket{s}$, $\ket{\Psi_1}:=\frac{1}{\sqrt{6}}\sum_{i, j, k\in\{1, 2, 3\}} \epsilon_{ijk}\ket{t_i}\ket{t_j}\ket{t_k}$ with the singlet state $\ket{s}$, the triplet states $\ket{t_1}, \ket{t_2}, \ket{t_3}$ of a dimer and the Levi-Civita symbol $\epsilon_{ijk}$.
$U_m$ is symmetry-respecting because $P_\omega$ and $T$ commute with $Q^{\alpha(2)}$ and $Q^{\alpha(4)}$ respectively, and $\ket{\Psi_i}$ is an eigenstate of $Q^{\alpha(3)}$ with eigenvalue 0 for $\alpha=x, y, z$, where $Q^{\alpha(N)}$ is defined by Eq.~\eqref{eq:extensive_charge}.

We calculate the extracted work $W(\rho^{\otimes 4m+3}, U_m)$ from $\rho^{\otimes 4m+3}$ by  $U_m$.
Since $R_{ij}$ commutes with $H^{(4)}$ defined by Eq.~\eqref{eq:extensive_Hamiltonian} and $\ket{\Psi_i}$ satisfies $H^{(3)}\ket{\Psi_0}=-\frac{9}{4}\ket{\Psi_0}$, $H^{(3)}\ket{\Psi_1}=\frac{3}{4}\ket{\Psi_1}$, we obtain
\begin{align}
    &H^{(4m+3)}-U_m^\dag H^{(4m+3)}U_m \nonumber\\
    =&3\left(R_{11}^{\otimes m}\otimes\ket{\Psi_1}\bra{\Psi_1}-R_{00}^{\otimes m}\otimes\ket{\Psi_0}\bra{\Psi_0}\right).
\end{align}
Therefore, $W(\rho^{\otimes 4m+3}, U_m)$ is given by
\begin{align}
    &W(\rho^{\otimes 4m+3}, U_m) \nonumber\\
    =&\mathrm{tr}\left(\rho^{\otimes 4m+3}\left(H^{(4m+3)}-U_m^\dag H^{(4m+3)}U_m\right)\right) \nonumber\\
    =&3\left[\mathrm{tr}\left(\rho^{\otimes 4}R_{11}\right)^m \braket{\Psi_1|\rho^{\otimes 3}|\Psi_1}\right. \nonumber\\
    &\ \ \ \ \ \ \ \ \ \ \left.-\mathrm{tr}\left(\rho^{\otimes 4}R_{00}\right)^m\braket{\Psi_0|\rho^{\otimes 3}|\Psi_0}\right] \nonumber\\
    =&a\left[1-b\left(\frac{\mathrm{tr}(\rho^{\otimes 4}R_{00})}{\mathrm{tr}(\rho^{\otimes 4}R_{11})}\right)^m\right] \nonumber\\
    =&a\left[1-b\left(1+c\|[\rho^{\otimes 2}, P_\omega]\|_\mathrm{HS}^2\right)^{-m}\right], \label{eq:general_extracted_work}
\end{align}
where $a:=3\left[\mathrm{tr}\left(\rho^{\otimes 4}R_{11}\right)\right]^m \braket{\Psi_1|\rho^{\otimes 3}|\Psi_1} > 0$, $b:=\braket{\Psi_0|\rho^{\otimes 3}|\Psi_0}/\braket{\Psi_1|\rho^{\otimes 3}|\Psi_1} > 0$, $c:=\left[\mathrm{tr}\left(\rho^{\otimes 4}R_{00}\right)\right]^{-1} > 0$, and we used
\begin{align}
    \mathrm{tr}\left(\rho^{\otimes 4}R_{11}\right)-\mathrm{tr}\left(\rho^{\otimes 4}R_{00}\right)=\left\|\left[\rho^{\otimes 2}, P_\omega\right]\right\|_\mathrm{HS}^2. \label{eq:offdiagonal}
\end{align}

If $W(\rho^{\otimes 4m+3}, U_m)\leq0$ holds for all $m$, then $\|[\rho^{\otimes 2}, P_\omega]\|_\mathrm{HS}=0$, i.e., $[\rho^{\otimes 2}, P_\omega] = 0$ must be satisfied.
Since this holds for all $P_\omega$, $\rho^{\otimes 2}$ commutes with $\bm{Q}\cdot\bm{Q}$.
Note that $b\geq 1$ is shown later, implying that $\|[\rho^{\otimes 2}, P_\omega]\|_\mathrm{HS}=0$ is sufficient for  $W(\rho^{\otimes 4m+3}, U_m)\leq0$.

\underline{Step 2.} (Proposition~S5 of Supplemental Material)
Next, we prove that if $\rho^{\otimes 2}$ commutes with $\bm{Q}\cdot\bm{Q}$, then $\rho$ can be written as the product of a symmetry-respecting operator and the exponential of a linear combination of the conserved charges $\{Q^\alpha\}_{\alpha=x, y, z}$.
We define $\xi:=-\log(\rho)$, $\mathcal{P}(\xi):=\sum_{\alpha} \frac{1}{2}\mathrm{tr}(\xi Q^\alpha)Q^\alpha$ and $\eta:=\xi-\mathcal{P}(\xi)$.
$\mathcal{P}(\xi)$ can be seen as the projection of $\xi$ onto the linear subspace spanned by $\{Q^\alpha\}$ in terms of the Hilbert-Schmidt inner product in the operator space.
$\xi$ and $\eta$ satisfy the following relation for $\alpha=x, y, z$:
\begin{align}
    &\mathrm{tr}_{\mathcal{H}_2}\left(\left(I\otimes Q^\alpha\right)\left[\xi\otimes I+I\otimes\xi, \bm{Q}\cdot\bm{Q}\right]\right) \nonumber\\ 
    =&\sum_{\beta} \mathrm{tr}_{\mathcal{H}_2}\left(\left[\xi, Q^\beta\right]\otimes Q^\alpha Q^\beta+Q^\beta\otimes Q^\alpha\left[\xi, Q^\beta\right]\right) \nonumber\\
    =&\sum_{\beta} \left[\mathrm{tr}\left(Q^\alpha Q^\beta\right)\left[\xi, Q^\beta\right]-\mathrm{tr}\left(\xi\left[Q^\alpha, Q^\beta\right]\right)Q^\beta\right] \nonumber\\
    =&\sum_{\beta} \left[2\delta_{\alpha\beta}\left[\xi, Q^\beta\right]-\mathrm{tr}\left(\xi\sum_{\gamma}\epsilon_{\alpha\beta\gamma}Q^\gamma\right)Q^\beta\right] \nonumber\\
    =&2\left[\xi, Q^\alpha\right]-\sum_{\gamma} \left[\mathrm{tr}\left(\xi Q^\gamma\right)\sum_{\beta} \epsilon_{\gamma\alpha\beta}Q^\beta\right] \nonumber\\
    =&2\left[\xi, Q^\alpha\right]-\sum_{\gamma} \mathrm{tr}\left(\xi Q^\gamma\right)\left[Q^\gamma, Q^\alpha\right] \nonumber\\
    =&2\left[\xi-\sum_{\gamma} \frac{1}{2}\mathrm{tr}\left(\xi Q^\gamma\right)Q^\gamma, Q^\alpha\right] \nonumber\\
    =&2\left[\xi-\mathcal{P}(\xi), Q^\alpha\right] \nonumber\\
    =&2\left[\eta, Q^\alpha\right],  \label{eq:partial_trace}
\end{align}
where $\mathcal{H}_2$ is the Hilbert space of the second dimer and $\epsilon_{\alpha\beta\gamma}$ is the Levi-Civita symbol with $\epsilon_{xyz}=1$.
If $\rho^{\otimes 2}$ commutes with $\bm{Q}\cdot\bm{Q}$, $\xi^{(2)}=-\log(\rho^{\otimes 2})$ also commutes with $\bm{Q}\cdot\bm{Q}$.
Then from Eq.~\eqref{eq:partial_trace}, we get $[\eta, Q^\alpha]=0$, i.e., $\eta$ is symmetry-respecting.
Therefore, $\rho$ can be written as $\rho=\exp(-\xi)=\exp(-\eta)\exp(-\sum_{\alpha=x, y, z} \mu_\alpha Q^\alpha)$, where $\mu_\alpha:=\frac{1}{2}\mathrm{tr}(\xi Q^\alpha)$.

\underline{Step 3.} (Proposition~S4 of Supplemental Material)
Finally, we combine the results in Steps~1~and~2, consider work extraction again and prove Theorem~\ref{thm:GFHCPmain}.
Suppose that $\rho$ is symmetry-protected completely passive.
From Steps~1~and~2, $\rho$ can be written as $\rho=\exp(-\eta)\exp(-\sum_{\alpha=x, y, z} \mu_\alpha Q^\alpha)$ with some symmetry-respecting operator $\eta$ and $\mu_\alpha\in\mathbb{R}$. 
Since the total spin symmetry operators irreducibly act on the singlet space and the triplet space, Schur's lemma implies that symmetry-respecting $\eta$ can be written as $\eta=c_\mathrm{s}\ket{s}\bra{s}+c_\mathrm{t}\sum_{i=1}^3 \ket{t_i}\bra{t_i}$ with some $c_\mathrm{s}, c_\mathrm{t}\in\mathbb{R}$.
Since $H=-\frac{3}{4}\ket{s}\bra{s}+\frac{1}{4}\sum_{i=1}^3 \ket{t_i}\bra{t_i}$, $\eta$ can be written as $\eta=\mu I+\beta H$ with some $\mu, \beta\in\mathbb{R}$.
Note that such a simple relation between $\eta$ and $H$ is specific to this dimer model that has only two energy levels. 
Then $\rho$ can be written as $\rho=\exp(-\mu I-\beta H)\exp(-\sum_{\alpha=x, y, z} \mu_\alpha Q^\alpha)=\exp(-\beta H-\sum_{\alpha=x, y, z} \mu_\alpha Q^\alpha)/\exp(\mu)$.
From the normalization condition, $\exp(\mu)=\mathrm{tr}(\exp(-\beta H-\sum_{\alpha=x, y, z} \mu_\alpha Q^\alpha))=Z_\mathrm{GGE}$ and we obtain $\rho=\exp(-\beta H-\sum_{\alpha=x, y, z} \mu_\alpha Q^\alpha)/Z_\mathrm{GGE}$.
In order to prove that $\beta\geq0$, we consider the case where $m=0$ in Eq.~\eqref{eq:general_extracted_work}. 
In this case, the extracted work is given by $W(\rho^{\otimes 3}, U_0)=a(1-b)=a[1-\exp(3\beta)]$.
Since $\rho$ is symmetry-protected completely passive, $W(\rho^{\otimes 3}, U_0)\leq0$ and thus we get $\beta\geq0$ (i.e., $b \geq 1$).
\end{proof}

In the proof for the general case (see Supplemental Material), we decompose a connected compact Lie group into a compact Abelian Lie group and a semisimple Lie group by Levi decomposition (Theorem~4.29~of~\cite{Knapp2002}).  
When we deal with the semisimple Lie group, we use a Casimir operator (Lemma~3.3.7~of~\cite{Goodman2009}) as a generalization of the spin inner product $\bm{Q}\cdot\bm{Q}$ along with a generalized version of totally antisymmetric states $\ket{\Psi_i}$. 
We also use the fact that every conserved charge $Q$ associated with a semisimple Lie group symmetry satisfies $\mathrm{tr}(Q)=0$.
When we deal with the Abelian Lie group, we use the notion of virtual temperature introduced in Ref.~\cite{Skrzypczyk2015} under symmetry constraints.

\section{Finite-group symmetry and time-reversal symmetry} \label{sec:discrete_symmetry}

We consider the case where symmetry constraints on the operations are described by some finite groups. 
Specifically, we prove that if the symmetry group is a finite cyclic group or a dihedral group, every symmetry-protected completely passive state is just a conventional Gibbs ensemble. 
We note that  in a one-dimensional lattice with the periodic boundary condition, spatial translation generates a finite cyclic group, and the combination of spatial translation and inversion generates a dihedral group (see Fig.~\ref{fig5}).
In addition, we investigate the case of time-reversal symmetry without spin degrees of freedom, which is an anti-unitary symmetry (class AI).
In this case, every symmetry-protected completely passive state is again a conventional Gibbs ensemble.

First, we consider the case of finite group symmetry.
As for passivity, the same argument as in Appendix~\ref{sec:passivity} can be applied, where the symmetrizing mapping~\eqref{eq:symmetrization} can be replaced with $\frac{1}{|G|}\sum_{g\in G} U_g \rho U_g^\dag$ with $|G|$ being the order of $G$.
Then, $\rho$ is symmetry-protected passive if and only if all $\sigma_\lambda$'s are passive.

\begin{figure}
\begin{center}
\includegraphics[width=\columnwidth]{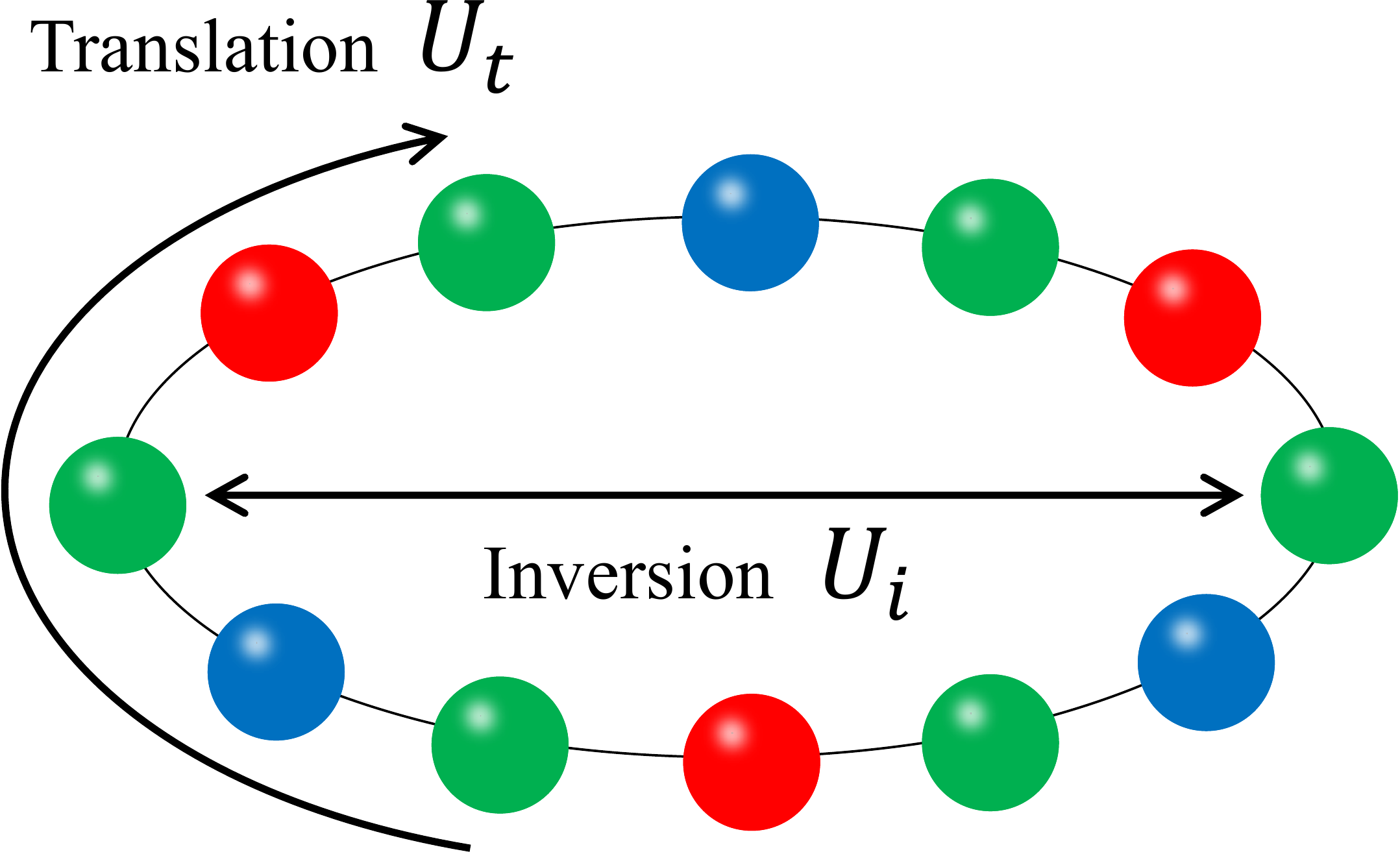}
\caption{An example of dihedral group symmetry in a one-dimensional lattice with the periodic boundary condition. 
The system is invariant under spatial translation and inversion;
These two operations generate a dihedral group.}
\label{fig5}
\end{center}
\end{figure}

As for complete passivity, we restrict ourselves to the cases of finite cyclic group symmetry and dihedral group symmetry, and prove that symmetry-protected completely passive states are only Gibbs ensembles at positive temperature.
The proof is parallel to that in Appendix~\ref{sec:proof} except for Step 2.

The reason why the same argument as Step 2 cannot be applied is that  for finite groups, there does not exist an explicit counterpart of Casimir operators in Lie groups.
Instead, we can show (Proposition~S9 in Supplemental Material) that if $\rho$ is symmetry-protected completely passive, then $\rho$ is symmetry-respecting, i.e., $[\rho, U_g]=0$ for all $g \in G$.
In the case of a finite cyclic group, the proof of this statement is straightforward.
Since $U_g$'s are symmetry-respecting, by a similar argument as Step 1 of Appendix~\ref{sec:proof}, every symmetry-protected completely passive state commutes with all $U_g$'s.
On the other hand, in the case of a dihedral group, the proof is more complicated due to its non-commutativity.
We can construct symmetry-respecting operators with the projection operators onto the eigenspaces of a symmetry operator $U_t$, where $t$ is an element of a dihedral group of order $2n$ and satisfies $t^n=1$ (see Supplemental Material for details). 

We note that for the case of general finite groups, the characterization of completely passive states is an open problem, while we conjecture that only Gibbs ensembles are symmetry-protected completely passive as is the foregoing cases.

Next, we consider the case of time-reversal symmetry without spin degrees of freedom.
In this case, time-reversal operator $\mathcal{T}$ is represented by the complex conjugation operator with respect to some basis of the Hilbert space of the system. 
We can prove (Theorem~S2 of Supplemental Material) that a state $\rho$ is passive under the time-reversal symmetry, if and only if the time-reversal symmetrized state $\sigma:=(\rho+\mathcal{T}\rho \mathcal{T}^{-1})/2$ is passive in the ordinary sense.

The outline of the proof of the above statement is as follows.
For any symmetry-respecting unitary $U$, the extracted work from $\rho$ and $\sigma$ by  $U$ are the same,
which implies that the maximal extracted work from $\rho$ and $\sigma$ under the time-reversal symmetry are the same.
Since both of $\sigma$ and $H$ are symmetry-respecting, $\sigma$ can be converted to an ordinary passive state by a symmetry-respecting unitary operation.
This implies that the maximal extracted work from $\sigma$ with and without the time-reversal symmetry are the same.
By combining these two relations, the maximal extracted work from $\rho$ under the time-reversal symmetry and that from $\sigma$ without the time-reversal symmetry are the same.
Therefore, $\rho$ is symmetry-protected completely passive if and only if $\sigma$ is passive.

We can also prove (Theorem~S5 of Supplemental Material) that completely passive states under the time-reversal symmetry are only Gibbs ensembles.
It is obvious that Gibbs ensembles are symmetry-protected completely passive, and therefore we only need to prove the converse.
Since the Hamiltonian is symmetry-respecting, all the projection operators onto the energy eigenspaces are symmetry-respecting. 
In the same way as Step 1 of Appendix~\ref{sec:proof}, every symmetry-protected completely passive state commutes with all symmetry-respecting operators, and thus it commutes with all the projection operators onto the energy eigenspaces.
This implies that the density operators of the state is diagonal in the energy eigenbasis.
The rest of the proof can be constructed by a standard technique considering virtual temperatures~\cite{Skrzypczyk2015}.

\unappendix
\clearpage

\onecolumngrid

\begin{center}
{\large \bf Supplemental Material: Characterizing symmetry-protected thermal equilibrium by work extraction}\\
\vspace*{0.3cm}
Yosuke Mitsuhashi$^{1}$, Kazuya Kaneko$^{1}$, and Takahiro Sagawa$^{1,2}$ \\
\vspace*{0.1cm}

{$^{1}$Department of Applied Physics, The University of Tokyo, Tokyo 113-8656, Japan} 

{$^{2}$Quantum-Phase Electronics Center (QPEC), The University of Tokyo, Tokyo 113-8656, Japan}%
\end{center}


\setcounter{equation}{0}
\renewcommand{\theequation}{S\arabic{equation}}
\setcounter{definition}{0}
\renewcommand{\thedefinition}{S\arabic{definition}}
\setcounter{theorem}{0}
\renewcommand{\thetheorem}{S\arabic{theorem}}
\setcounter{proposition}{0}
\renewcommand{\theproposition}{S\arabic{proposition}}
\setcounter{lemma}{0}
\renewcommand{\thelemma}{S\arabic{lemma}}
\setcounter{corollary}{0}
\renewcommand{\thecorollary}{S\arabic{corollary}}
\setcounter{section}{0}
\renewcommand{\thesection}{S\arabic{section}}

\setcounter{subsection}{0}
\renewcommand{\thesubsection}{\arabic{subsection}}

\setcounter{subsubsection}{0}
\renewcommand{\thesubsubsection}{\arabic{subsubsection}}
\setcounter{figure}{0}
\renewcommand{\thefigure}{S\arabic{figure}}
\setcounter{table}{0}
\renewcommand{\thetable}{S\arabic{table}}


\section*{}

The main purpose of this Supplemental Material is to present the complete proof of Theorem~1 of the main text (Ref.~\cite{Main}), which is rephrased as  Theorem~\ref{thm:GFHCP1} of this Supplemental Material in a more rigorous manner.
This theorem states that every symmetry-protected completely passive state is a generalized Gibbs ensemble (GGE). 
Another main theorem is Theorem 2 of the main text~\cite{Main}, which is Theorem~\ref{thm:GFHCP1} of this Supplemental Material.
This theorem  states that even if we explicitly include a quantum work storage, symmetry-protected completely passive states are still only GGEs.
Moreover, we prove the condition for symmetry-protected passivity in Theorem~\ref{thm:GFHP} (Theorem~3 of the main text~\cite{Main}).
We also deal with symmetry constraints described by finite group symmetry and time-reversal symmetry.

The structure of this Supplemental Material is as follows.
In Sec.~\ref{sec:formal_statement}, we formally define symmetry-protected passivity and complete passivity in the setup without the quantum work storage, and present the necessary and sufficient conditions for them.
Theorem~\ref{thm:GFHP} and Theorem~\ref{thm:THP} provide the conditions for passivity with group symmetry (including Lie groups and finite groups) and time-reversal symmetry, respectively.
Theorem~\ref{thm:GFHCP}, Theorem~\ref{thm:finiteGFHCP} and Theorem~\ref{thm:THCP} provide the conditions for complete passivity with Lie group symmetry, finite group symmetry, and time-reversal symmetry, respectively.
In Sec.~\ref{sec:main_theorems}, we prove Theorem~\ref{thm:GFHP} to Theorem~\ref{thm:THCP}.
In order to prove Theorem~\ref{thm:GFHCP}, Theorem~\ref{thm:finiteGFHCP} and Theorem~\ref{thm:THCP}, we respectively prove Proposition~\ref{prop:GFworkextop}, Proposition~\ref{prop:finiteGFworkextop}, and Proposition~\ref{prop:Tworkextop}, which give stronger statements than necessary for the proof of the above-mentioned theorems, as these propositions are used in the setup with a quantum work storage later.
In Sec.~\ref{sec:work_storage}, we introduce the setup with a quantum work storage and present the conditions for symmetry-protected passivity and complete passivity.
Theorem~\ref{thm:GFHP1} to Theorem~\ref{thm:THCP1} are the counterparts of Theorem~\ref{thm:GFHP} to Theorem~\ref{thm:THCP}, respectively.
In Sec.~\ref{sec:technical_lemmas}, we prove technical lemmas used in the foregoing sections.

We list symbols that often appear in this paper in Table~\ref{table1}.

\begin{longtable}[c]{ll}
\caption{List of the symbols.}
\label{table1}\\
	\hline
 	Symbol  & Meaning \\
	 \hline\hline
 	\endfirsthead

 	\hline
	 Symbol  & Meaning \\
	 \hline\hline
	 \endhead

	\hline
	\endfoot

	\hline

 	$\mathbb{N}$ & $\{n\in\mathbb{Z}\ |\ n\geq0\}$.\\
 	$\mathcal{B}(\mathcal{K})$ & Set of all bounded linear operators on a Hilbert space $\mathcal{K}$. \\ 
	$\mathcal{B}^\mathrm{H}(\mathcal{K})$ & Set of all bounded Hermitian operators on a Hilbert space $\mathcal{K}$. \\
	$\mathcal{B}^+(\mathcal{K})$ & Set of all bounded positive operators on a Hilbert space $\mathcal{K}$. \\
	$\mathcal{B}^{++}(\mathcal{K})$ & Set of all bounded positive definite operators on a Hilbert space $\mathcal{K}$. \\
	$\mathcal{C}$ & Mapping from $\mathcal{U}(\mathcal{H})$ to $\mathcal{U}^\mathrm{WS}(\mathcal{H}, \mathcal{H}_\mathrm{W})$ defined as Eq.~\eqref{eq:SCFQcorrespondence}. \\
	$\mathcal{D}_{H, \rho_\mathrm{W}}$ & Mapping from $\mathcal{B}^+(\mathcal{H})$ to $\mathcal{B}^+(\mathcal{H})$ defined as Eq.~\eqref{eq:mapDdef}. \\
	$\mathcal{H}$ & Hilbert space of the system of interest. \\
	$\mathcal{H}_\mathrm{W}$ & Hilbert space of the work storage. \\
	$\mathcal{H}_\lambda$ & Subspace of $\mathcal{H}$ decomposed by equivalent irreducible representations.\\
	$\mathcal{M}_\lambda$ & Subspace carrying the trivial representation. \\
	$\mathcal{O}$ & Mapping from an operator quartet to an operator defined as Eq.~\eqref{eq:unitary}. \\
	$\mathcal{R}_\lambda$ & Subspace carrying an irreducible representation. \\
	$\mathcal{S}_\mathcal{T}$ & Time-reversal symmetrizing mapping defined as Eq.~\eqref{eq:Tsymmetrizer}.\\
	$\mathcal{T}$ & Complex conjugation operator. \\
	$\mathcal{U}(\mathcal{K})$ & Set of all unitary operators on a Hilbert space $\mathcal{K}$. \\
	$\mathcal{U}_{G, F}(\mathcal{H}),\ \mathcal{U}_\mathcal{T}(\mathcal{H})$ & Set of all symmetry-respecting operators defined in Definition~\ref{def:symmetry_respecting_operator}.\\
	$\mathcal{U}^\mathrm{WS}(\mathcal{H}, \mathcal{H}_\mathrm{W})$ & Set of all WS-operators defined in Definition~\ref{def:WSoperator}.\\
	$\mathcal{W}_H(\rho)$ & Ergotropy of $\rho$ defined in Definition~\ref{def:ergotropy}. \\
	$\mathcal{W}_H^\mathrm{WS}(\rho)$ & WS-Ergotropy of $\rho$ defined in Definition~\ref{def:WSergotropy}. \\
	$B_f$ & Bilinear form on $\mathfrak{g}$ defined as Eq.~\eqref{eq:bilineardef}.\\
	$E_{\lambda j}$ & Eigenvalue of $H_\lambda$ defined in Eq.~\eqref{eq:Hamiltoniandecomp2}. \\
	$F$ & Unitary representation of a group $G$ acting on $\mathcal{H}$. \\
	$G$ & Group. \\
	$H$ & Hamiltonian of the system of interest. \\
	$H_\lambda$ & Component of $H$ acting on $\mathcal{M}_\lambda$ defined as Eq.~\eqref{eq:Hamiltoniandecomp}. \\
	$I_\mathcal{K}$ & Identity operator on a Hilbert space $\mathcal{K}$. \\
	$K$ & $\{(\lambda, j)\ |\ \lambda\in\Lambda_F,\ j=1, \cdots, m_\lambda\}$. \\
	$N$ & Connected center of a Lie group. \\
	$S$ & Semisimple Lie group. \\
	$T^a$ & Torus of degree $a$. \\
	$W(\rho, H, U)$ & Extracted work defined in Definition~\ref{def:extwork}. \\
	$W^\mathrm{WS}(\rho, H, \rho_\mathrm{W}, U)$ & WS-extracted work defined in Definition~\ref{def:WSextwork}. \\
	$\{W_i\}$ & Basis of the Lie algebra of $N$ defined in Lemma~\ref{lem:Liealgebrabasis}.  \\
	$d$ & Dimension of $\mathcal{H}$. \\
	$f$ & Associated representation of the Lie algebra of a Lie group. \\
	$\mathrm{i}$ & Imaginary unit. \\
	$m_\lambda$ & Dimension of $\mathcal{M}_\lambda$.\\
	$p$ & Momentum operator on the work storage.\\
	$\ket{q}_p$ & Momentum eigenstate with eigenvalue $q$.\\
	$r_\lambda$ & Dimension of $\mathcal{R}_\lambda$.\\
	$x$ & Position operator on the work storage.\\
	$\Lambda_F$ & Set of labels of irreducible representations in $F$. \\
	$\ket{\Phi(\bm{n})}$ & State defined as Eq.~\eqref{eq:determinantstate}.\\
	$\iota_\lambda$ & Inclusion of $\mathcal{H}_\lambda$ into $\mathcal{H}$. \\
	$\rho$ & Density operator of a state of the system of interest.\\
	$\rho_\mathrm{W}$ & Density operator of a state of the work storage.\\
	$\{\ket{\phi_{\lambda i}}\}_{i=1}^{r_\lambda}$ & Arbitrary fixed orthonormal basis of $\mathcal{R}_\lambda$.\\
	$\{\ket{\psi_{\lambda j}}\}_{j=1}^{m_\lambda}$ & Eigenstates of $H_\lambda$ defined in Eq.~\eqref{eq:Hamiltoniandecomp2}.\\
	$\mathfrak{g}$ & Lie algebra of a Lie group.\\
\end{longtable}

\section{Formal statement of the main results}
\label{sec:formal_statement}

For a Hilbert space $\mathcal{K}$, we denote by $\mathcal{B}(\mathcal{K})$, $\mathcal{B}^\mathrm{H}(\mathcal{K})$, $\mathcal{B}^+(\mathcal{K})$, $\mathcal{B}^{++}(\mathcal{K})$ and $\mathcal{U}(\mathcal{K})$ the set of all bounded linear, Hermitian, positive, positive definite and unitary operators on $\mathcal{K}$, respectively. 
We denote by $I_\mathcal{K}$ the identity operator on $\mathcal{K}$.
Let $\mathcal{H}$ be a finite-dimensional Hilbert space of the system, $d$ be the dimension of $\mathcal{H}$, $G$ be a group with a unitary representation $F$ acting on $\mathcal{H}$, and $\mathcal{T}$ be the complex conjugation operator w.r.t. some basis. 
The imaginary unit is denoted by $\mathrm{i}$.
We include 0 in $\mathbb{N}$.

\subsection{Formal definitions}

First, we formulate work extraction in the setup without explicitly including the work storage.
Extracted work is defined as the difference between the energy expectation values of the states before and after an operation.

\begin{definition} [Extracted work] \label{def:extwork}
	Let $\rho\in\mathcal{B}^+(\mathcal{H})$, $H\in\mathcal{B}^\mathrm{H}(\mathcal{H})$ and $U\in\mathcal{U}(\mathcal{H})$. 
	The extracted work from a state $\rho$ under a Hamiltonian $H$ by the action of $U$ is defined as
	\begin{align}
		W(\rho, H, U):=\mathrm{tr}(\rho H)-\mathrm{tr}(\rho U^\dagger HU).
	\end{align}
\end{definition}

	In the conventional definition of passive states, it is supposed that all unitary operations are allowed \cite{Pusz1978, Lenard1978}.
	In the present study, on the other hand, we consider the class of operations that respect symmetry constraints imposed on the system.
	When we consider group symmetry, the symmetry constraints are described by the commutativity with a unitary representation of the group.
	On the other hand, for time-reversal symmetry, we only deal with the case without spin degrees of freedom (Class AI) in the present study, where the constraints are described by the commutativity with the (anti-unitary) complex conjugation operator.
	For these two cases, we define symmetry-respecting operators as follows.

\begin{definition} [$(G, F)$-respecting operators, $\mathcal{T}$-respecting operators] \label{def:symmetry_respecting_operator}
	An operator $U\in\mathcal{B}(\mathcal{H})$ is a $(G, F)$-respecting operator, if $U$ satisfies $U\in\mathcal{U}(\mathcal{H})$ and commutes with $F(G)$, i.e., commutes with $F(g)$ for all $g\in G$.
	An operator $U\in\mathcal{B}(\mathcal{H})$ is a $\mathcal{T}$-respecting operator, if $U$ satisfies $U\in\mathcal{U}(\mathcal{H})$ and commutes with $\mathcal{T}$.
	We define $\mathcal{U}_{G, F}(\mathcal{H})$ (resp. $\mathcal{U}_\mathcal{T}(\mathcal{H})$) as the set of all $(G, F)$-respecting (resp. $\mathcal{T}$-respecting) operators on $\mathcal{H}$. 
\end{definition}

	The set of all symmetry-respecting operators forms a group and these operators can be regarded as free operations in terms of the resource theory \cite{Chitambar2019}. 
	
	The maximal extracted work from a state is called ergotropy \cite{Allahverdyan2004}.
	We generalize this notion in the case where allowed operations are restricted by symmetry constraints, and define $(G, F)$-ergotropy and $\mathcal{T}$-ergotropy as the maximal extracted works by symmetry-respecting operations.

\begin{definition} [Ergotropy, $(G, F)$-ergotropy, $\mathcal{T}$-ergotropy] \label{def:ergotropy}
	Let $H\in\mathcal{B}^\mathrm{H}(\mathcal{H})$ and $\rho\in\mathcal{B}^+(\mathcal{H})$. 
	Ergotropy $\mathcal{W}_H(\rho)$ (resp. $(G, F)$-ergotropy $\mathcal{W}_{G, F, H}(\rho)$ or $\mathcal{T}$-ergotropy $\mathcal{W}_{\mathcal{T}, H}(\rho)$) of a state $\rho$ under a Hamiltonian $H$ is defined as the maximal extracted work from $\rho$ under the Hamiltonian $H$ by the action of operators in $\mathcal{U}(\mathcal{H})$ (resp. $\mathcal{U}_{G, F}(\mathcal{H})$ or $\mathcal{U}_\mathcal{T}(\mathcal{H})$), i.e.,
	\begin{align}
		&\mathcal{W}_H(\rho):=\max_{U\in\mathcal{U}(\mathcal{H})} W(\rho, H, U),\\
		&\mathcal{W}_{G, F, H}(\rho):=\max_{U\in\mathcal{U}_{G, F}(\mathcal{H})} W(\rho, H, U),\\
		&\mathcal{W}_{\mathcal{T}, H}(\rho):=\max_{U\in\mathcal{U}_\mathcal{T}(\mathcal{H})} W(\rho, H, U).
	\end{align}
\end{definition}

	Passive states are defined as the states from which no positive work can be extracted, i.e., the states with zero ergotropy.
	We define symmetry-protected passive states as the states from which no positive work can be extracted by any symmetry-respecting operations.

\begin{definition} [Passivity, $(G, F)$-passivity, $\mathcal{T}$-passivity]
	Let $H\in\mathcal{B}^\mathrm{H}(\mathcal{H})$ and $\rho\in\mathcal{B}^+(\mathcal{H})$. 
	A state $\rho$ is passive (resp. $(G, F)$-passive or $\mathcal{T}$-passive) w.r.t. a Hamiltonian $H$, if $\mathcal{W}_H(\rho)=0$ (resp. $\mathcal{W}_{G, F, H}(\rho)=0$ or $\mathcal{W}_{\mathcal{T}, H}(\rho)=0$).
\end{definition}

	Note that passivity is defined for positive operators that are not necessarily normalized. 
This is due to the convenience in the description of Theorem~\ref{thm:GFHP}. 
	
	Even if a state is passive, multiple copies of it are not necessarily passive, i.e., it is possible that positive work can be extracted from multiple copies of a passive state by a collective operation. 
	We define symmetry-protected completely passive states as the states such that no positive work can be extracted from any number of copies of them by any symmetry-respecting collective operations.
	Here we adopt the commutativity with the tensor product of representation of a group as the condition for symmetry-respecting collective operations.

\begin{definition} [Complete passivity, $(G, F)$-complete passivity, $\mathcal{T}$-complete passivity]
	Let $H\in\mathcal{B}^\mathrm{H}(\mathcal{H})$ and $\rho\in\mathcal{B}^+(\mathcal{H})$. 
	A state $\rho$ is completely passive (resp. $(G, F)$-completely passive or $\mathcal{T}$-completely passive) w.r.t. a Hamiltonian $H$, if $\rho^{\otimes N}$ is passive (resp. $(G, F^{\otimes N})$-passive or $\mathcal{T}$-passive) w.r.t. $H^{(N)}$ for all $N\in\mathbb{N}$, where $F^{\otimes N}$ is the tensor product of the representation, and for $\Omega\in\mathcal{B}(\mathcal{H})$, $\Omega^{(N)}\in\mathcal{B}(\mathcal{H}^{\otimes N})$ is defined as 
	\begin{align}
		\Omega^{(N)}=\sum_{i=1}^N I_\mathcal{H}^{\otimes i-1}\otimes \Omega\otimes I_\mathcal{H}^{\otimes N-i}. \label{eq:extensiveoperator}
	\end{align}
\end{definition}

Finally, we define trivial operators.
We say that $H\in\mathcal{B}^\mathrm{H}(\mathcal{H})$ is trivial, if $H$ is a scalar multiple of the identity operator.
If the Hamlitonian of a system is trivial, all states are completely passive because its energy expectation value is not changed by any unitary operations.
When a system has a Lie group symmetry, the system has conserved charges associated with the symmetry.
If the Hamiltonian is a linear combination of the identity operator and the conserved charges, the energy expectation value is invariant. 
We call such a Hamiltonian $(G, F)$-trivial.
If the Hamlitonian of a system is trivial or $(G, F)$-trivial, we cannot construct a proof for identifying completely passive states in the same way as in the general case, and thus we exclude this case from Theorems~\ref{thm:GFHCP},~\ref{thm:finiteGFHCP}~and~\ref{thm:THCP}.

In order to exactly define $(G, F)$-trivial operators, we introduce the Lie algebra associated with a Lie group.
Since a compact Lie group is isomorphic to a matrix Lie group (Corollary 4.22 of \cite{Knapp2002}), when we consider symmetry constraints described by a compact Lie group, we suppose that it is a matrix Lie group.  
We follow the convention in the physics literature, and define the Lie algebra $\mathfrak{g}$ of $G\subset GL(n, \mathbb{C})$ as $\mathfrak{g}:=\{X\in GL(n, \mathbb{C})\ |\ \forall \theta\in\mathbb{R}\ e^{\mathrm{i}\theta X}\in G\}$ and the associated representation $f$ of $\mathfrak{g}$ as the mapping that satisfies 
\begin{align}
	F(e^{\mathrm{i}X})=e^{\mathrm{i}f(X)} 
\end{align}
for all $X\in\mathfrak{g}$.
Note that the associated representation of the tensor product of representation $F^{\otimes N}$ is $f^{(N)}$. 
Then, $(G, F)$-trivial operators are defined as follows.

\begin{definition} [$(G, F)$-trivial operator]
	Let $G$ be a compact Lie group with a unitary representation $F$ acting on $\mathcal{H}$ and $\mathfrak{g}$ be the Lie algebra of $G$ with the associated representation $f$.
	$H\in\mathcal{B}(\mathcal{H})$ is $(G, F)$-trivial, if $H$ can be written as 
	\begin{align}
		H=\alpha I+f(X) \label{eq:trivialHamiltonian}
	\end{align}
with some $\alpha\in\mathbb{R}$ and $X\in\mathfrak{g}$.
\end{definition}

\subsection{Main results}
\label{subsec:main_theorems}

In this subsection, we present our main theorems.
In order to present Theorem~\ref{thm:GFHP}, we describe the irreducible decomposition of a representation of a group.
This part follows Ref.~\cite{Bartlett2007}.
The representation $F$ allows for the decomposition  of the Hilbert space $\mathcal{H}$ depending on the inequivalent irreducible representations: 
\begin{align}
	\mathcal{H}=\bigoplus_{\lambda\in\Lambda_F} \mathcal{H}_\lambda, \label{eq:Hilbert_space_decomp1}
\end{align}
where $\Lambda_F$ is the set of the labels of irreducible representations that appear once or more in $F$.
$\mathcal{H}_\lambda$ can be further decomposed as
\begin{align}
	\mathcal{H}_\lambda=\mathcal{R}_\lambda\otimes\mathcal{M}_\lambda, \label{eq:Hilbert_space_decomp2}
\end{align}
where $\mathcal{R}_\lambda$ is the subspace carrying an irreducible representation $F_\lambda$, and $\mathcal{M}_\lambda$ is the subspace carrying the trivial representation. 
We define $r_\lambda$ and $m_\lambda$ as the dimensions of $\mathcal{R}_\lambda$ and $\mathcal{M}_\lambda$.
Then, $F$ can be written as 
\begin{align}
	F=\sum_{\lambda\in\Lambda_F} \iota_\lambda (F_\lambda\otimes I_{\mathcal{M}_\lambda})\iota_\lambda^\dag, \label{eq:irreducibledecomposition}
\end{align}
where $\iota_\lambda$ is the inclusion of $\mathcal{H}_\lambda$ into $\mathcal{H}$.
From Schur's lemma, the Hamiltonian $H$ commuting with $F(G)$ can be written as
\begin{align}
	H=\sum_{\lambda\in\Lambda_F} \iota_\lambda (I_{\mathcal{R}_\lambda}\otimes H_\lambda)\iota_\lambda^\dag \label{eq:Hamiltoniandecomp}
\end{align}
with some $H_\lambda\in\mathcal{B}^\mathrm{H}(\mathcal{M}_\lambda)$.
Theorem~\ref{thm:GFHP} gives the necessary and sufficient condition for group symmetry-protected passivity.

\begin{theorem} \label{thm:GFHP}
	Let $G$ be a group with a unitary representation $F$ acting on $\mathcal{H}$, $H\in\mathcal{B}^\mathrm{H}(\mathcal{H})$ commute with $F(G)$, and $\rho\in\mathcal{B}^+(\mathcal{H})$.
	Then, a state $\rho$ is $(G, F)$-passive w.r.t. a Hamiltonian $H$, if and only if 
\begin{align}
	\rho_\lambda:=\mathrm{tr}_{\mathcal{R}_\lambda} (\iota_\lambda^\dag\rho \iota_\lambda) \label{eq:rholambdadef}
\end{align}
is passive w.r.t. $H_\lambda$ for all $\lambda\in\Lambda_F$, where $H_\lambda$ is defined as Eq.~\eqref{eq:Hamiltoniandecomp}.
\end{theorem}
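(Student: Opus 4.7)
The plan is to reduce $(G,F)$-passivity of $\rho$ to ordinary passivity of each isotypic block by symmetry-averaging. First, introduce the twirling map $\mathcal{S}(\rho):=\int_{\overline{F(G)}} dU\, U\rho U^\dag$, where the average is taken with respect to the Haar measure of the closure of $F(G)$ inside the compact group $\mathcal{U}(\mathcal{H})$ (this closure acts with the same commutant and isotypic decomposition as $F(G)$, so no generality is lost). Using cyclicity of trace together with the fact that any $(G,F)$-respecting unitary $U$ and the symmetry-respecting Hamiltonian $H$ commute with every element of $F(G)$, one obtains
\begin{equation}
W(\rho,H,U)=W(\mathcal{S}(\rho),H,U) \qquad \forall\, U\in\mathcal{U}_{G,F}(\mathcal{H}),
\end{equation}
by the same calculation as Eq.~(A4) of the main text. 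Hence $\mathcal{W}_{G,F,H}(\rho)=\mathcal{W}_{G,F,H}(\mathcal{S}(\rho))$.

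Next, I would apply Schur's lemma twice. On the one hand, since $\mathcal{S}(\rho)$ lies in the commutant of $F(G)$, the decomposition \eqref{eq:irreducibledecomposition}--\eqref{eq:Hamiltoniandecomp} forces
\begin{equation}
\mathcal{S}(\rho)=\sum_{\lambda\in\Lambda_F}\iota_\lambda\!\left(\tfrac{1}{r_\lambda}I_{\mathcal{R}_\lambda}\otimes\rho_\lambda\right)\!\iota_\lambda^\dag,
\end{equation}
where $\rho_\lambda=\mathrm{tr}_{\mathcal{R}_\lambda}(\iota_\lambda^\dag\rho\iota_\lambda)$ as defined in \eqref{eq:rholambdadef}; the factor $1/r_\lambda$ arises because partial trace over $\mathcal{R}_\lambda$ of $I_{\mathcal{R}_\lambda}/r_\lambda$ equals $1$. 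On the other hand, any $U\in\mathcal{U}_{G,F}(\mathcal{H})$ admits the block form $U=\sum_{\lambda}\iota_\lambda(I_{\mathcal{R}_\lambda}\otimes U_\lambda)\iota_\lambda^\dag$ with $U_\lambda\in\mathcal{U}(\mathcal{M}_\lambda)$, and the parameters $\{U_\lambda\}_{\lambda\in\Lambda_F}$ can be chosen independently. Substituting both decompositions into $W(\mathcal{S}(\rho),H,U)$ and using that $\mathrm{tr}_{\mathcal{R}_\lambda}(I_{\mathcal{R}_\lambda}/r_\lambda)=1$ yields
\begin{equation}
W(\rho,H,U)=\sum_{\lambda\in\Lambda_F} W(\rho_\lambda,H_\lambda,U_\lambda).
\end{equation}

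From this additive decomposition the theorem follows by independent maximization: $\mathcal{W}_{G,F,H}(\rho)=\sum_{\lambda}\mathcal{W}_{H_\lambda}(\rho_\lambda)$, and since each summand is non-negative, $\mathcal{W}_{G,F,H}(\rho)=0$ is equivalent to $\mathcal{W}_{H_\lambda}(\rho_\lambda)=0$ for every $\lambda\in\Lambda_F$, which is the desired characterization. I do not expect any hard step: the Schur-lemma decompositions of $H$, $\mathcal{S}(\rho)$, and $U$ are the standard ingredients, and the only mildly delicate point is treating groups $G$ that are not themselves compact, which I handle by passing to the compact closure of $F(G)$ in $\mathcal{U}(\mathcal{H})$ so that the Haar average defining $\mathcal{S}$ exists and reproduces the same commutant as $F(G)$.
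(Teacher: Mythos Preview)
Your proposal is correct and follows essentially the same route as the paper. The paper in fact presents the argument twice: Appendix~A of the main text uses precisely your twirling map $\mathcal{S}$ and then block-decomposes, while the Supplemental Material proof of Theorem~\ref{thm:GFHP} (via Proposition~\ref{prop:ergotropy}) is slightly more streamlined---since $H-U^\dag HU$ already lies in the commutant of $F(G)$, one computes $\mathrm{tr}\bigl(\rho(H-U^\dag HU)\bigr)=\sum_\lambda W(\rho_\lambda,H_\lambda,U_\lambda)$ directly from the partial-trace definition of $\rho_\lambda$, making the symmetrization of $\rho$ unnecessary; this also sidesteps the compactness issue you handled by passing to $\overline{F(G)}$.
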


Theorem~\ref{thm:THP} gives the necessary and sufficient condition for time-reversal symmetry-protected passivity.
\begin{theorem} \label{thm:THP}
	Let $H\in\mathcal{B}^\mathrm{H}(\mathcal{H})$ commute with $\mathcal{T}$ and $\rho\in\mathcal{B}^+(\mathcal{H})$.
	Then, a state $\rho$ is $\mathcal{T}$-passive w.r.t. a Hamiltonian $H$, if and only if $\mathcal{S}_\mathcal{T}(\rho)$ is passive w.r.t. $H$, where time-reversal symmetrizing mapping $\mathcal{S}_\mathcal{T}$ is defined as
	\begin{align}
		\mathcal{S}_\mathcal{T}(\rho):=\frac{1}{2}(\rho+\mathcal{T}\rho \mathcal{T}^{-1}). \label{eq:Tsymmetrizer}
	\end{align}
\end{theorem}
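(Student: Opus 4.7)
The plan is to reduce $\mathcal{T}$-passivity of $\rho$ to ordinary passivity of the symmetrized operator $\mathcal{S}_\mathcal{T}(\rho)$ by establishing the chain
\begin{equation}
\mathcal{W}_{\mathcal{T},H}(\rho)\;=\;\mathcal{W}_{\mathcal{T},H}(\mathcal{S}_\mathcal{T}(\rho))\;=\;\mathcal{W}_{H}(\mathcal{S}_\mathcal{T}(\rho)),
\end{equation}
after which the theorem follows by setting each side equal to zero. This is the same template used for Theorem~\ref{thm:GFHP}, with the orbit-average over $G$ replaced by the two-element average defining $\mathcal{S}_\mathcal{T}$.

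For the first equality, I would take any $U\in\mathcal{U}_\mathcal{T}(\mathcal{H})$ and observe that $H$ and $U^\dag HU$ are both Hermitian and commute with $\mathcal{T}$. The anti-unitarity identity $\mathrm{tr}(\mathcal{T}A\mathcal{T}^{-1})=\overline{\mathrm{tr}(A)}$, combined with the fact that $\mathrm{tr}(\rho O)$ is real whenever $O$ is Hermitian, gives $\mathrm{tr}(\mathcal{T}\rho\mathcal{T}^{-1}\,O)=\mathrm{tr}(\rho O)$ for every Hermitian $O$ commuting with $\mathcal{T}$. Applying this to $O=H$ and $O=U^\dag HU$ shows that $W(\mathcal{T}\rho\mathcal{T}^{-1},H,U)=W(\rho,H,U)$, hence by linearity $W(\mathcal{S}_\mathcal{T}(\rho),H,U)=W(\rho,H,U)$; taking the supremum over $U\in\mathcal{U}_\mathcal{T}(\mathcal{H})$ yields the first equality.

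For the second equality I would use that $\mathcal{S}_\mathcal{T}(\rho)$ is itself $\mathcal{T}$-invariant, which follows from $\mathcal{T}^2=I_{\mathcal{H}}$ in the class-AI setting. Since $\mathcal{S}_\mathcal{T}(\rho)$ and $H$ both commute with $\mathcal{T}$, each of their eigenspaces is $\mathcal{T}$-invariant and therefore admits an orthonormal basis of $\mathcal{T}$-fixed (``real'') vectors. The standard passivation unitary, which permutes an eigenbasis of $\mathcal{S}_\mathcal{T}(\rho)$ in decreasing-eigenvalue order onto an eigenbasis of $H$ in increasing-eigenvalue order, can thus be realised as a real orthogonal matrix, which is automatically $\mathcal{T}$-respecting. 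This gives $\mathcal{W}_{\mathcal{T},H}(\mathcal{S}_\mathcal{T}(\rho))\geq\mathcal{W}_{H}(\mathcal{S}_\mathcal{T}(\rho))$, with the opposite inequality trivial, and the theorem follows.

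The main obstacle lies in this second step: the real bases inside each eigenspace of $\mathcal{S}_\mathcal{T}(\rho)$ and of $H$ must be chosen compatibly when eigenvalues of one operator split across several eigenspaces of the other (degenerate cases). This is however harmless, because any $\mathcal{T}$-invariant complex subspace of $\mathcal{H}$ is spanned by its $\mathcal{T}$-fixed vectors once $\mathcal{T}^2=I_{\mathcal{H}}$, so the compatible real basis can be constructed eigenspace by eigenspace. It is precisely here that the spinless assumption on $\mathcal{T}$ enters; a Kramers-type time-reversal with $\mathcal{T}^2=-I_{\mathcal{H}}$ would obstruct the existence of $\mathcal{T}$-fixed vectors and require a genuinely different argument.
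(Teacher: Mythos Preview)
Your proposal is correct and follows essentially the same route as the paper. The paper proves the single identity $\mathcal{W}_{\mathcal{T},H}(\rho)=\mathcal{W}_H(\mathcal{S}_\mathcal{T}(\rho))$ (Proposition~\ref{prop:Twork}) by exactly your two ingredients: the anti-unitarity computation showing $W(\mathcal{T}\rho\mathcal{T}^{-1},H,U)=W(\rho,H,U)$ for $\mathcal{T}$-respecting $U$, and the construction of a real passivation unitary $U=\sum_i\ket{\psi_i}\bra{\phi_i}$ from $\mathcal{T}$-fixed eigenbases of $H$ and $\mathcal{S}_\mathcal{T}(\rho)$; your additional remarks on handling degeneracies and the necessity of $\mathcal{T}^2=I$ are more explicit than in the paper but do not alter the argument.
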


	Next, we identify the condition for symmetry-protected completely passive states.
	Here we assume that the density operator of the state is positive definite because we take $\log(\rho)$ in the proof.
	Unlike in the case of passivity, we separately deal with the cases where symmetry constraints are described by Lie groups and by finite groups.
	The condition for complete passivity protected by Lie group symmetry is as follows.

\begin{theorem} \label{thm:GFHCP}
	Let $G$ be a connected compact Lie group with a faithful unitary representation $F$ acting on $\mathcal{H}$, $\mathfrak{g}$ be the Lie algebra of $G$ with the associated representation $f$, $H\in\mathcal{B}^\mathrm{H}(\mathcal{H})$ commute with $F(G)$ and be not $(G, F)$-trivial and $\rho\in\mathcal{B}^{++}(\mathcal{H})$ satisfy $\mathrm{tr}(\rho)=1$.
	Then, a state $\rho$ is $(G, F)$-completely passive w.r.t. a Hamiltonian $H$, if and only if $\rho$ can be written as $\rho=\frac{1}{Z}e^{-\beta H-f(X)}$ with some $\beta\in[0, \infty)$ and $X\in\mathfrak{g}$, where $Z:=\mathrm{tr}(e^{-\beta H-f(X)})$.
\end{theorem}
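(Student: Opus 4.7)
\medskip

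\noindent\textbf{Proof plan for Theorem~\ref{thm:GFHCP}.}
The \emph{if} direction is immediate by the argument already sketched after the theorem in the main text: for $\rho_{\rm GGE} = e^{-\beta H - f(X)}/Z$, introduce the shifted Hamiltonian $H' := H + \beta^{-1} f(X)$, so that $\rho_{\rm GGE}$ is the ordinary Gibbs ensemble of $H'$ and hence (standardly) completely passive w.r.t.\ $H'$. For any $(G,F^{\otimes N})$-respecting unitary $U$ we have $[U, f(X)^{(N)}]=0$, so $\mathrm{tr}(U\rho^{\otimes N}U^{\dagger} f(X)^{(N)}) = \mathrm{tr}(\rho^{\otimes N} f(X)^{(N)})$ and therefore $W(\rho^{\otimes N}, H^{(N)}, U) = W(\rho^{\otimes N}, H'^{(N)}, U) \le 0$. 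The bulk of the work is the converse.

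For the \emph{only if} direction I would follow the three-step blueprint that the paper carries out for the dimer model in Appendix~\ref{sec:proof} and globalize it. Since $G$ is connected and compact, one can use the standard structure theorem to write $G$ as an almost-direct product of its connected center $N$ (a torus) and a semisimple compact factor $S$, with Lie algebra $\mathfrak{g} = \mathfrak{n} \oplus \mathfrak{s}$. On $\mathfrak{s}$ the (negative) Killing form is a positive-definite invariant inner product, giving an orthonormal basis $\{T^a\}$ and a quadratic Casimir $C_S := \sum_a f(T^a) f(T^a)$ that commutes with $F(G)$; on $\mathfrak{n}$ the generators $\{W_i\}$ already commute with everything. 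The aim of Step 1 is to show that if $\rho$ is symmetry-protected completely passive, then $\rho^{\otimes 2}$ commutes with every spectral projector of $C_S$ and of each $f(W_i)$. To prove this, for each such spectral projector $P_\omega$ I would mimic the dimer construction: build a unitary $U_m := I - \sum_{i,j\in\{0,1\}} (-1)^{i-j} R_{ij}^{\otimes m} \otimes |\Psi_i\rangle\langle\Psi_j|$ acting on many copies, where $R_{ij} := \tfrac12(I-(-1)^i T)[P_\omega \otimes (I - P_\omega)](I-(-1)^j T)$ with $T$ the swap, and $|\Psi_0\rangle,|\Psi_1\rangle$ are two symmetry-respecting states on an auxiliary block of copies that are eigenstates of the block Hamiltonian with distinct eigenvalues. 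Because $H$ is assumed not $(G,F)$-trivial, one can always engineer $|\Psi_1\rangle$: using that $f(\mathfrak{s})$ is traceless on every irrep one obtains a totally antisymmetric singlet across $r_\lambda$ copies (generalizing $\ket{\Psi_1}$ in the dimer proof) with a strictly different energy from a product of vacua in a different sector. The extracted work then takes the form $a\bigl[1 - b\,(1 + c\,\|[\rho^{\otimes 2}, P_\omega]\|_{\rm HS}^2)^{-m}\bigr]$ as in Eq.~(B6) of the excerpt; non-positivity for all $m$ forces $[\rho^{\otimes 2}, P_\omega]=0$, hence $[\rho^{\otimes 2}, C_S]=0$ and $[\rho^{\otimes 2}, f(W_i)]=0$.

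Step 2 extracts the algebraic consequence. Writing $\xi := -\log\rho$, I would compute the partial trace $\mathrm{tr}_{\mathcal{H}_2}\bigl[(I\otimes f(Y))\,[\xi\otimes I + I\otimes\xi, \,C_S + \sum_i f(W_i)\otimes f(W_i)]\bigr]$ for each $Y\in\mathfrak{g}$. Using the invariance $B_f(\mathrm{ad}_Y X, X') + B_f(X, \mathrm{ad}_Y X') = 0$ of the bilinear form $B_f(X,X'):=\mathrm{tr}(f(X)f(X'))$ on $\mathfrak{g}$, together with the orthogonality of the decomposition $\mathfrak{s}\oplus\mathfrak{n}$, the same identity as Eq.~(B8) in the dimer proof generalizes to $2[\eta, f(Y)]$, where $\eta := \xi - f(X_\xi)$ and $X_\xi \in \mathfrak{g}$ is the unique element obtained by $B_f$-orthogonally projecting $\xi$ onto $f(\mathfrak{g})$. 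The vanishing of the bracket for all $Y$ shows that $\eta$ commutes with $F(G)$, i.e.\ $\rho = e^{-\eta - f(X_\xi)}/Z$ with symmetry-respecting $\eta$.

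Step 3, pinning down $\eta = \alpha I + \beta H$ with $\beta \ge 0$, is in my view the main obstacle. By Schur's lemma both $\eta$ and $H$ decompose along the irreducible sectors as $\bigoplus_\lambda I_{\mathcal{R}_\lambda}\otimes\eta_\lambda$ and $\bigoplus_\lambda I_{\mathcal{R}_\lambda}\otimes H_\lambda$; one must show that, across all sectors, $\eta_\lambda = \alpha I + \beta H_\lambda$ with a common $\alpha,\beta$. In the dimer proof this was trivial because each irreducible sector is one-dimensional in $\mathcal{M}_\lambda$, but in general the $H_\lambda$'s have nontrivial spectra and different sectors have different total charges. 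The plan is to adapt the virtual-temperature technique of Skrzypczyk–Silva–Brunner~\cite{Skrzypczyk2015} in a symmetry-respecting manner: for any two pairs $(\lambda,j)$ and $(\lambda',j')$ of energy eigenstates with distinct eigenvalues, construct symmetry-respecting permutation unitaries on multiple copies of $\rho$ that exchange degenerate multi-copy energy levels of the same total-charge sector. For this I would use states of the antisymmetric/singlet type introduced in Step 1 to pad auxiliary copies so that the two candidate states live in the same global irrep of $F^{\otimes N}$, thereby making the swap a legitimate $(G,F^{\otimes N})$-respecting unitary. Non-positivity of the extracted work then forces the occupation ratios $\langle E_{\lambda j}|\rho|E_{\lambda j}\rangle / \langle E_{\lambda' j'}|\rho|E_{\lambda' j'}\rangle$ to be $\exp(\beta(E_{\lambda j}-E_{\lambda' j'}))\cdot(\text{charge factor})$ for a single $\beta$, and hence $\eta_\lambda$ to be $\alpha I + \beta H_\lambda$ with $\beta\ge 0$. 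Finally, the faithfulness of $F$ ensures that $X_\xi$ is uniquely determined and yields the chemical potentials $\mu_i$ after expanding in a basis of $\mathfrak{g}$, and normalization fixes $\alpha = \log Z$.
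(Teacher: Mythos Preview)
Your three-step strategy is the same one the paper carries out, and Steps~1--2 are essentially correct. One small remark: lumping $C_S + \sum_i f(W_i)\otimes f(W^i)$ together and doing a single partial-trace identity works, but note that the paper instead handles $\mathfrak{n}$ and $\mathfrak{s}$ separately --- for the abelian piece it simply observes that each $f(W_i)$ already commutes with $F(G)$ and $H$, so Condition~A1 with $M=1$ gives $[\rho,f(W_i)]=0$ directly, without any two-copy Casimir manipulation.

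The genuine gap is in Step~3. Your sentence ``the occupation ratios are $\exp(\beta\Delta E)\cdot(\text{charge factor})$, hence $\eta_\lambda=\alpha I+\beta H_\lambda$'' hides exactly the nontrivial part. The virtual-temperature constraints you obtain by padding with antisymmetric singlets are of the form
\[
\bm n'_0\cdot(\bm s-\beta\bm E)=\bm n'_1\cdot(\bm s-\beta\bm E)
\quad\text{whenever}\quad
\bm n'_0\cdot\bm w_i=\bm n'_1\cdot\bm w_i\ \ (i=0,\dots,a),
\]
where $s_{\lambda j}=-\log p_{\lambda j}$ and the $\bm w_i$ record the $\mathfrak n$-weights of each sector. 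To conclude that $\bm s-\beta\bm E$ lies in $\mathrm{span}\{\bm w_0,\dots,\bm w_a\}$ --- which is what makes the ``charge factor'' equal $e^{-f(X^{\mathrm C})}$ for some $X^{\mathrm C}\in\mathfrak n$ rather than an arbitrary sector-dependent scalar --- you need two ingredients that are absent from your sketch: (i) a basis $\{W_i\}$ of $\mathfrak n$ whose weights $w_{i\lambda}$ are \emph{integers} (this comes from the torus structure $N\cong T^a$; Lemma~\ref{lem:Liealgebrabasis}), and (ii) a lattice/Gram--Schmidt duality argument (Lemma~\ref{lem:latticeduality}) showing that integrality of the $\bm w_i$ lets you pass from ``orthogonal to all admissible $\bm n'_0-\bm n'_1\in\mathbb Z^K$'' to ``in $\mathrm{span}\{\bm w_i\}$''. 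Without these, your Step~3 only shows that within each fixed $\lambda$ the eigenvalues of $\eta_\lambda$ are affine in $E_{\lambda j}$; the sector-dependent offset $\alpha_\lambda$ is not forced to be constant, and the Skrzypczyk--Silva--Brunner technique alone cannot close this gap because you can only compare states in the \emph{same} global charge sector. Relatedly, your final claim ``$\eta_\lambda=\alpha I+\beta H_\lambda$ with a single $\alpha$'' is not what actually comes out even for a GGE once you project onto all of $f(\mathfrak g)$; the correct conclusion is $-\log\tau-\beta H=\mu_0 I+f(X^{\mathrm C})$ with $X^{\mathrm C}\in\mathfrak n$, which then recombines with $f(X^{\mathrm S})$ to give the GGE.
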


	In this theorem, we assume that the representation $F$ is faithful, i.e., injective.
	This condition is usually satisfied, because when we consider a system that has symmetry of $G$ in a physics context, the faithfulness of its representation is implicitly assumed. 
	This assumption is used to prove that the bilinear form $B_f$ on $\mathfrak{g}$ defined as Eq.~\eqref{eq:bilineardef} is nondegenerate.

	For the case of finite groups, we investigate a finite cyclic group and a dihedral group, and we prove that $(G, F)$-completely passive states are only Gibbs ensembles.
	Although we have not succeeded identifying symmetry-protected completely passive states for general finite groups, we conjecture that they are also only Gibbs ensembles. 
	The difficulty of the proof for the general case comes from the fact that the generalization of Proposition~\ref{prop:prefiniteGEderivation} to general finite groups is quite nontrivial.
	If we can prove this proposition for general finite groups, we can prove that completely passive states protected by finite group symmetry are only Gibbs ensembles.
	
	For $n\geq3$, a dihedral group $D_n$ is defined as $D_n:=\{1, t, \cdots, t^{n-1}, r, rt, \cdots, rt^{n-1}\}$, where $1$ is the identity element, and $t$ and $r$ satisfy $t^n=r^2=1$ and $tr=rt^{-1}$. 
	Note that $D_n$ is a noncommutative finite group. 
	As a physical example, $t$ and $r$ can be regarded as the translation and inversion operators on a one-dimensional lattice with the periodic boundary condition.

\begin{theorem} \label{thm:finiteGFHCP}
	Let $G$ be a finite cyclic group or a dihedral group with a unitary representation $F$ acting on $\mathcal{H}$, $H\in\mathcal{B}^\mathrm{H}(\mathcal{H})$ commute with $F(G)$ and be not trivial, and $\rho\in\mathcal{B}^{++}(\mathcal{H})$ satisfy $\mathrm{tr}(\rho)=1$.
	Then, a state $\rho$ is $(G, F)$-completely passive w.r.t. a Hamiltonian $H$, if and only if $\rho$ can be written as $\rho=\frac{1}{Z}e^{-\beta H}$ with some $\beta\in[0, \infty)$, where $Z:=\mathrm{tr}(e^{-\beta H})$.
\end{theorem}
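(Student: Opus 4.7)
The plan for Theorem~\ref{thm:finiteGFHCP} is to follow the three-step template of the dimer-model proof in Appendix~\ref{sec:proof}, with the key divergence being the replacement of the Casimir-based argument of Step~2 by a direct proof that every symmetry-protected completely passive state commutes with $F(G)$; once this is established, the conclusion is reduced to ordinary complete passivity applied inside each isotypic block.

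The $(\Leftarrow)$ direction is immediate: by Pusz--Woronowicz the Gibbs ensemble $\rho=\tfrac{1}{Z}e^{-\beta H}$ with $\beta\geq 0$ is completely passive against every unitary, hence a fortiori against the smaller class $\mathcal{U}_{G,F^{\otimes N}}(\mathcal{H}^{\otimes N})$. For the $(\Rightarrow)$ direction, my first task is to show that $(G,F)$-complete passivity forces $[\rho,F(g)]=0$ for every $g\in G$. When $G$ is cyclic it is abelian, so every $F(g)$ and hence every spectral projector $P_\omega$ of $F(g)$ is symmetry-respecting. I can therefore copy Step~1 of Appendix~\ref{sec:proof} verbatim with $P_\omega$ replacing the Casimir eigenprojector: the swap-based multi-copy unitaries built from $P_\omega$ and from totally antisymmetric vectors remain symmetry-respecting, and the analogue of Eq.~\eqref{eq:general_extracted_work} forces $\|[\rho,P_\omega]\|_\mathrm{HS}=0$ for every such $\omega$, hence $[\rho,F(g)]=0$.

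The dihedral case $G=D_n$ is the main obstacle. Here $F(t)$ is not itself symmetry-respecting because $F(r)F(t)F(r)^{-1}=F(t^{-1})$. My strategy is to exploit exactly this conjugation: for every eigenvalue $\omega$ of $F(t)$ with spectral projector $P_\omega$, the identity $F(r)P_\omega F(r)^{-1}=P_{\omega^{-1}}$ implies that $P_\omega+P_{\omega^{-1}}$ commutes with both $F(t)$ and $F(r)$, so it is symmetry-respecting and can be plugged into the Appendix~\ref{sec:proof} Step~1 construction, yielding $[\rho,P_\omega+P_{\omega^{-1}}]=0$. Next, $F(r)$ itself is involutive and satisfies $F(r)^2=I$, so its eigenprojectors $\tfrac12(I\pm F(r))$ commute with $F(r)$, and on each eigenspace of $F(r)$ the element $F(t)+F(t^{-1})$ acts; running the swap construction with these yields commutativity of $\rho$ with the eigenprojectors of $F(r)$. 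A short group-theoretic argument then separates $P_\omega$ from $P_{\omega^{-1}}$ and upgrades these partial commutations to $[\rho,F(g)]=0$ for every $g\in D_n$.

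Once $\rho$ is symmetry-respecting, both $\rho$ and $H$ split via the isotypic decomposition~\eqref{eq:Hilbert_space_decomp1}--\eqref{eq:Hamiltoniandecomp} as $\rho=\bigoplus_\lambda I_{\mathcal{R}_\lambda}\otimes\rho_\lambda$ and $H=\bigoplus_\lambda I_{\mathcal{R}_\lambda}\otimes H_\lambda$, and every symmetry-respecting unitary has the corresponding block form. Within each block ordinary complete passivity (Pusz--Woronowicz, or the virtual-temperature technique of Ref.~\cite{Skrzypczyk2015}) forces $\rho_\lambda\propto e^{-\beta_\lambda H_\lambda}$ with $\beta_\lambda\geq 0$. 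The remaining step is to equalise the $\beta_\lambda$ across sectors; since $H$ is non-trivial there exist energy eigenstates in (possibly distinct) sectors with distinct eigenvalues, and at the $N$-copy level the representation $F^{\otimes N}$ acquires additional equivalent irreducible components allowing symmetry-respecting transpositions between products of pure states that straddle different single-copy sectors. The no-positive-work-extraction hypothesis applied to these cross-sector swaps forces the virtual temperatures of all sector pairs to coincide, giving a single $\beta\geq 0$ and hence $\rho=\tfrac{1}{Z}e^{-\beta H}$. I expect Step~1 in the dihedral case to be the hardest part technically, while the cross-sector equalisation in the final step is conceptually the subtlest and is where the hypothesis that $H$ be non-trivial is used in an essential way.
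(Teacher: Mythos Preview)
Your overall architecture is exactly that of the paper: first force $[\rho,F(G)]=0$ via multi-copy swap unitaries (the paper's Condition~\ref{conditionA1} and Proposition~\ref{prop:prefiniteGEderivation}), then finish by a virtual-temperature argument. The cyclic case and the $(\Leftarrow)$ direction are fine. In the dihedral case, however, two of your steps do not go through as written.

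\emph{Separating $P_\omega$ from $P_{\omega^{-1}}$.} From $[\rho,P_\omega+P_{\omega^{-1}}]=0$ alone there is no single-copy ``short group-theoretic argument'' that yields $[\rho,P_\omega]=0$. In a two-dimensional standard block of $D_n$ one has $P_\omega+P_{\omega^{-1}}=I$, and a Hermitian $\rho$ with equal diagonal entries and nonzero real off-diagonal entry commutes with $P_\omega+P_{\omega^{-1}}$ and with $F(r)$ yet fails to commute with $F(t)$. The paper's remedy is to pass to two copies: the operator $P_\omega^{\otimes 2}+P_{\omega^{-1}}^{\otimes 2}$ \emph{does} commute with $F^{\otimes 2}(r)$ and with $H^{(2)}$, so Condition~\ref{conditionA1} gives $[\rho^{\otimes 2},P_\omega^{\otimes 2}+P_{\omega^{-1}}^{\otimes 2}]=0$, and tracing against $P_\omega\otimes I$ isolates $\mathrm{tr}(\rho P_\omega)\,[\rho,P_\omega]=0$.

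\emph{Commutation with $F(r)$.} The projectors $\tfrac12(I\pm F(r))$ are \emph{not} symmetry-respecting: since $F(t)F(r)=F(r)F(t)^{-1}$ they fail to commute with $F(t)$, so they cannot be fed into the swap construction of Appendix~\ref{sec:proof}. The paper instead observes that, once $[\rho,P_\omega]=0$ is in hand, the operators $P_\omega F(r)$ (for real $\omega$) and $(P_\omega+P_{\omega^{-1}})F(r)$ (for complex $\omega$) \emph{are} symmetry-respecting; Condition~\ref{conditionA1} applied to these yields $P_\omega[\rho,F(r)]=0$ for every $\omega$, and summing over $\omega$ gives $[\rho,F(r)]=0$.

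Your Steps~2--3 (block-wise Gibbs plus cross-sector equalisation) diverge from the paper, which bypasses the intermediate $\rho_\lambda\propto e^{-\beta_\lambda H_\lambda}$ step and instead compares all sectors at once using the totally antisymmetric states $\ket{\Phi(|G|\bm n)}$ of Eq.~\eqref{eq:determinantstate} (these have $F^{\otimes L}(g)$-eigenvalue $1$ for every $g$ because $g^{|G|}=1$). Your route is plausible but will need care when some $H_\lambda$ are scalar, since block-wise complete passivity then gives no constraint on $\rho_\lambda$, and the cross-sector swaps must carry the entire burden.
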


We also consider time-reversal symmetry as an example of symmetry described by an antiunitary operator, and identify symmetry-protected completely passive states.

\begin{theorem} \label{thm:THCP}
	Let $H\in\mathcal{B}^\mathrm{H}(\mathcal{H})$ commute with $\mathcal{T}$ and be not trivial, and $\rho\in\mathcal{B}^{++}(\mathcal{H})$ satisfy $\mathrm{tr}(\rho)=1$.
	Then, a state $\rho$ is $\mathcal{T}$-completely passive w.r.t. a Hamiltonian $H$, if and only if $\rho$ can be written as $\rho=\frac{1}{Z}e^{-\beta H}$ with some $\beta\in[0, \infty)$, where $Z:=\mathrm{tr}(e^{-\beta H})$. 
\end{theorem}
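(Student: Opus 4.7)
The \emph{if} direction is immediate: a Gibbs ensemble $\rho=e^{-\beta H}/Z$ with $\beta\ge 0$ is completely passive in the unrestricted sense by the classical Pusz--Woronowicz--Lenard theorem, and a fortiori $\mathcal{T}$-completely passive. For the \emph{only if} direction, my plan is to first establish that $\rho$ is diagonal in a $\mathcal{T}$-invariant energy eigenbasis, and then close with a virtual-temperature argument. I would fix the real orthonormal basis of $\mathcal{H}$ in which $\mathcal{T}$ acts as complex conjugation; because $H$ commutes with $\mathcal{T}$ it is real symmetric in this basis and admits a real orthonormal eigenbasis $\{\ket{E_l}\}$. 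Consequently every rank-one projector $\ket{E_l}\bra{E_l}$, every eigenspace projector $\Pi_E$, and the swap $T$ between two copies are real operators, hence $\mathcal{T}$-respecting, and so are their tensor products and real linear combinations.

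With this preparation, I would re-run the construction in the proof of Proposition~\ref{prop:U_complete} verbatim, using these real eigenvectors: the non-triviality of $H$ furnishes two distinct eigenvalues, so real vectors $\ket{E_{k_0}}, \ket{E_{k_1}}$ and an arbitrary real eigenvector $\ket{E_l}$ enter the construction of the multi-copy unitary $U_m$ on $2m+1$ copies. All ingredients ($T$, $\ket{E_l}\bra{E_l}$, $\ket{E_{k_i}}\bra{E_{k_j}}$) being real, $U_m$ is itself real, hence commutes with $\mathcal{T}^{\otimes(2m+1)}$, while the algebraic identity $[U_m^\dagger H^{(2m+1)} U_m, H^{(2m+1)}]=0$ is preserved. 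The work formula obtained in Eq.~\eqref{eq:ext_work},
\begin{equation*}
W(\rho^{\otimes 2m+1},U_m) = a\bigl[1-b\bigl(1+c\,\|[\rho,\ket{E_l}\bra{E_l}]\|_{\mathrm{HS}}^2\bigr)^{-m}\bigr],
\end{equation*}
then shows that $\mathcal{T}$-complete passivity forces $[\rho,\ket{E_l}\bra{E_l}]=0$ for every real energy eigenvector, so $\rho$ is diagonal in this basis: $\rho=\sum_l p_l\ket{E_l}\bra{E_l}$.

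Once the diagonalization is in place, I would invoke the virtual-temperature argument of Ref.~\cite{Skrzypczyk2015} as used at the end of the proof of Proposition~\ref{prop:U_complete}: the swap and partial-swap unitaries between multi-copy basis elements $\ket{E_{l_1}}\otimes\cdots\otimes\ket{E_{l_N}}$ appearing there are permutation matrices in the chosen real basis, hence automatically real and $\mathcal{T}^{\otimes N}$-respecting. $\mathcal{T}$-complete passivity along this family forces the ratio $p_l/p_m=e^{-\beta(E_l-E_m)}$ to hold with a single universal $\beta$, and the two-level passivity inequality $p_l\ge p_{l'}$ whenever $E_l<E_{l'}$ enforces $\beta\ge 0$; this yields the claimed Gibbs form. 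The main obstacle is the diagonalization step, and it is precisely the step where one might worry that symmetry constraints invalidate the Proposition~\ref{prop:U_complete} construction. The resolution is the almost tautological observation that choosing a real eigenbasis makes every operator in that construction real; no new unitary needs to be invented, and the extracted-work computation is unchanged.
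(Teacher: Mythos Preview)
Your proposal is correct and follows essentially the same route as the paper: diagonalize $H$ in a real ($\mathcal{T}$-invariant) eigenbasis so that the projectors $\ket{E_l}\bra{E_l}$, the swap $T$, and hence the Proposition~\ref{prop:U_complete} unitaries $U_m$ are all real and $\mathcal{T}$-respecting, forcing $\rho$ diagonal; then finish with the virtual-temperature swap argument, whose unitaries are likewise real permutation matrices. The paper packages this via the abstract Conditions~\ref{conditionB1}/\ref{conditionB2} and Propositions~\ref{prop:workextop1}/\ref{prop:virtualtemperature}, but for the $\mathcal{T}$ case this general machinery reduces exactly to what you wrote (cf.\ Appendix~\ref{sec:discrete_symmetry} and Propositions~\ref{prop:TGEderivation}/\ref{prop:Tworkextop}).
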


\section{Proofs of the main theorems}
\label{sec:main_theorems}

In this section, we prove the main theorems presented in Sec.~\ref{subsec:main_theorems}.

\subsection{Proof of Theorem~\ref{thm:GFHP}}

In order to prove Theorem~\ref{thm:GFHP}, we prove that $\mathcal{W}_{G, F, H}(\rho)$ is given by the sum of $\mathcal{W}_{H_\lambda}(\rho_\lambda)$ in Proposition~\ref{prop:ergotropy}.
The proof of this proposition shows that the $(G, F)$-respecting operator $U$ that extracts the maximal work $\mathcal{W}_{G, F, H}(\rho)$ from $\rho$ can be written as $U=\sum_{\lambda\in\Lambda_F} \iota_\lambda (I_{\mathcal{R}_\lambda}\otimes U_\lambda) \iota_\lambda^\dag$ with $U_\lambda\in\mathcal{U}(\mathcal{M}_\lambda)$ that extracts the maximal work $\mathcal{W}_{H_\lambda}(\rho_\lambda)$ from $\rho_\lambda$ under a Hamiltonian $H_\lambda$.

\begin{proposition} \label{prop:ergotropy}
	Let $G$ be a group with a unitary representation $F$ acting on $\mathcal{H}$, $H\in\mathcal{B}^\mathrm{H}(\mathcal{H})$ commute with $F(G)$ and $\rho\in\mathcal{B}^+(\mathcal{H})$.
	Then, the $(G, F)$-ergotropy of a state $\rho$ under a Hamiltonian $H$ is written as
	\begin{align}
		\mathcal{W}_{G, F, H}(\rho)=\sum_{\lambda\in\Lambda_F} \mathcal{W}_{H_\lambda}(\rho_\lambda),
	\end{align}
	where $\Lambda_F$, $H_\lambda$ and $\rho_\lambda$ are respectively defined in Eqs.~\eqref{eq:Hilbert_space_decomp1}, \eqref{eq:Hamiltoniandecomp} and \eqref{eq:rholambdadef}.
\end{proposition}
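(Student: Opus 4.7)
The plan is to use Schur's lemma to pin down the precise form of any $(G,F)$-respecting unitary under the decomposition \eqref{eq:Hilbert_space_decomp1}--\eqref{eq:irreducibledecomposition}, and then observe that both the Hamiltonian and the work functional split cleanly along the sectors $\mathcal{M}_\lambda$.

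First I would show that every $U \in \mathcal{U}_{G,F}(\mathcal{H})$ has the form
\begin{equation}
U = \sum_{\lambda \in \Lambda_F} \iota_\lambda (I_{\mathcal{R}_\lambda} \otimes U_\lambda) \iota_\lambda^\dag,
\qquad U_\lambda \in \mathcal{U}(\mathcal{M}_\lambda).
\end{equation}
Since the $F_\lambda$ are pairwise inequivalent irreducible representations, Schur's lemma forces $U$ to be block-diagonal with respect to \eqref{eq:Hilbert_space_decomp1}, and within each block $\mathcal{R}_\lambda \otimes \mathcal{M}_\lambda$ it must commute with $F_\lambda \otimes I_{\mathcal{M}_\lambda}$, whence it acts as $I_{\mathcal{R}_\lambda} \otimes U_\lambda$. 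Conversely, any such choice of $\{U_\lambda\}$ produces a $(G,F)$-respecting unitary, so the maximization over $\mathcal{U}_{G,F}(\mathcal{H})$ factorizes as an independent maximization over each $\mathcal{U}(\mathcal{M}_\lambda)$.

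Next I would compute $W(\rho,H,U)$ in this block form. Because $H$ is block-diagonal with block $I_{\mathcal{R}_\lambda} \otimes H_\lambda$ by \eqref{eq:Hamiltoniandecomp}, only the diagonal blocks $\iota_\lambda^\dag \rho \iota_\lambda$ of $\rho$ contribute to $\mathrm{tr}(\rho H)$ and to $\mathrm{tr}(\rho U^\dag H U)$. Tracing out $\mathcal{R}_\lambda$ and using the definition \eqref{eq:rholambdadef} of $\rho_\lambda$ gives
\begin{equation}
W(\rho,H,U) = \sum_{\lambda \in \Lambda_F} \bigl[\mathrm{tr}(\rho_\lambda H_\lambda) - \mathrm{tr}(\rho_\lambda U_\lambda^\dag H_\lambda U_\lambda)\bigr] = \sum_{\lambda \in \Lambda_F} W(\rho_\lambda, H_\lambda, U_\lambda).
\end{equation}

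Finally, maximizing each summand independently over $U_\lambda \in \mathcal{U}(\mathcal{M}_\lambda)$ yields the claimed identity $\mathcal{W}_{G,F,H}(\rho) = \sum_{\lambda} \mathcal{W}_{H_\lambda}(\rho_\lambda)$. Theorem~\ref{thm:GFHP} is then an immediate corollary: $\mathcal{W}_{G,F,H}(\rho) = 0$ iff each nonnegative term $\mathcal{W}_{H_\lambda}(\rho_\lambda)$ vanishes, i.e., iff every $\rho_\lambda$ is passive w.r.t.\ $H_\lambda$. The only delicate step is the Schur-type characterization of $\mathcal{U}_{G,F}(\mathcal{H})$; everything else is a direct block-wise computation.
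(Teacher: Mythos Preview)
Your proposal is correct and follows essentially the same route as the paper's proof: Schur's lemma gives the block form $U=\sum_\lambda \iota_\lambda(I_{\mathcal{R}_\lambda}\otimes U_\lambda)\iota_\lambda^\dag$, then a block-wise computation yields $W(\rho,H,U)=\sum_\lambda W(\rho_\lambda,H_\lambda,U_\lambda)$, and independent maximization over each $U_\lambda$ gives the result. The paper's argument is identical in structure and content.
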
	 

\begin{proof}
	For any $U\in\mathcal{U}_{G, F}(\mathcal{H})$, in the same way as $H$ in Eq.~\eqref{eq:Hamiltoniandecomp}, $U$ can be written as $U=\sum_{\lambda\in\Lambda_F} \iota_\lambda (I_{\mathcal{R}_\lambda}\otimes U_\lambda) \iota_\lambda^\dag$ with some $U_\lambda\in \mathcal{U}(\mathcal{M}_\lambda)$, where $\iota_\lambda$ is defined in Eq.~\eqref{eq:irreducibledecomposition}. 
	Then, we get
	\begin{align}
		W(\rho, H, U)&=\mathrm{tr}(\rho (H-U^\dag HU)) \nonumber\\
		&=\sum_{\lambda\in\Lambda_F} \mathrm{tr}(\rho \iota_\lambda [I_{\mathcal{R}_\lambda}\otimes (H_\lambda-U_\lambda^\dag H_\lambda U_\lambda)]\iota_\lambda^\dag) \nonumber\\
		&=\sum_{\lambda\in\Lambda_F} \mathrm{tr}(\iota_\lambda^\dag\rho \iota_\lambda [I_{\mathcal{R}_\lambda}\otimes (H_\lambda-U_\lambda^\dag H_\lambda U_\lambda)]) \nonumber\\
		&=\sum_{\lambda\in\Lambda_F} \mathrm{tr}_{\mathcal{M}_\lambda}(\rho_\lambda (H_\lambda-U_\lambda^\dagger H_\lambda U_\lambda)) \nonumber\\
		&=\sum_{\lambda\in\Lambda_F} W(\rho_\lambda,H_\lambda,U_\lambda) \nonumber\\
		&\leq \sum_{\lambda\in\Lambda_F} \mathcal{W}_{H_\lambda}(\rho_\lambda),
	\end{align}
	where the equality holds if and only if $U_\lambda$ satisfies $W(\rho_\lambda,H_\lambda,U_\lambda)=\mathcal{W}_{H_\lambda}(\rho_\lambda)$ for all $\lambda\in\Lambda_F$. 
	Therefore, $\mathcal{W}_{G, F, H}(\rho)=\sum_{\lambda\in\Lambda_F} \mathcal{W}_{H_\lambda}(\rho_\lambda)$.
\end{proof}

	Theorem~\ref{thm:GFHP} immediately follows from Proposition~\ref{prop:ergotropy}.
	The proof is as follows:\\\\
\textit{Proof of Theorem~\ref{thm:GFHP}.}
	By definition, $\rho$ is $(G, F)$-passive w.r.t. $H$, if and only if $\mathcal{W}_{G, F, H}(\rho)=0$. 
	From Proposition~\ref{prop:ergotropy}, this is equivalent to $\mathcal{W}_{H_\lambda}(\rho_\lambda)=0$ for all $\lambda\in\Lambda_F$. 
	This means that $\rho_\lambda$ is passive w.r.t. $H_\lambda$ for all $\lambda\in\Lambda_F$. \hspace{\fill} $\Box$\\

\subsection{Proof of Theorem~\ref{thm:THP}}

In order to prove Theorem~\ref{thm:THP}, we prove that $\mathcal{W}_{\mathcal{T}, H}(\rho)$ is the same as $\mathcal{W}_{\mathcal{T}, H}(\mathcal{S}_\mathcal{T}(\rho))$ in Proposition~\ref{prop:Twork}.

\begin{proposition} \label{prop:Twork}
	Let $H\in\mathcal{B}^\mathrm{H}(\mathcal{H})$ commute with $\mathcal{T}$ and $\rho\in\mathcal{B}^+(\mathcal{H})$.
	Then, $\mathcal{W}_{\mathcal{T}, H}(\rho)=\mathcal{W}_H(\mathcal{S}_\mathcal{T}(\rho))$, where $\mathcal{S}_\mathcal{T}$ is defined as Eq.~\eqref{eq:Tsymmetrizer}.
\end{proposition}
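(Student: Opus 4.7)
The plan is to prove the identity $\mathcal{W}_{\mathcal{T}, H}(\rho) = \mathcal{W}_H(\mathcal{S}_\mathcal{T}(\rho))$ by establishing two sub-identities and chaining them: first, $\mathcal{W}_{\mathcal{T}, H}(\rho) = \mathcal{W}_{\mathcal{T}, H}(\mathcal{S}_\mathcal{T}(\rho))$ (the symmetrization does not affect the restricted ergotropy), and second, $\mathcal{W}_{\mathcal{T}, H}(\mathcal{S}_\mathcal{T}(\rho)) = \mathcal{W}_H(\mathcal{S}_\mathcal{T}(\rho))$ (when the state is already $\mathcal{T}$-symmetric, the symmetry constraint does not reduce the attainable work).

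For the first sub-identity, I will show the stronger pointwise statement that $W(\rho, H, U) = W(\mathcal{S}_\mathcal{T}(\rho), H, U)$ holds for every $\mathcal{T}$-respecting unitary $U$. Because $H$ and $U$ both commute with $\mathcal{T}$, the Hermitian operator $A := H - U^\dagger H U$ satisfies $\mathcal{T} A \mathcal{T}^{-1} = A$. Working in a basis where $\mathcal{T}$ is complex conjugation, this makes $A$ a real symmetric matrix, so $\mathrm{tr}(\mathcal{T}\rho\mathcal{T}^{-1} A) = \overline{\mathrm{tr}(\rho A)} = \mathrm{tr}(\rho A)$ (the last equality because $\mathrm{tr}(\rho A)$ is real). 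By linearity of $W$ in $\rho$, this gives $W(\mathcal{S}_\mathcal{T}(\rho), H, U) = W(\rho, H, U)$, and taking the maximum over $U \in \mathcal{U}_\mathcal{T}(\mathcal{H})$ on both sides yields the desired equality of $\mathcal{T}$-ergotropies.

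For the second sub-identity, the inequality $\mathcal{W}_{\mathcal{T}, H}(\mathcal{S}_\mathcal{T}(\rho)) \leq \mathcal{W}_H(\mathcal{S}_\mathcal{T}(\rho))$ is immediate since $\mathcal{U}_\mathcal{T}(\mathcal{H}) \subset \mathcal{U}(\mathcal{H})$. For the reverse inequality, the idea is to exhibit an optimal unitary that happens to be $\mathcal{T}$-respecting. Since $\mathcal{S}_\mathcal{T}(\rho)$ and $H$ both commute with $\mathcal{T}$, in the $\mathcal{T}$-fixed basis they are represented by real symmetric matrices, and each admits an orthonormal eigenbasis of real vectors, say $\{|r_j\rangle\}$ for $\mathcal{S}_\mathcal{T}(\rho)$ (eigenvalues in decreasing order) and $\{|E_j\rangle\}$ for $H$ (eigenvalues in increasing order). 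The standard ergotropy argument tells us that $U = \sum_j |E_j\rangle\langle r_j|$ attains the maximum $\mathcal{W}_H(\mathcal{S}_\mathcal{T}(\rho))$. Because both bases consist of real vectors, $U$ has real matrix elements and therefore satisfies $[U, \mathcal{T}] = 0$, so $U \in \mathcal{U}_\mathcal{T}(\mathcal{H})$, yielding $\mathcal{W}_{\mathcal{T}, H}(\mathcal{S}_\mathcal{T}(\rho)) \geq \mathcal{W}_H(\mathcal{S}_\mathcal{T}(\rho))$.

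The most delicate step is the construction of a \emph{real} orthonormal eigenbasis in the presence of spectral degeneracies of $\mathcal{S}_\mathcal{T}(\rho)$ or $H$; one must note that any degenerate eigenspace of a real symmetric matrix admits a real orthonormal basis (e.g., by applying Gram--Schmidt to a real spanning set obtained from taking real and imaginary parts of complex eigenvectors), which guarantees that the optimal sorting unitary $U$ can indeed be chosen with all real entries. Combining the two sub-identities then gives $\mathcal{W}_{\mathcal{T}, H}(\rho) = \mathcal{W}_{\mathcal{T}, H}(\mathcal{S}_\mathcal{T}(\rho)) = \mathcal{W}_H(\mathcal{S}_\mathcal{T}(\rho))$, proving the proposition.
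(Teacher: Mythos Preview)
Your proposal is correct and follows essentially the same approach as the paper: both arguments rest on the pointwise identity $W(\rho,H,U)=W(\mathcal{S}_\mathcal{T}(\rho),H,U)$ for every $\mathcal{T}$-respecting $U$, and then on the observation that the standard ergotropy-optimal sorting unitary for $\mathcal{S}_\mathcal{T}(\rho)$ can be taken with real entries (hence $\mathcal{T}$-respecting) because $H$ and $\mathcal{S}_\mathcal{T}(\rho)$ are real symmetric in the $\mathcal{T}$-fixed basis. The only cosmetic difference is that you chain through the intermediate quantity $\mathcal{W}_{\mathcal{T},H}(\mathcal{S}_\mathcal{T}(\rho))$, whereas the paper goes directly from $W(\rho,H,U)\le \mathcal{W}_H(\mathcal{S}_\mathcal{T}(\rho))$ to equality via the same real unitary construction.
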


\begin{proof}
	Since for any $U\in\mathcal{U}_\mathcal{T}(\mathcal{H})$, 
	\begin{align}
		W(\mathcal{T}\rho \mathcal{T}^{-1}, H, U)=&\mathrm{tr}(\mathcal{T}\rho \mathcal{T}^{-1}(H-U^\dagger HU)) \nonumber\\
		=&\mathrm{tr}(\rho \mathcal{T}^{-1}(H-U^\dagger HU)\mathcal{T})^* \nonumber\\
		=&\mathrm{tr}(\rho(H-U^\dagger HU))^* \nonumber\\
		=&\mathrm{tr}(\rho(H-U^\dagger HU)) \nonumber\\
		=&W(\rho, H, U),
	\end{align}
	we get
	\begin{align}
		W(\rho, H, U)=\frac{1}{2}(W(\rho, H, U)+W(\mathcal{T}\rho \mathcal{T}^{-1}, H, U))=W(\mathcal{S}_\mathcal{T}(\rho), H, U)\leq\mathcal{W}_H(\mathcal{S}_\mathcal{T}(\rho)). \label{eq:Tergotropy_ineqality}
	\end{align}
	Since $H$ and $\mathcal{S}_\mathcal{T}(\rho)$ commute with $\mathcal{T}$, $H$ and $\mathcal{S}_\mathcal{T}(\rho)$ can be diagonalized as $H=\sum_{i=1}^d E_i\ket{\psi_i}\bra{\psi_i}$ and $\mathcal{S}_\mathcal{T}(\rho)=\sum_{i=1}^d p_i\ket{\phi_i}\bra{\phi_i}$ with $E_i, p_i\in\mathbb{R}$ and $\ket{\psi_i}, \ket{\phi_i}\in\mathcal{H}$ that satisfy $\mathcal{T}\ket{\psi_i}=\ket{\psi_i}, \mathcal{T}\ket{\phi_i}=\ket{\phi_i}$.
	Without loss of generality, we can suppose that $E_1\leq E_2\leq\cdots\leq E_d$ and $p_1\geq p_2\geq\cdots\geq p_d$.	
	Then, the equality in Eq.~\eqref{eq:Tergotropy_ineqality} holds, if $U=\sum_{i=1}^d \ket{\psi_i}\bra{\phi_i}$.
	Therefore, $\mathcal{W}_{\mathcal{T}, H}(\rho)=\mathcal{W}_H(\mathcal{S}_\mathcal{T}(\rho))$.
\end{proof}

Theorem~\ref{thm:THP} immediately follows from Proposition~\ref{prop:Twork}.
The proof is as follows:\\\\
\textit{Proof of Theorem~\ref{thm:THP}.}
	By definition, $\rho$ is $\mathcal{T}$-passive w.r.t. $H$, if and only if $\mathcal{W}_{\mathcal{T}, H}(\rho)=0$.
	From Proposition~\ref{prop:Twork}, this is equivalent to $\mathcal{W}_H(\mathcal{S}_\mathcal{T}(\rho))=0$.
	This means that $\mathcal{S}_\mathcal{T}(\rho)$ is passive w.r.t. $H$. \hspace{\fill} $\Box$\\

\subsection{Common part of the proofs of Theorems~\ref{thm:GFHCP}, \ref{thm:finiteGFHCP} and \ref{thm:THCP}}

In order to make clear the proofs for identifying completely passive states, we show the overall structure of the proofs in Fig.~\ref{fig:proof_structure}.
In Propositions~\ref{prop:workextop1} and \ref{prop:virtualtemperature}, we illustrate how to construct two kinds of unitary operators by which positive work is extracted from multiple copies of a state other than the GGE or the Gibbs ensemble.

Proposition~\ref{prop:workextop1} shows the construction of a unitary operator by which positive work is extracted from multiple copies of a state, when there exists an operator that commutes with the Hamiltonian but does not with the density operator of the state.

\begin{figure}
\begin{center}
\includegraphics[width=\columnwidth]{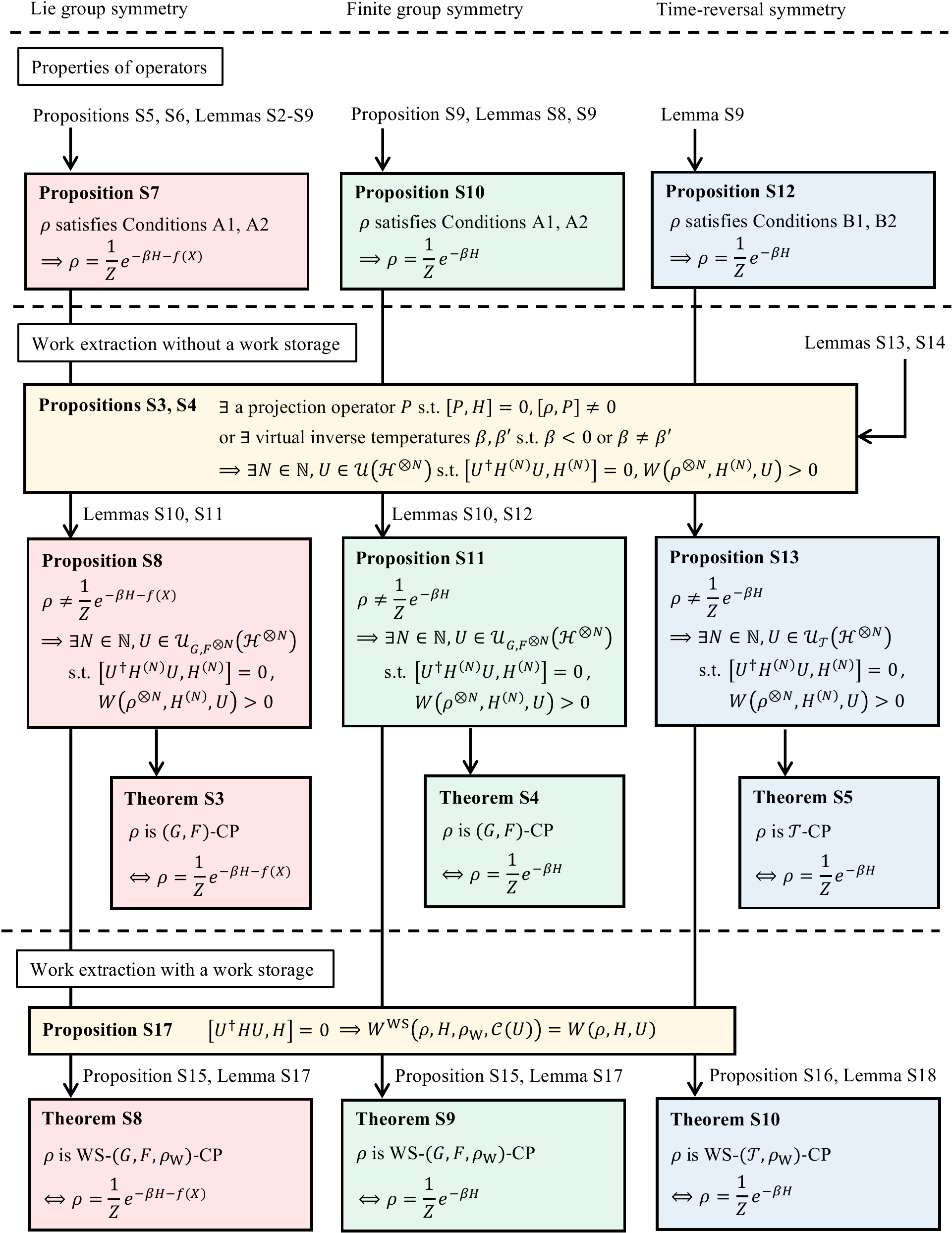}
\caption{Overall structure of the proofs for identifying completely passive states.
CP is an abbreviation for ``completely passive.''}
\label{fig:proof_structure}
\end{center}
\end{figure}

\begin{proposition} \label{prop:workextop1}
	Let $\rho\in\mathcal{B}^{++}(\mathcal{H})$, $M, L\in\mathbb{N}$, $P$ be the projection operator onto a linear subspace of $\mathcal{H}^{\otimes M}$ satisfying $[H^{(M)}, P]=0$, $T$ be the swapping operator on $\mathcal{H}^{\otimes M}\otimes\mathcal{H}^{\otimes M}$, and $\ket{\Psi_0}, \ket{\Psi_1}\in\mathcal{H}^{\otimes L}$ be  energy eigenstates with eigenvalues $\mathcal{E}_0, \mathcal{E}_1$ satisfying $\mathcal{E}_0<\mathcal{E}_1$.
	For $m\in\mathbb{N}$, we define an operator quartet $\{A_{ij}\}_{i, j\in\{0, 1\}}$ as
	\begin{align}
		A_{ij}:=\left\{\frac{1}{2}[I-(-1)^i T][P\otimes(I-P)][I-(-1)^j T]\right\}^{\otimes m}\otimes\ket{\Psi_i}\bra{\Psi_j},\ i,j\in\{0,1\},
	\end{align}
	and a mapping $\mathcal{O}$ from an operator quartet to an operator as
	\begin{align}
		\mathcal{O}(\{C_{ij}\}):=I-\sum_{i,j\in\{0,1\}} (-1)^{i-j} C_{ij}. \label{eq:unitary}
	\end{align}
	Then, $\mathcal{O}(\{A_{ij}\})$ is unitary.
	In addition, if $[\rho^{\otimes M}, P]\neq0$, then $W(\rho^{\otimes 2mM+L}, H^{(2mM+L)}, \mathcal{O}(\{A_{ij}\}))>0$ for some $m\in\mathbb{N}$.
\end{proposition}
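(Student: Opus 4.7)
The plan is to generalize the dimer-model calculation from Appendix~\ref{sec:proof}. The key algebraic insight is that the four building-block operators $R_{ij} := \frac{1}{2}[I-(-1)^i T][P\otimes(I-P)][I-(-1)^j T]$ behave like $2\times 2$ matrix units: $R_{ij}^\dagger = R_{ji}$ and $R_{ij}R_{jl} = R_{il}$. Because $\braket{\Psi_0|\Psi_1}=0$ (distinct eigenvalues of a Hermitian operator), the quartet $\{A_{ij}\}$ inherits these relations as $A_{ij}^\dagger = A_{ji}$ and $A_{ij}A_{kl} = \delta_{jk}A_{il}$. To establish the matrix-unit identity for the $R_{ij}$ I would introduce $Q := P\otimes(I-P)$, its swapped counterpart $Q' := TQT$, and the combinations $Q_\pm := Q\pm Q'$. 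Since the swap sends $Q$ to $Q'$, $Q_+$ commutes with $T$ while $Q_-$ anticommutes with it; together with $Q^2=Q$, ${Q'}^2=Q'$, and $QQ'=Q'Q=0$ one finds $Q_+^2 = Q_+$, $Q_-^2 = Q_+$, and $Q_\pm Q_\mp = Q_-$. Writing $S_i := \frac{1}{2}(I-(-1)^i T)$ for the (anti)symmetrizing projectors, one checks $S_iS_j=\delta_{ij}S_i$, $[S_i, Q_+] = 0$, and $S_i Q_- = Q_- S_{1-i}$, whence $R_{ii} = Q_+ S_i$ and $R_{i,1-i} = S_i Q_-$; a case-by-case multiplication then delivers $R_{ij}R_{jl} = R_{il}$.

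For unitarity, set $B := \sum_{i,j}(-1)^{i-j} A_{ij}$, so that $\mathcal{O}(\{A_{ij}\}) = I - B$. The identity $A_{ij}^\dagger = A_{ji}$ gives $B^\dagger = B$, and from $A_{ij}A_{kl} = \delta_{jk}A_{il}$ the quadruple sum in $B^2$ collapses (the inner sum over $j\in\{0,1\}$ contributes a factor of $2$) to $B^2 = 2B$; hence $(I-B)^2 = I$ and $\mathcal{O}(\{A_{ij}\})$ is a self-adjoint unitary.

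For the work-extraction claim, I would first note that $[H^{(2M)}, R_{ij}]=0$ (the swap $T$ commutes with the symmetric $H^{(2M)}$, and $[H^{(M)}, P]=0$ gives $[H^{(2M)}, Q]=0$), so $R_{ij}^{\otimes m}$ commutes with $H^{(2mM)}$; on the $L$-copy block $[I\otimes H^{(L)}, A_{ij}] = (\mathcal{E}_i - \mathcal{E}_j)A_{ij}$. Expanding $U^\dagger H^{(N)} U = (I-B)H^{(N)}(I-B)$ and using $BH^{(N)}B = (\mathcal{E}_0+\mathcal{E}_1)B$ (another matrix-unit collapse), the $i\neq j$ contributions cancel and one obtains
\begin{equation}
H^{(N)} - U^\dagger H^{(N)} U = (\mathcal{E}_1 - \mathcal{E}_0)(A_{11} - A_{00}).
\end{equation}
Taking the expectation value in $\rho^{\otimes N}$ factorizes across the two blocks,
\begin{equation}
W = (\mathcal{E}_1-\mathcal{E}_0)\bigl\{[\mathrm{tr}(\rho^{\otimes 2M}R_{11})]^m\braket{\Psi_1|\rho^{\otimes L}|\Psi_1} - [\mathrm{tr}(\rho^{\otimes 2M}R_{00})]^m\braket{\Psi_0|\rho^{\otimes L}|\Psi_0}\bigr\}.
\end{equation}
Since $R_{11}-R_{00} = Q_+(S_1-S_0) = Q_+ T$, the swap identity $\mathrm{tr}(T(X\otimes Y)) = \mathrm{tr}(XY)$ yields $\mathrm{tr}(\rho^{\otimes 2M}(R_{11}-R_{00})) = \|[\rho^{\otimes M}, P]\|_\mathrm{HS}^2 > 0$ by hypothesis, while $\rho\in\mathcal{B}^{++}(\mathcal{H})$ forces $\braket{\Psi_i|\rho^{\otimes L}|\Psi_i} > 0$ for $i=0,1$. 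Hence the ratio of $m$-th powers in the bracket diverges as $m\to\infty$ (or the first term is already positive when $\mathrm{tr}(\rho^{\otimes 2M}R_{00})=0$), so $W>0$ for all sufficiently large $m$.

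The main technical obstacle is establishing the matrix-unit identity $R_{ij}R_{jl}=R_{il}$ cleanly; the decomposition $Q=(Q_++Q_-)/2$ reduces it to commuting-algebra bookkeeping, after which all remaining steps are direct computations paralleling the dimer case.
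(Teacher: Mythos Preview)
Your argument is correct and follows essentially the same route as the paper: establish the matrix-unit relations $A_{ij}^\dagger=A_{ji}$, $A_{ij}A_{kl}=\delta_{jk}A_{il}$, deduce that $I-B$ is a self-adjoint unitary from $B^2=2B$, compute $H^{(N)}-U^\dagger H^{(N)}U=(\mathcal{E}_1-\mathcal{E}_0)(A_{11}-A_{00})$, and finish with $\mathrm{tr}(\rho^{\otimes 2M}(R_{11}-R_{00}))=\|[\rho^{\otimes M},P]\|_{\mathrm{HS}}^2>0$.

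Two remarks. First, your $Q_\pm$ decomposition works, but the paper's route is shorter: from $S_jS_k=\delta_{jk}S_j$ the middle of $R_{ij}R_{kl}$ collapses to $\delta_{jk}\cdot Q[I-(-1)^jT]Q$, and the single observation $QTQ=[P\otimes(I-P)]T[P\otimes(I-P)]=0$ (since $P(I-P)=0$) reduces this to $Q$, giving $R_{ij}R_{kl}=\delta_{jk}R_{il}$ in one line with no case analysis. Second, your identity ``$BH^{(N)}B=(\mathcal{E}_0+\mathcal{E}_1)B$'' is false as written: the $H^{(2mM)}\otimes I$ block commutes with $B$ and contributes $2(H^{(2mM)}\otimes I)B$ to $BH^{(N)}B$. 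This extra piece cancels against $-BH^{(N)}-H^{(N)}B$ in the expansion of $(I-B)H^{(N)}(I-B)$, so your final formula is unaffected, but the intermediate claim should be stated for $I\otimes H^{(L)}$ only (or, cleaner, argue via $[H^{(N)},A_{ij}]=\Delta\mathcal{E}\,(i-j)A_{ij}$ as the paper does).
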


\begin{proof}
	First, we prove that $\mathcal{O}(\{A_{ij}\})$ is unitary.
	From Lemma~\ref{lem:U1extractedwork}, it is sufficient to prove that $\{A_{ij}\}$ satisfy 
	\begin{align}
		 A_{ij}^\dagger=A_{ji},\ A_{ij}A_{kl}=\delta_{jk}A_{il} \label{eq:Aijcond}
	\end{align}
	for all $i, j, k, l\in\{0, 1\}$.
	For $i,j\in\{0,1\}$, we define $R_{ij}\in\mathcal{B}(\mathcal{H}^{\otimes 2M})$ as
	\begin{align}
		R_{ij}:=\frac{1}{2}[I-(-1)^i T][P\otimes(I-P)][I-(-1)^j T].
	\end{align}
	Then, $A_{ij}=R_{ij}^{\otimes m}\otimes\ket{\Psi_i}\bra{\Psi_j}$.
	Since $R_{ij}^\dag=R_{ji}$ and $(\ket{\Psi_i}\bra{\Psi_j})^\dag=\ket{\Psi_j}\bra{\Psi_i}$,  
	\begin{align}
		A_{ij}^\dag=(R_{ij}^\dag)^{\otimes m}\otimes(\ket{\Psi_i}\bra{\Psi_j})^\dag=R_{ji}^{\otimes m}\otimes\ket{\Psi_j}\bra{\Psi_i}=A_{ji}.
	\end{align}
	Since $\frac{1}{2}[I-(-1)^i T]$ is the antisymmetrizer for $i=0$ and the symmetrizer for $i=1$,  
	\begin{align}
		\frac{1}{2}[I-(-1)^i T]\cdot\frac{1}{2}[I-(-1)^j T]=\delta_{ij}\frac{1}{2}[I-(-1)^i T]. \label{eq:symmetrizerprojection}
	\end{align}
	From a property of the swapping operator $T$, 
	\begin{align}
		[P\otimes(I-P)]T[P\otimes(I-P)]=&[P\otimes(I-P)][(I-P)\otimes P]T=0. \label{eq:swappingprojection}
	\end{align}
	From Eqs.~\eqref{eq:symmetrizerprojection} and \eqref{eq:swappingprojection}, for $i, j, k, l\in\{0, 1\}$, 
	\begin{align}
		&R_{ij}R_{kl} \nonumber\\
		=&\frac{1}{2}[I-(-1)^i T][P\otimes(I-P)][I-(-1)^j T]\frac{1}{2}[I-(-1)^k T][P\otimes(I-P)][I-(-1)^l T] \nonumber\\
		=&\frac{1}{2}[I-(-1)^i T][P\otimes(I-P)]\delta_{jk}[I-(-1)^j T][P\otimes(I-P)][I-(-1)^l T] \nonumber\\
		=&\delta_{jk}\frac{1}{2}[I-(-1)^i T][P\otimes(I-P)][I-(-1)^l T] \nonumber\\
		=&\delta_{jk}R_{il}.
	\end{align}
	Since $(\ket{\Psi_i}\bra{\Psi_j})(\ket{\Psi_k}\bra{\Psi_l})=\delta_{jk}\ket{\Psi_i}\bra{\Psi_l}$, we obtain
	\begin{align}
		A_{ij}A_{kl}=&(R_{ij}^{\otimes m}\otimes\ket{\Psi_i}\bra{\Psi_j})(R_{kl}^{\otimes m}\otimes\ket{\Psi_k}\bra{\Psi_l}) \nonumber\\
		=&(R_{ij}R_{kl})^{\otimes m}\otimes(\ket{\Psi_i}\bra{\Psi_j})(\ket{\Psi_k}\bra{\Psi_l}) \nonumber\\
		=&\delta_{jk}(R_{il}^{\otimes m}\otimes\ket{\Psi_i}\bra{\Psi_l}) \nonumber\\
		=&\delta_{jk}A_{il}.
	\end{align}
	Therefore, $\{A_{ij}\}$ satisfies Eq.~\eqref{eq:Aijcond}.
	From Lemma~\ref{lem:U1extractedwork}, $\mathcal{O}(\{A_{ij}\})$ is unitary.
	
	Next, we calculate the extracted work from $\rho^{\otimes 2mM+L}$ by $\mathcal{O}(\{A_{ij}\})$.
	We define $\Delta\mathcal{E}:=\mathcal{E}_1-\mathcal{E}_0(>0)$.
	Since $[H^{(M)}, P]=0$ and $[H^{(2M)}, T]=0$, we obtain $[H^{(2M)}, R_{ij}]=0$ and thus
	\begin{align}
		[H^{(2mM+L)}, A_{ij}]=&[H^{(2mM)}\otimes I+I\otimes H^{(L)}, R_{ij}^{\otimes m}\otimes\ket{\Psi_i}\bra{\Psi_j}] \nonumber\\
		=&[H^{(2mM)}, R_{ij}^{\otimes m}]\otimes\ket{\Psi_i}\bra{\Psi_j}+R_{ij}^{\otimes m}\otimes[H^{(L)}, \ket{\Psi_i}\bra{\Psi_j}] \nonumber\\
		=&R_{ij}^{\otimes m}\otimes\Delta\mathcal{E}(i-j)\ket{\Psi_i}\bra{\Psi_j} \nonumber\\
		=&\Delta\mathcal{E}(i-j)A_{ij}. \label{eq:HAcommutation}
	\end{align}
	From Lemma~\ref{lem:U1extractedwork},
	\begin{align}
		&W(\rho^{\otimes 2mM+L}, H^{(2mM+L)}, \mathcal{O}(\{A_{ij}\})) \nonumber\\
		=&\Delta\mathcal{E}(\mathrm{tr}(\rho^{\otimes 2mM+L}A_{11})-\mathrm{tr}(\rho^{\otimes 2mM+L}A_{00})) \nonumber\\
		=&\Delta\mathcal{E}\left[(\mathrm{tr}(\rho^{\otimes 2M} R_{11}))^m\braket{\Psi_1|\rho^{\otimes L}|\Psi_1}-(\mathrm{tr}(\rho^{\otimes 2M} R_{00}))^m\braket{\Psi_0|\rho^{\otimes L}|\Psi_0}\right]. \label{eq:extractedwork2}
	\end{align}
	Suppose that $[\rho^{\otimes M}, P]\neq0$.
	Then, $P\neq0$ or $I$, and thus we can take states $\ket{a}\in\mathrm{supp}(P)$ and $\ket{b}\in\mathrm{supp}(I-P)$.
	From the positive definiteness of $\rho$, 
	\begin{align}
		\mathrm{tr}(\rho^{\otimes 2M}R_{ii})=&\frac{1}{2}\mathrm{tr}([P\otimes(I-P)](I-(-1)^i T)\rho^{\otimes 2M}(I-(-1)^i T)[P\otimes(I-P)]) \nonumber\\
		\geq&\frac{1}{2}\bra{a}\bra{b}[P\otimes(I-P)](I-(-1)^i T)\rho^{\otimes 2M}(I-(-1)^i T)[P\otimes(I-P)]\ket{a}\ket{b} \nonumber\\
		=&\frac{1}{2}[\bra{a}\bra{b}-(-1)^i \bra{b}\bra{a}]\rho^{\otimes 2M}[\ket{a}\ket{b}-(-1)^i \ket{b}\ket{a}] \nonumber\\
		>&0.
	\end{align}
	This allows us to transform Eq.~\eqref{eq:extractedwork2} into the following: 
	\begin{align}
		&W(\rho^{\otimes 2mM+L}, H^{(2mM+L)}, \mathcal{O}(\{A_{ij}\})) \nonumber\\
		=&\Delta\mathcal{E}(\mathrm{tr}(\rho^{\otimes 2M}R_{11}))^m\braket{\Psi_1|\rho^{\otimes L}|\Psi_1}\left[1-\left(\frac{\mathrm{tr}(\rho^{\otimes 2M}R_{00})}{\mathrm{tr}(\rho^{\otimes 2M}R_{11})}\right)^m \frac{\braket{\Psi_0|\rho^{\otimes L}|\Psi_0}}{\braket{\Psi_1|\rho^{\otimes L}|\Psi_1}}\right].
	\end{align}
	Since $[\rho^{\otimes M}, P]\neq0$,  
	\begin{align}
		\mathrm{tr}(\rho^{\otimes 2M}R_{11})-\mathrm{tr}(\rho^{\otimes 2M}R_{00})=&\mathrm{tr}(\rho^{\otimes 2M}\{T[P\otimes(I-P)]+[P\otimes(I-P)]T\}) \nonumber\\
		=&2\mathrm{tr}(T\rho^{\otimes 2M}[P\otimes(I-P)]) \nonumber\\
		=&2\mathrm{tr}(T[P \rho^{\otimes M} \otimes (I-P) \rho^{\otimes M}]) \nonumber\\
		=&2\mathrm{tr}(P\rho^{\otimes M}(I-P)\rho^{\otimes M}) \nonumber\\
		=&\mathrm{tr}([\rho^{\otimes M}, P]^\dag [\rho^{\otimes M}, P]) \nonumber\\
		=&\|[\rho^{\otimes M}, P]\|_\mathrm{HS}^2 \nonumber\\
		>&0,
	\end{align}
	where $\|\cdot\|_\mathrm{HS}$ is the Hilbert-Schmidt norm, and we used the fact that $T$ satisfies $\mathrm{tr}[T(\Omega_1\otimes\Omega_2)]=\mathrm{tr}(\Omega_1\Omega_2)$ for $\Omega_1, \Omega_2\in\mathcal{B}(\mathcal{H}^{\otimes M})$.
	Therefore, $0<\frac{\mathrm{tr}(\rho^{\otimes 2M}R_{00})}{\mathrm{tr}(\rho^{\otimes 2M}R_{11})}<1$, and thus for sufficiently large $m\in\mathbb{N}$, $W(\rho^{\otimes 2mM+L}, H^{(2mM+L)}, \mathcal{O}(\{A_{ij}\}))>0$.
\end{proof}	

From Proposition~\ref{prop:workextop1}, we know that completely passive states commute with the Hamiltonian, i.e., do not have any coherence in the energy eigenbasis.
The notion of virtual temperature introduced in Ref.~\cite{Skrzypczyk2015} is useful for the investigation of work extraction from incoherent states.  
When the state has a negative virtual inverse temperature or a pair of different virtual inverse temperatures, Proposition~\ref{prop:virtualtemperature} shows the construction of a unitary operator by which positive work is extracted from multiple copies of a state.

\begin{proposition} \label{prop:virtualtemperature}
	Let $H\in\mathcal{B}^\mathrm{H}(\mathcal{H})$, $\rho\in\mathcal{B}^{++}(\mathcal{H})$, $L, L'\in\mathbb{N}$, $\ket{\Psi_0}, \ket{\Psi_1}\in\mathcal{H}^{\otimes L}$ be eigenstates of $H^{(L)}$ with different eigenvalues, and $\ket{\Psi'_0}, \ket{\Psi'_1}\in\mathcal{H}^{\otimes L'}$ be eigenstates of $H^{(L')}$.
	For $i=0, 1$, we define
	\begin{align}
		&\mathcal{E}_i:=\braket{\Psi_i|H^{(L)}|\Psi_i},\ \Delta\mathcal{E}:=\mathcal{E}_1-\mathcal{E}_0,\ S_i:=-\log(\bra{\Psi_i}\rho^{\otimes L}\ket{\Psi_i}),\ \Delta S:=S_1-S_0,\ \beta:=\frac{\Delta S}{\Delta\mathcal{E}}, \nonumber\\
		&\mathcal{E}'_i:=\braket{\Psi'_i|H^{(L')}|\Psi'_i},\ \Delta\mathcal{E}':=\mathcal{E}'_1-\mathcal{E}'_0,\ S'_i:=-\log(\bra{\Psi'_i}\rho^{\otimes L'}\ket{\Psi'_i}),\ \Delta S':=S'_1-S'_0. \label{eq:ESdef}
	\end{align}
	Let $\Delta\mathcal{E}$ satisfy $\Delta\mathcal{E}>0$. 
	Let $\Delta\mathcal{E}'$ and $\Delta S'$ satisfy (I) $\Delta\mathcal{E}'>0$ or (II) $\Delta\mathcal{E}'=0$ and $\Delta S'\geq 0$. 
	We define $\mathcal{O}$ as Eq.~\eqref{eq:unitary} and an operator quartet $\{B_{ij}\}_{i,j\in\{0, 1\}}$ as
	\begin{align}
		B_{ij}:=(\ket{\Psi_i}\bra{\Psi_j})^{\otimes m}\otimes(\ket{\Psi'_{1-i}}\bra{\Psi'_{1-j}})^{\otimes m'}.
	\end{align} 
	Then, $\mathcal{O}(\{B_{ij}\})$ is unitary.
	In addition, if $\beta<0$ or $\Delta S'-\beta\Delta\mathcal{E}'\neq0$, then $W(\rho^{\otimes mL+m'L'}, H^{(mL+m'L')}, \mathcal{O}(\{B_{ij}\}))>0$ for some $m, m'\in\mathbb{N}$.
\end{proposition}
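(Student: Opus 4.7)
The plan is to first establish unitarity of $\mathcal{O}(\{B_{ij}\})$ via Lemma~\ref{lem:U1extractedwork}, then derive a closed-form expression for $W$ in $(m,m')$, and finally to select $(m,m')\in\mathbb{N}^2$ case by case so that $W>0$. The key observation is that $B_{ij}=\ket{\Phi_i}\bra{\Phi_j}$ where $\ket{\Phi_i}:=\ket{\Psi_i}^{\otimes m}\otimes\ket{\Psi'_{1-i}}^{\otimes m'}$, so the whole construction is essentially a two-level work-extraction protocol between the energy eigenstates $\ket{\Phi_0}$ and $\ket{\Phi_1}$ of $H^{(mL+m'L')}$.

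Unitarity follows from Lemma~\ref{lem:U1extractedwork} once I check $B_{ij}^{\dagger}=B_{ji}$ and $B_{ij}B_{kl}=\delta_{jk}B_{il}$, both of which reduce to the orthonormality of $\ket{\Phi_0},\ket{\Phi_1}$. That orthonormality holds because $\ket{\Psi_0}\perp\ket{\Psi_1}$ (distinct eigenvalues of $H^{(L)}$) and analogously for $\{\ket{\Psi'_i}\}$ (implicitly orthonormal). Explicitly $\mathcal{O}(\{B_{ij}\})=I-(\ket{\Phi_0}-\ket{\Phi_1})(\bra{\Phi_0}-\bra{\Phi_1})$, the reflection that swaps $\ket{\Phi_0}\leftrightarrow\ket{\Phi_1}$ and fixes their orthogonal complement.

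Since $\ket{\Phi_i}$ is an eigenstate of $H^{(mL+m'L')}$ with eigenvalue $E_i:=m\mathcal{E}_i+m'\mathcal{E}'_{1-i}$, the swap structure gives
\[
H^{(mL+m'L')}-\mathcal{O}(\{B_{ij}\})^{\dagger}H^{(mL+m'L')}\mathcal{O}(\{B_{ij}\})=(E_1-E_0)\bigl(\ket{\Phi_1}\bra{\Phi_1}-\ket{\Phi_0}\bra{\Phi_0}\bigr),
\]
and combining with $\braket{\Phi_i|\rho^{\otimes mL+m'L'}|\Phi_i}=e^{-mS_i-m'S'_{1-i}}$, which follows from the product form of $\ket{\Phi_i}$ and the definitions of $S_i,S'_i$, yields
\[
W=(m\Delta\mathcal{E}-m'\Delta\mathcal{E}')\bigl(e^{-mS_1-m'S'_0}-e^{-mS_0-m'S'_1}\bigr).
\]

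It then remains to find $(m,m')$ making $W>0$. The exponential difference has the opposite sign to $m\Delta S-m'\Delta S'$, so positivity of $W$ is equivalent to $(m\Delta\mathcal{E}-m'\Delta\mathcal{E}')(m\Delta S-m'\Delta S')<0$. If $\beta<0$, so $\Delta S<0$, the choice $m=1,\,m'=0$ makes the first factor $\Delta\mathcal{E}>0$ and the second $\Delta S<0$, giving $W>0$. If instead $\beta\geq 0$ and $\Delta S'-\beta\Delta\mathcal{E}'\neq 0$, the two linear forms $u:=m\Delta\mathcal{E}-m'\Delta\mathcal{E}'$ and $v:=m\Delta S-m'\Delta S'$ are not proportional (the only possible ratio would be $\beta$, forcing $\Delta S'=\beta\Delta\mathcal{E}'$, contrary to hypothesis), so their zero loci are distinct lines through the origin in the $(m,m')$-plane, and picking any rational slope strictly between these lines yields positive integers with $u,v$ of opposite sign. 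I expect the main obstacle to be bookkeeping for the degenerate subcases --- subcase (II) with $\Delta\mathcal{E}'=0,\,\Delta S'>0$, handled by $m=1$ and $m'$ large, and the borderline $\beta=0$ within (I), where $v=-m'\Delta S'$ and one again takes one of $m,m'$ large --- though no new idea beyond elementary density of rational slopes is required.
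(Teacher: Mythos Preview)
Your proposal is correct and follows essentially the same approach as the paper: verify the algebraic conditions of Lemma~\ref{lem:U1extractedwork} for unitarity, compute the extracted work as $(m\Delta\mathcal{E}-m'\Delta\mathcal{E}')\bigl(e^{-mS_1-m'S'_0}-e^{-mS_0-m'S'_1}\bigr)$, and then choose $(m,m')$ case by case. Your repackaging via $\ket{\Phi_i}$ and the geometric ``two distinct lines through the origin'' picture is a cosmetic variant of the paper's explicit four-case split (i)--(iv); the paper simply enumerates the subcases $\beta<0$, $\Delta S'<0$, $\beta=0$ with $\Delta S'\geq 0$, and $\beta>0$ with $\Delta S'\geq 0$, and writes down concrete $(m,m')$ in each, which amounts to exactly the rational-slope selection you describe. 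One small caution: your parenthetical ``implicitly orthonormal'' for $\{\ket{\Psi'_i}\}$ is not given by the hypotheses when $\Delta\mathcal{E}'=0$, so the unitarity of $\mathcal{O}(\{B_{ij}\})$ for $m=0$ is not automatic in that case; however, the only subcase requiring $m=0$ is (I) with $\Delta S'<0$, where $\Delta\mathcal{E}'>0$ forces orthogonality, and all other choices have $m\geq 1$, so $\braket{\Psi_0|\Psi_1}=0$ already kills the cross terms. The paper glosses over this in the same way.
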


\begin{proof}
	Since $\{B_{ij}\}$ satisfies $B_{ij}^\dag=B_{ji}$ and $B_{ij}B_{kl}=\delta_{jk}B_{il}$ for $i, j, k, l\in\{0, 1\}$,  $\mathcal{O}(\{B_{ij}\})$ is unitary from Lemma~\ref{lem:U1extractedwork}.
	Since $\{B_{ij}\}$ also satisfies 
	\begin{align}
		[H^{(mL+m'L')}, B_{ij}]=&(m\mathcal{E}_i+m'\mathcal{E}'_{1-i})B_{ij}-(m\mathcal{E}_j+m'\mathcal{E}'_{1-j})B_{ij} \nonumber\\
		=&[m(\mathcal{E}_i-\mathcal{E}_j)+m'(\mathcal{E}'_{1-i}-\mathcal{E}'_{1-j})]B_{ij} \nonumber\\
		=&\{m\Delta\mathcal{E}(i-j)+m'\Delta\mathcal{E}'[(1-i)-(1-j)]\}B_{ij} \nonumber\\
		=&(m\Delta\mathcal{E}-m'\Delta\mathcal{E}')(i-j)B_{ij}, \label{eq:HBcommutation}
	\end{align}
	 the extracted work from $\rho^{\otimes mL+m'L'}$ by $\mathcal{O}(\{B_{ij}\}$ is given by, from Lemma~\ref{lem:U1extractedwork},
	\begin{align}
		&W(\rho^{\otimes mL+m'L'}, H^{(mL+m'L')}, \mathcal{O}(\{B_{ij}\})) \nonumber\\
		=&(m\Delta\mathcal{E}-m'\Delta\mathcal{E}')(\mathrm{tr}(\rho^{\otimes mL+m'L'}B_{11})-\mathrm{tr}(\rho^{\otimes mL+m'L'}B_{00})) \nonumber\\
		=&(m\Delta\mathcal{E}-m'\Delta\mathcal{E}')\left(\braket{\Psi_1|\rho^{\otimes L}|\Psi_1}^m\braket{\Psi'_0|\rho^{\otimes L'}|\Psi'_0}^{m'}-\braket{\Psi_0|\rho^{\otimes L}|\Psi_0}^m\braket{\Psi'_1|\rho^{\otimes L'}|\Psi'_1}^{m'}\right) \nonumber\\
		=&(m\Delta\mathcal{E}-m'\Delta\mathcal{E}')\left(e^{-m S_1-m' S'_0}-e^{-m S_0-m' S'_1}\right) \nonumber\\
		=&-(m\Delta\mathcal{E}-m'\Delta\mathcal{E}')\left(e^{m\Delta S-m' \Delta S'}-1\right)e^{-m S_1-m' S'_0}.
	\end{align}
	The condition $\beta<0$ or $\Delta S'-\beta\Delta\mathcal{E}'\neq0$ can be classified into the following 4 cases:
	\begin{align}
		&\mathrm{(i)}\ \beta<0, \\
		&\mathrm{(ii)}\ \beta\geq0,\ \Delta S'\neq\beta\Delta\mathcal{E}',\ \Delta S'<0, \\
		&\mathrm{(iii)}\ \beta=0,\ \Delta S'\neq\beta\Delta\mathcal{E}',\ \Delta S'\geq0, \\
		&\mathrm{(iv)}\ \beta>0,\ \Delta S'\neq\beta\Delta\mathcal{E}',\ \Delta S'\geq0.
	\end{align}
	In the case (i), since $\Delta\mathcal{E}>0$ and $\Delta S<0$, we obtain $W(\rho^{\otimes mL+m'L'}, H^{(mL+m'L')}, \mathcal{O}(\{B_{ij}\}))>0$ for $m=1$ and $m'=0$.
	In the case (ii), since $\Delta\mathcal{E}'>0$ and $\Delta S'<0$, we obtain $W(\rho^{\otimes mL+m'L'}, H^{(mL+m'L')}, \mathcal{O}(\{B_{ij}\}))>0$ for $m=0$ and $m'=1$.
	In the case (iii), since $\Delta\mathcal{E}>0$, $\Delta\mathcal{E}'\geq0$, $\Delta S=0$ and $\Delta S'>0$, we obtain $W(\rho^{\otimes mL+m'L'}, H^{(mL+m'L')}, \mathcal{O}(\{B_{ij}\}))>0$ for $m, m'\in\mathbb{N}$ satisfying $m'>0$ and $\frac{m}{m'}>\frac{\Delta\mathcal{E}'}{\Delta\mathcal{E}}$.
	In the case (iv), since $\Delta\mathcal{E}>0$, $\Delta\mathcal{E}'\geq0$, $\Delta S>0$ and $\Delta S'\geq0$, we get $\frac{\Delta\mathcal{E}'}{\Delta\mathcal{E}}\geq0$ and $\frac{\Delta S'}{\Delta S}\geq0$.
	Since $\Delta S'\neq\beta\Delta\mathcal{E}'$, we obtain $\frac{\Delta S'}{\Delta S}=\frac{\Delta S'}{\beta\Delta\mathcal{E}}\neq\frac{\beta\Delta\mathcal{E}'}{\beta\Delta\mathcal{E}}=\frac{\Delta\mathcal{E}'}{\Delta\mathcal{E}}$.
	Thus we can take $m, m'\in\mathbb{N}$ such that $m'>0$ and $\frac{m}{m'}$ is between $\frac{\Delta S'}{\Delta S}$ and $\frac{\Delta\mathcal{E}'}{\Delta\mathcal{E}}$.
	For such $m$ and $m'$, $W(\rho^{\otimes mL+m'L'}, H^{(mL+m'L')}, \mathcal{O}(\{B_{ij}\}))>0$.
	Therefore, in all cases satisfying $\beta<0$ or $\Delta S'-\beta\Delta\mathcal{E}'\neq0$, there exist $m, m'\in\mathbb{N}$ such that $W(\rho^{\otimes mL+m'L'}, H^{(mL+m'L')}, \mathcal{O}(\{B_{ij}\}))>0$.
\end{proof}

	Finally, we introduce a state $\ket{\Phi(\bm{n})}$ that is useful in the proofs of Theorems~\ref{thm:GFHCP}~and~\ref{thm:finiteGFHCP}. 
	For a Hilbert space $\mathcal{R}$ with dimension $r$, we define a totally antisymmetric mapping $\mathcal{A}_\mathcal{R}: \mathcal{R}^r\to\mathcal{R}^{\otimes r}$ as  
	\begin{align}
		\mathcal{A}_\mathcal{R}(\ket{\phi_1}, \cdots, \ket{\phi_r}):=\frac{1}{\sqrt{r!}}\sum_{\sigma\in S_r}\mathrm{sgn}(\sigma)\ket{\phi_{\sigma(1)}}\otimes\cdots\otimes\ket{\phi_{\sigma(r)}}, \label{eq:defdetstate}
	\end{align}
	where $S_r$ is the symmetric group of degree $r$.
	We take an arbitrary orthonormal basis $\{\ket{\phi_{\lambda i}}\}_{i=1}^{r_\lambda}$ of $\mathcal{R}_\lambda$ and define $D:=\prod_{\lambda\in\Lambda_F} r_\lambda$ and $K:=\{(\lambda, j)\ |\ \lambda\in\Lambda_F, j=1, \cdots, m_\lambda\}$.
	We suppose that for any $\lambda\in\Lambda_F$, $H_\lambda$ can be diagonalized as $H_\lambda=\sum_{j=1}^{m_\lambda}E_{\lambda j}\ket{\psi_{\lambda j}}\bra{\psi_{\lambda j}}$ with $\ket{\psi_{\lambda j}}\in\mathcal{M}_\lambda$ and $E_{\lambda j}\in\mathbb{R}$.
	Therefore, from Eq.~\eqref{eq:Hamiltoniandecomp}, $H$ can be written as 
\begin{align}
	H=\sum_{\lambda\in\Lambda_F} \iota_\lambda \left(I_{\mathcal{R}_\lambda}\otimes \sum_{j=1}^{m_\lambda}E_{\lambda j}\ket{\psi_{\lambda j}}\bra{\psi_{\lambda j}}\right)\iota_\lambda^\dag. \label{eq:Hamiltoniandecomp2}
\end{align}  
	We take an arbitrary orthonormal basis $\{\ket{\phi_{\lambda i}}\}_{i=1}^{r_\lambda}$ of $\mathcal{R}_\lambda$ and define $\ket{\chi_k}\in\mathcal{H}^{\otimes D}$ for $k=(\lambda, j)\in K$ as 
	\begin{align}
		\ket{\chi_k}:=\left[\iota_\lambda^{\otimes r_\lambda} (\mathcal{A}_{\mathcal{R}_\lambda}(\{\ket{\phi_{\lambda i}}\})\otimes\ket{\psi_{\lambda j}}^{\otimes r_\lambda})\right]^{\otimes \frac{D}{r_\lambda}}. \label{eq:chi_k_definition}
	\end{align}
	From Lemma~\ref{lem:determinantstate}, $\ket{\chi_k}$ is independent of the choice of the basis $\{\ket{\phi_{\lambda i}}\}_{i=1}^{r_\lambda}$ up to phase factor.
	For any $\bm{n}=(n_k)_{k\in K}\in\mathbb{N}^{K}$, we define $\ket{\Phi(\bm{n})}\in\mathcal{H}^{\otimes D\bm{n}\cdot\bm{w}_0}$ as 
	\begin{align}
		\ket{\Phi(\bm{n})}:=\bigotimes_{k\in K} \ket{\chi_k}^{\otimes n_k}, \label{eq:determinantstate}
	\end{align}
	where $\bm{w}_0:=(1)_{k\in K}$ and $\bm{a}\cdot\bm{b}:=\sum_{k\in K} a_kb_k$ for $\bm{a}=(a_k)_{k\in K}, \bm{b}=(b_k)_{k\in K}\in\mathbb{R}^K$.
	Then, from Lemma~\ref{lem:eigenvectorcomposition}, $\ket{\Phi(\bm{n})}$ is a simultaneous eigenstate of $H^{(D\bm{n}\cdot\bm{w}_0)}$ and $F^{\otimes D\bm{n}\cdot\bm{w}_0}(G)$.

\subsection{Proof of Theorem~\ref{thm:GFHCP}}
	In this subsection, we suppose that $G$ is a connected compact matrix Lie group with a faithful unitary representation $F$ acting on $\mathcal{H}$, and $\mathfrak{g}$ is the Lie algebra of $G$ with the associated Hermitian representation $f$. 
	Since $F$ is faithful, $f$ is also faithful.

\subsubsection{Preliminaries on Lie groups} \label{subsubsec:Lie_groups}

The Levi decomposition states that a connected compact Lie group $G$ is the commuting product of $G=NS$ (Theorem~4.29~of~\cite{Knapp2002}), where $N$ is the identity component of the center of $G$, and $S$ is the connected compact subgroup of $G$ with semisimple Lie algebra $\mathrm{i}[\mathfrak{g}, \mathfrak{g}]:=\mathrm{span}(\{\mathrm{i}[X, Y]\ |\ X, Y\in\mathfrak{g}\})$. 
For a connected compact matrix Lie group, the exponential mapping from its Lie algebra to the Lie group is surjective (Corollary~4.48~of~\cite{Knapp2002}).
Thus $e^{\mathrm{i}\mathfrak{n}}=N$ and $e^{\mathrm{i}\mathfrak{s}}=S$, where $\mathfrak{n}$ and $\mathfrak{s}$ are respectively the Lie algebras of $N$ and $S$. 
This implies that 
\begin{align}
	G=NS=e^{\mathrm{i}\mathfrak{n}}e^{\mathrm{i}\mathfrak{s}}. \label{eq:Levi_decomposition}
\end{align}
The representation of a Lie group and its associated representation of the Lie algebra satisfy $F(e^{\mathrm{i}(X)})=e^{\mathrm{i}f(X)}$ for all $X\in\mathfrak{g}$.
Therefore, the commutativity with $F(G)$ is equivalent to the commutativity with $f(\mathfrak{n})$ and $f(\mathfrak{s})$.
We will discuss the properties of $\mathfrak{n}$ and $\mathfrak{s}$.

\paragraph{Connected compact Abelian matrix Lie groups.}

First, we introduce a useful basis of $\mathfrak{n}$ based on Lemma~\ref{lem:Liealgebrabasis}. 
In order to prove Lemma~\ref{lem:Liealgebrabasis}, we make use of the following fact: 
If $N\subset GL(n, \mathbb{C})$ is a connected compact Abelian matrix Lie group, then $N$ is isomorphic to a torus $T^a$ of degree $a$ for some $a\in\mathbb{N}$ (Corollary~1.103~of~\cite{Knapp2002}).
$T^a$ is defined as $T^a:=\{\sum_{i=1}^a e^{\mathrm{i}\theta_i}D_i\ |\ \theta_1, \cdots, \theta_a\in\mathbb{R}\}$, where $D_i$ is the $a$ by $a$ matrix such that only the $(i, i)$ element is 1 and all the other elements are 0. 
$\{D_i\}_{i=1}^a$ is a basis of the Lie algebra $\mathfrak{t}^a$ of $T^a$.

\begin{lemma} \label{lem:Liealgebrabasis} 
	Let $G$ be a connected compact Lie group with a unitary representation $F$ acting on $\mathcal{H}$, $f$ be the associated representation of the Lie algebra of $G$, $N$ be the connected center of $G$, $\mathfrak{n}$  be the Lie algebra of $N$, and $F$ be irreducibly decomposed as Eq.~\eqref{eq:irreducibledecomposition}. 
	Then, there exists a basis $\{W_i\}$ of $\mathfrak{n}$ such that for any $i=1, \cdots, a$, $f(W_i)$ can be written as 
	\begin{align}
		f(W_i)=\sum_{\lambda\in\Lambda_F} w_{i\lambda}\iota_\lambda\iota_\lambda^\dag,
	\end{align}  
	with $w_{i\lambda}\in\mathbb{Z}$.
\end{lemma}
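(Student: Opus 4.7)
The plan is to exploit the classical fact that a connected compact Abelian matrix Lie group is isomorphic to a torus, together with Schur's lemma applied to the central subgroup. Since $N$ is the connected center of $G$, it is itself a connected compact Abelian matrix Lie group, so by the cited Corollary~1.103 of \cite{Knapp2002} there exists an isomorphism $\varphi:T^{a}\to N$ for some $a\in\mathbb{N}$. I would \emph{define} the desired basis by $W_{i}:=d\varphi(D_{i})$ where $\{D_{i}\}_{i=1}^{a}$ is the standard basis of $\mathfrak{t}^{a}$. Linear independence and the spanning property of $\{W_{i}\}$ in $\mathfrak{n}$ follow immediately from $\varphi$ being a Lie group isomorphism; the work lies entirely in verifying that $f(W_{i})$ has the claimed block-scalar form with \emph{integer} coefficients.

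For the structural form I would invoke Schur's lemma in the irreducible decomposition \eqref{eq:irreducibledecomposition}: since $N$ lies in the center of $G$, $F(n)$ commutes with $F(g)$ for every $g\in G$, so on each sector $\mathcal{H}_{\lambda}=\mathcal{R}_{\lambda}\otimes\mathcal{M}_{\lambda}$ the operator $F(n)$ must act as a scalar on the $\mathcal{R}_{\lambda}$ factor and trivially on $\mathcal{M}_{\lambda}$. Therefore there is a continuous character $\chi_{\lambda}:N\to U(1)$ with
\begin{equation}
F(n)=\sum_{\lambda\in\Lambda_{F}}\chi_{\lambda}(n)\,\iota_{\lambda}\iota_{\lambda}^{\dag}\qquad(n\in N).
\end{equation}
Pulling back by $\varphi$ gives a continuous character of $T^{a}$; differentiating $\chi_{\lambda}\!\circ\!\varphi(e^{\mathrm{i}t D_{i}})=e^{\mathrm{i}w_{i\lambda}t}$ at $t=0$ and using $F(\varphi(e^{\mathrm{i}t D_{i}}))=e^{\mathrm{i}t f(W_{i})}$ yields $f(W_{i})|_{\mathcal{H}_{\lambda}}=w_{i\lambda}I_{\mathcal{H}_{\lambda}}$, which reassembles as $f(W_{i})=\sum_{\lambda}w_{i\lambda}\iota_{\lambda}\iota_{\lambda}^{\dag}$.

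The main obstacle, and the reason the statement is more than a bookkeeping exercise, is the integrality $w_{i\lambda}\in\mathbb{Z}$. A priori Schur's lemma plus continuity only forces $w_{i\lambda}\in\mathbb{R}$. Integrality is extracted from the compactness (periodicity) of $N$: the choice $W_{i}=d\varphi(D_{i})$ ensures that $\varphi(e^{2\pi\mathrm{i}D_{i}})=1_{N}$, so $F(e^{2\pi\mathrm{i}W_{i}})=e^{2\pi\mathrm{i}f(W_{i})}=I$, which forces every eigenvalue of $f(W_{i})$ to lie in $\mathbb{Z}$ and in particular gives $w_{i\lambda}\in\mathbb{Z}$. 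Equivalently, this is the standard identification $\widehat{T^{a}}\cong\mathbb{Z}^{a}$ of the Pontryagin dual of a torus. I expect the only delicate point to be matching conventions: the factor of $2\pi$ implicit in the torus coordinates and the physicists' convention $F(e^{\mathrm{i}X})=e^{\mathrm{i}f(X)}$ used in the excerpt must be aligned carefully so that the kernel of $\exp:\mathfrak{n}\to N$ is precisely the $2\pi\mathbb{Z}$-lattice generated by the chosen $\{W_{i}\}$; once this is done, both the structural form and the integrality follow simultaneously.
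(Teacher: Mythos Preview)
Your proposal is correct and follows essentially the same route as the paper: define $W_i$ via the derivative of a torus isomorphism $T^a\to N$, invoke Schur's lemma to obtain the block-scalar form $f(W_i)=\sum_\lambda w_{i\lambda}\iota_\lambda\iota_\lambda^\dag$, and then use $e^{2\pi\mathrm{i}D_i}=I$ in $T^a$ to force $e^{2\pi\mathrm{i}w_{i\lambda}}=1$ and hence $w_{i\lambda}\in\mathbb{Z}$. The only cosmetic difference is that you phrase the Schur step at the group level via characters $\chi_\lambda:N\to U(1)$ (and mention Pontryagin duality), whereas the paper works directly at the Lie-algebra level with $f_\lambda(W_i)=w_{i\lambda}I_{\mathcal{R}_\lambda}$; the two are equivalent.
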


\begin{proof}
	Since $N$ is a connected compact Abelian matrix Lie group, we can take an isomorphism $\Phi: T^a\to N$.
	We denote the derivative of $\Phi$ at the identity by $\phi: \mathfrak{t}^a\to \mathfrak{n}$. 
	Then $\phi$ is a Lie algebra isomorphism.
	For any $i=1, \cdots, a$, we define $W_i:=\phi(D_i)$.
	Let $f_\lambda$ be the representation associated with $F_\lambda$.
	Then, in the same way as $F$ in Eq.~\eqref{eq:irreducibledecomposition}, $f$ can be written as 
	\begin{align}
		f=\sum_{\lambda\in\Lambda_F} \iota_\lambda(f_\lambda\otimes I_{\mathcal{M}_\lambda})\iota_\lambda^\dag. \label{eq:f_decomposition}
	\end{align}
	Since $W_i\in\mathfrak{n}$, $f_\lambda(W_i)$ commutes with $F_\lambda(G)$.
	From Schur's lemma, $f_\lambda(W_i)$ can be written as 
	\begin{align}
	f_\lambda(W_i)=w_{i\lambda}I_{\mathcal{R}_\lambda} \label{eq:W_i_representation}
	\end{align}
	with some $w_{i\lambda}\in\mathbb{R}$.
	From Eqs.~\eqref{eq:f_decomposition} and \eqref{eq:W_i_representation},
	\begin{align}
		f(W_i)
		=\sum_{\lambda\in\Lambda_F} \iota_\lambda(w_{i\lambda}I_{\mathcal{R}_\lambda}\otimes I_{\mathcal{M}_\lambda})\iota_\lambda^\dag
		=\sum_{\lambda\in\Lambda_F} w_{i\lambda}\iota_\lambda\iota_\lambda^\dag.
	\end{align}
	Therefore, we get
	\begin{align} 
		\sum_{\lambda\in\Lambda_F} e^{\mathrm{i}2\pi w_{i\lambda}}\iota_\lambda\iota_\lambda^\dag
		=&e^{\mathrm{i}2\pi f(W_i)} \nonumber\\
		=&e^{\mathrm{i}2\pi f(\phi(D_i))} \nonumber\\
		=&F\left(e^{\mathrm{i}2\pi \phi(D_i)}\right) \nonumber\\
		=&F\left(\Phi\left(e^{\mathrm{i}2\pi D_i}\right)\right) \nonumber\\
		=&F\left(\Phi(I)\right) \nonumber\\
		=&I \nonumber\\
		=&\sum_{\lambda\in\Lambda_F} \iota_\lambda\iota_\lambda^\dag.
	\end{align}
	By comparing the both sides of this equasion, we get $e^{\mathrm{i}2\pi w_{i\lambda}}=1$. 
	This implies that $w_{i\lambda}\in\mathbb{Z}$.
\end{proof}

\paragraph{Semisimple Lie algebras.}
	Next, we introduce properties of semisimple Lie algebra $\mathfrak{s}$ and define an operator $C\in\mathcal{B}(\mathcal{H}^{\otimes 2})$ that commutes with $f^{(2)}(\mathfrak{s})$.
	We define a symmetric bilinear form $B_f$ on $\mathfrak{s}$ as 
	\begin{align}
		B_f(V, V'):=\mathrm{tr}(f(V)f(V')),\ V, V'\in\mathfrak{s}. \label{eq:bilineardef}
	\end{align}
	If we take arbitrary $V\in\mathfrak{s}$ that satisfies $B_f(V, V')=0$ for all $V'\in\mathfrak{s}$, then we obtain $B_f(V, V)=0$ and thus $f(V)=0$.
	From the faithfulness of $f$, we get $V=0$.
	This implies that $B_f$ is nondegenerate.
	Therefore, for any basis $\{V_i\}_{i=1}^b$ of $\mathfrak{s}$, there exists the $B_f$-dual basis $\{V^i\}_{i=1}^b$, which satisfies $B_f(V_i,V^j)=\delta_{ij}$ for all $i, j=1, \cdots, b$.

\begin{lemma} \label{lem:casimir}
	Let $\mathfrak{s}$ be a semisimple Lie algebra with a faithful Hermitian representation $f$  acting on $\mathcal{H}$, $\{V_i\}_{i=1}^b$ be a basis of $\mathfrak{s}$, and $\{V^i\}_{i=1}^b$ be the $B_f$-dual basis.
	We define an operator $C\in\mathcal{B}(\mathcal{H}^{\otimes 2})$ as
	\begin{align}
		C:=\sum_{i=1}^b f(V_i)\otimes f(V^i). \label{eq:casimirdefinition}
	\end{align}
	Then, $C$ is independent of the choice of basis of $\mathfrak{s}$ and commutes with $f^{(2)}(\mathfrak{s})$. 
\end{lemma}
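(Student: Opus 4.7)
The plan is to establish the two claims by separating them cleanly, since neither relies on deep input beyond standard linear algebra together with the trace-cyclicity property of $B_f$.

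For the basis-independence claim, I would first record the transformation law of the dual basis. If $\{V'_j\}_{j=1}^b$ is another basis with transition matrix $V'_j=\sum_i A_{ij}V_i$, then the condition $B_f(V'_j,V'^{k})=\delta_{jk}$ together with the bilinearity of $B_f$ forces $V'^{k}=\sum_i (A^{-1})_{ki}V^i$. Substituting these into $\sum_j f(V'_j)\otimes f(V'^{j})$ and collapsing the sum $\sum_j A_{ij}(A^{-1})_{jk}=\delta_{ik}$ recovers $\sum_i f(V_i)\otimes f(V^i)=C$. This is a purely formal computation, so I will just outline it rather than write every index.

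For the commutativity claim, the key input I would prove first is the ad-invariance of $B_f$, namely $B_f([X,V],V')+B_f(V,[X,V'])=0$ for all $X,V,V'\in\mathfrak{s}$. This is immediate from cyclicity of the trace together with $f([X,V])=[f(X),f(V)]$:
\begin{equation}
B_f([X,V],V')=\mathrm{tr}\bigl([f(X),f(V)]f(V')\bigr)=-\mathrm{tr}\bigl(f(V)[f(X),f(V')]\bigr)=-B_f(V,[X,V']).
\end{equation}
Next, I would expand the commutator
\begin{equation}
\bigl[C,\,f(X)\otimes I+I\otimes f(X)\bigr]=\sum_{i=1}^{b}\Bigl(f([V_i,X])\otimes f(V^i)+f(V_i)\otimes f([V^i,X])\Bigr),
\end{equation}
and write $[V_i,X]=\sum_j c^{\,j}_{\,i}V_j$ and $[V^i,X]=\sum_j d^{\,j}_{\,i}V^j$. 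Pairing with the dual basis gives $c^{\,j}_{\,i}=B_f([V_i,X],V^j)$ and $d^{\,i}_{\,j}=B_f(V_i,[V^j,X])$, so ad-invariance yields $d^{\,i}_{\,j}=-c^{\,j}_{\,i}$. Substituting back and relabelling the dummy indices $i\leftrightarrow j$ in the second sum shows that the two contributions cancel, hence the commutator vanishes. Since $\mathfrak{s}$ is spanned by such $X$'s, $C$ commutes with all of $f^{(2)}(\mathfrak{s})$.

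I do not anticipate a genuine obstacle: the proof is mechanical once the ad-invariance of $B_f$ is in hand. The one place that requires care is the sign bookkeeping linking $c^{\,j}_{\,i}$ to $d^{\,i}_{\,j}$, since an incorrect sign would produce a doubled expression rather than cancellation. Semisimplicity of $\mathfrak{s}$ itself is not explicitly invoked beyond the nondegeneracy of $B_f$, which in turn uses only the faithfulness of $f$ that has already been noted to hold; semisimplicity is really what makes this Casimir-type element physically meaningful in the subsequent arguments, but it is not strictly required for the present lemma.
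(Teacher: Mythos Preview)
Your proof is correct. Both claims follow from the computations you outline, and the sign bookkeeping linking $c^{\,j}_{\,i}$ and $d^{\,i}_{\,j}$ via ad-invariance of $B_f$ is handled properly.

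The paper takes a different route. Rather than computing the commutator directly, it first establishes the symmetry $\sum_i f(V_i)\otimes f(V^i)=\sum_j f(V^j)\otimes f(V_j)$ (via the identity $V_i=\sum_j B_f(V_j,V_i)V^j$), and then observes that this lets one write
\[
C=\tfrac{1}{2}\bigl(C_{f^{(2)}}-C_f^{(2)}\bigr),
\]
where $C_f:=\sum_i f(V_i)f(V^i)$ is the ordinary Casimir operator of the representation $f$. Both basis-independence and commutativity with $f^{(2)}(\mathfrak{s})$ are then inherited from the corresponding properties of $C_f$ and $C_{f^{(2)}}$, which the paper imports from Lemma~3.3.7 of Goodman--Wallach. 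Your argument is more self-contained and elementary, proving everything from scratch via ad-invariance; the paper's argument is shorter but relies on an external reference, and in exchange it gives a structural identity for $C$ in terms of standard Casimir elements that may be conceptually useful elsewhere.
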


\begin{proof}
	The Casimir operator of the representation $f$ is defined as
	\begin{align}
		C_f:=\sum_{i=1}^b f(V_i)f(V^i).
	\end{align}
	From Lemma~3.3.7~of~\cite{Goodman2009}, $C_f$ is independent of the choice of the basis and commutes with $f(\mathfrak{s})$ .
	Since
	\begin{align}
		\sum_{i=1}^b f(V_i)\otimes f(V^i)=&\sum_{i=1}^b f\left(\sum_{j=1}^b B_f(V_j, V_i)V^j\right)\otimes f(V^i) \nonumber\\
		=&\sum_{j=1}^b f(V^j)\otimes f\left(\sum_{i=1}^b B_f(V_i, V_j)V^i\right) \nonumber\\
		=&\sum_{j=1}^b f(V^j)\otimes f(V_j),
	\end{align}
	$C$ can be written as $C=\frac{1}{2}(C_{f^{(2)}}-C_f^{(2)})$ with Casimir operators $C_f$ and $C_{f^{(2)}}$.
	Therefore, $C$ is independent of the choice of the basis and commutes with $f^{(2)}(\mathfrak{s})$.
\end{proof}

Proposition~\ref{prop:Casimir} shows the condition of $\xi\in\mathcal{B}^\mathrm{H}(\mathcal{H})$ such that $\xi^{(2)}$ commutes with $C$.
Along with  Proposition~\ref{prop:workextop1}, this fact plays an important role in the proof of Proposition~\ref{prop:preGGEderivation}.

\begin{proposition} \label{prop:Casimir}
	Let $\mathfrak{s}$ be a semisimple Lie algebra with a faithful Hermitian representation $f$   acting on $\mathcal{H}$, and $\xi\in\mathcal{B}^\mathrm{H}(\mathcal{H})$ satisfy $[\xi^{(2)}, C]=0$, where $C$ is defined by Eq.~\eqref{eq:casimirdefinition}.
	Then, there exists $X^\mathrm{S}\in\mathfrak{s}$ such that $\xi-f(X^\mathrm{S})$ commutes with $f(\mathfrak{s})$.
\end{proposition}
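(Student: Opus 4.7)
The plan is to generalize Step~2 of the dimer calculation (Eq.~(28) of the main text) to an arbitrary faithful Hermitian representation of a semisimple Lie algebra, with the totally antisymmetric spin structure constants $\epsilon_{\alpha\beta\gamma}$ there replaced by the $\mathrm{ad}$-invariance of the trace form $B_f$. The key algebraic input is that the trilinear form
\begin{equation*}
    T(X, Y, Z) \;:=\; B_f([X, Y], Z) \;=\; \mathrm{tr}\bigl([f(X), f(Y)]\, f(Z)\bigr),\qquad X, Y, Z \in \mathfrak{s},
\end{equation*}
is totally antisymmetric in its three arguments: it is manifestly antisymmetric in $(X, Y)$ by antisymmetry of the bracket, and cyclically symmetric in $(X, Y, Z)$ by cyclicity of the trace; together these imply total antisymmetry. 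Moreover $B_f$ is positive definite (hence a $B_f$-dual basis $\{V^i\}$ exists), because $f$ is faithful and $f(V)$ is Hermitian for every $V \in \mathfrak{s}$.

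The candidate element is
\begin{equation*}
    X^{\mathrm{S}} \;:=\; \sum_{l} \mathrm{tr}\bigl(\xi\, f(V_l)\bigr)\, V^{l} \;\in\; \mathfrak{s},
\end{equation*}
the direct analogue of the projection $\mathcal{P}(\xi)$ in the dimer Step~2; the coefficients are real because $\xi$ and $f(V_l)$ are Hermitian. To verify that $\xi - f(X^{\mathrm{S}})$ commutes with $f(\mathfrak{s})$, I would compute, for each basis vector $V_k$, the partial trace $\mathrm{tr}_{\mathcal{H}_2}\!\bigl((I\otimes f(V_k))\,[\xi^{(2)}, C]\bigr)$, which vanishes by hypothesis. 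Expanding $[\xi^{(2)}, C] = \sum_i [\xi, f(V_i)]\otimes f(V^i) + \sum_i f(V_i)\otimes [\xi, f(V^i)]$, the first sum collapses via $\mathrm{tr}\bigl(f(V_k)\, f(V^i)\bigr) = B_f(V_k, V^i) = \delta_{ki}$ to yield $[\xi, f(V_k)]$, while the second sum, after applying cyclicity and the homomorphism property $[f(V^i), f(V_k)] = f([V^i, V_k])$, produces $\sum_{i,l} e^{i}_{k,l}\, \mathrm{tr}\bigl(\xi\, f(V_l)\bigr)\, f(V_i)$, where the coefficients are defined by $[V^i, V_k] = \sum_l e^{i}_{k,l}\, V_l$.

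The total antisymmetry of $T$ then yields the mixed-index identity $e^{i}_{k,l} = -e^{l}_{k,i}$: the two sides equal $T(V^i, V_k, V^l)$ and $-T(V^l, V_k, V^i)$ respectively, which coincide because swapping the first and third arguments of $T$ is an odd permutation. This identity is precisely what is needed to rewrite the second sum as $-[f(X^{\mathrm{S}}), f(V_k)]$. Combining, the vanishing of the partial trace forces $[\xi - f(X^{\mathrm{S}}), f(V_k)] = 0$ for every $k$; since $\{V_k\}$ spans $\mathfrak{s}$, the operator $\xi - f(X^{\mathrm{S}})$ commutes with all of $f(\mathfrak{s})$, as required. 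The main subtle step I expect is this mixed-index antisymmetry, which is the one place where $\mathrm{ad}$-invariance of $B_f$ (rather than mere nondegeneracy) is used; once it is in hand, the remainder of the argument is a direct transcription of the dimer Step~2 to the general semisimple setting.
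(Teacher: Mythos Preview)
Your proposal is correct and takes essentially the same approach as the paper: both compute the partial trace of $[\xi^{(2)},C]$ against $f(V)$ on one tensor factor, using the dual-basis relation $B_f(V_i,V^j)=\delta_{ij}$ and the $\mathrm{ad}$-invariance of $B_f$. The only cosmetic difference is that the paper first passes to $\eta:=\xi-f(X^{\mathrm S})$ so that the orthogonality $\mathrm{tr}(\eta f(V))=0$ kills one of the two terms outright, whereas you keep $\xi$ and instead invoke the antisymmetry $e^{i}_{k,l}=-e^{l}_{k,i}$ of the structure constants to recombine the surviving terms into $-[f(X^{\mathrm S}),f(V_k)]$.
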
 

\begin{proof}
	We introduce the Hilbert-Schmidt inner product $(A, B)_\mathrm{HS}:=\mathrm{tr}(A^\dag B)$ on $\mathcal{B}(\mathcal{H})$ and define $\mathcal{P}$ as the projection from $\mathcal{B}(\mathcal{H})$ onto $f(\mathfrak{s})$.
	We take $X^\mathrm{S}\in\mathfrak{s}$ satisfying $\mathcal{P}(\xi)=f(X^\mathrm{S})$ and define $\eta:=\xi-f(X^\mathrm{S})$.
	Then, $\mathrm{tr}(\eta f(V))=0$ for all $V\in\mathfrak{s}$. 
	Therefore, for any $V\in\mathfrak{s}$, 
	\begin{align}
		\mathrm{tr}_{\mathcal{H}_1}([\eta^{(2)}, C](f(V)\otimes I))=&\sum_{i=1}^b \mathrm{tr}_{\mathcal{H}_1}([\eta\otimes I+I\otimes\eta, f(V_i)\otimes f(V^i)](f(V)\otimes I)) \nonumber\\
		=&\sum_{i=1}^b (\mathrm{tr}([\eta, f(V_i)]f(V))f(V^i)+\mathrm{tr}(f(V_i)f(V))[\eta, f(V^i)]) \nonumber\\
		=&\sum_{i=1}^b \mathrm{tr}(\eta f([V_i, V]))f(V^i)+\left[\eta, f\left(\sum_{i=1}^b B_f(V_i, V)V^i\right)\right] \nonumber\\
		=&[\eta, f(V)], \label{eq:casimirpartialtr}
	\end{align}
	where $\mathcal{H}_1$ is the Hilbert space of the first copy of the system. 
	On the other hand, from the commutativity of $\xi^{(2)}$ with $C$ and Lemma~\ref{lem:casimir}, $[\eta^{(2)}, C]=[\xi^{(2)}, C]-[f^{(2)}(X^\mathrm{S}), C]=0$.
	By substituting this into Eq.~\eqref{eq:casimirpartialtr}, we get $[\xi-f(X^\mathrm{S}), f(V)]=[\eta,f(V)]=0$.
\end{proof}

Lemma~\ref{lem:semisimpletrrep} states that the trace of the representation of every element of a semisimple Lie algebra is 0.
This property allows us to ignore the symmetry constraints described by $S$ when we deal with Abelian Lie group $N$ in the proof of Proposition~\ref{prop:GGEderivation}.

\begin{lemma} \label{lem:semisimpletrrep}
	Let $\mathfrak{s}$ be a semisimple Lie algebra with a representation $\tilde{f}$ acting on a Hilbert space $\mathcal{R}$. 
	Then, for any $V\in\mathfrak{s}$, $\mathrm{tr}(\tilde{f}(V))=0$.	 
\end{lemma}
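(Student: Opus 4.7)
The plan is to exploit the well-known structural fact that a semisimple Lie algebra coincides with its own derived algebra, and then use the trace-vanishing property of commutators.

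First, I would invoke the standard theorem that for a semisimple Lie algebra $\mathfrak{s}$ one has $\mathfrak{s} = [\mathfrak{s}, \mathfrak{s}]$. This follows from the Cartan criterion / structure theorem: $\mathfrak{s}$ decomposes as a direct sum of simple ideals, each of which equals its own commutator subalgebra (since otherwise $[\mathfrak{s}_i, \mathfrak{s}_i]$ would be a proper nonzero ideal, contradicting simplicity). Equivalently, the abelianization $\mathfrak{s}/[\mathfrak{s}, \mathfrak{s}]$ is an abelian quotient of a semisimple Lie algebra and must therefore vanish. In the context of this paper, where $\mathfrak{s} = \mathrm{i}[\mathfrak{g},\mathfrak{g}]$ arises from the Levi decomposition \eqref{eq:Levi_decomposition}, the identity $\mathfrak{s} = [\mathfrak{s}, \mathfrak{s}]$ can be cited from a standard reference such as \cite{Knapp2002}.

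Next, given this fact, any $V \in \mathfrak{s}$ may be written as a finite linear combination
\begin{equation*}
V = \sum_k c_k [X_k, Y_k]
\end{equation*}
with $c_k \in \mathbb{R}$ and $X_k, Y_k \in \mathfrak{s}$. Since $\tilde{f}$ is a Lie algebra representation, it satisfies $\tilde{f}([X, Y]) = [\tilde{f}(X), \tilde{f}(Y)]$, so
\begin{equation*}
\tilde{f}(V) = \sum_k c_k [\tilde{f}(X_k), \tilde{f}(Y_k)].
\end{equation*}

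Finally, I take the trace and use the cyclic property $\mathrm{tr}([A, B]) = 0$ for any bounded operators $A, B$ on the finite-dimensional Hilbert space $\mathcal{R}$, yielding $\mathrm{tr}(\tilde{f}(V)) = 0$. There is no substantive obstacle beyond correctly citing the structural fact $\mathfrak{s} = [\mathfrak{s}, \mathfrak{s}]$; the rest is a one-line computation. Note that faithfulness of $\tilde{f}$ is not required, which is consistent with how Lemma~\ref{lem:semisimpletrrep} is stated.
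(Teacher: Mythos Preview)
Your proposal is correct and essentially identical to the paper's proof: the paper also invokes the structural fact $\mathfrak{s} = \mathrm{i}[\mathfrak{s},\mathfrak{s}]$ (citing Corollary~2.5.9 of \cite{Goodman2009}), writes $V = \sum_i \mathrm{i}c_i[X_i,Y_i]$, and concludes via $\mathrm{tr}([\tilde f(X_i),\tilde f(Y_i)])=0$. The only cosmetic difference is the factor of $\mathrm{i}$ in the bracket, which stems from the paper's physics convention for the Lie algebra.
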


\begin{proof}
	We define $\mathrm{i}[\mathfrak{s}, \mathfrak{s}]:=\mathrm{span}(\{\mathrm{i}[X, Y]\ |\ X, Y\in\mathfrak{s}\})$.
	From Corollary~2.5.9~of~\cite{Goodman2009}, $\mathrm{i}[\mathfrak{s}, \mathfrak{s}]=\mathfrak{s}$.
	Therefore, for any $V\in\mathfrak{s}$, $V$ can be written as $V=\sum_i \mathrm{i}c_i[X_i, Y_i]$ with $c_i\in\mathbb{R}$ and $X_i, Y_i\in\mathfrak{s}$, and thus $\mathrm{tr}(\tilde{f}(V))=\sum_i \mathrm{i}c_i\mathrm{tr}([\tilde{f}(X_i), \tilde{f}(Y_i)])=0$.
\end{proof}

\subsubsection{Derivation of the generalized Gibbs ensemble}

In order to prove Theorem~\ref{thm:GFHCP}, we consider the following two conditions about a state $\rho$.

\begin{enumerate}[1]
\renewcommand{\theenumi}{A\arabic{enumi}}
 \item $\rho^{\otimes M}$ commutes with $\Omega$, for any $M\in\mathbb{N}$ and $\Omega\in\mathcal{B}^\mathrm{H}(\mathcal{H}^{\otimes M})$ that commutes with $F^{\otimes M}(G)$ and $H^{(M)}$. \label{conditionA1}
 \item There exists $\beta\in[0, \infty)$ that satisfies the following: $-\log(\braket{\Psi'_1|\rho^{\otimes L'}|\Psi'_1})+\log(\braket{\Psi'_0|\rho^{\otimes L'}|\Psi'_0})=\beta(\braket{\Psi'_1|H^{(L')}|\Psi'_1}-\braket{\Psi'_0|H^{(L')}|\Psi'_0})$ holds, for any $L'\in\mathbb{N}$ and any pair of simultaneous eigenstates $\ket{\Psi'_0}, \ket{\Psi'_1}\in\mathcal{H}^{\otimes L'}$ of $F^{\otimes L'}(G)$ and $H^{(L')}$ with $\braket{\Psi'_0|F^{\otimes L'}(g)|\Psi'_0}=\braket{\Psi'_1|F^{\otimes L'}(g)|\Psi'_1}$ for all $g\in G$. \label{conditionA2}
\end{enumerate}

	In Proposition~\ref{prop:GFworkextop}, we prove that symmetry-protected completely passive states satisfy these two conditions.
	We deal with Condition~\ref{conditionA1} in Proposition~\ref{prop:preGGEderivation} and Condition~\ref{conditionA2} in Proposition~\ref{prop:GGEderivation}, and then finally derive the generalized Gibbs ensemble.
	In order to make clear the meaning of Condition~\ref{conditionA2}, we consider all pairs of simultaneous eigenstates $\ket{\Psi'_0}$, $\ket{\Psi'_1}$ of $F^{\otimes L'}(G)$ that have the same eigenvalue for each $g\in G$.
	Then, Condition~\ref{conditionA2} means that any such pair of states have a common positive virtual temperature.
	The state $\ket{\Phi(\bm{n})}$ defined as Eq.~\eqref{eq:determinantstate} is useful for the construction of the simultaneous eigenstate $\ket{\Psi'_i}$.


	In Proposition~\ref{prop:preGGEderivation}, we consider Condition~\ref{conditionA1} in the case where $M=2$ and $\Omega$ is $C$ defined as Eq.~\eqref{eq:casimirdefinition}, and prove that if a state $\rho$ satifies Condition~\ref{conditionA1}, $\rho$ can be written as the product of the exponential of the representation of some element of the semisimple Lie subalgebra $\mathfrak{s}$ and an operator that commutes with $F(G)$.  
	When the symmetry group $G$ is a finite cyclic group or a dihedral group, Proposition~\ref{prop:prefiniteGEderivation} holds instead of Proposition~\ref{prop:preGGEderivation}.
	This is the difference between the cases of Lie group symmetry and finite group symmetry.

\begin{proposition} \label{prop:preGGEderivation}
	Let $G$ be a connected compact Lie group with a faithful unitary representation $F$ acting on $\mathcal{H}$, $\mathfrak{g}$ be the Lie algebra of $G$ with the associated representation $f$, $H\in\mathcal{B}(\mathcal{H})$ commute with $F(G)$ and $\rho\in\mathcal{B}^{++}(\mathcal{H})$.
	If $\rho$ satisfies Condition~\ref{conditionA1}, then $\rho$ can be written as 
	\begin{align}
		\rho=e^{-f(X^\mathrm{S})}\tau \label{eq:expsemisimple}
	\end{align}
	with some $X^\mathrm{S}\in\mathfrak{s}$ and $\tau\in\mathcal{B}^{++}(\mathcal{H})$ that commutes with $F(G)$, where $\mathfrak{s}$ is the Lie algebra of the Lie group $S$ defined in Eq.~\eqref{eq:Levi_decomposition}.
\end{proposition}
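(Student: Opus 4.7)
The plan is to combine Condition~\ref{conditionA1} at $M=2$ with the Casimir-like operator $C$ from Lemma~\ref{lem:casimir}, and at $M=1$ with generators of the central Lie algebra $\mathfrak{n}$, so that Proposition~\ref{prop:Casimir} can be invoked to strip off the semisimple piece of $-\log\rho$ while the Levi decomposition~\eqref{eq:Levi_decomposition} assembles the remaining commutativity into commutativity with all of $F(G)$. Throughout I set $\xi := -\log\rho \in \mathcal{B}^\mathrm{H}(\mathcal{H})$, which is well defined since $\rho \in \mathcal{B}^{++}(\mathcal{H})$, and use the identity $\xi^{(2)} = -\log(\rho^{\otimes 2})$, which holds because $\rho \otimes I$ and $I \otimes \rho$ commute.

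First I would verify that $C := \sum_i f(V_i) \otimes f(V^i)$, built from a basis of $\mathfrak{s}$ and its $B_f$-dual, is a legitimate choice of $\Omega$ in Condition~\ref{conditionA1} with $M=2$. Since $f$ is Hermitian on the real Lie algebra $\mathfrak{s}$, each $f(V_i)$ and $f(V^i)$ is Hermitian, hence so is $C$. It commutes with $H^{(2)}$ because $H$ commutes with $F(G)$ and therefore with $f(\mathfrak{g}) \supset f(\mathfrak{s})$. For the commutation with $F^{\otimes 2}(G)$, Lemma~\ref{lem:casimir} gives $[C, f^{(2)}(\mathfrak{s})] = 0$ directly, while $[C, f^{(2)}(\mathfrak{n})] = 0$ follows because $\mathfrak{n}$ is central in $\mathfrak{g}$, so $[f(W), f(V_i)] = f([W, V_i]) = 0$ for every $W \in \mathfrak{n}$; the Levi decomposition~\eqref{eq:Levi_decomposition} then covers all of $F(G)$. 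Condition~\ref{conditionA1} therefore yields $[\rho^{\otimes 2}, C] = 0$, which by functional calculus passes to $[\xi^{(2)}, C] = 0$.

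Next I would apply Proposition~\ref{prop:Casimir} to extract $X^\mathrm{S} \in \mathfrak{s}$ such that $\eta := \xi - f(X^\mathrm{S})$ commutes with $f(\mathfrak{s})$. To promote this to commutativity with all of $F(G)$, I feed back into Condition~\ref{conditionA1} at $M=1$ with $\Omega = f(W)$ for $W$ ranging over a basis of $\mathfrak{n}$ (for instance that of Lemma~\ref{lem:Liealgebrabasis}): each such $f(W)$ is Hermitian, commutes with $H$, and commutes with $F(G)$ because $\mathfrak{n}$ is central, so Condition~\ref{conditionA1} gives $[\rho, f(W)] = 0$ and hence $[\xi, f(\mathfrak{n})] = 0$. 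Since $[f(X^\mathrm{S}), f(\mathfrak{n})] = 0$ as well, $\eta$ commutes with $f(\mathfrak{n})$; combined with $[\eta, f(\mathfrak{s})] = 0$ from the previous step, $\eta$ commutes with all of $f(\mathfrak{g})$, and therefore with $F(G)$ via~\eqref{eq:Levi_decomposition}. Because $\eta$ and $f(X^\mathrm{S})$ commute, the exponentials separate: $\rho = e^{-\xi} = e^{-f(X^\mathrm{S})} e^{-\eta}$, and setting $\tau := e^{-\eta} \in \mathcal{B}^{++}(\mathcal{H})$ yields the decomposition~\eqref{eq:expsemisimple}.

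The routine items---Hermiticity of $C$, the commutation checks, and the functional-calculus step---should be painless. The real conceptual content is recognizing that the operator $\bm{Q} \cdot \bm{Q}$ of the dimer argument in Appendix~\ref{sec:proof} generalizes to the basis-independent Casimir-like object $C$ on $\mathfrak{s}$, and that its usefulness rests on the nondegeneracy of $B_f$---which in turn is where the faithfulness of $F$ assumed in Theorem~\ref{thm:GFHCP} is genuinely needed, since a non-faithful $f$ could have $f(V) = 0$ for nonzero $V \in \mathfrak{s}$ and kill the ``projection'' $X^\mathrm{S}$ that Proposition~\ref{prop:Casimir} produces.
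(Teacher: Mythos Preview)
Your proposal is correct and follows essentially the same approach as the paper's proof: applying Condition~\ref{conditionA1} at $M=2$ with the Casimir-like operator $C$ to invoke Proposition~\ref{prop:Casimir}, then at $M=1$ with generators of $\mathfrak{n}$, and finally using the Levi decomposition to conclude that $\eta$ commutes with all of $F(G)$ so the exponentials separate. Your explicit check that $C$ is Hermitian (needed for Condition~\ref{conditionA1}) and your closing remark on where faithfulness enters are helpful clarifications that the paper leaves implicit.
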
 

\begin{proof}
	Suppose that $\rho\in\mathcal{B}^{++}(\mathcal{H})$ satisfies Condition~\ref{conditionA1}.
	From Lemma \ref{lem:casimir}, $C$ defined as Eq.~\eqref{eq:casimirdefinition} commutes with $f^{(2)}(\mathfrak{s})$.
	Since all elements of $\mathfrak{n}$ commute with $\mathfrak{s}$, $C$ also commutes with $f^{(2)}(\mathfrak{n})$.
	From Eq.~\eqref{eq:Levi_decomposition}, we obtain $F^{\otimes 2}(G)=F^{\otimes 2}(e^{\mathrm{i}\mathfrak{n}}e^{\mathrm{i}\mathfrak{s}})=e^{\mathrm{i}f^{(2)}(\mathfrak{n})}e^{\mathrm{i}f^{(2)}(\mathfrak{s})}$ and thus $C$ commutes with $F^{\otimes 2}(G)$.
	Since $H$ commutes with $f(\mathfrak{s})$, $C$ also commutes with $H^{(2)}$.	
	Since $[C, F^{\otimes 2}(G)]=[C, H^{(2)}]=0$ and $\rho$ satisfies Condition~\ref{conditionA1}, $\rho^{\otimes 2}$ commutes with $C$.
	Since $\rho$ is positive definite, we can define $\xi:=-\log(\rho)$ and $\xi^{(2)}$ commutes with $C$.
	From Proposition~\ref{prop:Casimir}, there exists $X^\mathrm{S}\in\mathfrak{s}$ such that $\eta:=\xi-f(X^\mathrm{S})$ commutes with $f(\mathfrak{s})$.
	Since $[f(\mathfrak{n}), F(G)]=[f(\mathfrak{n}), H]=0$ and $\rho$ satisfies Condition~\ref{conditionA1}, $\rho$ commutes with $f(\mathfrak{n})$ and thus $\xi$ commutes with $f(\mathfrak{n})$.
	Moreover, since $[f(X^\mathrm{S}), f(\mathfrak{n})]=f([X^\mathrm{S}, \mathfrak{n}])=0$, $\eta$ commutes with $f(\mathfrak{n})$.
	From Eq.~\eqref{eq:Levi_decomposition}, we obtain $F(G)=F(e^{\mathrm{i}\mathfrak{n}}e^{\mathrm{i}\mathfrak{s}})=e^{\mathrm{i}f(\mathfrak{n})}e^{\mathrm{i}f(\mathfrak{s})}$ and thus $\eta$ commutes with $F(G)$.
	Then, $\tau:=e^{-\eta}$ is a positive definite operator and commutes with $F(G)$, and $\rho$ can be written as $\rho=e^{-\xi}=e^{-f(X^\mathrm{S})-\eta}=e^{-f(X^\mathrm{S})}e^{-\eta}=e^{-f(X^\mathrm{S})}\tau$.
\end{proof}

Proposition~\ref{prop:GGEderivation} states that if a state $\rho$ satisfies Conditions~\ref{conditionA1} and \ref{conditionA2}, $\rho$ is the GGE at positive temperature.

\begin{proposition} \label{prop:GGEderivation}
	Let $G$ be a connected compact Lie group with a faithful unitary representation $F$ acting on $\mathcal{H}$, $\mathfrak{g}$ be the Lie algebra of $G$ with the associated representation $f$, $H\in\mathcal{B}(\mathcal{H})$ commute with $F(G)$, and $\rho\in\mathcal{B}^{++}(\mathcal{H})$ satisfy $\mathrm{tr}(\rho)=1$. 
	If $\rho$ satisfies Conditions~\ref{conditionA1} and \ref{conditionA2}, $\rho$ can be written as $\rho=\frac{1}{Z}e^{-\beta H-f(X)}$ with some $\beta\in[0, \infty)$ and $X\in\mathfrak{g}$, where $Z:=\mathrm{tr}(e^{-\beta H-f(X)})$.
\end{proposition}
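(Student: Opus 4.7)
The plan is to combine Proposition~\ref{prop:preGGEderivation} (which exploits Condition~\ref{conditionA1} to handle the semisimple part of $\mathfrak{g}$) with a carefully engineered application of Condition~\ref{conditionA2} that extracts both the inverse temperature $\beta$ and the $\mathfrak{n}$-component $X^{\mathrm N}$. First, I invoke Proposition~\ref{prop:preGGEderivation} to write $\rho = e^{-f(X^{\mathrm S})}\tau$ with $X^{\mathrm S}\in\mathfrak{s}$ and $\tau\in\mathcal{B}^{++}(\mathcal{H})$ commuting with $F(G)$; by Schur's lemma, $\tau=\sum_{\lambda\in\Lambda_F}\iota_\lambda(I_{\mathcal{R}_\lambda}\otimes\tau_\lambda)\iota_\lambda^\dagger$. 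Since $f(X^{\mathrm S})$ acts as $f_\lambda(X^{\mathrm S})\otimes I_{\mathcal{M}_\lambda}$ on each block, it commutes with $\tau$ and with $H$; applying Condition~\ref{conditionA1} to $\Omega=H$ then forces $[\rho,H]=0$ and hence $[\tau_\lambda,H_\lambda]=0$ for every $\lambda$. I diagonalize simultaneously and write $\tau_\lambda=\sum_j p_{\lambda j}\ket{\psi_{\lambda j}}\bra{\psi_{\lambda j}}$ with $p_{\lambda j}>0$, reusing the $H_\lambda$-eigenbasis of Eq.~\eqref{eq:Hamiltoniandecomp2}.

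The second step uses the totally antisymmetric multi-copy states $\ket{\Phi(\mathbf{n})}$ of Eq.~\eqref{eq:determinantstate} to probe $\rho$ in a way that annihilates the semisimple factor. The key observation is that on the one-dimensional subspace $\Lambda^{r_\lambda}\mathcal{R}_\lambda$, every operator $A^{\otimes r_\lambda}$ acts as $\det A$, so the matrix element of $(e^{-f_\lambda(X^{\mathrm S})})^{\otimes r_\lambda}$ between antisymmetric states equals $\det e^{-f_\lambda(X^{\mathrm S})}=e^{-\mathrm{tr}\,f_\lambda(X^{\mathrm S})}=1$ by Lemma~\ref{lem:semisimpletrrep}. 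Consequently $\braket{\Phi(\mathbf{n})|\rho^{\otimes L'}|\Phi(\mathbf{n})}=\prod_{k=(\lambda,j)}p_{\lambda j}^{Dn_k}$, while $\ket{\Phi(\mathbf{n})}$ has energy $D\sum_{(\lambda,j)}n_{(\lambda,j)}E_{\lambda j}$ and its $F^{\otimes L'}(g)$-eigenvalue depends only on the block-totals $N_\lambda:=\sum_j n_{(\lambda,j)}$: it is $1$ for $g=e^{\mathrm{i}X^{\mathrm S}}$ (Lemma~\ref{lem:semisimpletrrep}) and $\exp(\mathrm{i}D\sum_\lambda N_\lambda\sum_i y_i w_{i\lambda})$ for $g=e^{\mathrm{i}\sum_i y_i W_i}$, with $\{W_i\}$ and integers $\{w_{i\lambda}\}$ supplied by Lemma~\ref{lem:Liealgebrabasis}.

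Third, I feed these states into Condition~\ref{conditionA2}. Within a single block, pairs $(\mathbf{n}^{(0)},\mathbf{n}^{(1)})=(e_{(\lambda,j)},e_{(\lambda,j')})$ automatically match all $F(G)$-eigenvalues, so A2 immediately yields $p_{\lambda j}=c'_\lambda e^{-\beta E_{\lambda j}}$ with the common $\beta\geq 0$ supplied by A2 and a block-constant $c'_\lambda>0$. Across blocks, for any integer vector $(a_\lambda)\in\mathbb{Z}^{\Lambda_F}$ with $\sum_\lambda a_\lambda=0$ and $\sum_\lambda w_{i\lambda}a_\lambda=0$ for all $i$, I realize $a_\lambda=N_\lambda^{(0)}-N_\lambda^{(1)}$ by placing $|a_\lambda^\pm|$ copies of $\ket{\chi_{(\lambda,j^*_\lambda)}}$ at a fixed representative $j^*_\lambda$ in each block; after the cancellations, A2 collapses to $\sum_\lambda a_\lambda\log c'_\lambda=0$. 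Since the $w_{i\lambda}$ are integers, such integer solutions are $\mathbb{Q}$-dense in the corresponding rational subspace of $\mathbb{R}^{\Lambda_F}$, so $(\log c'_\lambda)_\lambda$ must lie in the $\mathbb{R}$-span of $(1,\ldots,1)$ and the vectors $\{(w_{i\lambda})_\lambda\}_i$. Writing $\log c'_\lambda=-\log Z-\sum_i y_i w_{i\lambda}$ and setting $X^{\mathrm N}:=\sum_i y_i W_i\in\mathfrak{n}$, $X:=X^{\mathrm S}+X^{\mathrm N}\in\mathfrak{g}$, the three mutually commuting exponents $-\beta H$, $-f(X^{\mathrm S})$, $-f(X^{\mathrm N})$ on each block (the last being a scalar $-c_\lambda I$ on block $\lambda$) combine to yield $\rho=Z^{-1}e^{-\beta H-f(X)}$, with $Z$ fixed by $\mathrm{tr}\,\rho=1$. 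The main obstacle is this last cross-block step: it requires both the explicit integer construction of multi-block pairs realizing an arbitrary admissible $(a_\lambda)$ and the density argument upgrading integer orthogonality to the real span statement, for which the integrality of the $w_{i\lambda}$ from Lemma~\ref{lem:Liealgebrabasis} is essential.
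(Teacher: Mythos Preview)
Your proposal is correct and follows essentially the same route as the paper's proof: Proposition~\ref{prop:preGGEderivation} for the semisimple part, Condition~\ref{conditionA1} to diagonalize $\tau$, then the states $\ket{\Phi(\mathbf{n})}$ (on which the semisimple factor acts trivially by Lemma~\ref{lem:semisimpletrrep}) fed into Condition~\ref{conditionA2}, combined with the integrality of the $w_{i\lambda}$ and a lattice-duality argument. The paper differs only cosmetically: it applies Condition~\ref{conditionA1} with the rank-one projectors $\Omega=\Pi_{\lambda j}=\iota_\lambda(I\otimes\ket{\psi_{\lambda j}}\bra{\psi_{\lambda j}})\iota_\lambda^\dagger$ rather than $\Omega=H$ (which avoids your implicit re-choice of eigenbasis when $H_\lambda$ is degenerate), and it invokes Lemma~\ref{lem:latticeduality} once on the full index set $K$ with the constraint vectors $\mathbf{w}_0,\ldots,\mathbf{w}_a$ rather than splitting into your within-block and cross-block passes.
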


\begin{proof}
	Suppose that $\rho$ satisfies Conditions~\ref{conditionA1} and \ref{conditionA2}. 
	From Proposition~\ref{prop:preGGEderivation}, $\rho$ can be written as
	\begin{align}
		\rho=e^{-f(X^\mathrm{S})} \tau
	\end{align}
	with some $X^\mathrm{S}\in\mathfrak{s}$ and $\tau\in\mathcal{B}^{++}(\mathcal{H})$ that commutes with $F(G)$.
	Since $\tau$ commutes with $F(G)$, in the same way as $H$ in Eq.~\eqref{eq:Hamiltoniandecomp}, $\tau$ can be written as $\tau=\sum_{\lambda\in\Lambda_F} \iota_\lambda (I_{\mathcal{R}_\lambda} \otimes \tau_\lambda) \iota_\lambda^\dag$ with some $\tau_\lambda\in\mathcal{B}^{++}(\mathcal{M}_\lambda)$. 
	For $\lambda\in\Lambda_F$ and $j=1, \cdots, m_\lambda$, we define $\Pi_{\lambda j}:=\iota_\lambda(I\otimes\ket{\psi_{\lambda j}}\bra{\psi_{\lambda j}})\iota_\lambda^\dag$, where $\ket{\psi_{\lambda j}}$ is defined in Eq.~\eqref{eq:Hamiltoniandecomp2}.
	Since $[\Pi_{\lambda j}, F(G)]=[\Pi_{\lambda j}, H]=0$ and $\rho$ satisfies Condition~\ref{conditionA1}, $\rho$ commutes with $\Pi_{\lambda j}$.
	Moreover, since $[f(X^\mathrm{S}), \Pi_{\lambda j}]=0$, $\sum_{\lambda\in\Lambda_F} \iota_\lambda (I_{\mathcal{R}_\lambda} \otimes [\tau_\lambda, \ket{\psi_{\lambda j}}\bra{\psi_{\lambda j}}]) \iota_\lambda^\dag=[\tau, \Pi_{\lambda j}]=[e^{-f(X^\mathrm{S})}\rho, \Pi_{\lambda j}]=0$ and thus $\tau_\lambda$ can be diagonalized as $\tau_\lambda=\sum_{j=1}^{m_\lambda} p_{\lambda j} \ket{\psi_{\lambda j}}\bra{\psi_{\lambda j}}$ with some $p_{\lambda j}\in(0, 1]$.
	Therefore, $\tau$ can be written as 
	\begin{align}
	\tau=\sum_{\lambda\in\Lambda_F} \iota_\lambda \left(I_{\mathcal{R}_\lambda} \otimes \sum_{j=1}^{m_\lambda} p_{\lambda j}\ket{\psi_{\lambda j}}\bra{\psi_{\lambda j}}\right) \iota_\lambda^\dag.
	\end{align}
		
	Take arbitrary $\bm{n}'_0=(n'_{0k})_{k\in K}, \bm{n}'_1=(n'_{1k})_{k\in K}\in\mathbb{N}^K$ satisfying $\bm{n}'_0\cdot\bm{w}_i=\bm{n}'_1\cdot\bm{w}_i$ for all $i=0, \cdots, a$, where $K$, $\bm{w}_0$ and $\bm{w}_i$ for $i=1, \cdots, a$ are defined as $K:=\{(\lambda, j)\ |\ \lambda\in\Lambda_F, j=1, \cdots, m_\lambda\}$, $\bm{w}_0:=(1)_{k\in K}$, $\bm{w}_i:=(w_{ik})_{k\in K}$, $w_{ik}:=w_{i \lambda}$ for $k=(\lambda, j)\in K$, and $w_{i \lambda}$ is defined in Lemma~\ref{lem:Liealgebrabasis} for $i=1, \cdots, a$.
	We define $\ket{\Psi'_0}, \ket{\Psi'_1}\in\mathcal{H}^{\otimes L'}$ as $\ket{\Psi'_i}:=\ket{\Phi(\bm{n}'_i)}$ with $\ket{\Phi(\bm{n})}$ defined as Eq.~\eqref{eq:determinantstate}, where $L':=D\bm{n}'_0\cdot\bm{w}_0=D\bm{n}'_1\cdot\bm{w}_0$.
	For any $i=1, \cdots, a$ and $V\in\mathfrak{s}$, $f^{(L')}(W_i)$, $f^{(L')}(V)$, and $H^{(L')}$ can be regarded as $\Omega$ in Eq.~\eqref{eq:Omega_decomp} in the case where $\zeta_k=f_\lambda(W_i)$, $\zeta_k=f_\lambda(V)$ and $\zeta_k=E_{\lambda j}I_{\mathcal{R}_\lambda}$ respectively, where $W_i$ is defined in Lemma~\ref{lem:Liealgebrabasis} and $E_{\lambda j}$ is defined in Eq.~\eqref{eq:Hamiltoniandecomp2}.
	From Lemma~\ref{lem:eigenvectorcomposition}, $\ket{\Psi'_l}$ is a simultaneous eigenstate of $f^{(L')}(W_i)$, $f^{(L')}(V)$ and $H^{(L')}$.
	For any $k=(\lambda, j)\in K$, $u_k$ in Eq.~\eqref{eq:Omega_eigenvalue} corresponding to these three cases is given by 
	\begin{align}
		&u_k=\frac{1}{r_\lambda}\mathrm{tr}(f_\lambda(W_i))=\frac{1}{r_\lambda}\mathrm{tr}(w_{i \lambda} I_{\mathcal{R}_\lambda})=w_{i \lambda}=w_{ik},\\
		&u_k=\frac{1}{r_\lambda}\mathrm{tr}(f_\lambda(V))=0, \label{eq:u_k_for_V}\\
		&u_k=\frac{1}{r_\lambda}\mathrm{tr}(E_{\lambda j}I_{\mathcal{R}_\lambda})=E_{\lambda j},
	\end{align}
	where we used Lemma~\ref{lem:semisimpletrrep} in Eq.~\eqref{eq:u_k_for_V}.
	Therefore, we obtain
	\begin{align}
		&f^{(L')}(W_i)\ket{\Psi'_l}=D\bm{n}'_l\cdot\bm{w}_i\ket{\Psi'_l}, \label{eq:f_L'_W_i_eigenvalue}\\
		&f^{(L')}(V)\ket{\Psi'_l}=0, \label{eq:f_L'_V_eigenvalue}\\
		&H^{(L')}\ket{\Psi'_l}=D\bm{n}'_l\cdot\bm{E}\ket{\Psi'_l}, \label{eq:H_L'_eigenvalue}
	\end{align}
	where $\bm{E}:=(E_k)_{k\in K}$ and $E_k:=E_{\lambda j}$ for $k=(\lambda, j)\in K$.
	For any $g\in G$, from Eq.~\eqref{eq:casimirdefinition}, $g$ can be written as $g=e^{\mathrm{i}\left(\sum_{i=1}^a \alpha_i W_i+V\right)}$ with some $\alpha_1, \cdots, \alpha_a\in\mathbb{R}$ and $V\in\mathfrak{s}$, and thus 
	\begin{align}
		F^{\otimes L'}(g)=F^{\otimes L'}\left(e^{\mathrm{i}\left(\sum_{i=1}^a \alpha_i W_i+V\right)}\right)=e^{\mathrm{i}\left(\sum_{i=1}^a \alpha_i f^{(L')}(W_i)+f^{(L')}(V)\right)}. \label{eq:F_L'_decomposition}
	\end{align}
	From Eqs.~\eqref{eq:f_L'_W_i_eigenvalue}, \eqref{eq:f_L'_V_eigenvalue} and \eqref{eq:F_L'_decomposition}, 
	\begin{align}
		F^{\otimes L'}(g)\ket{\Psi'_l}=e^{\mathrm{i}\sum_{i=1}^a \alpha_i D\bm{n}'_l\cdot\bm{w}_i}\ket{\Psi'_l}.
	\end{align}
	Since $\bm{n}'_0\cdot\bm{w}_i=\bm{n}'_1\cdot\bm{w}_i$ for all $i=1, \cdots, a$, $\ket{\Psi'_0}$ and $\ket{\Psi'_1}$ are eigenstates of $F^{\otimes L'}(g)$ with the same eigenvalue.
	Since $\rho$ satisfies Condition~\ref{conditionA2}, there exists $\beta\in[0, \infty)$ such that 
	\begin{align}
		-\log(\braket{\Psi'_1|\rho^{\otimes L'}|\Psi'_1})+\log(\braket{\Psi'_0|\rho^{\otimes L'}|\Psi'_0})=\beta(\braket{\Psi'_1|H^{(L')}|\Psi'_1}-\braket{\Psi'_0|H^{(L')}|\Psi'_0}). \label{eq:virtualtemperature}
	\end{align}
	$\tau$ can be regarded as $\Omega$ in Eq.~\eqref{eq:Omega_decomp} in the case where $\zeta_k=p_k I_{\mathcal{R}_\lambda}$ for all $k\in K$, where $p_k:=p_{\lambda j}$ for $k=(\lambda, j)\in K$.
	From Lemma~\ref{lem:eigenvectorcomposition}, for $l=0, 1$, $\ket{\Psi'_l}$ is an eigenstate of $\tau^{\otimes L'}$ and the eigenvalue is given by
	\begin{align}
		\braket{\Psi'_l|\tau^{\otimes L'}|\Psi'_l}=\prod_{k\in K} (\mathrm{det}(p_k I_{\mathcal{R}_\lambda}))^{n'_{lk}\frac{D}{r_\lambda}}
		=\prod_{k\in K} \left(p_k^{r_\lambda}\right)^{n'_{lk}\frac{D}{r_\lambda}}
		=\prod_{k\in K} p_k^{D n'_{lk}}. \label{eq:rho_L'_eigenvalue}
	\end{align}
	From Eqs.~\eqref{eq:f_L'_V_eigenvalue} and \eqref{eq:rho_L'_eigenvalue}, for $l=0, 1$, 
	\begin{align}
		\braket{\Psi'_l|\rho^{\otimes L'}|\Psi'_l}=\braket{\Psi'_l|(e^{-f(X^\mathrm{S})}\tau)^{\otimes L'}|\Psi'_l}=\braket{\Psi'_l|e^{-f^{(L')}(X^\mathrm{S})}\tau^{\otimes L'}|\Psi'_l}=\prod_{k\in K} p_k^{D n'_{lk}}. \label{eq:F_L'_g_eigenvalue}
	\end{align}
	Therefore,
	\begin{align}
		-\log(\braket{\Psi'_l|\rho^{\otimes L'}|\Psi'_l})
		=-\sum_{k\in K} \log\left(p_k^{D n'_{lk}}\right)
		=\sum_{k\in K} D n'_{lk}(-\log(p_k))
		=D \bm{n}'_l\cdot\bm{s}, \label{eq:log_rho_L'_eigenvalue}
	\end{align}
	where $\bm{s}:=(s_k)_{k\in K}$, $s_k:=s_{\lambda j}$ and $s_{\lambda j}:=-\log(p_{\lambda j})$. 
	By substituting Eqs.~\eqref{eq:H_L'_eigenvalue} and \eqref{eq:log_rho_L'_eigenvalue} into Eq.~\eqref{eq:virtualtemperature}, we get
	\begin{align}
		D\bm{n}'_0\cdot\bm{s}-D\bm{n}'_1\cdot\bm{s}=\beta(D\bm{n}'_0\cdot\bm{E}-D\bm{n}'_1\cdot\bm{E}).
	\end{align}
	This implies that $\bm{n}'_0\cdot\bm{t}=\bm{n}'_1\cdot\bm{t}$, where $\bm{t}:=\bm{s}-\beta\bm{E}$.
	Since this holds for any $\bm{n}'_0,\ \bm{n}'_1\in\mathbb{N}^K$ satisfying $\bm{n}'_0\cdot\bm{w}_i=\bm{n}'_1\cdot\bm{w}_i$ for all $i=0, \cdots, a$, $\bm{t}$ can be written as $\bm{t}=\sum_{i=0}^a \mu_i\bm{w}_i$ with some $\mu_0,\cdots,\mu_a\in\mathbb{R}$ from Lemma~\ref{lem:latticeduality}. 
	Then, for any $\lambda\in\Lambda_F$ and $j=1, \cdots, m_\lambda$, we obtain $s_{\lambda j}-\beta E_{\lambda j}=\mu_0+\sum_{i=1}^a \mu_i w_{i\lambda}$.
	Therefore, we obtain
	\begin{align}
		-\log(\tau)-\beta H
		=&\sum_{\lambda\in\Lambda_F} \sum_{j=1}^{m_\lambda} (s_{\lambda j}-\beta E_{\lambda j})\iota_\lambda(I_{\mathcal{R}_\lambda}\otimes\ket{\psi_{\lambda j}}\bra{\psi_{\lambda j}})\iota_\lambda^\dag \nonumber\\
		=&\sum_{\lambda\in\Lambda_F} \sum_{j=1}^{m_\lambda} \left(\mu_0+\sum_{i=1}^a \mu_i w_{i\lambda}\right)\iota_\lambda(I_{\mathcal{R}_\lambda}\otimes\ket{\psi_{\lambda j}}\bra{\psi_{\lambda j}})\iota_\lambda^\dag \nonumber\\
		=&\sum_{\lambda\in\Lambda_F} \left(\mu_0+\sum_{i=1}^a \mu_i w_{i\lambda}\right)\iota_\lambda\left(I_{\mathcal{R}_\lambda}\otimes\sum_{j=1}^{m_\lambda}\ket{\psi_{\lambda j}}\bra{\psi_{\lambda j}}\right)\iota_\lambda^\dag \nonumber\\
		=&\sum_{\lambda\in\Lambda_F} \left(\mu_0+\sum_{i=1}^a \mu_i w_{i\lambda}\right)\iota_\lambda\iota_\lambda^\dag \nonumber\\
		=&\mu_0\sum_{\lambda\in\Lambda_F} \iota_\lambda\iota_\lambda^\dag+\sum_{i=1}^a  \left(\mu_i \sum_{\lambda\in\Lambda_F} w_{i\lambda}\iota_\lambda\iota_\lambda^\dag\right) \nonumber\\
		=&\mu_0 I+\sum_{i=1}^a \mu_i f(W_i) \nonumber\\
		=&\mu_0 I+f(X^\mathrm{C}),
	\end{align}
	where $X^\mathrm{C}:=\sum_{i=1}^a \mu_i W_i$.	
	This implies that 
	\begin{align}
		\rho=e^{-f(X^\mathrm{S})}\tau=e^{-f(X^\mathrm{S})}e^{-\mu_0 I-\beta H-f(X^\mathrm{C})}=\frac{1}{e^{\mu_0}}e^{-\beta H-f(X)}.
	\end{align}
	where $X:=X^\mathrm{C}+X^\mathrm{S}\in\mathfrak{g}$.
	Since $\mathrm{tr}(\rho)=1$, we obtain $e^{\mu_0}=\mathrm{tr}(e^{-\beta H-f(X)})=Z$ and thus  $\rho=\frac{1}{Z}e^{-\beta H-f(X)}$.
\end{proof}

	Proposition~\ref{prop:GFworkextop} states that we can extract positive work from multiple copies of a state other than the GGE by some symmetry-respecting operation.
	From Proposition~\ref{prop:GGEderivation}, if a state is not the GGE,  Conditions~\ref{conditionA1} and \ref{conditionA2} cannot be simultaneously satisfied for the state.
 	From Propositions~\ref{prop:workextop1}~and~\ref{prop:virtualtemperature}, for some $N\in\mathbb{N}$, we can construct a symmetry-respecting operator $U$ that extracts positive work from $N$ copies of such a state.
	We also prove in Proposition~\ref{prop:GFworkextop} that $U$ satisfies $[U^\dag H^{(N)}U, H^{(N)}]=0$.
	This property is used in Theorem~\ref{thm:GFHCP1}, which adopts a setup with a quantum work storage.

\begin{proposition} \label{prop:GFworkextop}
	Let $G$ be a connected compact Lie group with a faithful unitary representation $F$ acting on $\mathcal{H}$, $\mathfrak{g}$ be the Lie algebra of $G$ with the associated representation $f$, $H\in\mathcal{B}(\mathcal{H})$ commute with $F(G)$ and be not $(G, F)$-trivial, and $\rho\in\mathcal{B}^{++}(\mathcal{H})$ satisfy $\mathrm{tr}(\rho)=1$.
	If $\rho$ cannot be written as $\rho=\frac{1}{Z}e^{-\beta H-f(X)}$ with $\beta\in[0, \infty)$, $X\in\mathfrak{g}$ and $Z:=\mathrm{tr}(e^{-\beta H-f(X)})$, then there exist $N\in\mathbb{N}$ and $U\in\mathcal{U}_{G, F^{\otimes N}}(\mathcal{H}^{\otimes N})$ that satisfy $[U^\dag H^{(N)}U, H^{(N)}]=0$ and $W(\rho^{\otimes N}, H^{(N)}, U)>0$.
\end{proposition}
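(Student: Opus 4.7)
The plan is to argue by contrapositive using Proposition~\ref{prop:GGEderivation}: if $\rho$ is not of the GGE form, then $\rho$ must violate Condition~\ref{conditionA1} or Condition~\ref{conditionA2}. Each case will be handled by one of the explicit work-extracting constructions of Proposition~\ref{prop:workextop1} or Proposition~\ref{prop:virtualtemperature}, and the remaining task is to verify that the resulting unitary is $(G, F^{\otimes N})$-respecting and permutes the eigenspaces of $H^{(N)}$ (which is equivalent to $[U^\dagger H^{(N)} U, H^{(N)}]=0$).

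Suppose first that Condition~\ref{conditionA1} fails. Then there exist $M$ and a Hermitian $\Omega$ on $\mathcal{H}^{\otimes M}$ that commutes with both $F^{\otimes M}(G)$ and $H^{(M)}$ but not with $\rho^{\otimes M}$. Spectrally decomposing $\Omega=\sum_\omega \omega P_\omega$, each eigenprojection $P_\omega$ inherits the two commutation properties, and at least one fails to commute with $\rho^{\otimes M}$; take this as the $P$ of Proposition~\ref{prop:workextop1}. For the eigenstates $\ket{\Psi_0},\ket{\Psi_1}$ I will use states of the form $\ket{\Phi(\bm n_0)},\ket{\Phi(\bm n_1)}$ with $\bm n_0\cdot\bm w_i=\bm n_1\cdot\bm w_i$ for every $i=0,\dots,a$ but $\bm n_0\cdot\bm E\neq\bm n_1\cdot\bm E$; such $\bm n_i$'s exist because the non-$(G,F)$-triviality of $H$, combined with Lemma~\ref{lem:Liealgebrabasis}, guarantees that $\bm E$ is not expressible as an affine combination of $\bm w_1,\dots,\bm w_a$ over $\mathbb{N}^K$. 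Feeding this data into Proposition~\ref{prop:workextop1} yields $U=\mathcal{O}(\{A_{ij}\})$ with $W>0$ for sufficiently large $m$. That $U$ is $(G,F^{\otimes N})$-respecting follows because $T$, $P$, and each $\ket{\Psi_i}\bra{\Psi_j}$ commute with the appropriate tensor power of $F(G)$ (the last using that $\ket{\Phi(\bm n_0)}$ and $\ket{\Phi(\bm n_1)}$ share all $F^{\otimes L}$-eigenvalues, as computed in the proof of Proposition~\ref{prop:GGEderivation}). The energy-permutation property follows from Eq.~\eqref{eq:HAcommutation}: since $R_{10}^{\otimes m}$ commutes with $H^{(2mM)}$, the map $U$ sends energy eigenvectors in $\mathrm{Ran}(A_{00})$ to energy eigenvectors in $\mathrm{Ran}(A_{11})$ with energy shifted by $\Delta\mathcal{E}$ and acts as the identity off this two-block subspace, so it permutes the energy eigenspaces of $H^{(N)}$.

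If instead Condition~\ref{conditionA1} holds but Condition~\ref{conditionA2} fails, then in particular $\rho$ commutes with $H$ and with $F(G)$, so $\rho$ is diagonal in a joint eigenbasis. The failure of A2 produces a pair $\ket{\Psi'_0},\ket{\Psi'_1}$ of simultaneous $(F^{\otimes L'}(G),H^{(L')})$-eigenstates with matching $F$-eigenvalues for which no single $\beta\geq 0$ satisfies $\Delta S'=\beta\,\Delta\mathcal{E}'$. Combined with a reference pair $\ket{\Psi_0},\ket{\Psi_1}$ obtained from the $\ket{\Phi(\bm n)}$ construction with $\Delta\mathcal{E}>0$ (again possible by non-$(G,F)$-triviality of $H$), this places the data inside the hypotheses of Proposition~\ref{prop:virtualtemperature}, which then supplies $U=\mathcal{O}(\{B_{ij}\})$ with positive extracted work. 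Symmetry-respect and the $[U^\dagger H^{(N)}U,H^{(N)}]=0$ property follow exactly as in the previous case, now using Eq.~\eqref{eq:HBcommutation}.

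The main obstacle will be the second case: translating the bare quantified negation of A2 into a pair of states that matches the structural hypothesis of Proposition~\ref{prop:virtualtemperature}. One has to split into subcases according to whether the offending pair exhibits $\Delta\mathcal{E}'>0$ with a strictly negative implied virtual temperature, $\Delta\mathcal{E}'=0$ with $\Delta S'\neq 0$, or a positive virtual temperature that merely disagrees with that of the reference pair; in this last subcase the reference pair $\ket{\Psi_0},\ket{\Psi_1}$ must be chosen so that $\Delta S/\Delta\mathcal{E}\neq\Delta S'/\Delta\mathcal{E}'$, which requires a second invocation of non-$(G,F)$-triviality and of Lemma~\ref{lem:latticeduality}. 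The remaining technicalities, namely the existence of $\bm n$'s with the required inner-product matching and the verification of the $F$-eigenvalues of $\ket{\Phi(\bm n)}$, are routine given Lemma~\ref{lem:Liealgebrabasis} and the eigenvalue computations already carried out in the proof of Proposition~\ref{prop:GGEderivation}.
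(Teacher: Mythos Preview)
Your overall strategy coincides with the paper's: contrapose via Proposition~\ref{prop:GGEderivation}, split on which of Conditions~\ref{conditionA1}/\ref{conditionA2} fails, and feed the data into Proposition~\ref{prop:workextop1} or Proposition~\ref{prop:virtualtemperature}. The A1 case is handled correctly; your $\ket{\Phi(\bm n)}$ construction is exactly the content of Lemma~\ref{lem:PsiexistenceforLie}, and the symmetry and energy-permutation verifications match the paper (the latter via Lemma~\ref{lem:U1extractedwork} rather than the direct argument you sketch, but equivalently).

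The A2 case, however, has a quantifier error that leads you to an unnecessarily complicated plan. You write that failure of A2 ``produces a pair $\ket{\Psi'_0},\ket{\Psi'_1}$ \dots\ for which no single $\beta\geq 0$ satisfies $\Delta S'=\beta\,\Delta\mathcal{E}'$.'' This is not what the negation of A2 says: any pair with $\Delta\mathcal{E}'\neq 0$ has \emph{some} $\beta$ satisfying that equation. The negation of A2 is that for every $\beta\geq 0$ there exists a pair violating the relation for that particular $\beta$. The witness pair therefore depends on $\beta$, and you cannot select it before fixing $\beta$. This is why your last paragraph ends up proposing to choose the reference pair \emph{after} the witness, requiring ``a second invocation of non-$(G,F)$-triviality and of Lemma~\ref{lem:latticeduality}'' to force disagreement of virtual temperatures---an argument that is not obviously possible and in any case is not needed.

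The correct (and simpler) order, which is what the paper does, is: first fix one reference pair $\ket{\Psi_0},\ket{\Psi_1}$ with $\Delta\mathcal{E}>0$ via Lemma~\ref{lem:PsiexistenceforLie}, and define $\beta:=\Delta S/\Delta\mathcal{E}$ from it. If $\beta<0$, Proposition~\ref{prop:virtualtemperature} already yields positive work using only this pair (case~(i) there, with $m'=0$). If $\beta\geq 0$, then since A2 fails, it fails in particular for \emph{this} $\beta$ (indeed, if A2 held for some $\beta'\geq 0$, applying it to the reference pair would force $\beta'=\beta$), so there exists a witness pair $\ket{\Psi'_0},\ket{\Psi'_1}$ with $\Delta S'\neq\beta\,\Delta\mathcal{E}'$. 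After swapping $\ket{\Psi'_0}\leftrightarrow\ket{\Psi'_1}$ if necessary to ensure $\Delta\mathcal{E}'>0$ or ($\Delta\mathcal{E}'=0$ and $\Delta S'\geq 0$), Proposition~\ref{prop:virtualtemperature} applies directly. No second reference pair is needed.
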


\begin{proof}
	Suppose that $\rho$ cannot be written as $\rho=\frac{1}{Z}e^{-\beta H-f(X)}$ with $\beta\in[0, \infty)$, $X\in\mathfrak{g}$, $Z:=\mathrm{tr}(e^{-\beta H-f(X)})$.
	From Proposition~\ref{prop:GGEderivation}, Conditions~\ref{conditionA1} and \ref{conditionA2} cannot be simultaneously satisfied for $\rho$.
	First, we consider the case where $\rho$ does not satisfy Condition~\ref{conditionA1}.
	Since $H$ is not $(G, F)$-trivial, from Lemma~\ref{lem:PsiexistenceforLie}, we can take $L\in\mathbb{N}$ and a pair of simultaneous eigenstates $\ket{\Psi_0}, \ket{\Psi_1}\in\mathcal{H}^{\otimes L}$ of $F^{\otimes L}(G)$ and $H^{(L)}$ such that the eigenvalues satisfy 
	\begin{align}
		\forall g\in G\ \braket{\Psi_0|F^{\otimes L}(g)|\Psi_0}=\braket{\Psi_1|F^{\otimes L}(g)|\Psi_1},\ \braket{\Psi_0|H^{(L)}|\Psi_0}\neq\braket{\Psi_1|H^{(L)}|\Psi_1}. \label{eq:Psi_condition}
	\end{align}
	We define $\Delta\mathcal{E}$ as Eq.~\eqref{eq:ESdef}.
	Without loss of generality, we can suppose that $\Delta\mathcal{E}>0$.
	Since $\rho$ does not satisfy Condition~\ref{conditionA1}, we can take $M\in\mathbb{N}$ and $\Omega\in\mathcal{B}^\mathrm{H}(\mathcal{H}^{\otimes M})$ such that $[\Omega, F^{\otimes M}(G)]=[\Omega, H^{(M)}]=0$ but $[\rho^{\otimes M}, \Omega]\neq0$.
	Let the spectral decomposition of $\Omega$ be $\Omega=\sum_\omega \omega P_\omega$, where $\omega$ is an eigenvalue of $\Omega$ and $P_\omega$ is the projection operator onto the eigenspace of $\omega$.
	Then, there exists some $P_\omega$ such that $[P_\omega, F^{\otimes M}(G)]=[P_\omega, H^{(M)}]=0$ but $[\rho^{\otimes M}, P_\omega]\neq0$.
	For $m\in\mathbb{N}$ and $i, j\in\{0, 1\}$, we define $A_{ij}:=\{[I-(-1)^i T][P_\omega\otimes (I-P_\omega)][I-(-1)^j T]\}^{\otimes m}\otimes\ket{\Psi_i}\bra{\Psi_j}$, where $T$ is the swapping operator on $\mathcal{H}^{\otimes M}\otimes\mathcal{H}^{\otimes M}$.
	From Proposition~\ref{prop:workextop1}, $\mathcal{O}(\{A_{ij}\})\in\mathcal{U}(\mathcal{H}^{\otimes 2mM+L})$ and $W(\rho^{\otimes 2mM+L}, H^{(2mM+L)}, \mathcal{O}(\{A_{ij}\}))>0$ for some $m\in\mathbb{N}$.
	Since $P_\omega$, $T$ and $\ket{\Psi_i}\bra{\Psi_j}$ respectively commute with $F^{\otimes M}(G)$, $F^{\otimes 2M}(G)$ and $F^{\otimes L}(G)$, $A_{ij}$ commutes with $F^{\otimes 2mM+L}(G)$ and thus $\mathcal{O}(\{A_{ij}\})\in\mathcal{U}_{G, F^{\otimes 2mM+L}}(\mathcal{H}^{\otimes 2mM+L})$.
	In the same way as Eq.~\eqref{eq:HAcommutation}, $[H^{(2mM+L)}, A_{ij}]=\Delta\mathcal{E}(i-j)A_{ij}$.
	Therefore, from Lemma~\ref{lem:U1extractedwork}, $[\mathcal{O}(\{A_{ij}\})^\dag H^{(2mM+L)}\mathcal{O}(\{A_{ij}\}), H^{(2mM+L)}]=0$.
		
	Next, we consider the case where $\rho$ does not satisfy Condition~\ref{conditionA2}.
	We define $\beta$ as Eq.~\eqref{eq:ESdef}.
	Then, $\beta<0$ or the following is satisfied: there exist $L'\in\mathbb{N}$ and a pair of simultaneous eigenstates $\ket{\Psi'_0}, \ket{\Psi'_1}\in\mathcal{H}^{\otimes L'}$ of $F^{\otimes L'}(G)$ and $H^{(L')}$ such that $\braket{\Psi'_0|F^{\otimes L'}(g)|\Psi'_0}=\braket{\Psi'_1|F^{\otimes L'}(g)|\Psi'_1}$ for all $g\in G$ and $\Delta S'\neq\beta\Delta\mathcal{E}'$, where $\Delta\mathcal{E}'$ and $\Delta S'$ are defined as Eq.~\eqref{eq:ESdef}.
	Equivalently, we can take $L'\in\mathbb{N}$ and a pair of simultaneous eigenstates $\ket{\Psi'_0}, \ket{\Psi'_1}\in\mathcal{H}^{\otimes L'}$ of $F^{\otimes L'}(G)$ and $H^{(L')}$ such that (i) $\braket{\Psi'_0|F^{\otimes L'}(g)|\Psi'_0}=\braket{\Psi'_1|F^{\otimes L'}(g)|\Psi'_1}$ for all $g\in G$ and (ii) $\beta<0$ or $\Delta S'\neq\beta\Delta\mathcal{E}'$.
	By noting that the condition $\Delta S'\neq\beta\Delta\mathcal{E}'$ is invariant under the exchange of $\ket{\Psi'_0}$ and $\ket{\Psi'_1}$, we can suppose that (I) $\Delta\mathcal{E}'>0$ or (II) $\Delta\mathcal{E}'=0$ and $\Delta S'\geq0$ by exchanging $\ket{\Psi'_0}$ and $\ket{\Psi'_1}$ if necessary.
	For $m, m'\in\mathbb{N}$ and $i, j\in\{0, 1\}$, we define $B_{ij}:=(\ket{\Psi_i}\bra{\Psi_j})^{\otimes m}\otimes(\ket{\Psi'_{1-i}}\bra{\Psi'_{1-j}})^{\otimes m'}$.
	From Proposition~\ref{prop:virtualtemperature}, $\mathcal{O}(\{B_{ij}\})\in\mathcal{U}(\mathcal{H}^{\otimes mL+m'L'})$ and $W(\rho^{\otimes mL+m'L'}, H^{(mL+m'L')}, \mathcal{O}(\{B_{ij}\}))>0$ for some $m, m'\in\mathbb{N}$.
	Since $\ket{\Psi_i}\bra{\Psi_j}$ and $\ket{\Psi'_{1-i}}\bra{\Psi'_{1-j}}$ respectively commute with $F^{\otimes L}(G)$ and $F^{\otimes L'}(G)$, $B_{ij}$ commutes with $F^{\otimes mL+m'L'}(G)$ and thus $\mathcal{O}(\{B_{ij}\})\in\mathcal{U}_{G, F^{\otimes mL+m'L'}}(\mathcal{H}^{\otimes mL+m'L'})$.  
	In the same way as Eq.~\eqref{eq:HBcommutation}, $[H^{(mL+m'L')}, B_{ij}]=(m\Delta\mathcal{E}-m'\Delta\mathcal{E}')B_{ij}$.
	Therefore, from Lemma~\ref{lem:U1extractedwork}, $[\mathcal{O}(\{B_{ij}\})^\dag H^{(mL+m'L')}\mathcal{O}(\{B_{ij}\}), H^{(mL+m'L')}]=0$.
\end{proof}

We are now ready to prove Theorem~\ref{thm:GFHCP}.
Proposition~\ref{prop:GFworkextop} shows that $(G, F)$-completely passive states are GGEs.
We can easily prove the converse by an argument based on Ref.~\cite{Lostaglio2017}.
The proof of Theorem~\ref{thm:GFHCP} is as follows:\\\\
\textit{Proof of Theorem~\ref{thm:GFHCP}.}
	First, suppose that $\rho$ cannot be written as $\rho=\frac{1}{Z}e^{-\beta H-f(X)}$ with $\beta\in[0, \infty)$, $X\in\mathfrak{g}$, and $Z:=\mathrm{tr}(e^{-\beta H-f(X)})$.
	From Proposition~\ref{prop:GFworkextop}, there exist $N\in\mathbb{N}$ and $U\in\mathcal{U}_{G, F^{\otimes N}}(\mathcal{H}^{\otimes N})$ such that $W(\rho^{\otimes N}, H^{(N)}, U)>0$.
	This implies that $\rho$ is not $(G, F)$-completely passive w.r.t. $H$.

	Next, we show the converse.
	Suppose that $\rho$ can be written as $\rho=\frac{1}{Z}e^{-\beta H-f(X)}$ with some $\beta\in[0, \infty)$, $X\in\mathfrak{g}$,  $Z:=\mathrm{tr}(e^{-\beta H-f(X)})$.
	Take arbitrary $N\in\mathbb{N}$ and $U\in\mathcal{U}_{G, F^{\otimes N}}(\mathcal{H}^{\otimes N})$. 
	In the case of $\beta=0$, $U$ does not change the state $\rho^{\otimes N}$ and thus the extracted work is $0$. 
	In the case of $\beta>0$, the extracted work satisfies
	\begin{align}
		W(\rho^{\otimes N}, H^{(N)}, U)=-\frac{1}{\beta}S(U\rho^{\otimes N}U^\dagger\|\rho^{\otimes N})\leq 0,
	\end{align}
where $S(\cdot\|\cdot)$ is the quantum relative entropy and we used its positivity. 
	This implies that $\rho$ is $(G, F)$-completely passive w.r.t. $H$. \hspace{\fill} $\Box$\\

\subsection{Proof of Theorem~\ref{thm:finiteGFHCP}}

	In this subsection, we suppose that $G$ is a finite cyclic group or a dihedral group.
	Conditions~\ref{conditionA1} and \ref{conditionA2} are also useful for the proof of Theorem~\ref{thm:finiteGFHCP}. 
	The structure of the proof of Theorem~\ref{thm:finiteGFHCP} is parallel to that of Theorem~\ref{thm:GFHCP}.
	In fact, we prove in Proposition~\ref{prop:finiteGFworkextop} that $(G, F)$-completely passive states satisfy Conditions~\ref{conditionA1} and \ref{conditionA2}.
	We deal with Condition~\ref{conditionA1} in Proposition~\ref{prop:prefiniteGEderivation} and Condition~\ref{conditionA2} in Proposition~\ref{prop:finiteGEderivation}, and then derive the Gibbs ensemble.
	However, the  details of the proofs are different, because there do not exist the counterparts of Lie algebras or Casimir operators in finite groups.
	Instead, we make use of the finiteness of $G$.

	Proposition~\ref{prop:prefiniteGEderivation} states that if a state $\rho$ satisfies Condition~\ref{conditionA1}, $\rho$ commutes with $F(G)$.
	This corresponds to Proposition~\ref{prop:preGGEderivation} in the case of Lie group symmetry.
	The proof is straightforward in the case of a finite cyclic group, but it is complicated in the case of a dihedral group due to its noncommutativity.
	We construct several operators that play the role of $\Omega$ in Condition~\ref{conditionA1} with the projection operators onto the eigenspaces of a symmetry operator.

\begin{proposition} \label{prop:prefiniteGEderivation}
	Let $G$ be a finite cyclic group or a dihedral group with a unitary representation $F$ acting on $\mathcal{H}$, and $H\in\mathcal{B}^\mathrm{H}(\mathcal{H})$ commute with $F(G)$.
	If $\rho\in\mathcal{B}^{+}(\mathcal{H})$ satisfies Condition~\ref{conditionA1}, $\rho$ commutes with $F(G)$. 
\end{proposition}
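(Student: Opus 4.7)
The plan is to split the argument by whether $G$ is cyclic or dihedral, exploiting the different structure of $(G,F)$-respecting operators on $\mathcal{H}^{\otimes M}$ for abelian and non-abelian $G$.

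For $G=\langle g\rangle$ cyclic, the argument is immediate: each eigenprojector $P_\omega$ of $F(g)$ commutes with the entire abelian representation $F(G)$ and with $H$ (because $[H,F(g)]=0$), so $\Omega=P_\omega$ satisfies the hypothesis of Condition~\ref{conditionA1} at $M=1$. Applying it directly yields $[\rho,P_\omega]=0$ for every $\omega$, hence $[\rho,F(g)]=0$ and therefore $[\rho,F(G)]=0$.

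For $G=D_n$ the key obstruction is $F(r)P_\omega F(r)=P_{\omega^{-1}}$, so individual eigenprojectors of $F(t)$ are not $(G,F)$-respecting. I would first establish $[\rho,F(t)]=0$ and then deduce $[\rho,F(r)]=0$. The $F(r)$-symmetrized projector $P_\omega+P_{\omega^{-1}}$ commutes with $F(t)$, $F(r)$ and $H$, so Condition~\ref{conditionA1} at $M=1$ already gives $[\rho,P_\omega+P_{\omega^{-1}}]=0$. To separate $P_\omega$ from $P_{\omega^{-1}}$, I would introduce the $M=2$ operator
\[
\Omega_{\omega,\mu}:=P_\omega\otimes P_\mu+P_{\omega^{-1}}\otimes P_{\mu^{-1}},
\]
which is Hermitian and commutes with $F^{\otimes 2}(t)$, $F^{\otimes 2}(r)$ and $H^{(2)}$ (conjugation by $F(r)\otimes F(r)$ merely swaps its two summands). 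Tracing the identity $[\rho^{\otimes 2},\Omega_{\omega,\mu}]=0$ over the second factor yields
\[
a_\mu\,[\rho,P_\omega]+a_{\mu^{-1}}\,[\rho,P_{\omega^{-1}}]=0,\qquad a_\nu:=\mathrm{tr}(\rho P_\nu),
\]
which, combined with $[\rho,P_\omega]+[\rho,P_{\omega^{-1}}]=0$, forces both commutators to vanish provided some $\mu$ satisfies $a_\mu\neq a_{\mu^{-1}}$. Once $[\rho,F(t)]=0$ is established, $\rho$ is block-diagonal with respect to the $F(t)$-eigenspaces $V_\omega$, and I would obtain $[\rho,F(r)]=0$ by applying Condition~\ref{conditionA1} to the $F(r)$-equivariant $M=2$ family $X\otimes Y+F(r)XF(r)\otimes F(r)YF(r)$, with $X,Y$ Hermitian operators supported on $F(t)$-eigenspaces and commuting with $H$; the partial-trace identities obtained from varying $X$ and $Y$ express the intertwining between $V_\omega$ and $V_{\omega^{-1}}$ that is precisely $\rho F(r)=F(r)\rho$.

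The main obstacle is the degenerate case in which $a_\mu=a_{\mu^{-1}}$ for every $\mu$, in which the projector combinations above deliver only the symmetrized relation $[\rho,P_\omega+P_{\omega^{-1}}]=0$. To handle it I would enlarge the admissible class at $M=1$ to operators of the form $X+F(r)XF(r)$, where $X$ is an arbitrary Hermitian operator supported on $V_\omega$ and commuting with $H|_{V_\omega}$, together with the analogous $M=2$ objects $X\otimes Y_\mu+F(r)XF(r)\otimes F(r)Y_\mu F(r)$. One checks directly that each of these satisfies the A1 hypothesis. The much richer system of commutation constraints obtained from varying $X$ and $Y_\mu$, together with the positive definiteness of $\rho$, is what I expect to be needed in order to eliminate the residual degeneracy and conclude $[\rho,P_\omega]=0$ in all cases.
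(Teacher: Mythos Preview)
Your cyclic case is correct and essentially matches the paper. For the dihedral case, your overall outline (first $[\rho,F(t)]=0$, then $[\rho,F(r)]=0$) is right, but the execution has a genuine gap and misses two simplifying tricks.

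\textbf{The gap.} Your degenerate case ($a_\mu=a_{\mu^{-1}}$ for all $\mu$) is not resolved: you only say you ``expect'' the enlarged family $X+F(r)XF(r)$ together with its $M=2$ analogues to suffice, but you never show it. There is no argument that these constraints, even combined with positivity, force $[\rho,P_\omega]=0$. As written, this is not a proof.

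\textbf{What the paper does instead.} The paper avoids your degenerate case entirely by a sharper partial-trace identity. Taking $\Omega=\Pi_z^{\otimes 2}+\Pi_{z^*}^{\otimes 2}$ (your $\mu=\omega$), the paper multiplies by $\Pi_z\otimes I$ \emph{before} tracing over the first factor, which kills the $\Pi_{z^*}$ contribution and yields
\[
\mathrm{tr}(\rho\Pi_z)\,[\rho,\Pi_z]=0
\]
directly, with no second unknown. The remaining case $\mathrm{tr}(\rho\Pi_z)=0$ is then trivial by positivity: $\|\rho^{1/2}\Pi_z\|_\mathrm{HS}^2=\mathrm{tr}(\rho\Pi_z)=0$ gives $\rho\Pi_z=0=\Pi_z\rho$. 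Your plain partial trace over the second factor loses exactly this isolation, which is why you are forced into a two-equation system and its degenerate case.

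\textbf{For $[\rho,F(r)]=0$.} The paper stays at $M=1$: once $[\rho,\Pi_z]=0$ is known, the Hermitian operators $\Pi_z F(r)$ (for $z=z^*$) and $(\Pi_z+\Pi_{z^*})F(r)$ (for $z\neq z^*$) commute with $F(G)$ and $H$, so Condition~A1 gives $[\rho,\Pi_zF(r)]=0$ or $[\rho,(\Pi_z+\Pi_{z^*})F(r)]=0$, hence $\Pi_z[\rho,F(r)]=0$ for every $z$ and summing yields $[\rho,F(r)]=0$. This is far simpler than your proposed $M=2$ family $X\otimes Y+F(r)XF(r)\otimes F(r)YF(r)$, whose consequences you again only sketch.
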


\begin{proof}
	Suppose that $\rho$ satisfies Condition~\ref{conditionA1}.
	In the case where $G$ is a finite cyclic group, for any $g\in G$, $F(g)$ commutes with $F(G)$ and $H$. 
	Since $\rho$ satisfies Condition~\ref{conditionA1}, $[\rho, F(g)]=0$ for all $g\in G$, i.e., $\rho$ commutes with $F(G)$. 
	
	We consider the case where $G$ is a dihedral group.
	Let the dihedral group be written as $D_n=\{1, t, \cdots, t^{n-1}, r, rt, \cdots, rt^{n-1}\}$ with $t$ and $r$ satisfying $t^n=r^2=1$ and $tr=rt^{-1}$. 
	Since every element of $D_n$ is generated by $t$ and $r$, it is sufficient to prove that $[\rho, F(t)]=0$ and $[\rho, F(r)]=0$ in order to prove that $\rho$ commutes with $F(G)$.
	First, we prove that $[\rho, F(t)]=0$. 
	Since $F(t)^n=F(t^n)=F(1)=I$, every eigenvalue $z$ of $F(t)$ satisfies $z^n=1$.
	We define $\Pi_z$ as the projection operator onto the eigenspace of $z$. 
	Then, the spectral decomposition of $F(t)$ is written as $F(t)=\sum_{z: z^n=1} z\Pi_z$, and thus \begin{align}
		F(r)^\dag F(t)F(r)=\sum_{z: z^n=1} zF(r)^\dag \Pi_zF(r). \label{eq:rtr1}
	\end{align}
	On the other hand, since $r^{-1}tr=t^{-1}$, we get 
	\begin{align}
		F(r)^\dag F(t)F(r)=F(r^{-1}tr)=F(t^{-1})=F(t)^\dag=\sum_{z: z^n=1} z^*\Pi_z=\sum_{z: z^n=1} z\Pi_{z^*}. \label{eq:rtr2}
	\end{align}
	By comparing Eqs.~\eqref{eq:rtr1} and \eqref{eq:rtr2}, we get 
	\begin{align}
		F(r)^\dag \Pi_zF(r)=\Pi_{z^*}. \label{eq:rtr3}
	\end{align}
	If $z=z^*$, Eq.~\eqref{eq:rtr3} implies that $[\Pi_z, F(r)]=0$.
	Since $\Pi_z$ also commutes with $F(t)$, $\Pi_z$ commutes with $F(G)$.
	Since $F(t)$ commutes with $H$, $\Pi_z$ commutes with $H$. 
	Since $\rho$ satisfies Condition~\ref{conditionA1}, $[\rho, \Pi_z]=0$.
	If $z\neq z^*$, from Eq.~\eqref{eq:rtr3},
	\begin{align}
		F^{\otimes 2}(r)^\dag (\Pi_z^{\otimes2}+\Pi_{z^*}^{\otimes2})F^{\otimes 2}(r)=\Pi_{z^*}^{\otimes2}+\Pi_z^{\otimes2} \label{eq:rtr4}
	\end{align}
	and thus $[\Pi_z^{\otimes2}+\Pi_{z^*}^{\otimes2}, F^{\otimes 2}(r)]=0$.
	Since $\Pi_z^{\otimes2}+\Pi_{z^*}^{\otimes2}$ also commutes with $F^{\otimes 2}(t)$, $\Pi_z^{\otimes2}+\Pi_{z^*}^{\otimes2}$ commutes with $F^{\otimes 2}(G)$.
	Since $\Pi_z$ and $\Pi_{z^*}$ commute with $H$, $\Pi_z^{\otimes2}+\Pi_{z^*}^{\otimes2}$ commutes with $H^{(2)}$.
	Since $\rho$ satisfies Condition~\ref{conditionA1}, $[\rho^{\otimes 2}, \Pi_z^{\otimes2}+\Pi_{z^*}^{\otimes2}]=0$.
	Therefore, we get
	\begin{align}
		\mathrm{tr}(\rho\Pi_z)[\rho,\Pi_z]=\mathrm{tr}_{\mathcal{H}_1}([\rho^{\otimes 2}, \Pi_z^{\otimes2}+\Pi_{z^*}^{\otimes2}](\Pi_z\otimes I))=0,
	\end{align}
	where $\mathcal{H}_1$ is the Hilbert space of the first copy of the system.
	If $\mathrm{tr}(\rho\Pi_z)\neq0$, then $[\rho,\Pi_z]=0$.
	If $\mathrm{tr}(\rho\Pi_z)=0$, then $\|\rho^{\frac{1}{2}}\Pi_z\|_\mathrm{HS}^2=\mathrm{tr}(\rho\Pi_z)=0$.
	Thus we get $\rho\Pi_z=\rho^{\frac{1}{2}}\cdot\rho^{\frac{1}{2}}\Pi_z=0$ and $\Pi_z\rho=(\rho\Pi_z)^*=0$.
	In both cases, we obtain $[\rho, \Pi_z]=0$.
	Here we have proved that for any eigenvalue $z$ of $F(t)$, $[\rho, \Pi_z]=0$.
	This implies that $[\rho, F(t)]=0$. 
	
	Next, we prove that $[\rho, F(r)]=0$.
	If $z=z^*$, since $[\Pi_z, F(r)]=0$, $\Pi_z F(r)$ commutes with $F(r)$.
	Since $[\Pi_z, F(r)]=0$ and $F(t)\Pi_z=\Pi_z F(t)=\Pi_z$, we get
	\begin{align}
		\Pi_z F(r)F(t)=F(r)\Pi_z F(t)=F(r) \Pi_z=\Pi_z F(r)=F(t)\Pi_z F(r). \label{eq:trt}
	\end{align}
	This implies that $[\Pi_z F(r), F(t)]=0$, and thus $\Pi_z F(r)$ commutes with $F(G)$. 
	Since $\Pi_z$ and $F(r)$ commute with $H$, $\Pi_z F(r)$ commutes with $H$.
	Since $\rho$ satisfies Condition~\ref{conditionA1}, $[\rho, \Pi_z F(r)]=0$.
	Since $\rho$ commutes with $\Pi_z$, this implies that $\Pi_z[\rho, F(r)]=0$.
	If $z\neq z^*$, in the same way as Eq.~\eqref{eq:rtr4}, we obtain $[\Pi_z+\Pi_{z^*}, F(r)]=0$ and thus $(\Pi_z+\Pi_{z^*})F(r)$ commutes with $F(r)$.
	Since $[\Pi_z+\Pi_{z^*}, F(r)]=0$ and $F(t)(\Pi_z+\Pi_{z^*})=(\Pi_z+\Pi_{z^*})F(t)=\Pi_z+\Pi_{z^*}$, we can prove that $(\Pi_z+\Pi_{z^*})F(r)$ commutes with $F(t)$ in the same way as Eq.~\eqref{eq:trt}, and thus $(\Pi_z+\Pi_{z^*})F(r)$ commutes with $F(G)$.
	Since $\Pi_z$, $\Pi_{z^*}$ and $F(r)$ commute with $H$, $(\Pi_z+\Pi_{z^*})F(r)$ commutes with $H$.
	Since $\rho$ satisfies Condition~\ref{conditionA1}, $[\rho, (\Pi_z+\Pi_{z^*})F(r)]=0$.
	Since $\rho$ also commutes with $\Pi_z$ and $\Pi_{z^*}$, we get
	\begin{align}
		\Pi_z[\rho, F(r)]=\Pi_z(\Pi_z+\Pi_{z^*})[\rho, F(r)]=\Pi_z[\rho, (\Pi_z+\Pi_{z^*})F(r)]=0.
	\end{align}
	Here we have proved that for any eigenvalue $z$ of $F(t)$, $\Pi_z[\rho, F(r)]=0$.
	This implies that $[\rho, F(r)]=\sum_{z: z^n=1} \Pi_z[\rho, F(r)]=0$.
	Since $\rho$ commutes with $F(t)$ and $F(r)$, $\rho$ commutes with $F(G)$.
\end{proof}

Proposition~\ref{prop:finiteGEderivation} states that if a state $\rho$ satisfies Conditions~\ref{conditionA1} and \ref{conditionA2}, $\rho$ is the Gibbs ensemble at positive temperature. 
This corresponds to Proposition~\ref{prop:GGEderivation} in the case of Lie group symmetry.
In the proof of Proposition~\ref{prop:finiteGEderivation}, we use the finiteness of $G$ instead of the properties of Lie algebras.
Note that if we can prove Proposition~\ref{prop:prefiniteGEderivation} for general finite groups, we can also prove Proposition~\ref{prop:finiteGEderivation} for general finite groups in the same manner as the proof below.

\begin{proposition} \label{prop:finiteGEderivation}
	Let $G$ be a finite cyclic group or a dihedral group with a unitary representation $F$ acting on $\mathcal{H}$, $H\in\mathcal{B}^\mathrm{H}(\mathcal{H})$ commute with $F(G)$, and $\rho\in\mathcal{B}^{++}(\mathcal{H})$ satisfy $\mathrm{tr}(\rho)=1$.
	If $\rho$ satisfies Conditions~\ref{conditionA1} and \ref{conditionA2}, then $\rho$ can be written as $\rho=\frac{1}{Z}e^{-\beta H}$ with some $\beta\in[0, \infty)$, where $Z:=\mathrm{tr}(e^{-\beta H})$.
\end{proposition}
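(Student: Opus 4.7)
The plan is to combine Proposition~\ref{prop:prefiniteGEderivation} with a character-killing construction that lets Condition~\ref{conditionA2} be applied freely across the irreducible sectors of $F$, forcing the chemical-potential part of any would-be GGE to drop out.

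First I would extract the structural consequences of Condition~\ref{conditionA1}. Proposition~\ref{prop:prefiniteGEderivation} gives $[\rho, F(g)] = 0$ for every $g \in G$, so by Schur's lemma $\rho = \sum_{\lambda \in \Lambda_F} \iota_\lambda(I_{\mathcal{R}_\lambda} \otimes \rho_\lambda)\iota_\lambda^\dagger$ with each $\rho_\lambda \in \mathcal{B}^{++}(\mathcal{M}_\lambda)$. A second application of Condition~\ref{conditionA1} with $\Omega$ equal to the spectral projectors of $H$ restricted to a single $\lambda$-block (each of which commutes with both $F(G)$ and $H$) forces $\rho_\lambda$ to be block-diagonal in the energy eigenspaces of $H_\lambda$. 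Choosing the basis $\{\ket{\psi_{\lambda j}}\}$ of Eq.~\eqref{eq:Hamiltoniandecomp2} inside each degenerate subspace to diagonalize $\rho_\lambda$, I obtain $\rho_\lambda = \sum_{j=1}^{m_\lambda} p_{\lambda j}\ket{\psi_{\lambda j}}\bra{\psi_{\lambda j}}$ with $p_{\lambda j} \in (0,1]$.

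Next I would neutralize the $G$-character by exploiting the finiteness of $G$. For each $k = (\lambda, j) \in K$, Lemma~\ref{lem:eigenvectorcomposition} identifies the state $\ket{\chi_k}$ of Eq.~\eqref{eq:chi_k_definition} as a simultaneous eigenvector of $F^{\otimes D}(g)$, $H^{(D)}$, and $\rho^{\otimes D}$ with respective eigenvalues $(\det F_\lambda(g))^{D/r_\lambda}$, $DE_{\lambda j}$, and $p_{\lambda j}^{D}$. Since $g^{|G|} = 1$ gives $F_\lambda(g)^{|G|} = I$ and hence $(\det F_\lambda(g))^{|G|} = 1$ for every $\lambda$ and every $g$, the tensored state $\ket{\Psi(k)} := \ket{\chi_k}^{\otimes |G|}$ on $|G|D$ copies is an $F^{\otimes |G|D}(g)$-eigenvector with eigenvalue identically $1$, independently of $k$, while its $H^{(|G|D)}$-eigenvalue is $|G|D E_{\lambda j}$ and its $\rho^{\otimes |G|D}$-eigenvalue is $p_{\lambda j}^{|G|D}$.

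Finally, feeding arbitrary pairs $\ket{\Psi'_0} = \ket{\Psi(k_0)}$, $\ket{\Psi'_1} = \ket{\Psi(k_1)}$ into Condition~\ref{conditionA2} yields a single $\beta \in [0, \infty)$ satisfying $\log p_{k_0} - \log p_{k_1} = \beta(E_{k_1} - E_{k_0})$ for all $k_0, k_1 \in K$. Hence $p_{\lambda j} = C e^{-\beta E_{\lambda j}}$ for a single $C > 0$, and reassembling the Schur decomposition gives $\rho = C e^{-\beta H}$; the condition $\mathrm{tr}(\rho) = 1$ then fixes $C = 1/Z$. The only really nontrivial ingredient is the character-killing tensor power $\ket{\chi_k}^{\otimes |G|}$: without it, Condition~\ref{conditionA2} would produce merely sector-wise Gibbs laws $p_{\lambda j} \propto e^{-\beta E_{\lambda j}} e^{c_\lambda}$ with potentially $\lambda$-dependent ``chemical potentials,'' and it is precisely these residual potentials that must vanish in the finite-group setting where no continuous conserved charges are available.
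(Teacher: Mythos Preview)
Your proof is correct and follows essentially the same approach as the paper: invoke Proposition~\ref{prop:prefiniteGEderivation} to get $\rho$ into Schur block form, diagonalize each block via Condition~\ref{conditionA1}, then kill the $G$-characters by the $|G|$-fold tensor power so that Condition~\ref{conditionA2} applies across all sectors. Your pairwise comparison of the states $\ket{\chi_k}^{\otimes|G|}$ is a mild streamlining of the paper's version, which phrases the same step using general $\ket{\Phi(|G|\bm{n}'_i)}$ and then invokes Lemma~\ref{lem:latticeduality}; your direct pairwise argument makes that lemma unnecessary here.
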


\begin{proof}
	Suppose that $\rho$ satisfies Conditions~\ref{conditionA1} and \ref{conditionA2}. 
	From Proposition~\ref{prop:prefiniteGEderivation}, $\rho$ commutes with $F(G)$.
	In the same way as $H$ in Eq.~\eqref{eq:Hamiltoniandecomp}, $\rho$ can be written as $\rho=\sum_{\lambda\in\Lambda_F} \iota_\lambda (I_{\mathcal{R}_\lambda} \otimes \tau_\lambda) \iota_\lambda^\dag$ with some $\tau_\lambda\in\mathcal{B}^{++}(\mathcal{M}_\lambda)$. 
	In the same way as Proposition~\ref{prop:GGEderivation}, $\tau_\lambda$ can be diagonalized as $\tau_\lambda=\sum_{j=1}^{m_\lambda} p_{\lambda j} \ket{\psi_{\lambda j}}\bra{\psi_{\lambda j}}$ with some $p_{\lambda j}\in(0, 1]$, where $\ket{\psi_{\lambda j}}$ is defined in Eq.~\eqref{eq:Hamiltoniandecomp2}.
	Therefore, $\rho$ can be written as 
	\begin{align}
	\rho=\sum_{\lambda\in\Lambda_F} \iota_\lambda \left(I_{\mathcal{R}_\lambda}\otimes \sum_{j=1}^{m_\lambda} p_{\lambda j}\ket{\psi_{\lambda j}}\bra{\psi_{\lambda j}}\right) \iota_\lambda^\dag.
	\end{align}
	
	Take arbitrary $\bm{n}'_0=(n'_{0k})_{k\in K}, \bm{n}'_1=(n'_{1k})_{k\in K}\in\mathbb{N}^K$ satisfying $\bm{n}'_0\cdot\bm{w}_0=\bm{n}'_1\cdot\bm{w}_0$, where $K:=\{(\lambda, j)\ |\ \lambda\in\Lambda_F, j=1, \cdots, m_\lambda\}$ and $\bm{w}_0:=(1)_{k\in K}$.
	We define $\ket{\Psi'_0}, \ket{\Psi'_1}\in\mathcal{H}^{\otimes L'}$ by $\ket{\Psi'_i}:=\ket{\Phi(|G|\bm{n}'_i)}$ with $\ket{\Phi(\bm{n})}$ defined by Eq.~\eqref{eq:determinantstate}, where $L':=|G|D\bm{n}'_0\cdot\bm{w}_0=|G|D\bm{n}'_1\cdot\bm{w}_0$.
	For any $g\in G$, $F(g)$ can be regarded as $\Omega$ in Eq.~\eqref{eq:Omega_decomp} in the case where $\zeta_k=F_\lambda(g)$.
	From Lemma~\ref{lem:eigenvectorcomposition}, $\ket{\Psi'_i}$ is an eigenstate of $F^{\otimes L'}(g)$ with the eigenvalue
	\begin{align}
		\prod_{k\in K}(\mathrm{det}(F_\lambda(g)))^{|G|n'_{ik}\frac{D}{r_\lambda}}
		=\prod_{k\in K}(\mathrm{det}(F_\lambda(g^{|G|})))^{n'_{ik}\frac{D}{r_\lambda}}
		=\prod_{k\in K}(\mathrm{det}(F_\lambda(1)))^{n'_{ik}\frac{D}{r_\lambda}}
		=\prod_{k\in K}(\mathrm{det}(I))^{n'_{ik}\frac{D}{r_\lambda}}
		=1.
	\end{align}
	In the same way as Eq.~\eqref{eq:H_L'_eigenvalue}, we obtain
	\begin{align}
		H^{(L')}\ket{\Psi'_l}=|G|D\bm{n}'_l\cdot\bm{E}\ket{\Psi'_l}, \label{eq:finite_H_L'_eigenvalue}
	\end{align}	
	where $\bm{E}:=(E_k)_{k\in K}$, $E_k:=E_{\lambda j}$ for $k=(\lambda, j)\in K$, and $E_{\lambda j}$ is defined in Eq.~\eqref{eq:Hamiltoniandecomp2}.
	Since $\rho$ satisfies Condition~\ref{conditionA2}, there exists $\beta\in[0, \infty)$ such that
	\begin{align}
		-\log(\braket{\Psi'_1|\rho^{\otimes L'}|\Psi'_1})+\log(\braket{\Psi'_0|\rho^{\otimes L'}|\Psi'_0})=\beta(\braket{\Psi'_1|H^{(L')}|\Psi'_1}-\braket{\Psi'_0|H^{(L')}|\Psi'_0}). \label{eq:finite_virtualtemperature}
	\end{align}
	In the same way as Eqs.~\eqref{eq:rho_L'_eigenvalue} and \eqref{eq:log_rho_L'_eigenvalue}, we get
	\begin{align}
		&\braket{\Psi'_i|\rho^{\otimes L'}|\Psi'_i}=\prod_{k\in K} p_k^{|G|Dn'_{ik}},\\
		&-\log(\braket{\Psi'_i|\rho^{\otimes L'}|\Psi'_i})=D|G| \bm{n}'_i\cdot\bm{s}, \label{eq:finite_log_rho_L'_eigenvalue}
	\end{align}
	where $p_k:=p_{\lambda j}$, $\bm{s}:=(s_k)_{k\in K}$, $s_k:=s_{\lambda j}$, and $s_{\lambda j}:=-\log(p_{\lambda j})$. 
	By substituting Eqs.~\eqref{eq:finite_H_L'_eigenvalue} and \eqref{eq:finite_log_rho_L'_eigenvalue} into Eq.~\eqref{eq:finite_virtualtemperature}, we get
	\begin{align}
		|G|D\bm{n}'_0\cdot\bm{s}-|G|D\bm{n}'_1\cdot\bm{s}=\beta(|G|D\bm{n}'_0\cdot\bm{E}-|G|D\bm{n}'_1\cdot\bm{E}).
	\end{align}
	This implies that $\bm{n}'_0\cdot\bm{t}=\bm{n}'_1\cdot\bm{t}$, where $\bm{t}:=\bm{s}-\beta\bm{E}$.
	Since this holds for any $\bm{n}'_0,\ \bm{n}'_1\in\mathbb{N}^K$ satisfying $\bm{n}'_0\cdot\bm{w}_0=\bm{n}'_1\cdot\bm{w}_0$, from Lemma~\ref{lem:latticeduality}, $\bm{t}$ can be written as $\bm{t}=\mu\bm{w}_0$ with some $\mu\in\mathbb{R}$.
	Then, for any $\lambda\in\Lambda_F$ and $j=1, \cdots, m_\lambda$, we obtain $s_{\lambda j}-\beta E_{\lambda j}=\mu$.
	Therefore, we get
	\begin{align}
		-\log(\rho)-\beta H=&\sum_{\lambda\in\Lambda_F}\sum_{j=1}^{m_\lambda} (s_{\lambda j}-\beta E_{\lambda j})\iota_\lambda(I_{\mathcal{R}_\lambda}\otimes\ket{\psi_{\lambda j}}\bra{\psi_{\lambda j}})\iota_\lambda^\dag \nonumber\\
		=&\mu\sum_{\lambda\in\Lambda_F} \iota_\lambda\left(I_{\mathcal{R}_\lambda}\otimes\sum_{j=1}^{m_\lambda} \ket{\psi_{\lambda j}}\bra{\psi_{\lambda j}}\right)\iota_\lambda^\dag \nonumber\\
		=&\mu\sum_{\lambda\in\Lambda_F} \iota_\lambda\iota_\lambda^\dag \nonumber\\
		=&\mu I.
	\end{align}	
	This implies that 
	\begin{align}
		\rho=e^{-\mu I-\beta H}=\frac{1}{e^\mu}e^{-\beta H}.
	\end{align}
	Since $\mathrm{tr}(\rho)=1$, we obtain $e^\mu=\mathrm{tr}(e^{-\beta H})=Z$ and thus $\rho=\frac{1}{Z}e^{-\beta H}$.
\end{proof}

	Proposition~\ref{prop:finiteGFworkextop} states that we can extract positive work from multiple copies of a state other than the Gibbs ensemble by some operation that respects symmetry described by a finite cyclic group or a dihedral group.
	The proof is parallel to that of Proposition~\ref{prop:GFworkextop}, except that $\ket{\Psi_i}$ exists for a Hamiltonian that is not trivial.

\begin{proposition} \label{prop:finiteGFworkextop}
	Let $G$ be a finite cyclic group or a dihedral group with a unitary representation $F$ acting on $\mathcal{H}$, $H\in\mathcal{B}^\mathrm{H}(\mathcal{H})$ commute with $F(G)$ and be not trivial, and $\rho\in\mathcal{B}^{++}(\mathcal{H})$ satisfy $\mathrm{tr}(\rho)=1$.
	If $\rho$ cannot be written as $\rho=\frac{1}{Z}e^{-\beta H}$ with $\beta\in[0, \infty)$ and $Z:=\mathrm{tr}(e^{-\beta H})$, there exist $N\in\mathbb{N}$ and $U\in\mathcal{U}_{G, F^{\otimes N}}(\mathcal{H}^{\otimes N})$ that satisfy $[U^\dag H^{(N)}U, H^{(N)}]=0$ and $W(\rho^{\otimes N}, H^{(N)}, U)>0$.
\end{proposition}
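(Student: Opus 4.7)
The plan is to adapt the proof of Proposition~\ref{prop:GFworkextop} to finite symmetry groups. By contraposition, since $\rho$ is not the Gibbs ensemble, Proposition~\ref{prop:finiteGEderivation} guarantees that at least one of Conditions~\ref{conditionA1} and \ref{conditionA2} fails. In either case I will assemble the desired $U$ from Proposition~\ref{prop:workextop1} or Proposition~\ref{prop:virtualtemperature}, then verify the extra commutator condition $[U^\dag H^{(N)} U, H^{(N)}] = 0$ together with the $(G, F^{\otimes N})$-respecting property.

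The key ingredient in both cases is a pair of simultaneous eigenstates $\ket{\Psi_0}, \ket{\Psi_1}$ of $F^{\otimes L}(G)$ and $H^{(L)}$ whose $F^{\otimes L}(G)$-eigenvalues coincide for every $g \in G$ but whose energies differ. Because $H$ is not trivial, there exist indices $k_0, k_1 \in K$ with $E_{k_0} \neq E_{k_1}$, and borrowing the construction used inside Proposition~\ref{prop:finiteGEderivation} I would set $\ket{\Psi_i} := \ket{\Phi(|G|\, \bm{e}_{k_i})}$, where $\bm{e}_k \in \mathbb{N}^K$ is the standard basis vector. The factor $|G|$ forces $F^{\otimes L}(g) \ket{\Psi_i} = \ket{\Psi_i}$ for every $g$, because by Lagrange's theorem $g^{|G|} = 1$ and hence $\det(F_\lambda(g))^{|G|} = \det(F_\lambda(1)) = 1$, while the two eigenvalues $D|G|E_{k_0}$ and $D|G|E_{k_1}$ of $H^{(L)}$ remain distinct. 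This construction is the finite-group substitute for Lemma~\ref{lem:PsiexistenceforLie}.

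If Condition~\ref{conditionA1} fails with witness $(M, \Omega)$, I would pick a spectral projector $P_\omega$ of $\Omega$ that commutes with $F^{\otimes M}(G)$ and $H^{(M)}$ but not with $\rho^{\otimes M}$, and feed $P_\omega$ together with the above pair $\ket{\Psi_0}, \ket{\Psi_1}$ into Proposition~\ref{prop:workextop1}, obtaining $U = \mathcal{O}(\{A_{ij}\})$ with positive extracted work for some $m$. Each building block of $A_{ij}$ individually commutes with the representation, so $U$ is $(G, F^{\otimes N})$-respecting; Eq.~\eqref{eq:HAcommutation} combined with Lemma~\ref{lem:U1extractedwork} supplies $[U^\dag H^{(N)} U, H^{(N)}] = 0$. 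If instead Condition~\ref{conditionA2} fails, I would define $\beta$ via Eq.~\eqref{eq:ESdef} from $\ket{\Psi_0}, \ket{\Psi_1}$; either $\beta < 0$, in which case Proposition~\ref{prop:virtualtemperature} with $m' = 0$ suffices, or there is an auxiliary pair $\ket{\Psi'_0}, \ket{\Psi'_1}$ with matching $F^{\otimes L'}(G)$-eigenvalues and $\Delta S' \neq \beta \Delta\mathcal{E}'$, which I again realize as $\ket{\Phi(|G|\bm{n}')}$-states. Proposition~\ref{prop:virtualtemperature} then delivers $U = \mathcal{O}(\{B_{ij}\})$, whose symmetry-respecting property and commutator condition follow by the same reasoning.

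The main difficulty I anticipate lies not in the overall skeleton, which essentially duplicates Proposition~\ref{prop:GFworkextop}, but in the clean substitute for Lemma~\ref{lem:PsiexistenceforLie}: producing, from non-triviality of $H$ alone, simultaneous eigenstates with coinciding $F(G)$-eigenvalues and distinct energies. The $|G|$-th power trick resolves this because raising each determinant to $|G|$ collapses the representation into its trivial isotypic component while leaving the $H^{(L)}$-spectrum freely tunable through $\bm{n}$. With that construction secured, unitarity, the symmetry-respecting property, and $[U^\dag H^{(N)} U, H^{(N)}] = 0$ all follow immediately from Propositions~\ref{prop:workextop1} and \ref{prop:virtualtemperature} together with Lemma~\ref{lem:U1extractedwork}.
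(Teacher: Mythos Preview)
Your proposal is correct and follows essentially the same route as the paper: the paper invokes Proposition~\ref{prop:finiteGEderivation} to rule out Conditions~\ref{conditionA1} and \ref{conditionA2} holding simultaneously, uses the $|G|$-power construction (packaged there as Lemma~\ref{lem:Psiexistenceforfinite}, which is exactly your $\ket{\Psi_i}=\ket{\Phi(|G|\bm{e}_{k_i})}$) to obtain the reference pair $\ket{\Psi_0},\ket{\Psi_1}$, and then defers verbatim to the two-case construction of Proposition~\ref{prop:GFworkextop}. One small remark: in the second case the witness pair $\ket{\Psi'_0},\ket{\Psi'_1}$ is handed to you directly by the failure of Condition~\ref{conditionA2}, so you need not (and should not try to) realize it in the special form $\ket{\Phi(|G|\bm{n}')}$.
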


\begin{proof}
	Since $H$ is not trivial, from Lemma~\ref{lem:Psiexistenceforfinite}, we can take $L\in\mathbb{N}$ and a pair of simultaneous eigenstates $\ket{\Psi_0}, \ket{\Psi_1}\in\mathcal{H}^{\otimes L}$ of $F^{\otimes L}(G)$ and $H^{(L)}$ that satisfies Eq.~\eqref{eq:Psi_condition}.
	Suppose that $\rho$ cannot be written as $\rho=\frac{1}{Z}e^{-\beta H}$ with $\beta\in[0, \infty)$ and $Z:=\mathrm{tr}(e^{-\beta H})$.
	From Proposition~\ref{prop:finiteGEderivation}, Conditions~\ref{conditionA1} and \ref{conditionA2} cannot be simultaneously satisfied for $\rho$.
	We can construct a symmetry-respecting unitary operator $U\in\mathcal{U}_{G, F^{\otimes N}}(\mathcal{H}^{\otimes N})$ that satisfy $[U^\dag H^{(N)}U, H^{(N)}]=0$ and $W(\rho^{\otimes N}, H^{(N)}, U)>0$ for some $N\in\mathbb{N}$ in the same way as in Proposition~\ref{prop:GFworkextop}.
\end{proof}

We prove Theorem~\ref{thm:finiteGFHCP} from Proposition~\ref{prop:finiteGFworkextop}.
It is obvious that we cannot extract positive work from the Gibbs ensemble by any symmetry-respecting operations.\\\\
\textit{Proof of Theorem~\ref{thm:finiteGFHCP}.}
	First, suppose that $\rho$ cannot be written as $\rho=\frac{1}{Z}e^{-\beta H}$ with $\beta\in[0, \infty)$ and $Z:=\mathrm{tr}(e^{-\beta H})$.
	From Proposition~\ref{prop:finiteGFworkextop}, there exist $N\in\mathbb{N}$ and $U\in\mathcal{U}_{G, F^{\otimes N}}(\mathcal{H}^{\otimes N})$ such that $W(\rho^{\otimes N}, H^{(N)}, U)>0$.
	This implies that $\rho$ is not $(G, F)$-completely passive w.r.t. $H$.
	
	Next, suppose that $\rho$ can be written as $\rho=\frac{1}{Z}e^{-\beta H}$ with some $\beta\in[0, \infty)$.
	Then, $\rho$ is completely passive in the ordinary sense.
	Thus for any $N\in\mathbb{N}$, $\mathcal{W}_{H^{(N)}}(\rho^{\otimes N})=0$. 
	Since $\mathcal{U}_{G, F^{\otimes N}}(\mathcal{H}^{\otimes N})\subset\mathcal{U}(\mathcal{H}^{\otimes N})$, we get
	\begin{align}
		\mathcal{W}_{G, F^{\otimes N}, H^{(N)}}(\rho^{\otimes N})=&\max_{U\in\mathcal{U}_{G, F^{\otimes N}}(\mathcal{H}^{\otimes N})}W(\rho^{\otimes N}, H^{(N)}, U) \nonumber\\
		\leq&\max_{U\in\mathcal{U}(\mathcal{H}^{\otimes N})}W(\rho^{\otimes N}, H^{(N)}, U) \nonumber\\
		=&\mathcal{W}_{H^{(N)}}(\rho^{\otimes N}) \nonumber\\
		=&0.
	\end{align}
	This implies that $\rho$ is $(G, F)$-completely passive w.r.t. $H$. \hspace{\fill} $\Box$\\

\subsection{Proof of Theorem~\ref{thm:THCP}}

We can prove Theorem~\ref{thm:THCP} almost in parallel to the case of $G=\{1\}$ in Theorem~\ref{thm:GFHCP}. 
In order to prove Theorem~\ref{thm:THCP}, we introduce the following two conditions instead of Conditions~\ref{conditionA1} and \ref{conditionA2}.

\begin{enumerate}[1]
\renewcommand{\theenumi}{B\arabic{enumi}}
 \item $\rho$ commutes with $\Omega$, for any $\Omega\in\mathcal{B}^\mathrm{H}(\mathcal{H})$ that commutes with $\mathcal{T}$ and $H$. \label{conditionB1}
 \item There exists $\beta\in[0, \infty)$ that satisfies the following: $-\log(\braket{\Psi'_1|\rho^{\otimes L'}|\Psi'_1})+\log(\braket{\Psi'_0|\rho^{\otimes L'}|\Psi'_0})=\beta(\braket{\Psi'_1|H^{(L')}|\Psi'_1}-\braket{\Psi'_0|H^{(L')}|\Psi'_0})$ holds, for any $L'\in\mathbb{N}$ and any pair of simultaneous eigenstates $\ket{\Psi'_0}, \ket{\Psi'_1}\in\mathcal{H}^{\otimes L'}$ of $H^{(L')}$ with $\mathcal{T}\ket{\Psi'_i}=\ket{\Psi'_i}$. \label{conditionB2}
\end{enumerate}

	Proposition~\ref{prop:TGEderivation} states that if a state $\rho$ satisfies Conditions~\ref{conditionB1} and \ref{conditionB2}, $\rho$ is the Gibbs ensemble at positive temperature.
	This corresponds to Propositions~\ref{prop:preGGEderivation} and \ref{prop:GGEderivation} in the case of Lie group symmetry.

\begin{proposition} \label{prop:TGEderivation}
	Let $H\in\mathcal{B}^\mathrm{H}(\mathcal{H})$ commute with $\mathcal{T}$ and $\rho\in\mathcal{B}^{++}(\mathcal{H})$ satisfy $\mathrm{tr}(\rho)=1$.
	If $\rho$ satisfies Conditions~\ref{conditionB1} and \ref{conditionB2}, $\rho$ can be written as $\rho=\frac{1}{Z}e^{-\beta H}$ with some $\beta\in[0, \infty)$, where $Z:=\mathrm{tr}(e^{-\beta H})$.
\end{proposition}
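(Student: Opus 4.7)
The plan is to follow the structure of Proposition~S4 (and the dimer example in Appendix~B of the main text): use Condition~\ref{conditionB1} first to confine $\rho$ to block-diagonal form along the energy eigenspaces and then to force each block to be a positive scalar, and finally use Condition~\ref{conditionB2} at $L' = 1$ to pin those scalars to a Boltzmann distribution.

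First I would establish $[\rho, H] = 0$. Let $H = \sum_E E\,\Pi_E$ be the spectral decomposition of $H$. Each $\Pi_E$ is Hermitian and commutes with $H$ trivially, and commutes with $\mathcal{T}$ because $[\mathcal{T}, H] = 0$ implies that every eigenspace of $H$ is $\mathcal{T}$-invariant. Thus each $\Pi_E$ is an admissible $\Omega$ in Condition~\ref{conditionB1}, so $[\rho, \Pi_E] = 0$ for every $E$, and $\rho = \sum_E \rho_E$ with $\rho_E := \Pi_E \rho \Pi_E$ supported on the eigenspace $\mathcal{H}_E$.

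Next I would show each block is scalar: $\rho_E = q_E \Pi_E$ with $q_E > 0$. Since $\mathcal{T}|_{\mathcal{H}_E}$ is an antilinear involution, there is a basis of $\mathcal{H}_E$ pointwise fixed by $\mathcal{T}$; in this basis, the Hermitian operators on $\mathcal{H}_E$ that commute with $\mathcal{T}$ are exactly the real symmetric matrices. Extending any such operator by zero on the other eigenspaces yields a Hermitian operator on $\mathcal{H}$ commuting with both $H$ and $\mathcal{T}$, hence an admissible $\Omega$; so $\rho_E$ commutes with every real symmetric matrix on $\mathcal{H}_E$. Since real symmetric matrices generate $M_{\dim \mathcal{H}_E}(\mathbb{R})$ as an associative algebra over $\mathbb{R}$ (for instance the products $(E_{ij} + E_{ji}) E_{jj} = E_{ij}$ produce every matrix unit), and hence $M_{\dim \mathcal{H}_E}(\mathbb{C})$ over $\mathbb{C}$, their commutant in $M_{\dim \mathcal{H}_E}(\mathbb{C})$ is $\mathbb{C}\, I_{\mathcal{H}_E}$. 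Thus $\rho_E = q_E \Pi_E$, with $q_E > 0$ by the positive-definiteness of $\rho$.

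Finally I would invoke Condition~\ref{conditionB2} at $L' = 1$. For any two energies $E_0, E_1$ of $H$, choosing a $\mathcal{T}$-fixed unit vector $\ket{\Psi'_i}$ in the eigenspace of $E_i$ (which exists by the basis argument above) produces admissible states for Condition~\ref{conditionB2}, with $\braket{\Psi'_i | \rho | \Psi'_i} = q_{E_i}$. Condition~\ref{conditionB2} then yields a single $\beta \in [0, \infty)$ such that $\log q_{E_0} - \log q_{E_1} = \beta (E_1 - E_0)$ for every such pair, i.e., $q_E = C\, e^{-\beta E}$ for some constant $C$; normalization fixes $C = 1/Z$ with $Z = \mathrm{tr}(e^{-\beta H})$, giving $\rho = e^{-\beta H}/Z$. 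The only nontrivial step is the middle one: the potential obstacle is locating enough operators in the joint commutant of $H$ and $\mathcal{T}$ to force $\rho_E$ to be scalar, and this is handled cleanly by passing to a $\mathcal{T}$-real basis and using that real symmetric matrices already generate the full complex matrix algebra.
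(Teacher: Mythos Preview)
Your proof is correct and follows essentially the same strategy as the paper's: use Condition~\ref{conditionB1} to diagonalize $\rho$ in a $\mathcal{T}$-real eigenbasis of $H$, then use Condition~\ref{conditionB2} to fix the eigenvalues as Boltzmann weights. The only differences are stylistic: the paper bypasses your two-step block-diagonalize-then-scalarize argument by observing directly that each rank-one projector $\ket{\psi_j}\bra{\psi_j}$ onto a $\mathcal{T}$-real eigenvector of $H$ is itself an admissible $\Omega$ in Condition~\ref{conditionB1}, which immediately gives $\rho=\sum_j p_j\ket{\psi_j}\bra{\psi_j}$; conversely, for the final step the paper invokes general tensor-product test states and Lemma~\ref{lem:latticeduality}, whereas your use of Condition~\ref{conditionB2} at $L'=1$ on pairs of single eigenvectors is all that is actually needed here.
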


\begin{proof}
	Suppose that $\rho$ satisfies Conditions~\ref{conditionB1} and \ref{conditionB2}. 
	Since $H$ commutes with $\mathcal{T}$, $H$ can be diagonalized as 
	\begin{align}
		H=\sum_{j=1}^d E_j\ket{\psi_j}\bra{\psi_j} \label{eq:H_diagonalization}
	\end{align}
	with some $E_j\in\mathbb{R}$ and $\ket{\psi_j}\in\mathcal{H}$ satisfying $\mathcal{T}\ket{\psi_j}=\ket{\psi_j}$.
	Since $\ket{\psi_j}\bra{\psi_j}$ commutes with $\mathcal{T}$ and $H$, from Condition~\ref{conditionB1}, $\rho$ commutes with $\ket{\psi_j}\bra{\psi_j}$.
	Thus $\rho$ can be diagonalized as $\rho=\sum_{j=1}^d p_j\ket{\psi_j}\bra{\psi_j}$ with some $p_j\in(0, 1)$.

	Take arbitrary $\bm{n}'_0=(n'_{0j})_{j=1}^d, \bm{n}'_1=(n'_{1j})_{j=1}^d\in\mathbb{N}^d$ satisfying $\bm{n}'_0\cdot\bm{w}_0=\bm{n}'_1\cdot\bm{w}_0$, where $\bm{w}_0:=(1)_{j=1}^d$.
	We define $\ket{\Psi'_0}, \ket{\Psi'_1}\in\mathcal{H}^{\otimes L'}$ by 
	\begin{align}
		\ket{\Psi'_i}:=\bigotimes_{j=1}^d \ket{\psi_j}^{\otimes n'_{ij}}, \label{eq:Psi_for_T}
	 \end{align}
	where $L':=\bm{n}'_0\cdot\bm{w}_0=\bm{n}'_1\cdot\bm{w}_0$.
	Then $\ket{\Psi'_i}$ satisfies 
	\begin{align}
	&\mathcal{T}\ket{\Psi'_i}=\ket{\Psi'_i}, \\
	&H^{(L')}\ket{\Psi'_i}=\bm{n}'_i\cdot\bm{E}\ket{\Psi'_i}, \label{eq:T_H_L'_eigenvalue}
	\end{align}
	where $\bm{E}:=(E_j)_{j=1}^d$.  
	Since $\rho$ satisfies Condition~\ref{conditionB2}, there exists $\beta\in[0, \infty)$ such that
	\begin{align}
		-\log(\braket{\Psi'_1|\rho^{\otimes L'}|\Psi'_1})+\log(\braket{\Psi'_0|\rho^{\otimes L'}|\Psi'_0})=\beta(\braket{\Psi'_1|H^{(L')}|\Psi'_1}-\braket{\Psi'_0|H^{(L')}|\Psi'_0}). \label{eq:virtual_temperature_for_T}
	\end{align}
	From Eq.~\eqref{eq:Psi_for_T},
	\begin{align}
		-\log(\braket{\Psi'_i|\rho^{\otimes L'}|\Psi'_i})=-\log\left(\prod_{j=1}^d \braket{\psi_j|\rho|\psi_j}^{n'_{ij}}\right)=-\sum_{j=1}^d \log\left(p_j^{n'_{ij}}\right)=\sum_{j=1}^d n'_{ij}(-\log(p_j))=\bm{n}'_i\cdot\bm{s}, \label{eq:T_log_rho_L'_eigenvalue}
	\end{align}
	where $\bm{s}:=(s_j)_{j=1}^d$ and $s_j:=-\log(p_j)$. 
	By substituting Eqs.~\eqref{eq:T_H_L'_eigenvalue}~and~\eqref{eq:T_log_rho_L'_eigenvalue} into Eq.~\eqref{eq:virtual_temperature_for_T}, we get  
	\begin{align}
		\bm{n}'_0\cdot\bm{s}-\bm{n}'_1\cdot\bm{s}=\beta(\bm{n}'_0\cdot\bm{E}-\bm{n}'_1\cdot\bm{E}).
	\end{align}
	This implies that $\bm{n}'_0\cdot\bm{t}=\bm{n}'_1\cdot\bm{t}$, where $\bm{t}:=\bm{s}-\beta\bm{E}$.
	Since this holds for all $\bm{n}'_0,\ \bm{n}'_1\in\mathbb{N}^d$ satisfying $\bm{n}'_0\cdot\bm{w}_0=\bm{n}'_1\cdot\bm{w}_0$,  $\bm{t}$ can be written as $\bm{t}=\mu\bm{w}_0$ with some $\mu\in\mathbb{R}$ from Lemma~\ref{lem:latticeduality}.
	Then, for any $j=1, \cdots, d$, we obtain $s_j-\beta E_j=\mu$.
	Therefore, we get
	\begin{align}
		-\log(\rho)-\beta H=\sum_{j=1}^d (s_j-\beta E_j)\ket{\psi_j}\bra{\psi_j}=\sum_{j=1}^d \mu \ket{\psi_j}\bra{\psi_j}=\mu I.
	\end{align}
	This implies that 
	\begin{align}
		\rho=e^{-\mu I-\beta H}=\frac{1}{e^\mu}e^{-\beta H}.
	\end{align}
	Since $\mathrm{tr}(\rho)=1$, we obtain $e^\mu=\mathrm{tr}(e^{-\beta H})=Z$ and thus $\rho=\frac{1}{Z}e^{-\beta H}$.
\end{proof}

	Proposition~\ref{prop:Tworkextop} states that we can extract positive work from multiple copies of a state other than the Gibbs ensemble by some time-reversal symmetry-respecting operation.
	The proof is parallel to that of Proposition~\ref{prop:GFworkextop} except for the construction of $\ket{\Psi_i}$.

\begin{proposition} \label{prop:Tworkextop}
	Let $H\in\mathcal{B}^\mathrm{H}(\mathcal{H})$ commute with $\mathcal{T}$ and be not trivial, and $\rho\in\mathcal{B}^{++}(\mathcal{H})$ satisfy $\mathrm{tr}(\rho)=1$. 
	If $\rho$ cannot be written as $\rho=\frac{1}{Z}e^{-\beta H}$ with $\beta\in[0, \infty)$ and $Z:=\mathrm{tr}(e^{-\beta H})$, there exist $N\in\mathbb{N}$ and $U\in\mathcal{U}_\mathcal{T}(\mathcal{H}^{\otimes N})$ that satisfy $[U^\dag H^{(N)}U, H^{(N)}]=0$ and $W(\rho^{\otimes N}, H^{(N)}, U)>0$.
\end{proposition}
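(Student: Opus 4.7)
My plan is to mirror the structure of the proofs of Proposition~\ref{prop:GFworkextop} and Proposition~\ref{prop:finiteGFworkextop}, using Proposition~\ref{prop:TGEderivation} as the analog of Proposition~\ref{prop:GGEderivation}: if $\rho$ is not a Gibbs ensemble at positive temperature, then at least one of Conditions~\ref{conditionB1} or \ref{conditionB2} must fail for $\rho$, and in each case we construct a $\mathcal{T}$-respecting work-extracting unitary by plugging appropriately chosen ingredients into Proposition~\ref{prop:workextop1} or Proposition~\ref{prop:virtualtemperature}.

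Since $H$ commutes with $\mathcal{T}$ and is not trivial, we can diagonalize it as $H=\sum_{j=1}^d E_j\ket{\psi_j}\bra{\psi_j}$ with $\mathcal{T}\ket{\psi_j}=\ket{\psi_j}$ (as in the proof of Proposition~\ref{prop:TGEderivation}), and there exist two indices $j_0,j_1$ with $E_{j_0}<E_{j_1}$. Taking $L=1$ and $\ket{\Psi_0}:=\ket{\psi_{j_0}}$, $\ket{\Psi_1}:=\ket{\psi_{j_1}}$ produces a pair of $\mathcal{T}$-fixed energy eigenstates with $\Delta\mathcal{E}>0$. If Condition~\ref{conditionB1} fails, we pick $\Omega\in\mathcal{B}^\mathrm{H}(\mathcal{H})$ that commutes with $\mathcal{T}$ and $H$ but not with $\rho$, and extract a spectral projector $P_\omega$ of $\Omega$ with $[\rho,P_\omega]\neq0$; $P_\omega$ still commutes with $\mathcal{T}$ and $H$. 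Feeding $M=1$, $P_\omega$ and the above $\ket{\Psi_i}$ into Proposition~\ref{prop:workextop1} gives a unitary $\mathcal{O}(\{A_{ij}\})$ with positive extracted work for some $m$ and satisfying $[\mathcal{O}(\{A_{ij}\})^\dag H^{(2m+1)}\mathcal{O}(\{A_{ij}\}),H^{(2m+1)}]=0$ via Lemma~\ref{lem:U1extractedwork} and Eq.~\eqref{eq:HAcommutation}.

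If Condition~\ref{conditionB2} fails instead, then either $\beta<0$ in Eq.~\eqref{eq:ESdef}, or there exist $L'$ and a pair of simultaneous eigenstates $\ket{\Psi_0'},\ket{\Psi_1'}\in\mathcal{H}^{\otimes L'}$ of $H^{(L')}$ with $\mathcal{T}\ket{\Psi_i'}=\ket{\Psi_i'}$ and $\Delta S'\neq\beta\Delta\mathcal{E}'$. (Such $\mathcal{T}$-fixed simultaneous eigenstates can always be constructed as tensor products of the $\ket{\psi_j}$'s, exactly in the spirit of Eq.~\eqref{eq:Psi_for_T}.) After possibly swapping $\ket{\Psi_0'}$ with $\ket{\Psi_1'}$ so that $\Delta\mathcal{E}'\geq 0$ and $\Delta S'\geq 0$ when $\Delta\mathcal{E}'=0$, Proposition~\ref{prop:virtualtemperature} yields a unitary $\mathcal{O}(\{B_{ij}\})$ with positive work for appropriate $m,m'$, and $[\mathcal{O}(\{B_{ij}\})^\dag H^{(mL+m'L')}\mathcal{O}(\{B_{ij}\}),H^{(mL+m'L')}]=0$ again follows from Lemma~\ref{lem:U1extractedwork} and Eq.~\eqref{eq:HBcommutation}.

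The only genuinely new content compared to the proof of Proposition~\ref{prop:GFworkextop} is the verification that the resulting unitaries are $\mathcal{T}$-respecting, which I expect to be the main (but still routine) obstacle because $\mathcal{T}$ is antiunitary. For any $\mathcal{T}$-fixed $\ket{\psi},\ket{\phi}$ one checks that $\mathcal{T}\ket{\psi}\bra{\phi}\mathcal{T}^{-1}=\ket{\psi}\bra{\phi}$ by direct computation using antilinearity; since $P_\omega$ commutes with $\mathcal{T}$ (as a spectral projector of a $\mathcal{T}$-commuting Hermitian operator) and the swap $T$ commutes with $\mathcal{T}^{\otimes 2}$, each building block of $A_{ij}$ and $B_{ij}$ commutes with the appropriate tensor power of $\mathcal{T}$, so $\mathcal{O}(\{A_{ij}\})$ and $\mathcal{O}(\{B_{ij}\})$ lie in $\mathcal{U}_\mathcal{T}(\mathcal{H}^{\otimes N})$. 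This closes both cases and completes the proof.
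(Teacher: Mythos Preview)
Your proposal is correct and follows essentially the same approach as the paper: diagonalize $H$ in a $\mathcal{T}$-fixed basis, pick $\ket{\Psi_i}=\ket{\psi_{j_i}}$, invoke Proposition~\ref{prop:TGEderivation} to conclude one of Conditions~\ref{conditionB1} or \ref{conditionB2} fails, and then reuse the constructions of Proposition~\ref{prop:GFworkextop} verbatim. The paper's proof literally says ``in the same way as in Proposition~\ref{prop:GFworkextop}'' and omits the verification that $\mathcal{O}(\{A_{ij}\})$ and $\mathcal{O}(\{B_{ij}\})$ lie in $\mathcal{U}_\mathcal{T}(\mathcal{H}^{\otimes N})$; you actually spell this out, which is a helpful addition rather than a deviation.
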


\begin{proof}
	Let $H$ be diagonalized as Eq.~\eqref{eq:H_diagonalization}.
	Since $H$ is not trivial, we can take $j_0, j_1\in\{1, \cdots, d\}$ satisfying $E_{j_0}<E_{j_1}$.
	We define $\ket{\Psi_0}, \ket{\Psi_1}\in\mathcal{H}$ as $\ket{\Psi_i}:=\ket{\psi_{j_i}}$.
	Then, $\ket{\Psi_i}$ satisfies $\mathcal{T}\ket{\Psi_i}=\ket{\Psi_i}$ and $H\ket{\Psi_i}=E_{j_i}\ket{\Psi_i}$.
	Suppose that $\rho$ cannot be written as $\rho=\frac{1}{Z}e^{-\beta H}$ with $\beta\in[0, \infty)$ and $Z:=\mathrm{tr}(e^{-\beta H})$.
	From Proposition~\ref{prop:TGEderivation}, Conditions~\ref{conditionB1} and \ref{conditionB2} cannot be simultaneously satisfied for $\rho$.
	We can construct a symmetry-respecting unitary operator $U\in\mathcal{U}_\mathcal{T}(\mathcal{H}^{\otimes N})$ that satisfies $[U^\dag H^{(N)}U, H^{(N)}]=0$ and $W(\rho^{\otimes N}, H^{(N)}, U)>0$ for some $N\in\mathbb{N}$ in the same way as in Proposition~\ref{prop:GFworkextop}.	
\end{proof}

We prove Theorem~\ref{thm:THCP} from Proposition~\ref{prop:Tworkextop}.\\\\
\textit{Proof of Theorem~\ref{thm:THCP}.}
	First, suppose that $\rho$ cannot be written as $\rho=\frac{1}{Z}e^{-\beta H}$ with $\beta\in[0, \infty)$ and $Z:=\mathrm{tr}(e^{-\beta H})$.
	From Proposition~\ref{prop:Tworkextop}, there exist $N\in\mathbb{N}$ and $U\in\mathcal{U}_\mathcal{T}(\mathcal{H}^{\otimes N})$ such that $W(\rho^{\otimes N}, H^{(N)}, U)>0$.
	This implies that $\rho$ is not $\mathcal{T}$-completely passive w.r.t. $H$.
	
	Next, suppose that $\rho$ can be written as $\rho=\frac{1}{Z}e^{-\beta H}$ with some $\beta\in[0, \infty)$.
	Then $\rho$ is completely passive in the ordinary sense.
	Thus for any $N\in\mathbb{N}$, $\mathcal{W}_{H^{(N)}}(\rho^{\otimes N})=0$. 
	Since $\mathcal{U}_\mathcal{T}(\mathcal{H}^{\otimes N})\subset\mathcal{U}(\mathcal{H}^{\otimes N})$, we get
	\begin{align}
		\mathcal{W}_{\mathcal{T}, H^{(N)}}(\rho^{\otimes N})=&\max_{U\in\mathcal{U}_\mathcal{T}(\mathcal{H}^{\otimes N})}W(\rho^{\otimes N}, H^{(N)}, U) \nonumber\\
		\leq&\max_{U\in\mathcal{U}(\mathcal{H}^{\otimes N})}W(\rho^{\otimes N}, H^{(N)}, U) \nonumber\\
		=&\mathcal{W}_{H^{(N)}}(\rho^{\otimes N}) \nonumber\\
		=&0.
	\end{align}
	This implies that $\rho$ is $\mathcal{T}$-completely passive w.r.t. $H$. \hspace{\fill} $\Box$\\

\section{Setup with a quantum work storage}
\label{sec:work_storage}

In this section, we consider a setup that explicitly includes a quantum work storage.
We adopt as a work storage a system whose Hamiltonian is the position operator $x$ for convenience (see also Sec.~III of the main text~\cite{Main}). 
Such a work storage is introduced in Ref.~\cite{Malabarba2015}.
We call this setup the \textit{WS setup} in the rest of this paper.
We define symmetry-protected (complete) passivity in the WS setup and prove Theorems~\ref{thm:GFHP1} to \ref{thm:THCP1}, which are the counterparts of Theorems~\ref{thm:GFHP} to \ref{thm:THCP}.
We denote by $\mathcal{H}_\mathrm{W}$ the Hilbert space of the work storage.

\subsection{Formal definitions}

First, we formulate work extraction in the WS setup. 
In this setup, the extracted work is also equivalent to the difference between the energy expectation values of the states of the system of interest before and after an operation. 

\begin{definition} [WS-extracted work] \label{def:WSextwork}
	Let $\rho\in\mathcal{B}^+(\mathcal{H})$, $H\in\mathcal{B}^\mathrm{H}(\mathcal{H})$, $\rho_\mathrm{W}\in\mathcal{B}^+(\mathcal{H_\mathrm{W}})$ and $V\in\mathcal{U}(\mathcal{H}\otimes\mathcal{H}_\mathrm{W})$. 
	The WS-extracted work from a state $\rho$ of the system of interest under a Hamiltonian $H$ with a state $\rho_\mathrm{W}$ of the work storage by the action of $V\in\mathcal{U}(\mathcal{H}\otimes\mathcal{H}_\mathrm{W})$ is defined as 
	\begin{align}
		W^\mathrm{WS}(\rho, H, \rho_\mathrm{W}, V):=\mathrm{tr}((\rho\otimes\rho_\mathrm{W}) (H\otimes I_{\mathcal{H}_\mathrm{W}}))-\mathrm{tr}((\rho\otimes\rho_\mathrm{W}) V^\dag (H\otimes I_{\mathcal{H}_\mathrm{W}})V).
	\end{align}
\end{definition}

In the WS setup, we consider the setup where all energy-preserving unitary operators are allowed.
We call such unitary operators WS-operators, which are defined as follows.

\begin{definition} [WS-operator] \label{def:WSoperator}
	Let $H\in\mathcal{B}^\mathrm{H}(\mathcal{H})$ be the Hamiltonian of the system of interest, and $x$ and $p$ be the position and momentum operators of the work storage.
	An operator $V\in\mathcal{B}(\mathcal{H}\otimes\mathcal{H}_\mathrm{W})$ is a WS-operator, if $V$ satisfies $V\in\mathcal{U}(\mathcal{H}\otimes\mathcal{H}_\mathrm{W})$, $[V, H\otimes I_{\mathcal{H}_\mathrm{W}}+I_\mathcal{H}\otimes x]=0$ and $[V, I_{\mathcal{H}}\otimes p]=0$.
	We define $\mathcal{U}^\mathrm{WS}(\mathcal{H}, \mathcal{H}_\mathrm{W})$ as the set of all WS-operators on $\mathcal{H}\otimes\mathcal{H}_\mathrm{W}$.
\end{definition}

The condition $[V, H\otimes I_{\mathcal{H}_\mathrm{W}}+I_\mathcal{H}\otimes x]=0$ means the conservation of the total energy of the system of interest and the work storage.
This reflects the first law of thermodynamics in the quantum setup in a strong sense.
	When $V$ is a WS-operator, the WS-extracted work defined in Definition~\ref{def:WSextwork} is also written as
	\begin{align}
		W^\mathrm{WS}(\rho, H, \rho_\mathrm{W}, V)=\mathrm{tr}((\rho\otimes\rho_\mathrm{W}) V^\dag (I_\mathcal{H}\otimes x)V)-\mathrm{tr}((\rho\otimes\rho_\mathrm{W}) (I_\mathcal{H}\otimes x)).
	\end{align}	
The condition $[V, I_{\mathcal{H}_\mathrm{W}}\otimes p]=0$ reflects the invariance under energy translation of the work storage.

When we consider operators that respect group symmetry in the WS setup, we define them as unitary operators that commute with the representation of all elements of the group.
Here we denote by $G$ a group with a unitary representation $F$ acting on $\mathcal{H}$.
On the other hand, when we consider time-reversal symmetry, we suppose that the complex conjugation operator $\mathcal{T}$ acts on the total system including the work storage.
This is a proper way to construct an anti-unitary operator, because if we take the complex conjugation acting only on the system of interest, the  operation becomes neither unitary nor anti-unitary on the total system.

\begin{definition} [WS-$(G, F)$-respecting operator, WS-$\mathcal{T}$-respecting operator]
	An operator $V\in\mathcal{B}(\mathcal{H}\otimes \mathcal{H}_\mathrm{W})$ is a WS-$(G, F)$-respecting (resp. WS-$\mathcal{T}$-respecting) operator, if $V$ satisfies $V\in\mathcal{U}^\mathrm{WS}(\mathcal{H}, \mathcal{H}_\mathrm{W})$ and commutes with $F(G)\otimes I_{\mathcal{H}_\mathrm{W}}$ (resp. $\mathcal{T}$). 
	We define $\mathcal{U}_{G, F}^\mathrm{WS}(\mathcal{H}, \mathcal{H}_\mathrm{W})$ (resp. $\mathcal{U}_\mathcal{T}^\mathrm{WS}(\mathcal{H},\mathcal{H}_\mathrm{W})$) as the set of all WS-$(G, F)$-respecting (resp. WS-$\mathcal{T}$-respecting) operators on $\mathcal{H}\otimes\mathcal{H}_\mathrm{W}$.
\end{definition}

We define the (symmetry-respecting) WS-ergotropy of a state as the maximal extracted work from the state by the action of (symmetry-respecting) WS-operators. 
Note that this value is not necessarily independent of the state of the work storage.

\begin{definition} [WS-$\rho_\mathrm{W}$-ergotropy, WS-$(G, F, \rho_\mathrm{W})$-ergotropy, WS-$(\mathcal{T}, \rho_\mathrm{W})$-ergotropy] \label{def:WSergotropy}
	Let $H\in\mathcal{B}^\mathrm{H}(\mathcal{H})$, $\rho\in\mathcal{B}^+(\mathcal{H})$ and $\rho_\mathrm{W}\in\mathcal{B}^+(\mathcal{H}_\mathrm{W})$. 
	WS-$\rho_\mathrm{W}$-ergotropy $\mathcal{W}_{H, \rho_\mathrm{W}}^\mathrm{WS}(\rho)$ (resp. WS-$(G, F, \rho_\mathrm{W})$-ergotropy $\mathcal{W}_{G, F, H, \rho_\mathrm{W}}^\mathrm{WS}(\rho)$ or WS-$(\mathcal{T}, \rho_\mathrm{W})$-ergotropy $\mathcal{W}_{\mathcal{T}, H, \rho_\mathrm{W}}^\mathrm{WS}(\rho)$) of a state $\rho$ of the system of interest under a Hamiltonian $H$ with a state $\rho_\mathrm{W}$ of the work storage is defined as the maximal extracted work from $\rho$ under the Hamiltonian $H$ with the work storage state $\rho_\mathrm{W}$ by the action of operators in $\mathcal{U}^\mathrm{WS}(\mathcal{H}, \mathcal{H}_\mathrm{W})$ (resp. $\mathcal{U}_{G, F}^\mathrm{WS}(\mathcal{H}, \mathcal{H}_\mathrm{W})$ or $\mathcal{U}_\mathcal{T}^\mathrm{WS}(\mathcal{H}, \mathcal{H}_\mathrm{W})$), i.e.,
	\begin{align}
		&\mathcal{W}_{H, \rho_\mathrm{W}}^\mathrm{WS}(\rho):=\max_{V\in\mathcal{U}^\mathrm{WS}(\mathcal{H}, \mathcal{H}_\mathrm{W})} W^\mathrm{WS}(\rho, H, \rho_\mathrm{W}, V),\\
		&\mathcal{W}_{G, F, H, \rho_\mathrm{W}}^\mathrm{WS}(\rho):=\max_{V\in\mathcal{U}_{G, F}^\mathrm{WS}(\mathcal{H}, \mathcal{H}_\mathrm{W})} W^\mathrm{WS}(\rho, H, \rho_\mathrm{W}, V),\\
		&\mathcal{W}_{\mathcal{T}, H, \rho_\mathrm{W}}^\mathrm{WS}(\rho):=\max_{V\in\mathcal{U}_\mathcal{T}^\mathrm{WS}(\mathcal{H}, \mathcal{H}_\mathrm{W})} W^\mathrm{WS}(\rho, H, \rho_\mathrm{W}, V).
	\end{align}
\end{definition}

We define WS-(symmetry-protected) passive states as the state from which no positive work can be extracted by the action of WS-(symmetry-respecting) operators.

\begin{definition} [WS-$\rho_\mathrm{W}$-passivity, WS-$(G, F, \rho_\mathrm{W})$-passivity, WS-$(\mathcal{T}, \rho_\mathrm{W})$-passivity]
	Let $H\in\mathcal{B}^\mathrm{H}(\mathcal{H})$, $\rho\in\mathcal{B}^+(\mathcal{H})$ and $\rho_\mathrm{W}\in\mathcal{B}^+(\mathcal{H}_\mathrm{W})$. 
	A state $\rho$ is WS-$\rho_\mathrm{W}$-passive (resp. WS-$(G, F, \rho_\mathrm{W})$-passive or WS-$(\mathcal{T}, \rho_\mathrm{W})$-passive) w.r.t. a Hamiltonian $H$ if $\mathcal{W}_H^\mathrm{WS}(\rho, \rho_\mathrm{W})=0$ (resp. $\mathcal{W}_{G, F, H}^\mathrm{WS}(\rho, \rho_\mathrm{W})=0$ or $\mathcal{W}_{\mathcal{T}, H}^\mathrm{WS}(\rho, \rho_\mathrm{W})=0$).
\end{definition}

WS-(symmetry-protected) completely passive states are defined as the states such that no positive work can be extracted from any number of copies of them.
Here we consider the situation where we have a single work storage and multiple copies of the system of interest.

\begin{definition} [WS-$\rho_\mathrm{W}$-complete passivity, WS-$(G, F, \rho_\mathrm{W})$-complete passivity, WS-$(\mathcal{T}, \rho_\mathrm{W})$-complete passivity]
	Let $H\in\mathcal{B}^\mathrm{H}(\mathcal{H})$, $\rho\in\mathcal{B}^+(\mathcal{H})$ and $\rho_\mathrm{W}\in\mathcal{B}^+(\mathcal{H}_\mathrm{W})$.  
	A state $\rho$ is WS-$\rho_\mathrm{W}$-completely passive (resp. WS-$(G, F, \rho_\mathrm{W})$-completely passive or WS-$(\mathcal{T}, \rho_\mathrm{W})$-completely passive) w.r.t. a Hamiltonian $H$ if $\rho^{\otimes N}$ is WS-$\rho_\mathrm{W}$-passive (resp. WS-$(G, F, \rho_\mathrm{W})$-passive or WS-$(\mathcal{T}, \rho_\mathrm{W})$-passive) w.r.t. $H^{(N)}$ for all $N\in\mathbb{N}$.
\end{definition}

\subsection{Main theorems}
In this subsection, we present the main theorems of this section, which are the counterparts of the theorems presented in Sec.~\ref{subsec:main_theorems}.

First, Theorem~\ref{thm:GFHP1} gives the necessary and sufficient condition for group symmetry-protected passivity in the WS setup.

\begin{theorem} \label{thm:GFHP1}
	Let $G$ be a group with a unitary representation $F$ acting on $\mathcal{H}$, $H\in\mathcal{B}^\mathrm{H}(\mathcal{H})$ commute with $F(G)$, $\rho\in\mathcal{B}^+(\mathcal{H})$ and $\rho_\mathrm{W}\in\mathcal{B}^+(\mathcal{H}_\mathrm{W})$.
	Then, a state $\rho$ is WS-$(G, F, \rho_\mathrm{W})$-passive w.r.t. a Hamiltonian $H$, if and only if $\tilde{\rho}_\lambda:=\mathrm{tr}_{\mathcal{R}_\lambda}(\iota_\lambda^\dag \mathcal{D}_{H, \rho_\mathrm{W}}(\rho) \iota_\lambda)$ is passive w.r.t. $H_\lambda$ for all $\lambda\in\Lambda_F$, where $\iota_\lambda$ is defined as Eq.~\eqref{eq:irreducibledecomposition}, the mapping $\mathcal{D}_{H, \rho_\mathrm{W}}: \mathcal{B}^+(\mathcal{H})\to\mathcal{B}^+(\mathcal{H})$ is defined as 
	\begin{align}
		\mathcal{D}_{H, \rho_\mathrm{W}}(\rho):=\int_{-\infty}^\infty dq\ {}_{p}\braket{q|\rho_\mathrm{W}|q}_p e^{-iqH}\rho e^{iqH}, \label{eq:mapDdef}
	\end{align}
	and $\ket{q}_p$ is a momentum eigenstate of the work storage with eigenvalue $q$ satisfying the normalizing condition ${}_p\braket{q|r}_p=\delta(q-r)$.
\end{theorem}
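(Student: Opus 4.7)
The plan is to reduce the WS setting to the ordinary setting by the Kitaev construction~\eqref{eq:unitary_correspondence} and then invoke Theorem~\ref{thm:GFHP}. Two ingredients are needed: (i) the map $\mathcal{C}$ restricts to a bijection $\mathcal{U}_{G,F}(\mathcal{H}) \to \mathcal{U}^\mathrm{WS}_{G,F}(\mathcal{H},\mathcal{H}_\mathrm{W})$ between symmetry-respecting unitaries on the system and symmetry-respecting WS-operators; (ii) the identity $W^\mathrm{WS}(\rho,H,\rho_\mathrm{W},\mathcal{C}(U)) = W(\mathcal{D}_{H,\rho_\mathrm{W}}(\rho),H,U)$, which is the WS/no-WS translation formula of Eq.~\eqref{eq:WS_ergotropy_relation}.

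First, I would invoke the auxiliary Lemma~S12 (announced at the end of Appendix~\ref{sec:passivity}) giving the bijection. The surjectivity of $\mathcal{C}$ onto $\mathcal{U}^\mathrm{WS}(\mathcal{H},\mathcal{H}_\mathrm{W})$ is the result of Ref.~\cite{Skrzypczyk2013}: the two conditions defining WS-operators (energy conservation and translation invariance in the work storage) force any such operator to be diagonal in the momentum basis of the storage with blocks of the form $e^{\mathrm{i}qH}Ue^{-\mathrm{i}qH}$. For the symmetry part, since $[H,F(g)] = 0$, the unitaries $e^{\pm \mathrm{i}qH}$ commute with $F(g)$ for every $q \in \mathbb{R}$, so
\[
[\mathcal{C}(U), F(g)\otimes I_{\mathcal{H}_\mathrm{W}}] = \int_{-\infty}^\infty dq\, e^{\mathrm{i}qH}[U,F(g)]e^{-\mathrm{i}qH}\otimes \ket{q}_p\bra{q}_p.
\]
The left-hand side vanishes iff the integrand vanishes for almost every $q$ (by linear independence of the momentum projectors), which is equivalent to $[U,F(g)] = 0$. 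Thus $\mathcal{C}(U) \in \mathcal{U}^\mathrm{WS}_{G,F}(\mathcal{H},\mathcal{H}_\mathrm{W})$ iff $U \in \mathcal{U}_{G,F}(\mathcal{H})$.

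Second, combining this bijection with the identity (ii) yields
\[
\mathcal{W}^\mathrm{WS}_{G,F,H,\rho_\mathrm{W}}(\rho) = \max_{U \in \mathcal{U}_{G,F}(\mathcal{H})} W^\mathrm{WS}(\rho,H,\rho_\mathrm{W},\mathcal{C}(U)) = \max_{U \in \mathcal{U}_{G,F}(\mathcal{H})} W(\mathcal{D}_{H,\rho_\mathrm{W}}(\rho),H,U) = \mathcal{W}_{G,F,H}(\mathcal{D}_{H,\rho_\mathrm{W}}(\rho)).
\]
Here I use that $\mathcal{D}_{H,\rho_\mathrm{W}}(\rho) \in \mathcal{B}^+(\mathcal{H})$ since it is a convex mixture of unitary conjugates of $\rho$, and that $H$ commutes with $F(G)$ so that $\mathcal{D}_{H,\rho_\mathrm{W}}$ preserves the irreducible decomposition~\eqref{eq:Hilbert_space_decomp1}.

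Finally, I would apply Theorem~\ref{thm:GFHP} to the state $\mathcal{D}_{H,\rho_\mathrm{W}}(\rho)$ and the Hamiltonian $H$: WS-$(G,F,\rho_\mathrm{W})$-passivity of $\rho$ is equivalent, by the chain of equalities above, to $(G,F)$-passivity of $\mathcal{D}_{H,\rho_\mathrm{W}}(\rho)$, which by Theorem~\ref{thm:GFHP} is equivalent to the passivity of $\tilde{\rho}_\lambda = \mathrm{tr}_{\mathcal{R}_\lambda}(\iota_\lambda^\dag \mathcal{D}_{H,\rho_\mathrm{W}}(\rho)\iota_\lambda)$ w.r.t. $H_\lambda$ for every $\lambda \in \Lambda_F$. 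The only nontrivial step is establishing the symmetry part of Lemma~S12; everything else is then a bookkeeping consequence of Theorem~\ref{thm:GFHP} and the translation identity~\eqref{eq:WS_ergotropy_relation}.
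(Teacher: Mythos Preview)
Your proposal is correct and follows essentially the same route as the paper: the paper's proof invokes Corollary~\ref{cor:GFSCFQergotropy} (which is exactly your chain of equalities, built from Lemma~\ref{lem:GFSCFQcorrespondence} and Proposition~\ref{prop:SCFQextractedwork}) to get $\mathcal{W}^\mathrm{WS}_{G,F,H,\rho_\mathrm{W}}(\rho)=\mathcal{W}_{G,F,H}(\mathcal{D}_{H,\rho_\mathrm{W}}(\rho))$, and then applies Theorem~\ref{thm:GFHP} to $\mathcal{D}_{H,\rho_\mathrm{W}}(\rho)$. Your sketch of the symmetry part of the bijection lemma is also the same computation the paper carries out in Lemma~\ref{lem:GFSCFQcorrespondence}.
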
 

	If $\rho_\mathrm{W}$ is the position eigenstate with eigenvalue 0, $\mathcal{D}_{H, \rho_\mathrm{W}}$ is the dephasing mapping in the energy eigenbasis, i.e., $\mathcal{D}_{H, \rho_\mathrm{W}}(\rho)=\sum_{E} \Pi_E \rho \Pi_E$, where $\Pi_E$ is the projection operator onto the energy eigenspace of $E$.
	On the other hand, if $\rho_\mathrm{W}$ is the momentum eigenstate with eigenvalue 0, $\mathcal{D}_{H, \rho_\mathrm{W}}$ is the identity mapping.

	Theorem~\ref{thm:THP1} gives the necessary and sufficient condition for time-reversal symmetry-protected passivity in the WS setup.

\begin{theorem} \label{thm:THP1}
	Let $H\in\mathcal{B}^\mathrm{H}(\mathcal{H})$ commute with $\mathcal{T}$, $\rho\in\mathcal{B}^+(\mathcal{H})$ and $\rho_\mathrm{W}\in\mathcal{B}^+(\mathcal{H}_\mathrm{W})$.
	Then, a state $\rho$ is WS-$(\mathcal{T}, \rho_\mathrm{W})$-passive w.r.t. a Hamiltonian $H$, if and only if $\mathcal{S}_\mathcal{T}(\mathcal{D}_{H, \rho_\mathrm{W}}(\rho))$ is passive w.r.t. $H$, where $\mathcal{S}_\mathcal{T}$ and $\mathcal{D}_{H, \rho_\mathrm{W}}$ are respectively defined as Eqs.~\eqref{eq:Tsymmetrizer} and \eqref{eq:mapDdef}. 
\end{theorem}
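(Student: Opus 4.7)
The plan is to reduce Theorem~\ref{thm:THP1} to the no-work-storage statement in Proposition~\ref{prop:Twork} by exploiting the Kitaev correspondence $U \mapsto \mathcal{C}(U)$ already introduced in Sec.~\ref{subsec:qw_no_symmetry}. Recall that every WS-operator $V$ can be written as $\mathcal{C}(U)$ for some $U\in\mathcal{U}(\mathcal{H})$, and the WS-extracted work satisfies $W^{\mathrm{WS}}(\rho,H,\rho_{\mathrm W},\mathcal{C}(U)) = W(\mathcal{D}_{H,\rho_{\mathrm W}}(\rho),H,U)$ by Eq.~\eqref{eq:WS_ergotropy_relation}. Once I show that this correspondence restricts to a bijection between $\mathcal{U}^{\mathrm{WS}}_{\mathcal{T}}(\mathcal{H},\mathcal{H}_{\mathrm W})$ and $\mathcal{U}_{\mathcal{T}}(\mathcal{H})$, I can directly identify $\mathcal{W}^{\mathrm{WS}}_{\mathcal{T},H,\rho_{\mathrm W}}(\rho) = \mathcal{W}_{\mathcal{T},H}(\mathcal{D}_{H,\rho_{\mathrm W}}(\rho))$ and then apply Proposition~\ref{prop:Twork} to rewrite this as $\mathcal{W}_{H}(\mathcal{S}_{\mathcal{T}}(\mathcal{D}_{H,\rho_{\mathrm W}}(\rho)))$, yielding the claim.

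The key technical step is therefore the $\mathcal{T}$-equivariance of the Kitaev construction. I will take $\mathcal{T}$ on the full system to act as complex conjugation in a basis in which the position operator $x$ is real, so that $\mathcal{T} x \mathcal{T}^{-1} = x$, $\mathcal{T} p \mathcal{T}^{-1} = -p$, and $\mathcal{T}|q\rangle_{p} = |-q\rangle_{p}$. Using $[H,\mathcal{T}]=0$, so that $\mathcal{T} e^{\mathrm{i}qH}\mathcal{T}^{-1} = e^{-\mathrm{i}qH}$, I would compute
\begin{align}
\mathcal{T}\,\mathcal{C}(U)\,\mathcal{T}^{-1}
&= \int_{-\infty}^{\infty} dq\, \bigl(e^{-\mathrm{i}qH}\mathcal{T} U \mathcal{T}^{-1} e^{\mathrm{i}qH}\bigr)\otimes |-q\rangle_p{}_p\langle -q| \nonumber\\
&= \int_{-\infty}^{\infty} dq\, \bigl(e^{\mathrm{i}qH}\mathcal{T} U \mathcal{T}^{-1} e^{-\mathrm{i}qH}\bigr)\otimes |q\rangle_p{}_p\langle q|
= \mathcal{C}\bigl(\mathcal{T} U \mathcal{T}^{-1}\bigr),
\end{align}
from which it follows that $\mathcal{C}(U)$ is $\mathcal{T}$-respecting if and only if $U$ is. Since the conditions defining a WS-operator (energy conservation and energy-translation invariance) are manifestly preserved, this confirms the bijection; I anticipate packaging this as an analogue of the Lemma~S12 alluded to in Appendix~\ref{sec:passivity}.

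With the bijection in hand, I would chain the identities
\begin{equation}
\mathcal{W}^{\mathrm{WS}}_{\mathcal{T},H,\rho_{\mathrm W}}(\rho)
= \max_{U\in\mathcal{U}_{\mathcal{T}}(\mathcal{H})} W^{\mathrm{WS}}\!\bigl(\rho,H,\rho_{\mathrm W},\mathcal{C}(U)\bigr)
= \max_{U\in\mathcal{U}_{\mathcal{T}}(\mathcal{H})} W\bigl(\mathcal{D}_{H,\rho_{\mathrm W}}(\rho),H,U\bigr)
= \mathcal{W}_{\mathcal{T},H}\bigl(\mathcal{D}_{H,\rho_{\mathrm W}}(\rho)\bigr),
\end{equation}
where the middle equality uses Eq.~\eqref{eq:WS_ergotropy_relation} (which does not depend on $\mathcal{T}$-symmetry and so transfers verbatim). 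Proposition~\ref{prop:Twork} then equates the right-hand side with $\mathcal{W}_{H}(\mathcal{S}_{\mathcal{T}}(\mathcal{D}_{H,\rho_{\mathrm W}}(\rho)))$, and vanishing of this ergotropy is by definition passivity of $\mathcal{S}_{\mathcal{T}}(\mathcal{D}_{H,\rho_{\mathrm W}}(\rho))$ w.r.t.\ $H$, giving the theorem.

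The main obstacle I expect is the careful treatment of the anti-unitary $\mathcal{T}$ on the continuous work storage: one must fix a convention so that $\mathcal{T}$ preserves the Hamiltonian $x$ of the work storage (hence commutes with the energy-conservation constraint) while reversing momentum eigenvalues, and verify that the resulting action correctly intertwines with the integral representation of $\mathcal{C}(U)$. A secondary subtlety is that $\mathcal{D}_{H,\rho_{\mathrm W}}$ need not commute with $\mathcal{S}_{\mathcal{T}}$ in general, so the symmetrization must be applied after the dephasing-like map; the order matters, and the proof's geometry dictates that the correct object is $\mathcal{S}_{\mathcal{T}}(\mathcal{D}_{H,\rho_{\mathrm W}}(\rho))$ rather than $\mathcal{D}_{H,\rho_{\mathrm W}}(\mathcal{S}_{\mathcal{T}}(\rho))$, although the latter coincidentally agrees when $[\mathcal{T},H]=0$, which should be checked en route.
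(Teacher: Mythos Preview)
Your proposal is correct and follows essentially the same route as the paper: the paper packages your $\mathcal{T}$-equivariance computation as Lemma~\ref{lem:TSCFQcorrespondence}, deduces $\mathcal{W}^{\mathrm{WS}}_{\mathcal{T},H,\rho_{\mathrm W}}(\rho)=\mathcal{W}_{\mathcal{T},H}(\mathcal{D}_{H,\rho_{\mathrm W}}(\rho))$ as Corollary~\ref{cor:TSCFQergotropy}, and then invokes Theorem~\ref{thm:THP} (equivalently Proposition~\ref{prop:Twork}) exactly as you do.

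One small correction to your closing aside: the claim that $\mathcal{S}_{\mathcal{T}}(\mathcal{D}_{H,\rho_{\mathrm W}}(\rho))$ and $\mathcal{D}_{H,\rho_{\mathrm W}}(\mathcal{S}_{\mathcal{T}}(\rho))$ agree whenever $[\mathcal{T},H]=0$ is not true in general, since the substitution $q\mapsto -q$ in the $\mathcal{T}$-conjugated term picks up ${}_p\langle -q|\rho_{\mathrm W}|-q\rangle_p$, and $\rho_{\mathrm W}$ need not have an even momentum distribution; fortunately this remark plays no role in the proof.
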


	In contrast to passivity, the condition for complete passivity in the WS setup is the same as that in the setup without the work storage.
	We stress that it is independent of the state of the work storage.
	Theorem~\ref{thm:GFHCP1} states that completely passive states in the WS setup protected by Lie group symmetry are only generalized Gibbs ensembles.

\begin{theorem} \label{thm:GFHCP1}
	Let $G$ be a connected compact Lie group with a faithful unitary representation $F$ acting on $\mathcal{H}$, $\mathfrak{g}$ be the Lie algebra of $G$ with the associated representation $f$, $H\in\mathcal{B}^\mathrm{H}(\mathcal{H})$ commute with $F(G)$ and be not $(G, F)$-trivial, and $\rho\in\mathcal{B}^{++}(\mathcal{H})$, $\rho_\mathrm{W}\in\mathcal{B}^+(\mathcal{H}_\mathrm{W})$ satisfy $\mathrm{tr}(\rho)=\mathrm{tr}(\rho_\mathrm{W})=1$.
	Then, for any initial state $\rho_W$ of the work storage, a state $\rho$ is WS-$(G, F, \rho_\mathrm{W})$-completely passive w.r.t. a Hamiltonian $H$, if and only if $\rho=\frac{1}{Z}e^{-\beta H-f(X)}$ with some $\beta\in[0, \infty)$ and $X\in\mathfrak{g}$, where $Z:=\mathrm{tr}(e^{-\beta H-f(X)})$.	
\end{theorem}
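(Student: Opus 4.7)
The plan is to reduce the theorem to the already-proven no-work-storage case (Theorem~\ref{thm:GFHCP}) via the Kitaev construction $\mathcal{C}$ of Eq.~\eqref{eq:unitary_correspondence}, using Proposition~\ref{prop:extracted_work_correspondence} and Proposition~\ref{prop:GFworkextop} as the bridges between the two settings. The preliminary observation driving the reduction is that $\mathcal{C}$ restricts to a bijection between $\mathcal{U}_{G, F^{\otimes N}}(\mathcal{H}^{\otimes N})$ and $\mathcal{U}^{\mathrm{WS}}_{G, F^{\otimes N}}(\mathcal{H}^{\otimes N}, \mathcal{H}_\mathrm{W})$: given $V$ satisfying Conditions I)--II), Ref.~\cite{Skrzypczyk2013} furnishes the unique $U$ with $V=\mathcal{C}(U)$, and Condition III), namely $[V, F^{\otimes N}(g)\otimes I]=0$, is equivalent to $[U, F^{\otimes N}(g)]=0$, because $[H, F(g)]=0$ makes $e^{\mathrm{i}qH^{(N)}}$ commute with $F^{\otimes N}(g)$ for every $q\in\mathbb{R}$ (this is Lemma~S12).

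For the \emph{if} direction, assume $\rho=e^{-\beta H-f(X)}/Z$. Since $[H, F(g)]=0$ for every $g\in G$, differentiation at the identity gives $[H, f(X)]=0$, so $[\rho, H]=0$; consequently $e^{-\mathrm{i}qH^{(N)}}\rho^{\otimes N}e^{\mathrm{i}qH^{(N)}}=\rho^{\otimes N}$ for every $q$, and together with $\mathrm{tr}(\rho_\mathrm{W})=1$ this yields $\mathcal{D}_{\rho_\mathrm{W}}(\rho^{\otimes N})=\rho^{\otimes N}$. Combining Eq.~\eqref{eq:WS_ergotropy_relation} with the bijection above then produces $\mathcal{W}^{\mathrm{WS}}_{G, F^{\otimes N}, H^{(N)}, \rho_\mathrm{W}}(\rho^{\otimes N})=\mathcal{W}_{G, F^{\otimes N}, H^{(N)}}(\rho^{\otimes N})$, and the right-hand side vanishes for every $N$ by Theorem~\ref{thm:GFHCP}, since $\rho^{\otimes N}$ is itself a GGE associated with $(H^{(N)}, f^{(N)}(X))$.

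For the \emph{only if} direction, suppose $\rho$ is not a GGE. Proposition~\ref{prop:GFworkextop} provides some $N\in\mathbb{N}$ and $U\in\mathcal{U}_{G, F^{\otimes N}}(\mathcal{H}^{\otimes N})$ enjoying \emph{both} $[U^\dagger H^{(N)}U, H^{(N)}]=0$ and $W(\rho^{\otimes N}, H^{(N)}, U)>0$. Setting $V:=\mathcal{C}(U)$, the bijection places $V$ in $\mathcal{U}^{\mathrm{WS}}_{G, F^{\otimes N}}(\mathcal{H}^{\otimes N}, \mathcal{H}_\mathrm{W})$, and the energy-permuting property of $U$ is exactly the hypothesis of Proposition~\ref{prop:extracted_work_correspondence}, whence $W^{\mathrm{WS}}(\rho^{\otimes N}, H^{(N)}, \rho_\mathrm{W}, V)=W(\rho^{\otimes N}, H^{(N)}, U)>0$ holds for \emph{every} choice of $\rho_\mathrm{W}$. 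Hence $\rho$ fails to be WS-$(G, F, \rho_\mathrm{W})$-completely passive, independently of the initial state of the work storage.

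The main difficulty is upstream of the WS reduction itself: one needs the witness $U$ to be simultaneously symmetry-respecting and energy-permuting, and this is precisely the reason Proposition~\ref{prop:GFworkextop} is formulated as a strengthening of the converse direction of Theorem~\ref{thm:GFHCP} rather than invoked as a corollary. Its proof, carried out via the operator quartets $\{A_{ij}\}$ and $\{B_{ij}\}$ supplied by Propositions~\ref{prop:workextop1} and~\ref{prop:virtualtemperature}, is explicitly engineered so that $[U^\dagger H^{(N)}U, H^{(N)}]=0$ holds alongside the symmetry constraint. Once this stronger witness is available, the $\rho_\mathrm{W}$-independence of the characterization comes for free, because an energy-permuting $U$ cannot distinguish between different momentum-diagonal distributions of the work storage once it is lifted through $\mathcal{C}$.
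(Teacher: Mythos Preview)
Your proof is correct and follows essentially the same route as the paper: the \emph{only if} direction is identical (Proposition~\ref{prop:GFworkextop} supplies a symmetry-respecting, energy-permuting witness $U$, Lemma~S12/\ref{lem:GFSCFQcorrespondence} lifts it to $\mathcal{C}(U)\in\mathcal{U}^{\mathrm{WS}}_{G,F^{\otimes N}}$, and Proposition~\ref{prop:extracted_work_correspondence}/\ref{prop:stateswapunitary} transfers the positive work). The one minor difference is in the \emph{if} direction: the paper invokes the general inequality $\mathcal{W}^{\mathrm{WS}}_{G,F^{\otimes N},H^{(N)},\rho_\mathrm{W}}(\rho^{\otimes N})\leq\mathcal{W}_{G,F^{\otimes N},H^{(N)}}(\rho^{\otimes N})$ (Proposition~\ref{prop:GFSCFQergotropyinequality}), whereas you exploit the special feature $[\rho,H]=0$ of the GGE to get $\mathcal{D}_{\rho_\mathrm{W}}(\rho^{\otimes N})=\rho^{\otimes N}$ and hence equality via Corollary~\ref{cor:GFSCFQergotropy}; both arguments are valid and of comparable length.
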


Theorem~\ref{thm:finiteGFHCP1} states that completely passive states in the WS setup protected by finite cyclic group symmetry or dihedral group symmetry are only Gibbs ensembles.

\begin{theorem} \label{thm:finiteGFHCP1}
	Let $G$ be a finite cyclic group or a dihedral group with a unitary representation $F$ acting on $\mathcal{H}$, $H\in\mathcal{B}^\mathrm{H}(\mathcal{H})$ commute with $F(G)$ and be not trivial, and $\rho\in\mathcal{B}^{++}(\mathcal{H})$, $\rho_\mathrm{W}\in\mathcal{B}^+(\mathcal{H}_\mathrm{W})$ satisfy $\mathrm{tr}(\rho)=\mathrm{tr}(\rho_\mathrm{W})=1$.
	Then, for any initial state $\rho_W$ of the work storage, a state $\rho$ is WS-$(G, F, \rho_\mathrm{W})$-completely passive w.r.t. a Hamiltonian $H$, if and only if $\rho$ can be written as $\rho=\frac{1}{Z}e^{-\beta H}$ with some $\beta\in[0, \infty)$, where $Z:=\mathrm{tr}(e^{-\beta H})$.
\end{theorem}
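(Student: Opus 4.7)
The plan is to adapt the argument that established Theorem~\ref{thm:GFHCP1} in the Lie-group case to the finite-group setting, relying on the finite-group analog of work extraction from non-Gibbs states (Proposition~\ref{prop:finiteGFworkextop}) that is already available, together with the Kitaev construction machinery used in the main text.

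First I would dispatch the easy \emph{if} direction. If $\rho=\tfrac{1}{Z}e^{-\beta H}$, then for every $N\in\mathbb{N}$, every state $\rho_\mathrm{W}$, and every $V\in\mathcal{U}_{G,F^{\otimes N}}^\mathrm{WS}(\mathcal{H}^{\otimes N},\mathcal{H}_\mathrm{W})$, the WS-extracted work is bounded above by the ordinary ergotropy $\mathcal{W}_{H^{(N)}}(\rho^{\otimes N})$ by the chain of inequalities underlying Eqs.~\eqref{eq:ergotropy} and \eqref{eq:ergotropy_inequalty} (those estimates only used that $V$ is a WS-operator, not the symmetry constraint). Since Gibbs ensembles are completely passive in the ordinary sense, this upper bound vanishes, so $\rho$ is WS-$(G,F,\rho_\mathrm{W})$-completely passive independently of $\rho_\mathrm{W}$.

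For the nontrivial \emph{only if} direction I would argue by contrapositive. Suppose $\rho\neq\tfrac{1}{Z}e^{-\beta H}$ for any $\beta\in[0,\infty)$. By Proposition~\ref{prop:finiteGFworkextop} there exist $N\in\mathbb{N}$ and $U\in\mathcal{U}_{G,F^{\otimes N}}(\mathcal{H}^{\otimes N})$ with $[U^\dagger H^{(N)}U,H^{(N)}]=0$ and $W(\rho^{\otimes N},H^{(N)},U)>0$. I would lift $U$ to the WS setup via the Kitaev construction of Eq.~\eqref{eq:unitary_correspondence}, obtaining $\mathcal{C}(U)$. By construction $\mathcal{C}(U)$ satisfies the two defining WS-operator conditions (total-energy conservation and momentum invariance of the work storage). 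Moreover, since both $H^{(N)}$ and $U$ commute with $F^{\otimes N}(G)$, each integrand $e^{\mathrm{i}qH^{(N)}}Ue^{-\mathrm{i}qH^{(N)}}$ commutes with $F^{\otimes N}(G)$, so $\mathcal{C}(U)$ commutes with $F^{\otimes N}(G)\otimes I_{\mathcal{H}_\mathrm{W}}$ and is therefore WS-$(G,F^{\otimes N})$-respecting. Finally, applying Proposition~\ref{prop:extracted_work_correspondence}, which is tailor-made for unitaries obeying $[U^\dagger H U,H]=0$, gives $W^\mathrm{WS}(\rho^{\otimes N},H^{(N)},\rho_\mathrm{W},\mathcal{C}(U))=W(\rho^{\otimes N},H^{(N)},U)>0$ for every $\rho_\mathrm{W}$, contradicting WS-$(G,F,\rho_\mathrm{W})$-complete passivity.

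The main obstacle is essentially bookkeeping rather than new content: all the analytic work sits inside Proposition~\ref{prop:finiteGFworkextop} (which in turn rests on Propositions~\ref{prop:workextop1} and \ref{prop:virtualtemperature}), and the WS machinery transplants verbatim from the proof of Theorem~\ref{thm:GFHCP1}. The only point that requires momentary care is to apply Proposition~\ref{prop:extracted_work_correspondence}, stated there for a single copy of the system, to the $N$-fold tensor product; this is immediate by reinterpreting $\mathcal{H}^{\otimes N}$ as the new ``system of interest'' with Hamiltonian $H^{(N)}$, since no restriction on dimension enters that proposition. Thus the theorem follows structurally once the finite-group extraction proposition is in hand.
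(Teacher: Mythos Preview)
Your proposal is correct and follows essentially the same route as the paper's proof: the \emph{only if} direction invokes Proposition~\ref{prop:finiteGFworkextop} to produce a symmetry-respecting $U$ with $[U^\dagger H^{(N)}U,H^{(N)}]=0$, lifts it via the Kitaev construction (the paper packages your direct commutativity check as Lemma~\ref{lem:GFSCFQcorrespondence}), and applies Proposition~\ref{prop:stateswapunitary} (the supplemental restatement of Proposition~\ref{prop:extracted_work_correspondence}); the \emph{if} direction in the paper bounds the WS-$(G,F)$-ergotropy by the $(G,F)$-ergotropy via Proposition~\ref{prop:GFSCFQergotropyinequality} and then invokes Theorem~\ref{thm:finiteGFHCP}, which is equivalent to your slightly coarser bound by the unconstrained ergotropy.
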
 
	
Theorem~\ref{thm:THCP1} states that completely passive states in the WS setup protected by time-reversal symmetry are also only Gibbs ensembles.

\begin{theorem} \label{thm:THCP1}
	Let $H\in\mathcal{B}^\mathrm{H}(\mathcal{H})$ commute with $\mathcal{T}$ and be not trivial, and $\rho\in\mathcal{B}^{++}(\mathcal{H})$, $\rho_\mathrm{W}\in\mathcal{B}^+(\mathcal{H}_\mathrm{W})$ satisfy $\mathrm{tr}(\rho)=\mathrm{tr}(\rho_\mathrm{W})=1$.
	Then, for any initial state $\rho_W$ of the work storage, a state $\rho$ is WS-$(\mathcal{T}, \rho_\mathrm{W})$-completely passive w.r.t. a Hamiltonian $H$, if and only if $\rho$ can be written as $\rho=\frac{1}{Z}e^{-\beta H}$ with some $\beta\in[0, \infty)$, where $Z:=\mathrm{tr}(e^{-\beta H})$.
\end{theorem}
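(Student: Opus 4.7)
The plan is to adapt the proof of Theorem~\ref{thm:WS_GFHCP} from the main text essentially verbatim, replacing the group symmetry ingredients by their time-reversal analogs. Specifically, the proof has the same two-step structure as the proof of Theorem~\ref{thm:GFHCP1} and Theorem~\ref{thm:finiteGFHCP1}: the ``if'' direction is handled by Proposition~\ref{prop:WS_complete_passivity} (Gibbs ensembles are WS-completely passive even in the absence of any symmetry constraint, hence trivially under the smaller class of $\mathcal{T}$-respecting WS-operators), while the ``only if'' direction reduces the WS-setup question to the already-solved no-work-storage question by means of the Kitaev construction $\mathcal{C}$ of Eq.~\eqref{eq:unitary_correspondence}.

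For the ``only if'' direction, I will suppose that $\rho$ is not a Gibbs ensemble. Then Proposition~\ref{prop:Tworkextop} supplies some $N\in\mathbb{N}$ and $U\in\mathcal{U}_\mathcal{T}(\mathcal{H}^{\otimes N})$ satisfying the crucial side condition $[U^\dag H^{(N)} U, H^{(N)}]=0$ together with $W(\rho^{\otimes N}, H^{(N)}, U)>0$. I will then lift $U$ to the composite system via $V:=\mathcal{C}(U)$. By the results recalled around Eq.~\eqref{eq:unitary_correspondence}, $V$ automatically lies in $\mathcal{U}^\mathrm{WS}(\mathcal{H}^{\otimes N},\mathcal{H}_\mathrm{W})$ (Conditions I and II). The side condition $[U^\dag H^{(N)} U, H^{(N)}]=0$ lets Proposition~\ref{prop:extracted_work_correspondence} apply, yielding
\begin{align}
W^\mathrm{WS}(\rho^{\otimes N},H^{(N)},\rho_\mathrm{W},V)=W(\rho^{\otimes N},H^{(N)},U)>0
\end{align}
independently of the state $\rho_\mathrm{W}$ of the work storage, which contradicts WS-$(\mathcal{T},\rho_\mathrm{W})$-complete passivity.

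The main technical step—and the step I expect to be the only nontrivial one—is verifying that $V=\mathcal{C}(U)$ is WS-$\mathcal{T}$-respecting, i.e.\ that it commutes with the total time-reversal $\mathcal{T}_\mathrm{tot}=\mathcal{T}\otimes\mathcal{T}_\mathrm{W}$, where $\mathcal{T}_\mathrm{W}$ is complex conjugation in the position basis of the work storage. The computation is anti-linear: since $[\mathcal{T},H]=0$, one has $\mathcal{T}e^{\mathrm{i}qH}\mathcal{T}^{-1}=e^{-\mathrm{i}qH}$; since $U\in\mathcal{U}_\mathcal{T}(\mathcal{H}^{\otimes N})$, one has $\mathcal{T}U\mathcal{T}^{-1}=U$; and $\mathcal{T}_\mathrm{W}\ket{q}_p=\ket{-q}_p$ because momentum eigenstates are complex conjugates of each other in the position basis. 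Substituting into Eq.~\eqref{eq:unitary_correspondence} and performing the change of variables $q\mapsto-q$ in the integral then gives $\mathcal{T}_\mathrm{tot}\mathcal{C}(U)\mathcal{T}_\mathrm{tot}^{-1}=\mathcal{C}(U)$, as required. (This is the time-reversal analog of the lemma referenced in Appendix~\ref{sec:passivity} establishing that $\mathcal{C}$ gives a one-to-one correspondence between symmetry-respecting unitaries in the two setups, which we may anticipate being packaged as a separate lemma in this Supplemental Material.)

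In summary, the structure is: apply Proposition~\ref{prop:Tworkextop} to get a good $U$ on the system alone; apply the Kitaev lift $V=\mathcal{C}(U)$; verify $V$ is WS-$\mathcal{T}$-respecting (the only step requiring actual computation); conclude via Proposition~\ref{prop:extracted_work_correspondence}. The converse is immediate from Proposition~\ref{prop:WS_complete_passivity}. No new ingredient beyond those already assembled in Sec.~\ref{sec:work_storage_main} of the main text and Propositions~\ref{prop:Tworkextop}--\ref{prop:extracted_work_correspondence} is required.
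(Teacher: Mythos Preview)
Your proposal is correct and follows essentially the same route as the paper's proof: for the ``only if'' direction the paper invokes Proposition~\ref{prop:Tworkextop}, lifts via $\mathcal{C}$, checks that $\mathcal{C}(U)$ is WS-$\mathcal{T}$-respecting (this is packaged as Lemma~\ref{lem:TSCFQcorrespondence}, exactly as you anticipated, with the same $q\mapsto -q$ computation), and concludes via Proposition~\ref{prop:stateswapunitary}. For the ``if'' direction the paper uses Theorem~\ref{thm:THCP} together with the ergotropy inequality of Proposition~\ref{prop:TSCFQergotropyinequality} rather than citing Proposition~\ref{prop:WS_complete_passivity} directly, but your route via the inclusion $\mathcal{U}_\mathcal{T}^\mathrm{WS}\subset\mathcal{U}^\mathrm{WS}$ is equally valid and arguably more direct.
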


\subsection{Proofs of Theorems~\ref{thm:GFHP1}~and~\ref{thm:THP1}}

	On the basis of Ref.~\cite{Kitaev2004}, we introduce a mapping that is useful for the investigation of work extraction in the WS setup, which is the Kitaev construction discussed in Sec.~IIIB of the main text~\cite{Main} (Eq.~(11)).
	From Lemma~\ref{lem:SCFQcorrespondence}, we can define a mapping $\mathcal{C}$: $\mathcal{U}(\mathcal{H})\to\mathcal{U}^\mathrm{WS}(\mathcal{H}, \mathcal{H}_\mathrm{W})$ as
	\begin{align}
		\mathcal{C}(U):=\int_{-\infty}^\infty dq\ e^{\mathrm{i}qH}Ue^{-\mathrm{i}qH}\otimes\ket{q}_{p\ p}\bra{q}. \label{eq:SCFQcorrespondence}
	\end{align}
	We note that $\mathcal{C}(U)$ is equivalent to the one that appears in Ref.~\cite{Malabarba2015, Skrzypczyk2013}.
	Ref.~\cite{Skrzypczyk2013} states that $\mathcal{C}$ is bijective, but we prove it in Lemma~\ref{lem:SCFQbijection} for the sake of self-containedness of the proof.
	Therefore, it is sufficient to consider operators written as $\mathcal{C}(U)$ with some $U\in\mathcal{U}(\mathcal{H})$ in order to consider all the WS-operators.

	Proposition~\ref{prop:SCFQextractedwork} shows the correspondence given by $\mathcal{C}$ between the extracted work in the setups with and without the work storage.  
	Concretely, the extracted work from a state $\rho$ of the system of interest by the action of $\mathcal{C}(U)$ in the WS setup equals the extracted work from the state $\mathcal{D}_{H, \rho_\mathrm{W}}(\rho)$ by the action of $U$ in the setup without the work storage.

\begin{proposition} \label{prop:SCFQextractedwork}
	Let $H\in\mathcal{B}^\mathrm{H}(\mathcal{H})$, $U\in\mathcal{U}(\mathcal{H})$, $\rho\in\mathcal{B}^+(\mathcal{H})$ and $\rho_\mathrm{W}\in\mathcal{B}^+(\mathcal{H}_\mathrm{W})$.
	Then, $W^\mathrm{WS}(\rho, H, \rho_\mathrm{W}, \mathcal{C}(U))=W(\mathcal{D}_{H, \rho_\mathrm{W}}(\rho), H, U)$, where $\mathcal{C}$ and $\mathcal{D}_{H, \rho_\mathrm{W}}$ are respectively defined as Eqs.~\eqref{eq:SCFQcorrespondence} and \eqref{eq:mapDdef}.
\end{proposition}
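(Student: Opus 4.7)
The plan is a direct computation: expand $\mathcal{C}(U)^\dag(H\otimes I_{\mathcal{H}_\mathrm{W}})\mathcal{C}(U)$ using the definition \eqref{eq:SCFQcorrespondence}, then take the trace against $\rho\otimes\rho_\mathrm{W}$ and recognize the mapping $\mathcal{D}_{H,\rho_\mathrm{W}}$.

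First I would write
\begin{align*}
\mathcal{C}(U)^\dag(H\otimes I_{\mathcal{H}_\mathrm{W}})\mathcal{C}(U)
&=\iint dr\,dq\;e^{\mathrm{i}rH}U^\dag e^{-\mathrm{i}rH}He^{\mathrm{i}qH}Ue^{-\mathrm{i}qH}\otimes\ket{r}_{p\,p}\!\braket{r|q}_{p\,p}\!\bra{q} \\
&=\int dq\;e^{\mathrm{i}qH}U^\dag e^{-\mathrm{i}qH}He^{\mathrm{i}qH}Ue^{-\mathrm{i}qH}\otimes\ket{q}_{p\,p}\!\bra{q},
\end{align*}
using ${}_p\braket{r|q}_p=\delta(r-q)$. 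Since $H$ commutes with $e^{\pm\mathrm{i}qH}$, the middle factor $e^{-\mathrm{i}qH}He^{\mathrm{i}qH}$ collapses to $H$, giving
\begin{equation*}
\mathcal{C}(U)^\dag(H\otimes I_{\mathcal{H}_\mathrm{W}})\mathcal{C}(U)=\int dq\;e^{\mathrm{i}qH}U^\dag HUe^{-\mathrm{i}qH}\otimes\ket{q}_{p\,p}\!\bra{q}.
\end{equation*}

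Next I would take the expectation value against $\rho\otimes\rho_\mathrm{W}$. The work-storage trace yields the weight ${}_p\!\braket{q|\rho_\mathrm{W}|q}_p$, and by cyclicity of the trace on $\mathcal{H}$,
\begin{equation*}
\mathrm{tr}\!\left((\rho\otimes\rho_\mathrm{W})\mathcal{C}(U)^\dag(H\otimes I_{\mathcal{H}_\mathrm{W}})\mathcal{C}(U)\right)
=\int dq\;{}_p\!\braket{q|\rho_\mathrm{W}|q}_p\,\mathrm{tr}\!\left(e^{-\mathrm{i}qH}\rho e^{\mathrm{i}qH}\,U^\dag HU\right)
=\mathrm{tr}\!\left(\mathcal{D}_{H,\rho_\mathrm{W}}(\rho)\,U^\dag HU\right).
\end{equation*}
For the other term, $\mathrm{tr}((\rho\otimes\rho_\mathrm{W})(H\otimes I_{\mathcal{H}_\mathrm{W}}))=\mathrm{tr}(\rho_\mathrm{W})\,\mathrm{tr}(\rho H)$, while $\mathrm{tr}(\mathcal{D}_{H,\rho_\mathrm{W}}(\rho)\,H)=\int dq\,{}_p\!\braket{q|\rho_\mathrm{W}|q}_p\,\mathrm{tr}(\rho H)=\mathrm{tr}(\rho_\mathrm{W})\,\mathrm{tr}(\rho H)$ because $[H,e^{\mathrm{i}qH}]=0$. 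Subtracting the two terms gives exactly $W(\mathcal{D}_{H,\rho_\mathrm{W}}(\rho),H,U)$.

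There is no real obstacle here beyond keeping track of the generalized momentum eigenkets; the well-definedness of $\mathcal{C}(U)$ as a unitary in $\mathcal{U}^{\mathrm{WS}}(\mathcal{H},\mathcal{H}_\mathrm{W})$ is already handled by the referenced Lemma, so the integrals and the delta-function collapse are legitimate. The whole argument is a two-line reduction once the commutation $[H,e^{\mathrm{i}qH}]=0$ is used to eliminate the similarity transforms wrapping $H$.
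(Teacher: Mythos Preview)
Your proof is correct and follows essentially the same route as the paper: expand $\mathcal{C}(U)^\dag(H\otimes I)\mathcal{C}(U)$ via the defining integral, collapse the double integral with ${}_p\braket{r|q}_p=\delta(r-q)$, and use cyclicity to move $e^{\mp\mathrm{i}qH}$ onto $\rho$ to recognize $\mathcal{D}_{H,\rho_\mathrm{W}}(\rho)$. The only cosmetic difference is that the paper handles $H-U^\dag HU$ in a single stroke rather than treating the two terms separately.
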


\begin{proof}
	Since
	\begin{align}
		&\mathcal{C}(U)^\dagger(H\otimes I)\mathcal{C}(U) \nonumber\\
		=&\left(\int_{-\infty}^\infty dq\ e^{\mathrm{i}qH}U e^{-\mathrm{i}qH}\otimes\ket{q}_{p\ p}\bra{q}\right)^\dag(H\otimes I) \left(\int_{-\infty}^\infty dr\ e^{\mathrm{i}rH}Ue^{-\mathrm{i}rH}\ket{r}_{p\ p}\bra{r}\right) \nonumber\\
		=&\int_{-\infty}^\infty dq\ \int_{-\infty}^\infty dr\ e^{\mathrm{i}qH}U^\dagger e^{-\mathrm{i}qH}He^{\mathrm{i}rH}Ue^{-\mathrm{i}rH}\otimes\ket{q}_{p\ p}\braket{q|r}_{p\ p}\bra{r} \nonumber\\
		=&\int_{-\infty}^\infty dq\ e^{\mathrm{i}qH}U^\dagger HUe^{-\mathrm{i}qH}\otimes\ket{q}_{p\ p}\bra{q}, \label{eq:FQHeisenbergHamiltonian}
	\end{align}
	we get
	\begin{align}
		W^\mathrm{WS}(\rho, H, \rho_\mathrm{W}, \mathcal{C}(U))=&\mathrm{tr}((\rho\otimes\rho_\mathrm{W})(H\otimes I-\mathcal{C}(U)^\dagger(H\otimes I)\mathcal{C}(U)))
 \nonumber\\
		=&\mathrm{tr}\left(\int_{-\infty}^\infty dq\ \rho e^{\mathrm{i}qH}(H-U^\dagger HU)e^{-\mathrm{i}qH}\otimes\rho_\mathrm{W}\ket{q}_{p\ p}\bra{q}\right) \nonumber\\
		=&\int_{-\infty}^\infty dq\ {}_p\braket{q|\rho_\mathrm{W}|q}_p \mathrm{tr}(e^{-\mathrm{i}qH}\rho e^{-\mathrm{i}qH}(H-U^\dagger HU)) \nonumber\\
		=&\mathrm{tr}(\mathcal{D}_{H, \rho_\mathrm{W}}(\rho)(H-U^\dagger HU)) \nonumber\\
		=&W(\mathcal{D}_{H, \rho_\mathrm{W}}(\rho), H, U).
	\end{align}
\end{proof}

	From Proposition~\ref{prop:SCFQextractedwork}, we can prove that the symmetry-respecting ergotropy of a state $\rho$ in the WS setup equals the symmetry-respecting ergotropy of the state $\mathcal{D}_{H, \rho_\mathrm{W}}(\rho)$ in the setup without the work storage.
	Corollary~\ref{cor:GFSCFQergotropy} deals with group-symmetry-respecting ergotropy.

\begin{corollary} \label{cor:GFSCFQergotropy}
	Let $G$ be a group with a unitary representation $F$ acting on $\mathcal{H}$, $H\in\mathcal{B}^\mathrm{H}(\mathcal{H})$ commute with $F(G)$, $\rho\in\mathcal{B}^+(\mathcal{H})$ and $\rho_\mathrm{W}\in\mathcal{B}^+(\mathcal{H}_\mathrm{W})$.
	Then, $\mathcal{W}_{G, F, H, \rho_\mathrm{W}}^\mathrm{WS}(\rho)=\mathcal{W}_{G, F, H}(\mathcal{D}_{H, \rho_\mathrm{W}}(\rho))$, where $\mathcal{D}_{H, \rho_\mathrm{W}}$ is defined as Eq.~\eqref{eq:mapDdef}.
\end{corollary}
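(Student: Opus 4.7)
The plan is to combine two ingredients: (i) the extracted-work identity from Proposition~\ref{prop:SCFQextractedwork}, which rewrites any WS work-extraction protocol as one without the work storage applied to the decohered state $\mathcal{D}_{H,\rho_\mathrm{W}}(\rho)$, and (ii) the fact (implicit in Eq.~\eqref{eq:SCFQcorrespondence} and the alluded-to Lemma) that $\mathcal{C}$ is a bijection from $\mathcal{U}(\mathcal{H})$ onto $\mathcal{U}^\mathrm{WS}(\mathcal{H},\mathcal{H}_\mathrm{W})$.

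First I would sharpen (ii) to a bijection between the symmetry-respecting classes. Concretely, I claim that for any $U\in\mathcal{U}(\mathcal{H})$, $U$ commutes with $F(g)$ for all $g\in G$ if and only if $\mathcal{C}(U)$ commutes with $F(g)\otimes I_{\mathcal{H}_\mathrm{W}}$ for all $g\in G$. The forward direction is immediate from Eq.~\eqref{eq:SCFQcorrespondence} together with $[H,F(g)]=0$, which makes $e^{\mathrm{i}qH}Ue^{-\mathrm{i}qH}$ commute with $F(g)$ for every $q$, so that $[\mathcal{C}(U),F(g)\otimes I_{\mathcal{H}_\mathrm{W}}]=0$. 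For the converse, one sandwiches $\mathcal{C}(U)$ between momentum projectors of the form $I_\mathcal{H}\otimes\ket{q}_{p\;p}\bra{q}$ to isolate the operator $e^{\mathrm{i}qH}Ue^{-\mathrm{i}qH}$, and then uses the assumption $[\mathcal{C}(U),F(g)\otimes I_{\mathcal{H}_\mathrm{W}}]=0$ together with $[H,F(g)]=0$ at $q=0$ to conclude $[U,F(g)]=0$. This upgrades the bijection to $\mathcal{C}\colon\mathcal{U}_{G,F}(\mathcal{H})\to\mathcal{U}^\mathrm{WS}_{G,F}(\mathcal{H},\mathcal{H}_\mathrm{W})$.

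Second, I would combine this restricted bijection with the identity of Proposition~\ref{prop:SCFQextractedwork} and take the supremum over symmetry-respecting operations:
\begin{align}
\mathcal{W}^\mathrm{WS}_{G,F,H,\rho_\mathrm{W}}(\rho)
&=\max_{V\in\mathcal{U}^\mathrm{WS}_{G,F}(\mathcal{H},\mathcal{H}_\mathrm{W})} W^\mathrm{WS}(\rho,H,\rho_\mathrm{W},V)\nonumber\\
&=\max_{U\in\mathcal{U}_{G,F}(\mathcal{H})} W^\mathrm{WS}(\rho,H,\rho_\mathrm{W},\mathcal{C}(U))\nonumber\\
&=\max_{U\in\mathcal{U}_{G,F}(\mathcal{H})} W(\mathcal{D}_{H,\rho_\mathrm{W}}(\rho),H,U)\nonumber\\
&=\mathcal{W}_{G,F,H}(\mathcal{D}_{H,\rho_\mathrm{W}}(\rho)),
\end{align}
where the second equality uses the bijection and the third equality is Proposition~\ref{prop:SCFQextractedwork}.

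The routine-but-important step is the bijection argument in the first paragraph; the forward direction is a one-line verification, but the converse requires a careful handling of the improper momentum eigenstates $\ket{q}_p$, and this is where I expect the bookkeeping to be the most delicate. Once this restricted bijection is in place, the rest is a direct application of Proposition~\ref{prop:SCFQextractedwork} and the definitions of the two ergotropies.
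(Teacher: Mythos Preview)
Your proposal is correct and follows essentially the same approach as the paper: invoke the bijection $\mathcal{C}\colon\mathcal{U}_{G,F}(\mathcal{H})\to\mathcal{U}^\mathrm{WS}_{G,F}(\mathcal{H},\mathcal{H}_\mathrm{W})$ and then apply Proposition~\ref{prop:SCFQextractedwork}, yielding the exact same four-line chain of equalities. The only minor difference is in how the converse of the symmetry-restricted bijection is argued: the paper packages this as Lemma~\ref{lem:GFSCFQcorrespondence} and proves it via the identity $(F(g)\otimes I)^\dag\mathcal{C}(U)(F(g)\otimes I)=\mathcal{C}(F(g)^\dag UF(g))$ together with the already-established injectivity of $\mathcal{C}$, which sidesteps the improper-eigenstate bookkeeping you flagged.
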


\begin{proof}
	From Lemma~\ref{lem:GFSCFQcorrespondence}, $\mathcal{C}(\mathcal{U}_{G, F}(\mathcal{H}))=\mathcal{U}_{G, F}^\mathrm{WS}(\mathcal{H}, \mathcal{H}_\mathrm{W})$.
	Thus, 
	\begin{align}
		\mathcal{W}_{G, F, H, \rho_\mathrm{W}}^\mathrm{WS}(\rho)=&\max_{V\in\mathcal{U}_{G, F}^\mathrm{WS}(\mathcal{H}, \mathcal{H}_\mathrm{W})} W^\mathrm{WS}(\rho, H, \rho_\mathrm{W}, V) \nonumber\\
		=&\max_{U\in\mathcal{U}_{G, F}(\mathcal{H})} W^\mathrm{WS}(\rho, H, \rho_\mathrm{W}, \mathcal{C}(U)) \nonumber\\
		=&\max_{U\in\mathcal{U}_{G, F}(\mathcal{H})} W(\mathcal{D}_{H, \rho_\mathrm{W}}(\rho), H, U) \nonumber\\
		=&\mathcal{W}_{G, F, H}(\mathcal{D}_{H, \rho_\mathrm{W}}(\rho)). \label{eq:WSGFergotropy}
	\end{align}
\end{proof}

	Since we know the correspondence between symmetry-respecting ergotropy in the setups with and without the work storage from Corollary~\ref{cor:GFSCFQergotropy}, we can derive the condition for symmetry-protected passivity in the WS setup.
	The proof of Theorem~\ref{thm:GFHP1} is now given as follows:\\\\
\textit{Proof of Theorem~\ref{thm:GFHP1}.}
	From Corollary~\ref{cor:GFSCFQergotropy}, $\mathcal{W}_{G, F, H, \rho_\mathrm{W}}^\mathrm{WS}(\rho)=0$ if and only if $\mathcal{W}_{G, F, H}(\mathcal{D}_{H, \rho_\mathrm{W}}(\rho))=0$.
	This means that $\rho$ is WS-$(G, F, \rho_\mathrm{W})$-passive w.r.t. $H$ if and only if $\mathcal{D}_{H, \rho_\mathrm{W}}(\rho)$ is $(G, F)$-passive w.r.t. $H$.
	From Theorem~\ref{thm:GFHP}, $\mathcal{D}_{H, \rho_\mathrm{W}}(\rho)$ is $(G, F)$-passive, if and only if $\tilde{\rho}_\lambda=\mathrm{tr}_{\mathcal{R}_\lambda}(\iota_\lambda^\dag \mathcal{D}_{H, \rho_\mathrm{W}}(\rho) \iota_\lambda)$ is passive w.r.t. $H_\lambda$ for all $\lambda\in\Lambda_F$. \hspace{\fill} $\Box$\\

	Corollary~\ref{cor:TSCFQergotropy} deals with time-reversal symmetry-respecting ergotropy.

\begin{corollary} \label{cor:TSCFQergotropy}
	Let $H\in\mathcal{B}^\mathrm{H}(\mathcal{H})$ commute with $\mathcal{T}$, $\rho\in\mathcal{B}^+(\mathcal{H})$ and $\rho_\mathrm{W}\in\mathcal{B}^+(\mathcal{H}_\mathrm{W})$.
	Then, $\mathcal{W}_{\mathcal{T}, H, \rho_\mathrm{W}}^\mathrm{WS}(\rho)=\mathcal{W}_{\mathcal{T}, H}(\mathcal{D}_{H, \rho_\mathrm{W}}(\rho))$, where $\mathcal{D}_{H, \rho_\mathrm{W}}$ is defined as Eq.~\eqref{eq:mapDdef}.
\end{corollary}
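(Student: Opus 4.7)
The plan is to mirror the proof of Corollary~\ref{cor:GFSCFQergotropy} almost verbatim, replacing the group-symmetry correspondence $\mathcal{C}(\mathcal{U}_{G,F}(\mathcal{H})) = \mathcal{U}_{G,F}^{\mathrm{WS}}(\mathcal{H},\mathcal{H}_{\mathrm{W}})$ with its time-reversal counterpart $\mathcal{C}(\mathcal{U}_\mathcal{T}(\mathcal{H})) = \mathcal{U}_\mathcal{T}^{\mathrm{WS}}(\mathcal{H},\mathcal{H}_{\mathrm{W}})$. Once this lemma is in hand, the chain of equalities
\begin{align*}
\mathcal{W}_{\mathcal{T}, H, \rho_\mathrm{W}}^\mathrm{WS}(\rho)
&= \max_{V\in\mathcal{U}_\mathcal{T}^\mathrm{WS}(\mathcal{H},\mathcal{H}_\mathrm{W})} W^\mathrm{WS}(\rho,H,\rho_\mathrm{W},V) \\
&= \max_{U\in\mathcal{U}_\mathcal{T}(\mathcal{H})} W^\mathrm{WS}(\rho,H,\rho_\mathrm{W},\mathcal{C}(U)) \\
&= \max_{U\in\mathcal{U}_\mathcal{T}(\mathcal{H})} W(\mathcal{D}_{H,\rho_\mathrm{W}}(\rho),H,U) = \mathcal{W}_{\mathcal{T}, H}(\mathcal{D}_{H, \rho_\mathrm{W}}(\rho))
\end{align*}
follows, where the third equality uses Proposition~\ref{prop:SCFQextractedwork}, which makes no reference to symmetry and therefore applies unchanged.

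The substantive step is thus the time-reversal analogue of Lemma~\ref{lem:GFSCFQcorrespondence}. I would prove it by direct computation: since $\mathcal{T}$ is antiunitary with $[\mathcal{T},H]=0$, we have $\mathcal{T} e^{\mathrm{i}qH}\mathcal{T}^{-1} = e^{-\mathrm{i}qH}$, and since $\mathcal{T}$ flips momentum on the work storage, $\mathcal{T}\ket{q}_{p} = \ket{-q}_{p}$ (up to an irrelevant phase fixed by convention). Applying $\mathcal{T}$ to the definition \eqref{eq:SCFQcorrespondence} and changing the integration variable $q\mapsto -q$ yields
\begin{equation*}
\mathcal{T}\,\mathcal{C}(U)\,\mathcal{T}^{-1}
= \int_{-\infty}^{\infty} dq\, e^{\mathrm{i}qH}(\mathcal{T}U\mathcal{T}^{-1})e^{-\mathrm{i}qH} \otimes \ket{q}_{p\,p}\bra{q}
= \mathcal{C}(\mathcal{T}U\mathcal{T}^{-1}).
\end{equation*}
Combined with the injectivity of $\mathcal{C}$ (Lemma~\ref{lem:SCFQbijection}), this shows $[\mathcal{C}(U),\mathcal{T}]=0$ iff $[U,\mathcal{T}]=0$, giving both the inclusion $\mathcal{C}(\mathcal{U}_\mathcal{T}(\mathcal{H}))\subseteq \mathcal{U}_\mathcal{T}^{\mathrm{WS}}(\mathcal{H},\mathcal{H}_\mathrm{W})$ and, via the surjectivity of $\mathcal{C}$ onto $\mathcal{U}^{\mathrm{WS}}(\mathcal{H},\mathcal{H}_\mathrm{W})$, the reverse inclusion.

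The main obstacle is a bookkeeping one rather than a conceptual one: fixing a precise, physically natural action of $\mathcal{T}$ on the work storage and verifying that the integral manipulations above are well-defined on the relevant (distributional) domain of momentum eigenstates. Once the convention $\mathcal{T} p \mathcal{T}^{-1}=-p$, $\mathcal{T} x \mathcal{T}^{-1}=x$ is chosen, everything is routine. One minor subtlety to be careful about is that $\mathcal{T}$ acts antilinearly on the integrand, so the identity ${}_p\bra{q}\mathcal{T}^{-1} = {}_p\bra{-q}$ must be used in the conjugate-linear sense; this is exactly the step where the sign flip of $e^{\mathrm{i}qH}$ and of $\ket{q}_p$ conspire to give the clean form $\mathcal{C}(\mathcal{T}U\mathcal{T}^{-1})$.
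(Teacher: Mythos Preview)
Your proposal is correct and follows essentially the same route as the paper: the paper invokes Lemma~\ref{lem:TSCFQcorrespondence} (whose proof is precisely the computation $\mathcal{T}^{-1}\mathcal{C}(U)\mathcal{T}=\mathcal{C}(\mathcal{T}^{-1}U\mathcal{T})$ via $\mathcal{T}^{-1}\ket{q}_p=\ket{-q}_p$ and $[\mathcal{T},H]=0$, then uses bijectivity of $\mathcal{C}$) and then repeats the chain of equalities from Corollary~\ref{cor:GFSCFQergotropy} verbatim, exactly as you outline. Your remarks on the antilinear bookkeeping and the sign conventions match the paper's handling.
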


\begin{proof}
	From Lemma~\ref{lem:TSCFQcorrespondence}, $\mathcal{C}(\mathcal{U}_\mathcal{T}(\mathcal{H}))=\mathcal{U}_\mathcal{T}^\mathrm{WS}(\mathcal{H}, \mathcal{H}_\mathrm{W})$.
	Thus, in the same way as Eq.~\eqref{eq:WSGFergotropy}, we can prove that $\mathcal{W}_{\mathcal{T}, H, \rho_\mathrm{W}}^\mathrm{WS}(\rho)=\mathcal{W}_{\mathcal{T}, H}(\mathcal{D}_{H, \rho_\mathrm{W}}(\rho))$.
\end{proof}

	Theorem~\ref{thm:THP1} immediately follows from Corollary~\ref{cor:TSCFQergotropy}:\\\\
\textit{Proof of Theorem~\ref{thm:THP1}.}
    From Corollary~\ref{cor:TSCFQergotropy}, $\mathcal{W}_{\mathcal{T}, H, \rho_\mathrm{W}}^\mathrm{WS}(\rho)=0$ if and only if $\mathcal{W}_{\mathcal{T}, H}(\mathcal{D}_{H, \rho_\mathrm{W}}(\rho))=0$.
	This means that $\rho$ is WS-$(\mathcal{T}, \rho_\mathrm{W})$-passive w.r.t. $H$, if and only if $\mathcal{D}_{H, \rho_\mathrm{W}}(\rho)$ is $\mathcal{T}$-passive w.r.t. $H$.
	From Theorem~\ref{thm:THP}, $\mathcal{D}_{H, \rho_\mathrm{W}}(\rho)$ is $\mathcal{T}$-passive, if and only if $\mathcal{S}_\mathcal{T}(\mathcal{D}_{H, \rho_\mathrm{W}}(\rho))$ is passive w.r.t. $H$. \hspace{\fill} $\Box$\\

\subsection{Proofs of Theorems~\ref{thm:GFHCP1}, \ref{thm:finiteGFHCP1} and \ref{thm:THCP1}}

	Theorems~\ref{thm:GFHCP1}, \ref{thm:finiteGFHCP1} and \ref{thm:THCP1} are respectively the counterparts of Theorems~\ref{thm:GFHCP},~\ref{thm:finiteGFHCP}~and~\ref{thm:THCP}.
	In Propositions~\ref{prop:GFworkextop}, \ref{prop:finiteGFworkextop} and \ref{prop:Tworkextop}, we proved that we can extract positive work from $N$ copies of a state other than the (generalized) Gibbs ensemble by the action of a certain unitary operator $U$ satisfying $[U^\dag H^{(N)}U, H^{(N)}]=0$ for some $N\in\mathbb{N}$.
	In Proposition~\ref{prop:stateswapunitary}, we prove that for such a unitary operator $U$, the extracted work by the action of $\mathcal{C}(U)$ in the WS setup is independent of the state of the work storage, and equals the extracted work by the action of $U$ in the setup without the work storage.
	Therefore, we can construct a unitary operator that extracts positive work from multiple copies of a state other than the (generalized) Gibbs ensemble in the WS setup only by replacing $U$ with $\mathcal{C}(U)$ in Propositions~\ref{prop:GFworkextop}, \ref{prop:finiteGFworkextop} and \ref{prop:Tworkextop}.
	On the other hand, in Propositions~\ref{prop:GFSCFQergotropyinequality} and \ref{prop:TSCFQergotropyinequality}, we prove that the symmetry respecting ergotropy of a state is in general smaller in the WS setup than in the setup without the work storage. 
	This fact guarantees that the (generalized) Gibbs ensemble is also symmetry-protected completely passive in the WS setup.

	Proposition~\ref{prop:stateswapunitary}, which is referred to as Proposition~2 of the main text \cite{Main}, states that for a unitary operator $U$ that satisfies $[U^\dag HU, H]=0$, the same amount of work is extracted by the action of $\mathcal{C}(U)$ and $U$ in the setups with and without the work storage.

\begin{proposition} \label{prop:stateswapunitary}
	Let $H\in\mathcal{B}^\mathrm{H}(\mathcal{H})$, $\rho\in\mathcal{B}^+(\mathcal{H})$ and $U\in\mathcal{U}(\mathcal{H})$ satisfy $[U^\dag HU,H]=0$.
	Then, $W^\mathrm{WS}(\rho, H, \rho_\mathrm{W}, \mathcal{C}(U))=W(\rho, H, U)$ holds for all $\rho_\mathrm{W}\in\mathcal{B}^+(\mathcal{H}_\mathrm{W})$ satisfying $\mathrm{tr}(\rho_\mathrm{W})=1$.
\end{proposition}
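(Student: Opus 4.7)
My plan is to reduce the statement to Proposition~\ref{prop:SCFQextractedwork}, which already establishes
\begin{equation*}
W^\mathrm{WS}(\rho, H, \rho_\mathrm{W}, \mathcal{C}(U)) = W\bigl(\mathcal{D}_{H, \rho_\mathrm{W}}(\rho), H, U\bigr).
\end{equation*}
Thus the real work is to show that under the commutativity hypothesis $[U^\dag HU, H]=0$, the right-hand side collapses to $W(\rho, H, U)$ irrespective of the work-storage state $\rho_\mathrm{W}$. This is the only nontrivial content; the rest is bookkeeping.

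To carry this out, I would first unpack the definition and use cyclicity of the trace together with the definition of $\mathcal{D}_{H,\rho_\mathrm{W}}$ to write
\begin{equation*}
W\bigl(\mathcal{D}_{H, \rho_\mathrm{W}}(\rho), H, U\bigr) = \int_{-\infty}^\infty dq\; {}_p\!\braket{q|\rho_\mathrm{W}|q}_p\,\mathrm{tr}\!\left(\rho\, e^{\mathrm{i}qH}(H-U^\dag HU)e^{-\mathrm{i}qH}\right).
\end{equation*}
The key observation is then that $H$ obviously commutes with $e^{\mathrm{i}qH}$, and by the hypothesis $[U^\dag HU, H]=0$ the operator $U^\dag HU$ commutes with every power of $H$ and hence with $e^{\mathrm{i}qH}$ as well. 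Consequently the conjugation $e^{\mathrm{i}qH}(\cdot)e^{-\mathrm{i}qH}$ fixes $H-U^\dag HU$, and the integrand reduces to ${}_p\!\braket{q|\rho_\mathrm{W}|q}_p\cdot W(\rho, H, U)$.

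Finally, pulling the $q$-independent factor $W(\rho,H,U)$ outside and invoking
\begin{equation*}
\int_{-\infty}^\infty dq\; {}_p\!\braket{q|\rho_\mathrm{W}|q}_p = \mathrm{tr}(\rho_\mathrm{W}) = 1
\end{equation*}
yields the claim. I do not anticipate any real obstacle: the whole argument hinges on the single elementary fact that commuting with $H$ implies commuting with $e^{\mathrm{i}qH}$, and the normalization of $\rho_\mathrm{W}$ eliminates any remaining dependence on the work-storage state. If one wished to avoid relying on Proposition~\ref{prop:SCFQextractedwork}, one could instead expand $\mathcal{C}(U)^\dag(H\otimes I)\mathcal{C}(U)$ directly as in Eq.~\eqref{eq:FQHeisenbergHamiltonian} and make the same cancellation there, but routing through Proposition~\ref{prop:SCFQextractedwork} is cleaner.
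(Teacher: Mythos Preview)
Your proof is correct and essentially the same as the paper's: both rely on the single observation that $[U^\dag HU,H]=0$ makes the conjugation by $e^{\mathrm{i}qH}$ trivial on $H-U^\dag HU$, after which the normalization $\mathrm{tr}(\rho_\mathrm{W})=1$ removes all dependence on the work storage. The only cosmetic difference is that the paper takes the direct route you mention at the end (expanding $\mathcal{C}(U)^\dag(H\otimes I)\mathcal{C}(U)$ via Eq.~\eqref{eq:FQHeisenbergHamiltonian}) rather than routing through Proposition~\ref{prop:SCFQextractedwork}.
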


	It is proved in Ref.~\cite{Malabarba2015} that if a unitary operator $U$ only permutes energy eigenstates, the extracted work by the action of $\mathcal{C}(U)$ and $U$ in the setups with and without the work storage are the same.
	Since we find that $[U^\dag HU, H]=0$ is equivalent to the fact that $U$ only permutes energy eigenstates, we can prove Proposition~\ref{prop:stateswapunitary}.
    However, we can also prove it without using the permutation property of $U$ as follows.

\begin{proof}
	Since $U$ satisfies $[U^\dagger HU,H]=0$, from Eq.~\eqref{eq:FQHeisenbergHamiltonian},
	\begin{align}
		\mathcal{C}(U)^\dagger(H\otimes I)\mathcal{C}(U)=U^\dag HU\otimes\int_{-\infty}^\infty dq\ \ket{q}_{p\ p}\bra{q}=U^\dag HU\otimes I.
	\end{align}
	Therefore, for any $\rho_\mathrm{W}\in\mathcal{B}^+(\mathcal{H}_\mathrm{W})$ satisfying $\mathrm{tr}(\rho_\mathrm{W})=1$, 
	\begin{align}
		W^\mathrm{WS}(\rho, H, \rho_\mathrm{W}, \mathcal{C}(U))=&\mathrm{tr}((\rho\otimes\rho_\mathrm{W})[H\otimes I-\mathcal{C}(U)^\dag (H\otimes I)\mathcal{C}(U)]) \nonumber\\
		=&\mathrm{tr}((\rho\otimes\rho_\mathrm{W})[(H-U^\dag HU)\otimes I]) \nonumber\\
		=&\mathrm{tr}(\rho(H-U^\dagger HU))\mathrm{tr}(\rho_\mathrm{W}) \nonumber\\
		=&W(\rho,H,U).
	\end{align}
\end{proof}

	 Proposition~\ref{prop:GFSCFQergotropyinequality} states that the group-symmetry respecting ergotropy of a state is no greater in the WS setup than in the setup without the work storage.

\begin{proposition} \label{prop:GFSCFQergotropyinequality}
	Let $G$ be a group with a unitary representation $F$ acting on $\mathcal{H}$, $H\in\mathcal{B}^\mathrm{H}(\mathcal{H})$ commute with $F(G)$ and $\rho\in\mathcal{B}^+(\mathcal{H})$.
	Then, $\mathcal{W}_{G, F, H, \rho_\mathrm{W}}^\mathrm{WS}(\rho)\leq\mathcal{W}_{G, F, H}(\rho)$ holds for all $\rho_\mathrm{W}\in\mathcal{B}^+(\mathcal{H}_\mathrm{W})$ satisfying $\mathrm{tr}(\rho_\mathrm{W})=1$.
\end{proposition}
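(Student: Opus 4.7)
The plan is to imitate the chain of equalities and inequalities leading to Eq.~\eqref{eq:ergotropy_inequalty} of the main text, but carried out entirely within the class $\mathcal{U}_{G,F}(\mathcal{H})$ of symmetry-respecting unitaries. First I would invoke Corollary~\ref{cor:GFSCFQergotropy} to rewrite the left-hand side as
\begin{equation*}
\mathcal{W}_{G,F,H,\rho_{\mathrm W}}^{\mathrm{WS}}(\rho)=\mathcal{W}_{G,F,H}\bigl(\mathcal{D}_{H,\rho_{\mathrm W}}(\rho)\bigr),
\end{equation*}
and then recall from the definition \eqref{eq:mapDdef} that $\mathcal{D}_{H,\rho_{\mathrm W}}(\rho)$ is a genuine convex combination (with weights ${}_p\!\braket{q|\rho_{\mathrm W}|q}_p$) of the states $e^{-\mathrm{i}qH}\rho\, e^{\mathrm{i}qH}$.

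Next I would establish concavity of the symmetry-protected ergotropy, $\mathcal{W}_{G,F,H}$. This follows exactly as in the derivation below Eq.~\eqref{eq:ergotropy_inequalty}: pick $U_0\in\mathcal{U}_{G,F}(\mathcal{H})$ that realises the maximum for the mixture $s\rho_1+(1-s)\rho_2$, then
\begin{equation*}
\mathcal{W}_{G,F,H}(s\rho_1+(1-s)\rho_2)=sW(\rho_1,H,U_0)+(1-s)W(\rho_2,H,U_0)\le s\,\mathcal{W}_{G,F,H}(\rho_1)+(1-s)\mathcal{W}_{G,F,H}(\rho_2),
\end{equation*}
where the inequality uses that the same $U_0$ is symmetry-respecting and therefore eligible for each $\rho_i$. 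Extension to a continuous convex combination is standard Jensen's inequality, which yields
\begin{equation*}
\mathcal{W}_{G,F,H}\bigl(\mathcal{D}_{H,\rho_{\mathrm W}}(\rho)\bigr)\le\int_{-\infty}^{\infty}dq\;{}_p\!\braket{q|\rho_{\mathrm W}|q}_p\,\mathcal{W}_{G,F,H}\bigl(e^{-\mathrm{i}qH}\rho\, e^{\mathrm{i}qH}\bigr).
\end{equation*}

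The final ingredient is the invariance of $\mathcal{W}_{G,F,H}$ under conjugation by $e^{-\mathrm{i}qH}$. Since $H$ commutes with $F(G)$, so does $e^{\pm\mathrm{i}qH}$; hence the map $U\mapsto U':=e^{\mathrm{i}qH}Ue^{-\mathrm{i}qH}$ is a bijection of $\mathcal{U}_{G,F}(\mathcal{H})$ onto itself. A short computation using $[H,e^{\mathrm{i}qH}]=0$ shows that $W(e^{-\mathrm{i}qH}\rho\, e^{\mathrm{i}qH},H,U)=W(\rho,H,U')$, from which $\mathcal{W}_{G,F,H}(e^{-\mathrm{i}qH}\rho\, e^{\mathrm{i}qH})=\mathcal{W}_{G,F,H}(\rho)$ follows by taking the supremum on both sides. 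Plugging this back into the displayed inequality and using $\int dq\,{}_p\!\braket{q|\rho_{\mathrm W}|q}_p=\operatorname{tr}(\rho_{\mathrm W})=1$ gives the claim.

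There is no serious obstacle here; the argument is essentially the symmetry-constrained version of the derivation in Sec.~\ref{subsec:qw_no_symmetry}, and the only point that requires a sentence of justification is that the bijection $U\mapsto e^{\mathrm{i}qH}Ue^{-\mathrm{i}qH}$ respects the symmetry constraint, which is immediate from $[H,F(G)]=0$. The concavity of $\mathcal{W}_{G,F,H}$ goes through verbatim from the unrestricted case because a single optimal $U_0$ works for every component of a convex combination.
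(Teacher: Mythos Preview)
Your proposal is correct and follows essentially the same route as the paper's proof. The paper also starts from Corollary~\ref{cor:GFSCFQergotropy}, picks a single optimal $U_0\in\mathcal{U}_{G,F}(\mathcal{H})$ for $\mathcal{D}_{H,\rho_{\mathrm W}}(\rho)$, uses linearity plus the identity $W(e^{-\mathrm{i}qH}\rho e^{\mathrm{i}qH},H,U_0)=W(\rho,H,e^{\mathrm{i}qH}U_0e^{-\mathrm{i}qH})$, and then bounds by $\mathcal{W}_{G,F,H}(\rho)$ using that $e^{\mathrm{i}qH}U_0e^{-\mathrm{i}qH}\in\mathcal{U}_{G,F}(\mathcal{H})$; your organization via an explicit concavity step and Jensen is just a repackaging of the same computation.
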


\begin{proof}
	For any $\rho_\mathrm{W}\in\mathcal{B}^+(\mathcal{H}_\mathrm{W})$ satifying $\mathrm{tr}(\rho_\mathrm{W})=1$, take $U_0\in\mathcal{U}_{G, F}(\mathcal{H})$ that satisfies $W(\mathcal{D}_{H, \rho_\mathrm{W}}(\rho), H, U_0)=\mathcal{W}_{G, F, H}(\mathcal{D}_{H, \rho_\mathrm{W}}(\rho))$. 
	From Corollary~\ref{cor:GFSCFQergotropy}, we obtain
	\begin{align}
		\mathcal{W}_{G, F, H, \rho_\mathrm{W}}^\mathrm{WS}(\rho)=&\mathcal{W}_{G, F, H}(\mathcal{D}_{H, \rho_\mathrm{W}}(\rho)) \nonumber\\
		=&W(\mathcal{D}_{H, \rho_\mathrm{W}}(\rho), H, U_0) \nonumber\\
		=&W\left(\int_{-\infty}^\infty dq\ {}_p\braket{q|\rho_\mathrm{W}|q}_p e^{-\mathrm{i}qH}\rho e^{\mathrm{i}qH}, H, U_0\right) \nonumber\\
		=&\int_{-\infty}^\infty dq\ {}_p\braket{q|\rho_\mathrm{W}|q}_p W(e^{-\mathrm{i}qH}\rho e^{\mathrm{i}qH}, H, U_0) \nonumber\\
		=&\int_{-\infty}^\infty dq\ {}_p\braket{q|\rho_\mathrm{W}|q}_p W(\rho , H, e^{\mathrm{i}qH}U_0 e^{-\mathrm{i}qH}) \nonumber\\
		\leq&\int_{-\infty}^\infty dq\ {}_p\braket{q|\rho_\mathrm{W}|q}_p \mathcal{W}_{G, F, H}(\rho) \nonumber\\
		=&\mathcal{W}_{G, F, H}(\rho). \label{eq:GF_work_locking}
	\end{align}
\end{proof}

	Proposition~\ref{prop:TSCFQergotropyinequality} states that the time-reversal-symmetry respecting ergotropy of a state is also no greater in the WS setup than in the setup without the work storage.

\begin{proposition} \label{prop:TSCFQergotropyinequality}
	Let $H\in\mathcal{B}^\mathrm{H}(\mathcal{H})$ commute with $\mathcal{T}$ and $\rho\in\mathcal{B}^+(\mathcal{H})$.
	Then, $\mathcal{W}_{\mathcal{T}, H, \rho_\mathrm{W}}^\mathrm{WS}(\rho)\leq\mathcal{W}_{\mathcal{T}, H}(\rho)$ holds for all $\rho_\mathrm{W}\in\mathcal{B}^+(\mathcal{H}_\mathrm{W})$ satifying $\mathrm{tr}(\rho_\mathrm{W})=1$.
\end{proposition}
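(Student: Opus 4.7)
The plan is to mirror the proof of Proposition~\ref{prop:GFSCFQergotropyinequality}, with Corollary~\ref{cor:TSCFQergotropy} replacing Corollary~\ref{cor:GFSCFQergotropy} and with Proposition~\ref{prop:Twork} used to convert the time-reversal ergotropy into the ordinary ergotropy of a $\mathcal{T}$-symmetrized state whenever convenient. First I would fix an arbitrary $\rho_\mathrm{W}$ with $\mathrm{tr}(\rho_\mathrm{W})=1$, invoke Corollary~\ref{cor:TSCFQergotropy} to identify $\mathcal{W}_{\mathcal{T},H,\rho_\mathrm{W}}^\mathrm{WS}(\rho)$ with $\mathcal{W}_{\mathcal{T},H}(\mathcal{D}_{H,\rho_\mathrm{W}}(\rho))$, and pick a $U_0\in\mathcal{U}_\mathcal{T}(\mathcal{H})$ attaining the latter maximum, so that the problem reduces to controlling $W(\mathcal{D}_{H,\rho_\mathrm{W}}(\rho),H,U_0)$.

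Next I would expand this quantity as $\int dq\,{}_p\langle q|\rho_\mathrm{W}|q\rangle_p\,W(\rho,H,e^{\mathrm{i}qH}U_0 e^{-\mathrm{i}qH})$ by linearity of $W$ in the state and cyclicity of the trace, exactly as in the chain of equalities leading to~(\ref{eq:GF_work_locking}). The crucial difference with the Lie-group case is that the conjugated operator $e^{\mathrm{i}qH}U_0 e^{-\mathrm{i}qH}$ is not automatically $\mathcal{T}$-respecting: because $\mathcal{T}$ is antiunitary and commutes with $H$, one has $\mathcal{T}e^{\mathrm{i}qH}\mathcal{T}^{-1}=e^{-\mathrm{i}qH}$, so $\mathcal{T}$-conjugation flips the sign of $q$. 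To bypass this I would exploit the fact that $U_0^\dagger HU_0$ is $\mathcal{T}$-invariant (since $U_0$ and $H$ both are), which gives the identity $\mathrm{tr}(\sigma\,U_0^\dagger HU_0)=\mathrm{tr}(\mathcal{S}_\mathcal{T}(\sigma)\,U_0^\dagger HU_0)$ for every Hermitian $\sigma$; applied pointwise in $q$ and paired with the image under $q\leftrightarrow-q$, this should recast the integrand as a work quantity evaluated on energy-translates of $\mathcal{S}_\mathcal{T}(\rho)$.

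Once the integrand is expressed through $\mathcal{S}_\mathcal{T}(\rho)$, concavity of the ordinary ergotropy $\mathcal{W}_H$ together with its invariance under conjugation by $e^{-\mathrm{i}qH}$ (which was used already in the proof of Proposition~\ref{prop:GFSCFQergotropyinequality}) should upper-bound the integral by $\mathcal{W}_H(\mathcal{S}_\mathcal{T}(\rho))$, which equals $\mathcal{W}_{\mathcal{T},H}(\rho)$ by Proposition~\ref{prop:Twork}. The hard part will be carrying out the $q\leftrightarrow-q$ averaging cleanly: because $\mathcal{S}_\mathcal{T}$ and $\mathcal{D}_{H,\rho_\mathrm{W}}$ fail to commute in general, one has to arrange the symmetrization so that, after concavity is applied, the antisymmetric-in-$q$ pieces drop out and only a convex mixture of $e^{-\mathrm{i}qH}\mathcal{S}_\mathcal{T}(\rho)e^{\mathrm{i}qH}$ survives. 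This step has no direct analogue in Proposition~\ref{prop:GFSCFQergotropyinequality}, and it is where the antiunitary character of time reversal makes the argument noticeably more delicate than in the unitary-symmetry case.
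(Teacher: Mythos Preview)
You are right to flag that the step $W(\rho,H,e^{\mathrm{i}qH}U_0e^{-\mathrm{i}qH})\le\mathcal{W}_{\mathcal{T},H}(\rho)$ from the analogue of~(\ref{eq:GF_work_locking}) fails here, because $e^{\mathrm{i}qH}U_0e^{-\mathrm{i}qH}$ is generally not $\mathcal{T}$-respecting. Your instinct that this is where the antiunitary nature of $\mathcal{T}$ bites is exactly correct. However, your proposed repair via $q\leftrightarrow-q$ averaging cannot be completed, for the simple reason that the proposition as stated is \emph{false}. Take $\mathcal{H}=\mathbb{C}^2$, $H=\sigma_z$, $\mathcal{T}$ the complex conjugation in the computational basis, and $\rho=|\psi\rangle\langle\psi|$ with $|\psi\rangle=(|0\rangle+\mathrm{i}|1\rangle)/\sqrt{2}$. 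Then $\mathcal{S}_\mathcal{T}(\rho)=I/2$, so by Proposition~\ref{prop:Twork} one has $\mathcal{W}_{\mathcal{T},H}(\rho)=\mathcal{W}_H(I/2)=0$. On the other hand, choosing $\rho_\mathrm{W}$ with ${}_p\langle q|\rho_\mathrm{W}|q\rangle_p$ a narrow bump around $q=\pi/4$ gives $\mathcal{D}_{H,\rho_\mathrm{W}}(\rho)\approx e^{-\mathrm{i}\pi H/4}\rho\,e^{\mathrm{i}\pi H/4}=|-\rangle\langle-|$, which is real, so $\mathcal{W}_{\mathcal{T},H}(\mathcal{D}_{H,\rho_\mathrm{W}}(\rho))=\mathcal{W}_H(|-\rangle\langle-|)=1$. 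By Corollary~\ref{cor:TSCFQergotropy} this yields $\mathcal{W}^{\mathrm{WS}}_{\mathcal{T},H,\rho_\mathrm{W}}(\rho)\approx 1>0=\mathcal{W}_{\mathcal{T},H}(\rho)$. (One can make $\rho$ strictly positive by mixing in $\epsilon I/2$; the inequality still fails.) The mechanism is that $\mathcal{T}$ on the total system flips the sign of the storage momentum, so a $\rho_\mathrm{W}$ that is not $q$-symmetric lets WS-$\mathcal{T}$-respecting unitaries access the $\sigma_y$-coherence of $\rho$ that is invisible to $\mathcal{U}_\mathcal{T}(\mathcal{H})$ alone.

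The paper's own proof, which simply asserts ``in the same way as Eq.~(\ref{eq:GF_work_locking}),'' suffers from exactly the gap you identified and is likewise incorrect. Note, however, that the only place the proposition is invoked is in the ``if'' direction of Theorem~\ref{thm:THCP1}, where $\rho=Z^{-1}e^{-\beta H}$ commutes with $H$; in that case $\mathcal{D}_{H,\rho_\mathrm{W}}(\rho)=\rho$ for every $\rho_\mathrm{W}$, and Corollary~\ref{cor:TSCFQergotropy} directly gives $\mathcal{W}^{\mathrm{WS}}_{\mathcal{T},H,\rho_\mathrm{W}}(\rho)=\mathcal{W}_{\mathcal{T},H}(\rho)$. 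So the downstream theorem survives, but the proposition in the generality stated does not, and no amount of clever averaging will prove it.
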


\begin{proof}
	For any $\rho_\mathrm{W}\in\mathcal{B}^+(\mathcal{H}_\mathrm{W})$ satifying $\mathrm{tr}(\rho_\mathrm{W})=1$, take $U_0\in\mathcal{U}_\mathcal{T}(\mathcal{H})$ that satisfies $W(\mathcal{D}_{H, \rho_\mathrm{WS}}(\rho), H, U_0)=\mathcal{W}_{\mathcal{T}, H}(\mathcal{D}_{H, \rho_\mathrm{WS}}(\rho))$. 
	From Corollary~\ref{cor:TSCFQergotropy}, we can prove that $\mathcal{W}_{\mathcal{T}, H, \rho_\mathrm{W}}^\mathrm{WS}(\rho)\leq\mathcal{W}_{\mathcal{T}, H}(\rho)$ in the same way as Eq.~\eqref{eq:GF_work_locking}.
\end{proof}

	The proof of Theorem~\ref{thm:GFHCP1} is as follows.
	The necessity of the generalized Gibbs ensemble for complete passivity protected by Lie group symmetry in the WS setup is derived from Propositions~\ref{prop:GFworkextop} and \ref{prop:stateswapunitary}, while the sufficiency is derived from Theorem~\ref{thm:GFHCP} and Proposition~\ref{prop:GFSCFQergotropyinequality}.\\\\
\textit{Proof of Theorem~\ref{thm:GFHCP1}.}
	First, suppose that $\rho$ cannot be written as $\rho=\frac{1}{Z}e^{-\beta H-f(X)}$ with $\beta\in[0, \infty)$, $X\in\mathfrak{g}$ and $Z:=\mathrm{tr}(e^{-\beta H-f(X)})$.
	From Proposition~\ref{prop:GFworkextop}, there exist $N\in\mathbb{N}$ and $U\in\mathcal{U}_{G, F^{\otimes N}}(\mathcal{H}^{\otimes N})$ such that $[U^\dag H^{(N)}U, H^{(N)}]=0$ and $W(\rho^{\otimes N}, H^{(N)}, U)>0$.
	From Lemma~\ref{lem:GFSCFQcorrespondence}, $\mathcal{C}(U)\in\mathcal{U}_{G, F^{\otimes N}}^\mathrm{WS}(\mathcal{H}^{\otimes N}, \mathcal{H}_\mathrm{W})$.
	From Proposition~\ref{prop:stateswapunitary}, for any $\rho_\mathrm{W}\in\mathcal{B}^+(\mathcal{H}_\mathrm{W})$ satisfying $\mathrm{tr}(\rho_\mathrm{W})=1$, we obtain $W^\mathrm{WS}(\rho^{\otimes N}, H^{(N)}, \rho_\mathrm{W}, \mathcal{C}(U))=W(\rho^{\otimes N}, H^{(N)}, U)>0$.
	This implies that $\rho$ is not WS-$(G, F, \rho_\mathrm{W})$-completely passive w.r.t. $H$.
	
	Next, suppose that $\rho$ can be written as $\rho=\frac{1}{Z}e^{-\beta H-f(X)}$ with $\beta\in[0, \infty)$, $X\in\mathfrak{g}$ and $Z:=\mathrm{tr}(e^{-\beta H-f(X)})$.
	Take arbitrary $N\in\mathbb{N}$ and $\rho_\mathrm{W}\in\mathcal{B}^+(\mathcal{H}_\mathrm{W})$ satisfying $\mathrm{tr}(\rho_\mathrm{W})=1$.
	From Theorem~\ref{thm:GFHCP}, $\rho$ is $(G, F)$-completely passive w.r.t. $H$. 
	From Proposition~\ref{prop:GFSCFQergotropyinequality}, we get
	\begin{align}
		\mathcal{W}_{G, F^{\otimes N}, H^{(N)}, \rho_\mathrm{W}}^\mathrm{WS}(\rho^{\otimes N})\leq\mathcal{W}_{G, F^{\otimes N}, H^{(N)}}(\rho^{\otimes N})=0.
	\end{align}
	This implies that $\rho$ is WS-$(G, F, \rho_\mathrm{W})$-completely passive w.r.t. $H$. \hspace{\fill} $\Box$\\

	The proof of Theorem~\ref{thm:finiteGFHCP1} is as follows, whose structure is parallel to that of Theorem~\ref{thm:GFHCP1}.
	The necessity of the Gibbs ensemble for complete passivity protected by finite cyclic group symmetry or dihedral group symmetry in the WS setup is derived from Propositions~\ref{prop:finiteGFworkextop} and \ref{prop:stateswapunitary}, while the sufficiency is derived from Theorem~\ref{thm:finiteGFHCP} and Proposition~\ref{prop:GFSCFQergotropyinequality}.\\\\
\textit{Proof of Theorem~\ref{thm:finiteGFHCP1}.}
	First, suppose that $\rho$ cannot be written as $\rho=\frac{1}{Z}e^{-\beta H}$ with $\beta\in[0, \infty)$ and $Z:=\mathrm{tr}(e^{-\beta H})$.
	From Proposition~\ref{prop:finiteGFworkextop}, there exist $N\in\mathbb{N}$ and $U\in\mathcal{U}_{G, F^{\otimes N}}(\mathcal{H}^{\otimes N})$ such that $[U^\dag H^{(N)}U, H^{(N)}]=0$ and $W(\rho^{\otimes N}, H^{(N)}, U)>0$.
	From Lemma~\ref{lem:GFSCFQcorrespondence}, $\mathcal{C}(U)\in\mathcal{U}_{G, F^{\otimes N}}^\mathrm{WS}(\mathcal{H}^{\otimes N}, \mathcal{H}_\mathrm{W})$.
	From Proposition~\ref{prop:stateswapunitary}, for any $\rho_\mathrm{W}\in\mathcal{B}^+(\mathcal{H}_\mathrm{W})$ satisfying $\mathrm{tr}(\rho_\mathrm{W})=1$, we obtain $W^\mathrm{WS}(\rho^{\otimes N}, H^{(N)}, \rho_\mathrm{W}, \mathcal{C}(U))=W(\rho^{\otimes N}, H^{(N)}, U)>0$.
	This implies that $\rho$ is not WS-$(G, F, \rho_\mathrm{W})$-completely passive w.r.t. $H$.

	Next, suppose that $\rho$ can be written as $\rho=\frac{1}{Z}e^{-\beta H}$ with $\beta\in[0, \infty)$ and $Z:=\mathrm{tr}(e^{-\beta H})$.
	Take arbitrary $N\in\mathbb{N}$ and $\rho_\mathrm{W}\in\mathcal{B}^+(\mathcal{H}_\mathrm{W})$ satisfying $\mathrm{tr}(\rho_\mathrm{W})=1$.
	From Theorem~\ref{thm:finiteGFHCP}, $\rho$ is $(G, F)$-completely passive w.r.t. $H$. 
	From Proposition~\ref{prop:GFSCFQergotropyinequality}, we get
	\begin{align}
		\mathcal{W}_{G, F^{\otimes N}, H^{(N)}, \rho_\mathrm{W}}^\mathrm{WS}(\rho^{\otimes N})\leq\mathcal{W}_{G, F^{\otimes N}, H^{(N)}}(\rho^{\otimes N})=0.
	\end{align}
	This implies that $\rho$ is WS-$(G, F, \rho_\mathrm{W})$-completely passive w.r.t. $H$. \hspace{\fill} $\Box$\\

	The proof of Theorem~\ref{thm:THCP1} is as follows, whose structure is parallel to that of Theorem~\ref{thm:GFHCP1}.
	The necessity of the Gibbs ensemble for complete passivity protected by time-reversal symmetry in the WS setup is derived from Propositions~\ref{prop:Tworkextop} and \ref{prop:stateswapunitary}, while the sufficiency is derived from Theorem~\ref{thm:THCP} and Proposition~\ref{prop:TSCFQergotropyinequality}.\\\\
\textit{Proof of Theorem~\ref{thm:THCP1}.}
	First, suppose that $\rho$ cannot be written as $\rho=\frac{1}{Z}e^{-\beta H}$ with $\beta\in[0, \infty)$ and $Z:=\mathrm{tr}(e^{-\beta H})$.
	From Proposition~\ref{prop:Tworkextop}, there exist $N\in\mathbb{N}$ and $U\in\mathcal{U}_\mathcal{T}(\mathcal{H}^{\otimes N})$ such that $[U^\dag H^{(N)}U, H^{(N)}]=0$ and $W(\rho^{\otimes N}, H^{(N)}, U)>0$.
	From Lemma~\ref{lem:TSCFQcorrespondence}, $\mathcal{C}(U)\in\mathcal{U}_\mathcal{T}^\mathrm{WS}(\mathcal{H}^{\otimes N}, \mathcal{H}_\mathrm{W})$.
	From Proposition~\ref{prop:stateswapunitary}, for any $\rho_\mathrm{W}\in\mathcal{B}^+(\mathcal{H}_\mathrm{W})$ satisfying $\mathrm{tr}(\rho_\mathrm{W})=1$, we obtain $W^\mathrm{WS}(\rho^{\otimes N}, H^{(N)}, \rho_\mathrm{W}, \mathcal{C}(U))=W(\rho^{\otimes N}, H^{(N)}, U)>0$.
	This implies that $\rho$ is not WS-$(\mathcal{T}, \rho_\mathrm{W})$-completely passive w.r.t. $H$.

	Next, suppose that $\rho$ can be written as $\rho=\frac{1}{Z}e^{-\beta H}$ with $\beta\in[0, \infty)$ and $Z:=\mathrm{tr}(e^{-\beta H})$.
	Take arbitrary $N\in\mathbb{N}$ and $\rho_\mathrm{W}\in\mathcal{B}^+(\mathcal{H}_\mathrm{W})$ satisfying $\mathrm{tr}(\rho_\mathrm{W})=1$.
	From Theorem~\ref{thm:THCP}, $\rho$ is $\mathcal{T}$-completely passive w.r.t. $H$. 
	From Proposition~\ref{prop:TSCFQergotropyinequality}, we get
	\begin{align}
		\mathcal{W}_{\mathcal{T}, H^{(N)}, \rho_\mathrm{W}}^\mathrm{WS}(\rho^{\otimes N})\leq\mathcal{W}_{\mathcal{T}, H^{(N)}}(\rho^{\otimes N})=0.
	\end{align}
	This implies that $\rho$ is WS-$(\mathcal{T}, \rho_\mathrm{W})$-completely passive w.r.t. $H$. \hspace{\fill} $\Box$\\

\section{Technical lemmas}
\label{sec:technical_lemmas}

	In this section, we prove technical lemmas used in the proofs of the foregoing theorems.

	Lemma~\ref{lem:latticeduality} states that the orthogonal complement of the subspace of $\mathbb{R}^K$ spanned by vectors in $\mathbb{Q}^K$ can be spanned by vectors in $\mathbb{N}^K$.
	We use the Gram-Schmidt orthogonalization in the proof.

\begin{lemma} \label{lem:latticeduality}
	Let $K, a\in\mathbb{N}$, $\bm{w}_1, \bm{w}_2, \cdots, \bm{w}_a\in\mathbb{Q}^K$ and $\bm{t}\in\mathbb{R}^K$. 
	Suppose that for any $\bm{n}_0,\ \bm{n}_1\in\mathbb{N}^K$ satisfying $\bm{n}_0\cdot\bm{w}_i=\bm{n}_1\cdot\bm{w}_i$ for all $i=1, \cdots, a$, $\bm{t}$ satisfies $\bm{n}_0\cdot\bm{t}=\bm{n}_1\cdot\bm{t}$. 
	Then, $\bm{t}$ is a linear combination of $\{\bm{w}_{i}\}_{i=1}^a$ over $\mathbb{R}$.
\end{lemma}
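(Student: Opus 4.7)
The plan is to reduce the hypothesis to a statement about the integer lattice $L := \{\bm m \in \mathbb{Z}^K : \bm m \cdot \bm w_i = 0 \text{ for all } i\}$, and then use rationality of the $\bm w_i$'s to conclude.

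First, I would observe that the hypothesis is equivalent to the single assertion $\bm m \cdot \bm t = 0$ for every $\bm m \in L$. The implication from the hypothesis is immediate by taking $\bm n_0 = \bm m^+$ and $\bm n_1 = \bm m^-$, where $m_k^+ := \max(m_k, 0)$ and $m_k^- := \max(-m_k,0)$, since these lie in $\mathbb{N}^K$, have difference $\bm m$, and satisfy $\bm m^+ \cdot \bm w_i = \bm m^- \cdot \bm w_i$ because $\bm m \cdot \bm w_i = 0$. Conversely, the original hypothesis is a special case where $\bm m = \bm n_0 - \bm n_1$.

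Next, I would show that $L$ spans, over $\mathbb{R}$, the full orthogonal complement $W^{\perp} := \{\bm v \in \mathbb{R}^K : \bm v \cdot \bm w_i = 0 \text{ for all } i=1,\dots,a\}$. Because each $\bm w_i$ lies in $\mathbb{Q}^K$, the system of linear equations defining $W^{\perp}$ has rational coefficients, so $W^{\perp}$ admits a basis in $\mathbb{Q}^K$. Clearing denominators in each basis vector produces a basis of $W^{\perp}$ lying in $\mathbb{Z}^K$, and every such basis vector belongs to $L$. Hence $\mathrm{span}_{\mathbb{R}} L = W^{\perp}$.

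Combining the two observations, $\bm t$ is orthogonal (over $\mathbb{R}$) to $W^{\perp}$, and therefore $\bm t \in (W^{\perp})^{\perp} = \mathrm{span}_{\mathbb{R}}\{\bm w_1, \dots, \bm w_a\}$, which is the desired conclusion. The only mildly nontrivial step is the lattice-spanning claim in the middle paragraph; everything else is bookkeeping. There is no real obstacle since the $\bm w_i$'s are rational by assumption, so the argument would break without that rationality hypothesis.
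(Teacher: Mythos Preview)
Your proof is correct and follows essentially the same approach as the paper: both arguments reduce to showing that the orthogonal complement $W^\perp$ of $\mathrm{span}\{\bm w_i\}$ admits a rational basis, scale to integers, split into positive and negative parts to land in $\mathbb{N}^K$, and then apply the hypothesis. The only cosmetic difference is that the paper produces the rational basis of $W^\perp$ explicitly by running Gram--Schmidt on $(\bm w_1,\dots,\bm w_a,\bm e_1,\dots,\bm e_K)$ (which preserves rationality), whereas you invoke the general fact that a homogeneous linear system with rational coefficients has a rational solution basis.
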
 

\begin{proof}
	Let $\{\bm{e}_i\}_{i=1}^K$ be the standard basis of $\mathbb{R}^K$. 
	Take arbitrary $\bm{t}\in\mathbb{R}^K$ such that $\bm{n}_0\cdot\bm{t}=\bm{n}_1\cdot\bm{t}$ for all $\bm{n}_0, \bm{n}_1\in\mathbb{N}^K$ satisfying $\bm{n}_0\cdot\bm{w}_i=\bm{n}_1\cdot\bm{w}_i$ for all $i=1, \cdots, a$.
	By applying the Gram-Schmidt orthogonalization to the sequence $(\bm{w}_1, \bm{w}_2, \cdots, \bm{w}_a)$, we get a set of orthogonal vectors $\{\bm{x}_1, \bm{x}_2, \cdots, \bm{x}_{a'}\}$, and by applying the Gram-Schmidt orthogonalization to the sequence $(\bm{w}_1, \bm{w}_2, \cdots, \bm{w}_a, \bm{e}_1, \bm{e}_2, \cdots, \bm{e}_K)$, we get a set of orthogonal vectors $\{\bm{x}_1, \bm{x}_2, \cdots, \bm{x}_{a'}, \bm{y}_1, \bm{y}_2, \cdots, \bm{y}_{b}\}$.
	Since $\bm{w}_i, \bm{e}_j\in\mathbb{Q}^K$ for all $i=1, \cdots, a$ and $j=1, \cdots, K$, we obtain $\bm{x}_i, \bm{y}_j\in\mathbb{Q}^K$ for all $i=1, \cdots, a'$ and $ j=1, \cdots, b$.
	Since $\{\bm{w}_1, \bm{w}_2, \cdots, \bm{w}_a, \bm{e}_1, \bm{e}_2, \cdots, \bm{e}_K\}$ spans $\mathbb{R}^K$, $\{\bm{x}_1, \bm{x}_2, \cdots, \bm{x}_{a'}, \bm{y}_1, \bm{y}_2, \cdots, \bm{y}_{b}\}$ is a basis of $\mathbb{R}^K$.
	Therefore, $b=K-a'$ and $\mathrm{span}(\{\bm{y}_1, \bm{y}_2, \cdots, \bm{y}_{b}\})$ is the orthogonal complement of $\mathrm{span}(\{\bm{x}_1, \bm{x}_2, \cdots, \bm{x}_{a'}\})=\mathrm{span}(\{\bm{w}_1, \bm{w}_2, \cdots, \bm{w}_{a}\})$.
	For any $j=1, \cdots, b$, since $\bm{y}_j\in\mathbb{Q}^K$, $\bm{y}_j$ can be written as $\bm{y}_j=\frac{1}{m}(\bm{n}_1-\bm{n}_0)$ with some $m\in\mathbb{N}$ and $\bm{n}_0, \bm{n}_1\in\mathbb{N}^K$.
	For any $i=1, \cdots, a$, since $\bm{y}_j\cdot\bm{w}_i=0$, we get $\bm{n}_0\cdot\bm{w}_i=\bm{n}_1\cdot\bm{w}_i$.
	From the assumption about $\bm{t}$, $\bm{n}_0\cdot\bm{t}=\bm{n}_1\cdot\bm{t}$ and thus $\bm{y}_j\cdot\bm{t}=0$.
	Since this holds for all $j=1, \cdots, b$, $\bm{t}$ is a linear combination of $\{\bm{x}_i\}$, or equivalently, a linear combination of $\{\bm{w}_i\}$.
\end{proof}

	In Lemma~\ref{lem:determinantstate}, we prove the properties of the totally antisymmetric state $\mathcal{A}_\mathcal{R}(\{\ket{\phi_i}\})$ defined as Eq.~\eqref{eq:defdetstate}. 
	We can regard this state as a generalization of the spin singlet state.

\begin{lemma} \label{lem:determinantstate}
	Let $\mathcal{R}$ be a Hilbert space of dimension $r$ and $\{\ket{\phi_i}\}_{i=1}^r$ be an orthonormal basis of $\mathcal{R}$.
	Then, $\mathcal{A}_\mathcal{R}(\{\ket{\phi_i}\})$ defined as Eq.~\eqref{eq:defdetstate} is independent of the choice of the basis up to phase factor and satisfies 
	\begin{align}
		&\|\mathcal{A}_\mathcal{R}(\{\ket{\phi_i}\})\|=1,\\
		&\forall\Omega\in\mathcal{B}(\mathcal{R})\ \Omega^{\otimes r}\mathcal{A}_\mathcal{R}(\{\ket{\phi_i}\})=\mathrm{det}(\Omega)\mathcal{A}_\mathcal{R}(\{\ket{\phi_i}\}).
	\end{align} 
\end{lemma}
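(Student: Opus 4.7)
The plan is to verify the three assertions in the natural order: first compute the norm directly from orthonormality, then establish the determinant transformation law by a compact computation in terms of the antisymmetrizer, and finally deduce basis independence up to phase as an immediate corollary of the transformation law applied to a unitary change-of-basis.

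For the norm, I would expand
\begin{align*}
\|\mathcal{A}_{\mathcal{R}}(\{\ket{\phi_i}\})\|^2 = \frac{1}{r!}\sum_{\sigma,\tau\in S_r} \mathrm{sgn}(\sigma)\,\mathrm{sgn}(\tau)\prod_{i=1}^{r}\braket{\phi_{\sigma(i)}|\phi_{\tau(i)}},
\end{align*}
and use $\braket{\phi_j|\phi_k}=\delta_{jk}$ to see that the product is nonzero only when $\sigma=\tau$, yielding $\|\mathcal{A}_{\mathcal{R}}(\{\ket{\phi_i}\})\|^2=\frac{1}{r!}\sum_{\sigma}1=1$. This is routine.

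For the transformation law, I would introduce the antisymmetrizer $A:=\frac{1}{r!}\sum_{\sigma\in S_r}\mathrm{sgn}(\sigma)P_\sigma$, where $P_\sigma$ permutes the tensor factors, so that $\mathcal{A}_{\mathcal{R}}(\{\ket{\phi_i}\})=\sqrt{r!}\,A\,(\ket{\phi_1}\otimes\cdots\otimes\ket{\phi_r})$. The key observation is that $A$ commutes with $\Omega^{\otimes r}$ because $P_\sigma$ does for every $\sigma$. Writing $\Omega\ket{\phi_k}=\sum_j \Omega_{jk}\ket{\phi_j}$ with $\Omega_{jk}:=\braket{\phi_j|\Omega|\phi_k}$, I would then compute
\begin{align*}
\Omega^{\otimes r}\mathcal{A}_{\mathcal{R}}(\{\ket{\phi_i}\})
=\sqrt{r!}\,A\!\!\sum_{j_1,\ldots,j_r}\!\!\Omega_{j_1,1}\cdots\Omega_{j_r,r}\,\ket{\phi_{j_1}}\otimes\cdots\otimes\ket{\phi_{j_r}},
\end{align*}
and then use the fact that $A(\ket{\phi_{j_1}}\otimes\cdots\otimes\ket{\phi_{j_r}})$ vanishes unless $(j_1,\ldots,j_r)=(\pi(1),\ldots,\pi(r))$ for some $\pi\in S_r$, in which case it equals $\mathrm{sgn}(\pi)\,\mathcal{A}_{\mathcal{R}}(\{\ket{\phi_i}\})/\sqrt{r!}$. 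Collecting gives the Leibniz formula for the determinant, yielding $\Omega^{\otimes r}\mathcal{A}_{\mathcal{R}}(\{\ket{\phi_i}\})=\det(\Omega)\,\mathcal{A}_{\mathcal{R}}(\{\ket{\phi_i}\})$.

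For basis independence, given another orthonormal basis $\{\ket{\phi'_i}\}_{i=1}^{r}$, define the unitary $U\in\mathcal{U}(\mathcal{R})$ by $U\ket{\phi_i}=\ket{\phi'_i}$. Multilinearity of the defining sum gives $\mathcal{A}_{\mathcal{R}}(\{\ket{\phi'_i}\})=U^{\otimes r}\mathcal{A}_{\mathcal{R}}(\{\ket{\phi_i}\})$, so the transformation law just proved yields $\mathcal{A}_{\mathcal{R}}(\{\ket{\phi'_i}\})=\det(U)\,\mathcal{A}_{\mathcal{R}}(\{\ket{\phi_i}\})$, and since $|\det(U)|=1$ the two states differ only by a phase.

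None of these steps present a genuine obstacle; the only point requiring care is bookkeeping in the transformation law, specifically checking that the restriction of the sum to permutations $\pi$ reproduces the Leibniz formula with the correct sign $\mathrm{sgn}(\pi)$. Everything else is a direct consequence of orthonormality, multilinearity of $\mathcal{A}_{\mathcal{R}}$, and commutativity of $\Omega^{\otimes r}$ with the antisymmetrizer.
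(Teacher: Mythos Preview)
Your proof is correct and follows essentially the same approach as the paper: a direct orthonormality computation for the norm, the Leibniz-formula reduction for the transformation law, and the unitary change-of-basis argument for phase-independence. The only cosmetic difference is that you organize the transformation-law computation via the antisymmetrizer projection $A$ commuting with $\Omega^{\otimes r}$, whereas the paper phrases it as multilinearity of $\mathcal{A}_\mathcal{R}$, i.e.\ $\Omega^{\otimes r}\mathcal{A}_\mathcal{R}(\{\ket{\phi_i}\})=\mathcal{A}_\mathcal{R}(\{\Omega\ket{\phi_i}\})$; the resulting index manipulations and the restriction from all maps to permutations are identical.
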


\begin{proof}
	For any orthonormal basis $\{\ket{\phi_i}\}$, since every element of $\{\mathrm{sgn}(\sigma)\ket{\phi_{\sigma(1)}}\otimes\cdots\otimes\ket{\phi_{\sigma(r)}}\}_{\sigma\in S_r}$ is normalized and orthogonal to each other, $\|\mathcal{A}_\mathcal{R}(\{\ket{\phi_i}\})\|=1$.
	Take an arbitrary orthonormal basis $\{\ket{\phi_i}\}$.
	For any $\Omega\in\mathcal{B}(\mathcal{R})$, we get
	\begin{align}
		\Omega^{\otimes r}\mathcal{A}_\mathcal{R}(\{\ket{\phi_i}\})=&\mathcal{A}_\mathcal{R}(\Omega\ket{\phi_1}, \cdots, \Omega\ket{\phi_r}) \nonumber\\
		=&\mathcal{A}_\mathcal{R}\left(\sum_{s_1=1}^r\omega_{s_1 1}\ket{\phi_{s_1}}, \cdots, \sum_{s_r=1}^r\omega_{s_r r}\ket{\phi_{s_r}}\right) \nonumber\\
		=&\sum_{\sigma\in T_r} \mathcal{A}_\mathcal{R}(\omega_{\sigma(1)1}\ket{\phi_{\sigma(1)}}, \cdots, \omega_{\sigma(r)r}\ket{\phi_{\sigma(r)}}) \nonumber\\
		=&\sum_{\sigma\in T_r} \prod_{i=1}^r \omega_{\sigma(i)i}\mathcal{A}_\mathcal{R}(\ket{\phi_{\sigma(1)}}, \cdots, \ket{\phi_{\sigma(r)}}) \nonumber\\
		=&\sum_{\sigma\in S_r} \prod_{i=1}^r \omega_{\sigma(i)i}\mathrm{sgn}(\sigma)\mathcal{A}_\mathcal{R}(\ket{\phi_1}, \cdots, \ket{\phi_r}) \nonumber\\
		=&\mathrm{det}(\Omega)\mathcal{A}_\mathcal{R}(\{\ket{\phi_i}\}), \label{eq:detstateproperty}
	\end{align}
	where $\omega_{ij}:=\braket{\phi_i|\Omega|\phi_j}$, $T_r$ is the set of all mappings from $\{1, 2, \cdots, r\}$ to itself, and we used the total antisymmetricity of $\mathcal{A}_\mathcal{R}(\{\ket{\phi_i}\})$ about $\{\ket{\phi_i}\}$.
	Take another orthonormal basis $\{\ket{\phi'_i}\}$.
	Then we can take $U\in\mathcal{U}(\mathcal{R})$ such that $U\ket{\phi_i}=\ket{\phi'_i}$ for all $i=1, \cdots, r$.
	Therefore, the relation between $\mathcal{A}_\mathcal{R}(\{\ket{\phi_i}\})$ and $\mathcal{A}_\mathcal{R}(\{\ket{\phi'_i}\})$ can be written as 
	\begin{align}
		\mathcal{A}_\mathcal{R}(\{\ket{\phi'_i}\})=&\mathcal{A}_\mathcal{R}(U\ket{\phi_1}, \cdots, U\ket{\phi_r}) \nonumber\\
		=&U^{\otimes r}\mathcal{A}_\mathcal{R}(\ket{\phi_1}, \cdots, \ket{\phi_r}) \nonumber\\
		=&\mathrm{det}(U)\mathcal{A}_\mathcal{R}(\{\ket{\phi_i}\}) \nonumber\\
		=&e^{\mathrm{i}\theta}\mathcal{A}_\mathcal{R}(\{\ket{\phi_i}\})
	\end{align}
	with some $\theta\in\mathbb{R}$.
\end{proof}

	In Lemma~\ref{lem:eigenvectorcomposition}, we prove the properties of the state $\ket{\Phi(\bm{n})}$ defined as Eq.~\eqref{eq:determinantstate}.
	The definition of $\ket{\Phi(\bm{n})}$ is based on the irreducible decomposition of a representation $F$ of a group $G$ and the diagonalization of a Hamiltonian $H$, and $\ket{\Phi(\bm{n})}$ is a simultaneous eigenstate of such operators as $F^{\otimes N}(g)$ with $g\in G$ and $H^{(N)}$ on $N$ copies of a system.

\begin{lemma} \label{lem:eigenvectorcomposition}
	Let $G$ be a group with a unitary representation $F$ acting on $\mathcal{H}$, $H\in\mathcal{B}^\mathrm{H}(\mathcal{H})$ commute with $F(G)$, $F$ be irreducibly decomposed  as Eq.~\eqref{eq:irreducibledecomposition}, and $\Omega\in\mathcal{B}(\mathcal{H})$ be written as
	\begin{align}
		\Omega=\sum_{\lambda\in\Lambda_F} \iota_\lambda \left(\sum_{j=1}^{m_\lambda} \zeta_{\lambda j}\otimes\ket{\psi_{\lambda j}}\bra{\psi_{\lambda j}}\right)\iota_\lambda^\dag. \label{eq:Omega_decomp}
	\end{align}
	with some $\zeta_{\lambda j}\in\mathcal{B}(\mathcal{R}_\lambda)$.
	Then, for any $\bm{n}=(n_k)_{k\in K}\in\mathbb{N}^K$, $\ket{\Phi(\bm{n})}$ defined as Eq.~\eqref{eq:determinantstate} satisfies
	\begin{align}
		\Omega^{\otimes D\bm{n}\cdot\bm{w}_0}\ket{\Phi(\bm{n})}=\left(\prod_{k\in K}\mathrm{det}(\zeta_k)^{n_k\frac{D}{r_\lambda}}\right)\ket{\Phi(\bm{n})},\ \Omega^{(D\bm{n}\cdot\bm{w}_0)}\ket{\Phi(\bm{n})}=D\bm{n}\cdot\bm{u}\ket{\Phi(\bm{n})}, \label{eq:Omega_eigenvalue}
	\end{align}
	where $D:=\prod_{\lambda\in\Lambda_F} r_\lambda$, $K:=\{(\lambda, j)\ |\ \lambda\in\Lambda_F, j=1, \cdots, m_\lambda\}$, $\bm{w}_0:=(1)_{k\in K}$, $\bm{u}:=(u_k)_{k\in K}$, $u_k:=\frac{1}{r_\lambda}\mathrm{tr}(\zeta_{\lambda j})$, $\zeta_k:=\zeta_{\lambda j}$ for $k=(\lambda, j)\in K$ and $\bm{a}\cdot\bm{b}:=\sum_{k\in K} a_kb_k$ for $\bm{a}=(a_k)_{k\in K}, \bm{b}=(b_k)_{k\in K}\in\mathbb{R}^K$. 
\end{lemma}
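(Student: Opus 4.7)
The plan is to prove the multiplicative (tensor-power) identity first by direct block-by-block analysis, and then deduce the additive identity for $\Omega^{(D\bm{n}\cdot\bm{w}_0)}$ as its infinitesimal linearization. The key observation is that $\Omega$ is block-diagonal with respect to the decomposition $\mathcal{H} = \bigoplus_\lambda \mathcal{R}_\lambda \otimes \mathcal{M}_\lambda$, so it preserves each $\mathcal{H}_\lambda$, and inside $\mathcal{H}_\lambda$ it acts as $\sum_j \zeta_{\lambda j} \otimes \ket{\psi_{\lambda j}}\bra{\psi_{\lambda j}}$. Since the state $\ket{\chi_k}$ for $k=(\lambda, j)$ is constructed from factors living entirely in $\mathcal{H}_\lambda$, and its $\mathcal{M}_\lambda$-part is $\ket{\psi_{\lambda j}}^{\otimes r_\lambda}$ (tensored further $D/r_\lambda$ times), only the single summand $\zeta_{\lambda j}\otimes \ket{\psi_{\lambda j}}\bra{\psi_{\lambda j}}$ survives when $\Omega$ acts on any factor of $\ket{\chi_k}$.

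For the multiplicative identity, I would first consider a single ``atom'' $\iota_\lambda^{\otimes r_\lambda}\bigl(\mathcal{A}_{\mathcal{R}_\lambda}(\{\ket{\phi_{\lambda i}}\}) \otimes \ket{\psi_{\lambda j}}^{\otimes r_\lambda}\bigr)$ inside $\ket{\chi_k}$. Using the block-diagonal form of $\Omega$ together with $\iota_\lambda^\dag \iota_{\lambda'} = \delta_{\lambda\lambda'} I$, one computes that $\Omega^{\otimes r_\lambda}$ acts on this atom by $\zeta_{\lambda j}^{\otimes r_\lambda}$ on the $\mathcal{R}_\lambda^{\otimes r_\lambda}$-part and leaves the $\mathcal{M}_\lambda^{\otimes r_\lambda}$-part fixed. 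The determinant property of totally antisymmetric states (Lemma~\ref{lem:determinantstate}) then yields $\zeta_{\lambda j}^{\otimes r_\lambda} \mathcal{A}_{\mathcal{R}_\lambda}(\{\ket{\phi_{\lambda i}}\}) = \det(\zeta_{\lambda j}) \mathcal{A}_{\mathcal{R}_\lambda}(\{\ket{\phi_{\lambda i}}\})$, so each atom is an eigenvector with eigenvalue $\det(\zeta_k)$. Since $\ket{\chi_k}$ is a tensor product of $D/r_\lambda$ such atoms and $\ket{\Phi(\bm{n})}$ is a tensor product of $n_k$ copies of $\ket{\chi_k}$ (for each $k \in K$), multiplicativity of eigenvalues under tensor products of commuting factors immediately gives $\Omega^{\otimes D\bm{n}\cdot\bm{w}_0}\ket{\Phi(\bm{n})} = \bigl(\prod_k \det(\zeta_k)^{n_k D/r_\lambda}\bigr)\ket{\Phi(\bm{n})}$. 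One also checks the bookkeeping: each $\ket{\chi_k}$ uses exactly $D$ factors of $\mathcal{H}$, so the total number of copies is $D\sum_k n_k = D\bm{n}\cdot\bm{w}_0$.

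For the additive identity, I would apply the identity just proven to the perturbed operator $\Omega_\epsilon := I + \epsilon \Omega$. Writing $I = \sum_\lambda \iota_\lambda (\sum_j I_{\mathcal{R}_\lambda} \otimes \ket{\psi_{\lambda j}}\bra{\psi_{\lambda j}}) \iota_\lambda^\dag$ shows $\Omega_\epsilon$ has the same block form with $\zeta_{\lambda j}$ replaced by $I_{\mathcal{R}_\lambda} + \epsilon \zeta_{\lambda j}$. Expanding both sides of the tensor-power identity to first order in $\epsilon$ and using $\det(I + \epsilon \zeta_k) = 1 + \epsilon \,\mathrm{tr}(\zeta_k) + O(\epsilon^2) = 1 + \epsilon r_\lambda u_k + O(\epsilon^2)$ on the right, together with $(I+\epsilon\Omega)^{\otimes N} = I + \epsilon \Omega^{(N)} + O(\epsilon^2)$ on the left (where $N = D\bm{n}\cdot\bm{w}_0$), and matching the coefficients of $\epsilon$ yields $\Omega^{(D\bm{n}\cdot\bm{w}_0)}\ket{\Phi(\bm{n})} = \sum_k n_k \cdot \frac{D}{r_\lambda} \cdot r_\lambda u_k \, \ket{\Phi(\bm{n})} = D\bm{n}\cdot\bm{u}\,\ket{\Phi(\bm{n})}$.

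The main obstacle will be the careful tensor-factor bookkeeping in the first step, because the state $\ket{\chi_k}$ lives naturally in $\mathcal{H}_\lambda^{\otimes r_\lambda}$ (via the inclusion $\iota_\lambda$) but is rearranged into $(\mathcal{R}_\lambda)^{\otimes r_\lambda} \otimes (\mathcal{M}_\lambda)^{\otimes r_\lambda}$ to apply the antisymmetrizer, and then repeated $D/r_\lambda$ times. Spelling out the commutation of $\Omega$ with $\iota_\lambda$ (using $\Omega \iota_\lambda = \iota_\lambda \sum_j \zeta_{\lambda j}\otimes \ket{\psi_{\lambda j}}\bra{\psi_{\lambda j}}$) and showing that only the diagonal $j$-term survives because of the $\ket{\psi_{\lambda j}}\bra{\psi_{\lambda j}}$ projector is the nontrivial identification; once that is done, Lemma~\ref{lem:determinantstate} finishes the single-atom computation and the rest is formal.
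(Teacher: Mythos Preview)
Your proposal is correct and follows essentially the same route as the paper: first reduce to a single ``atom'' using the block structure of $\Omega$ and the determinant property of $\mathcal{A}_{\mathcal{R}_\lambda}$ (Lemma~\ref{lem:determinantstate}), then extend multiplicatively to $\ket{\chi_k}$ and $\ket{\Phi(\bm{n})}$, and finally obtain the additive identity by linearization. The only cosmetic difference is in the linearization step: the paper substitutes $e^{\theta\Omega}$ for $\Omega$ (so that $\det(e^{\theta\zeta_k})=e^{\theta\,\mathrm{tr}(\zeta_k)}$ exactly) and differentiates at $\theta=0$, whereas you use $I+\epsilon\Omega$ and match first-order terms; both are equivalent.
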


\begin{proof}
	From Lemma~\ref{lem:determinantstate}, for any $\Omega\in\mathcal{B}(\mathcal{H})$, $\lambda\in\Lambda_F$ and $j\in M_\lambda$,
	\begin{align}
		&\Omega^{\otimes r_\lambda}\iota_\lambda^{\otimes r_\lambda}(\mathcal{A}_{\mathcal{R}_\lambda}(\{\ket{\phi_{\lambda i}}\}_{i=1}^{r_\lambda})\otimes\ket{\psi_{\lambda j}}^{\otimes r_\lambda}) \nonumber\\
		=&\left[\sum_{\lambda'\in\Lambda_F} \iota_{\lambda'} \left(\sum_{j'=1}^{m_{\lambda'}} \zeta_{\lambda' j'}\otimes\ket{\psi_{\lambda' j'}}\bra{\psi_{\lambda' j'}}\right)\iota_{\lambda'}^\dag\right]^{\otimes r_\lambda}\iota_\lambda^{\otimes r_\lambda}(\mathcal{A}_{\mathcal{R}_\lambda}(\{\ket{\phi_{\lambda i}}\}_{i=1}^{r_\lambda})\otimes\ket{\psi_{\lambda j}}^{\otimes r_\lambda}) \nonumber\\
		=&\iota_\lambda^{\otimes r_\lambda}\left(\sum_{j'=1}^{m_\lambda} \zeta_{\lambda j'}\otimes\ket{\psi_{\lambda j'}}\bra{\psi_{\lambda j'}}\right)^{\otimes r_\lambda}(\mathcal{A}_{\mathcal{R}_\lambda}(\{\ket{\phi_{\lambda i}}\}_{i=1}^{r_\lambda})\otimes\ket{\psi_{\lambda j}}^{\otimes r_\lambda}) \nonumber\\
		=&\iota_\lambda^{\otimes r_\lambda}(\zeta_{\lambda j}^{\otimes r_\lambda}\mathcal{A}_{\mathcal{R}_\lambda}(\{\ket{\phi_{\lambda i}}\}_{i=1}^{r_\lambda})\otimes\ket{\psi_{\lambda j}}^{\otimes r_\lambda}) \nonumber\\
		=&\iota_\lambda^{\otimes r_\lambda}(\mathrm{det}(\zeta_{\lambda j})\mathcal{A}_{\mathcal{R}_\lambda}(\{\ket{\phi_{\lambda i}}\}_{i=1}^{r_\lambda})\otimes\ket{\psi_{\lambda j}}^{\otimes r_\lambda}) \nonumber\\
		=&\mathrm{det}(\zeta_{\lambda j})\iota_\lambda^{\otimes r_\lambda}(\mathcal{A}_{\mathcal{R}_\lambda}(\{\ket{\phi_{\lambda i}}\}_{i=1}^{r_\lambda})\otimes\ket{\psi_{\lambda j}}^{\otimes r_\lambda}).
	\end{align}
	By the definition of $\ket{\chi_k}$ (Eq.~\eqref{eq:chi_k_definition}),
	\begin{align}
		&\Omega^{\otimes D}\ket{\chi_k} \nonumber\\
		=&[\Omega^{\otimes r_\lambda}\iota_\lambda^{\otimes r_\lambda}(\mathcal{A}_{\mathcal{R}_\lambda}(\ket{\phi_1}, \cdots, \ket{\phi_{r_\lambda}})\otimes\ket{\psi_{\lambda j}}^{\otimes r_\lambda})]^{\otimes\frac{D}{r_\lambda}} \nonumber\\
		=&[\mathrm{det}(\zeta_{\lambda j})\iota_\lambda^{\otimes r_\lambda}(\mathcal{A}_{\mathcal{R}_\lambda}(\ket{\phi_1}, \cdots, \ket{\phi_{r_\lambda}})\otimes\ket{\psi_{\lambda j}}^{\otimes r_\lambda})]^{\otimes\frac{D}{r_\lambda}} \nonumber\\
		=&\mathrm{det}(\zeta_{\lambda j})^\frac{D}{r_\lambda}\ket{\chi_k}.
	\end{align}
	Furthermore, for any $\bm{n}\in\mathbb{N}^K$, by the definition of $\ket{\Phi(\bm{n})}$, 
	\begin{align}
		\Omega^{\otimes D\bm{n}\cdot\bm{w}_0}\ket{\Phi(\bm{n})}=\bigotimes_{k\in K}(\Omega^{\otimes D}\ket{\chi_k})^{\otimes n_k}
		=\bigotimes_{k\in K}\left(\mathrm{det}(\zeta_k)^\frac{D}{r_\lambda}\ket{\chi_k}\right)^{\otimes n_k}
		=\left(\prod_{k\in K}\mathrm{det}(\zeta_k)^{n_k \frac{D}{r_\lambda}}\right)\ket{\Phi(\bm{n})}. \label{eq:detstateresult}
	\end{align}
	Since for any $\theta\in\mathbb{R}$, 
	\begin{align}
		e^{\theta\Omega}=\sum_{\lambda\in\Lambda_F} \iota_\lambda \left(\sum_{j=1}^{m_\lambda} e^{\theta\zeta_{\lambda j}}\otimes\ket{\psi_{\lambda j}}\bra{\psi_{\lambda j}}\right)\iota_\lambda^\dag,
	\end{align}
	by substituting $e^{\theta\Omega}$ into $\Omega$ in Eq.~\eqref{eq:detstateresult}, we get 
	\begin{align}
		e^{\theta\Omega^{(D\bm{n}\cdot\bm{w}_0)}}\ket{\Phi(\bm{n})}=&{(e^{\theta\Omega})}^{\otimes D\bm{n}\cdot\bm{w}_0}\ket{\Phi(\bm{n})} \nonumber\\
		=&\left(\prod_{k\in K}\mathrm{det}(e^{\theta\zeta_k})^{n_k \frac{D}{r_\lambda}}\right)\ket{\Phi(\bm{n})} \nonumber\\
		=&\left(\prod_{k\in K}e^{\theta\mathrm{tr}(\zeta_k)\cdot n_k \frac{D}{r_\lambda}}\right)\ket{\Phi(\bm{n})} \nonumber\\
		=&e^{\theta D\sum_{k\in K}n_l\frac{1}{r_\lambda}\mathrm{tr}(\zeta_k)}\ket{\Phi(\bm{n})} \nonumber\\
		=&e^{\theta D\bm{n}\cdot\bm{u}}\ket{\Phi(\bm{n})}.
	\end{align}
	By taking the derivative at $\theta=0$, we get
	\begin{align}
		\Omega^{(D\bm{n}\cdot\bm{w}_0)}\ket{\Phi(\bm{n})}=D\bm{n}\cdot\bm{u}\ket{\Phi(\bm{n})}.
	\end{align}
\end{proof}

Lemma~\ref{lem:PsiexistenceforLie} guarantees that there exist states $\ket{\Psi_0}$ and $\ket{\Psi_1}$ in Propositions~\ref{prop:preGGEderivation} and \ref{prop:GGEderivation}, when the symmetry group $G$ is a connected compact Lie group and the Hamiltonian $H$ is not $(G, F)$-trivial.

\begin{lemma} \label{lem:PsiexistenceforLie}
	Let $G$ be a connected compact Lie group with a unitary representation $F$ acting on $\mathcal{H}$, $\mathfrak{g}$ be the Lie algebra with the associated representation $f$, $H\in\mathcal{B}^\mathrm{H}(\mathcal{H})$ commute with $F(G)$ and be not $(G, F)$-trivial.
	Then, there exist $L\in\mathbb{N}$ and a pair of simultaneous eigenstates $\ket{\Psi_0},\ \ket{\Psi_1}\in\mathcal{H}^{\otimes L}$ of $F^{\otimes L}(G)$ and $H^{(L)}$ such that the eigenvalues satisfy $\braket{\Psi_0|F^{\otimes L}(g)|\Psi_0}=\braket{\Psi_1|F^{\otimes L}(g)|\Psi_1}$ for all $g\in G$ but $\braket{\Psi_0|H^{(L)}|\Psi_0}\neq\braket{\Psi_1|H^{(L)}|\Psi_1}$.
\end{lemma}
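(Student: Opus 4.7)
The plan is to exhibit $\ket{\Psi_0}$ and $\ket{\Psi_1}$ of the form $\ket{\Phi(\bm{n}_0)}$ and $\ket{\Phi(\bm{n}_1)}$ defined in Eq.~\eqref{eq:determinantstate}, chosen so that $\bm{n}_0, \bm{n}_1 \in \mathbb{N}^K$ satisfy $\bm{n}_0 \cdot \bm{w}_0 = \bm{n}_1 \cdot \bm{w}_0$ and hence both states live in $\mathcal{H}^{\otimes L}$ with $L := D\bm{n}_0 \cdot \bm{w}_0$. By Lemma~\ref{lem:eigenvectorcomposition}, each $\ket{\Phi(\bm{n}_l)}$ is automatically a simultaneous eigenstate of $H^{(L)}$ and of $F^{\otimes L}(g)$ for every $g \in G$, so the task reduces to choosing $\bm{n}_0, \bm{n}_1$ such that the $F$-eigenvalues coincide for all $g$ while the $H$-eigenvalues differ.

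First I would make the $F^{\otimes L}(g)$-eigenvalue fully explicit. Using the Levi decomposition $G = e^{\mathrm{i}\mathfrak{n}}e^{\mathrm{i}\mathfrak{s}}$ from Eq.~\eqref{eq:Levi_decomposition}, every $g \in G$ can be written as $g = e^{\mathrm{i}(\sum_i \alpha_i W_i + V)}$ with $V \in \mathfrak{s}$, $\alpha_i \in \mathbb{R}$, and $\{W_i\}$ the basis of $\mathfrak{n}$ from Lemma~\ref{lem:Liealgebrabasis}. Applying $f_\lambda$ and using $\det e^X = e^{\mathrm{tr}(X)}$ together with Lemma~\ref{lem:semisimpletrrep} (which gives $\mathrm{tr}(f_\lambda(V)) = 0$) yields $\det F_\lambda(g) = \exp(\mathrm{i} r_\lambda \sum_i \alpha_i w_{i\lambda})$. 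Substituting into the determinant factor of Lemma~\ref{lem:eigenvectorcomposition} gives the $F^{\otimes L}(g)$-eigenvalue on $\ket{\Phi(\bm{n})}$ as $\exp[\mathrm{i} D \sum_i \alpha_i (\bm{n} \cdot \bm{w}_i)]$, with $\bm{w}_i$ as in the proof of Proposition~\ref{prop:GGEderivation}. Since this is a linear function of $(\alpha_1,\ldots,\alpha_a) \in \mathbb{R}^a$, equality of the two eigenvalues for every $g \in G$ is equivalent to $\bm{n}_0 \cdot \bm{w}_i = \bm{n}_1 \cdot \bm{w}_i$ for each $i = 1, \ldots, a$. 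The corresponding $H^{(L)}$-eigenvalue on $\ket{\Phi(\bm{n})}$ is $D \bm{n} \cdot \bm{E}$ with $\bm{E}_{(\lambda,j)} := E_{\lambda j}$, again by Lemma~\ref{lem:eigenvectorcomposition}.

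Next I would show that the energy vector $\bm{E}$ does \emph{not} lie in the $\mathbb{R}$-span of $\{\bm{w}_0, \bm{w}_1, \ldots, \bm{w}_a\}$. Assume for contradiction that $\bm{E} = c_0 \bm{w}_0 + \sum_i c_i \bm{w}_i$ for some $c_j \in \mathbb{R}$. Then $E_{\lambda j} = c_0 + \sum_i c_i w_{i\lambda}$ is independent of $j$, so $H_\lambda = (c_0 + \sum_i c_i w_{i\lambda}) I_{\mathcal{M}_\lambda}$ for every $\lambda$. Reassembling through Eq.~\eqref{eq:Hamiltoniandecomp} and using $\sum_\lambda w_{i\lambda} \iota_\lambda \iota_\lambda^\dag = f(W_i)$ from Lemma~\ref{lem:Liealgebrabasis} yields $H = c_0 I + f(\sum_i c_i W_i)$ with $\sum_i c_i W_i \in \mathfrak{n} \subset \mathfrak{g}$, contradicting the hypothesis that $H$ is not $(G,F)$-trivial.

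Since every $\bm{w}_i$ has integer entries by Lemma~\ref{lem:Liealgebrabasis} (and $\bm{w}_0 \in \mathbb{Q}^K$ trivially), the contrapositive of Lemma~\ref{lem:latticeduality} now supplies $\bm{n}_0, \bm{n}_1 \in \mathbb{N}^K$ with $\bm{n}_0 \cdot \bm{w}_i = \bm{n}_1 \cdot \bm{w}_i$ for all $i = 0, 1, \ldots, a$ but $\bm{n}_0 \cdot \bm{E} \neq \bm{n}_1 \cdot \bm{E}$. Setting $L := D\bm{n}_0 \cdot \bm{w}_0$ and $\ket{\Psi_l} := \ket{\Phi(\bm{n}_l)}$ then delivers the required pair. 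The main technical hurdle is the first step: reducing the a priori $g$-dependent eigenvalue of $F^{\otimes L}(g)$ to a function of only the Abelian numerical data $\bm{n} \cdot \bm{w}_i$, which crucially exploits both the Levi decomposition of $G$ and the tracelessness of $f|_{\mathfrak{s}}$. Once this reduction is in hand, the remainder is a clean lattice-duality application of Lemma~\ref{lem:latticeduality}.
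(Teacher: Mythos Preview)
Your proposal is correct and follows essentially the same route as the paper: both use the states $\ket{\Phi(\bm{n})}$, reduce the $F^{\otimes L}(g)$-eigenvalue condition to the lattice constraints $\bm{n}_0\cdot\bm{w}_i=\bm{n}_1\cdot\bm{w}_i$, derive a contradiction with non-$(G,F)$-triviality if $\bm{E}\in\mathrm{span}\{\bm{w}_0,\ldots,\bm{w}_a\}$, and invoke Lemma~\ref{lem:latticeduality} to produce the required $\bm{n}_0,\bm{n}_1$. The only difference is that you spell out the $F^{\otimes L}(g)$-eigenvalue computation explicitly via the Levi decomposition and Lemma~\ref{lem:semisimpletrrep}, whereas the paper simply refers back to the analogous calculation already carried out in the proof of Proposition~\ref{prop:GGEderivation}.
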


\begin{proof}
	Suppose that for any $\bm{n}_0=(n_{0k})_{k\in K},\ \bm{n}_1=(n_{1k})_{k\in K}\in\mathbb{N}^K$ satisfying $\bm{n}_0\cdot\bm{w}_i=\bm{n}_1\cdot\bm{w}_i$ for all $i=0, \cdots, a$, we obtain $\bm{n}_0\cdot\bm{E}=\bm{n}_1\cdot\bm{E}$, where $K:=\{(\lambda, j)\ |\ \lambda\in\Lambda_F, j=1, \cdots, m_\lambda\}$, $\bm{w}_0:=(1)_{k\in K}$, $\bm{w}_i:=(w_{ik})_{k\in K}$, $w_{ik}:=w_{i \lambda}$ for $k=(\lambda, j)\in K$, $w_{i \lambda}$ is defined in Lemma~\ref{lem:Liealgebrabasis} for $i=1, \cdots, a$, $\bm{E}:=(E_k)_{k\in K}$, $E_k:=E_{\lambda j}$ for $k=(\lambda, j)\in K$, and $E_{\lambda j}$ is defined in Eq.~\eqref{eq:Hamiltoniandecomp2}.
	Since $\bm{w}_i\in\mathbb{N}^K$ for all $i=0, \cdots, a$, from Lemma~\ref{lem:latticeduality}, $\bm{E}$ can be written as $\bm{E}=\sum_{i=0}^a \alpha_i \bm{w}_i$ with some $\alpha_0, \cdots, \alpha_a\in\mathbb{R}$.
	Then, for any $\lambda\in\Lambda_F$ and $j=1, \cdots, m_\lambda$, we obtain $E_{\lambda j}=\sum_{i=0}^a \alpha_i w_{i\lambda}$, and thus 
	\begin{align}
		H
		=&\sum_{\lambda\in\Lambda_F} \sum_{j=1}^{m_\lambda} E_{\lambda j}\iota_\lambda(I_{\mathcal{R}_\lambda}\otimes \ket{\psi_{\lambda j}}\bra{\psi_{\lambda j}})\iota_\lambda^\dag \nonumber\\
		=&\sum_{\lambda\in\Lambda_F} \sum_{j=1}^{m_\lambda} \left(\sum_{i=0}^a \alpha_i w_{i\lambda}\right)\iota_\lambda(I_{\mathcal{R}_\lambda}\otimes \ket{\psi_{\lambda j}}\bra{\psi_{\lambda j}})\iota_\lambda^\dag \nonumber\\
		=&\sum_{\lambda\in\Lambda_F} \left(\alpha_0+\sum_{i=1}^a \alpha_i w_{i\lambda}\right)\iota_\lambda\left(I_{\mathcal{R}_\lambda}\otimes \sum_{j=1}^{m_\lambda} \ket{\psi_{\lambda j}}\bra{\psi_{\lambda j}}\right)\iota_\lambda^\dag \nonumber\\
		=&\sum_{\lambda\in\Lambda_F} \left(\alpha_0+\sum_{i=1}^a \alpha_i w_{i\lambda}\right)\iota_\lambda\iota_\lambda^\dag \nonumber\\
		=&\alpha_0\sum_{\lambda\in\Lambda_F} \iota_\lambda\iota_\lambda^\dag+\sum_{i=1}^a \left(\alpha_i \sum_{\lambda\in\Lambda_F} w_{i\lambda}\iota_\lambda\iota_\lambda^\dag\right) \nonumber\\
		=&\alpha_0 I+\sum_{i=1}^a \alpha_i f(W_i) \nonumber\\
		=&\alpha_0 I+f(W),
	\end{align}
	where $W:=\sum_{i=1}^a W_i\in\mathfrak{g}$.
	Since this contradicts the fact that $H$ is not $(G, F)$-trivial, we can take $\bm{n}_0,\ \bm{n}_1\in\mathbb{N}^K$ such that $\bm{n}_0\cdot\bm{w}_i=\bm{n}_1\cdot\bm{w}_i$ for all $i=0, \cdots, a$ but $\bm{n}_0\cdot\bm{E}\neq\bm{n}_1\cdot\bm{E}$.
	We define $\ket{\Psi_l}:=\ket{\Phi(\bm{n}_l)}$ with $\ket{\Phi(\bm{n})}$ defined as Eq.~\eqref{eq:determinantstate}, where $L:=D\bm{n}_0\cdot\bm{w}_0=D\bm{n}_1\cdot\bm{w}_0$.
	In the same way as $\ket{\Psi'_0}$ and $\ket{\Psi'_1}$ in Proposition~\ref{prop:GGEderivation}, $\ket{\Psi_0}$ and $\ket{\Psi_1}$ are simultaneous eigenstates of $F^{\otimes L}(G)$ and $H^{(L)}$ and the eigenvalues satisfy 
	\begin{align}
		&\forall g\in G\ \braket{\Psi_0|F^{\otimes L}(g)|\Psi_0}=\braket{\Psi_1|F^{\otimes L}(g)|\Psi_1},\\
		&\braket{\Psi_0|H^{(L)}|\Psi_0}=D\bm{n}_0\cdot\bm{E}\neq D\bm{n}_1\cdot\bm{E}=\braket{\Psi_1|H^{(L)}|\Psi_1}.
	\end{align}
\end{proof}

Lemma~\ref{lem:Psiexistenceforfinite} guarantees that there exist states $\ket{\Psi_0}$ and $\ket{\Psi_1}$ in Propositions~\ref{prop:prefiniteGEderivation} and \ref{prop:finiteGEderivation}, when the symmetry group $G$ is a finite group and the Hamiltonian $H$ is not trivial.

\begin{lemma} \label{lem:Psiexistenceforfinite}
	Let $G$ be a finite group with a unitary representation $F$ acting on $\mathcal{H}$, $H\in\mathcal{B}^\mathrm{H}(\mathcal{H})$ commute with $F(G)$ and be not trivial.
	Then, there exist $L\in\mathbb{N}$ and a pair of simultaneous eigenstates $\ket{\Psi_0},\ \ket{\Psi_1}\in\mathcal{H}^{\otimes L}$ of $F^{\otimes L}(G)$ and $H^{(L)}$ such that the eigenvalues satisfy $\braket{\Psi_0|F^{\otimes L}(g)|\Psi_0}=\braket{\Psi_1|F^{\otimes L}(g)|\Psi_1}=1$ for all $g\in G$ but $\braket{\Psi_0|H^{(L)}|\Psi_0}\neq\braket{\Psi_1|H^{(L)}|\Psi_1}$.
\end{lemma}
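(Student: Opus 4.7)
My plan is to mirror the construction used inside the proof of Proposition~\ref{prop:finiteGEderivation} and combine it with the lattice-duality argument of Lemma~\ref{lem:latticeduality}. Specifically, I will define the candidate states by $\ket{\Psi_i} := \ket{\Phi(|G|\bm{n}_i)}$ for $i=0,1$ with $\bm{n}_0,\bm{n}_1 \in \mathbb{N}^K$ to be chosen below, where $\ket{\Phi(\cdot)}$ is the totally-antisymmetric product state of Eq.~\eqref{eq:determinantstate} and $K, \bm{w}_0, D$ are as in Lemma~\ref{lem:eigenvectorcomposition}. The extra factor $|G|$ in $|G|\bm{n}_i$ is the device that upgrades each $F^{\otimes L}(g)$-eigenvalue from a generic root of unity to exactly $1$: applying Lemma~\ref{lem:eigenvectorcomposition} with $\zeta_{\lambda j}=F_\lambda(g)$ produces the eigenvalue $\prod_{k\in K}\det(F_\lambda(g))^{|G|\,n_{ik}\,D/r_\lambda}$, and by Lagrange's theorem $g^{|G|}=1$ in the finite group $G$, so $\det(F_\lambda(g))^{|G|}=\det(F_\lambda(g^{|G|}))=\det(I_{\mathcal{R}_\lambda})=1$. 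Since $D/r_\lambda\in\mathbb{N}$, the whole product collapses to $1$ for both $i=0,1$ and every $g\in G$, giving $\braket{\Psi_i|F^{\otimes L}(g)|\Psi_i}=1$ as required.

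For the energy, a second application of Lemma~\ref{lem:eigenvectorcomposition}, this time with $\zeta_{\lambda j}=E_{\lambda j}I_{\mathcal{R}_\lambda}$ (with $E_{\lambda j}$ defined by Eq.~\eqref{eq:Hamiltoniandecomp2}), yields $H^{(L)}\ket{\Psi_i}=|G|D\,\bm{n}_i\cdot\bm{E}\,\ket{\Psi_i}$ with $L=|G|D\,\bm{n}_i\cdot\bm{w}_0$ and $\bm{E}=(E_k)_{k\in K}$. To ensure that both $\ket{\Psi_i}$ live in the same tensor power $\mathcal{H}^{\otimes L}$ while carrying different energies, I therefore must pick $\bm{n}_0,\bm{n}_1\in\mathbb{N}^K$ satisfying $\bm{n}_0\cdot\bm{w}_0=\bm{n}_1\cdot\bm{w}_0$ and $\bm{n}_0\cdot\bm{E}\neq\bm{n}_1\cdot\bm{E}$.

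The only nontrivial step is proving that such a pair exists, and here I will argue by contraposition through Lemma~\ref{lem:latticeduality}. If no such pair existed, then every $\bm{n}_0,\bm{n}_1\in\mathbb{N}^K$ with $\bm{n}_0\cdot\bm{w}_0=\bm{n}_1\cdot\bm{w}_0$ would also satisfy $\bm{n}_0\cdot\bm{E}=\bm{n}_1\cdot\bm{E}$; applying Lemma~\ref{lem:latticeduality} with the single rational constraint vector $\bm{w}_0\in\mathbb{N}^K$ would force $\bm{E}=\alpha\bm{w}_0$ for some $\alpha\in\mathbb{R}$, i.e.\ $E_{\lambda j}=\alpha$ for every $(\lambda,j)\in K$. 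Substituting into the decomposition $H=\sum_{\lambda\in\Lambda_F}\iota_\lambda(I_{\mathcal{R}_\lambda}\otimes\sum_{j=1}^{m_\lambda}E_{\lambda j}\ket{\psi_{\lambda j}}\bra{\psi_{\lambda j}})\iota_\lambda^\dag$ from Eq.~\eqref{eq:Hamiltoniandecomp2} would collapse $H$ to $\alpha\sum_\lambda\iota_\lambda\iota_\lambda^\dag=\alpha I$, contradicting the hypothesis that $H$ is not trivial. Hence the required $\bm{n}_0,\bm{n}_1$ exist and the lemma follows.

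I do not anticipate any serious obstacle; the statement is a clean finite-group analogue of Lemma~\ref{lem:PsiexistenceforLie} and all the heavy lifting has already been isolated in Lemmas~\ref{lem:eigenvectorcomposition} and \ref{lem:latticeduality}. The only delicate point is the uniform $|G|$-scaling of $\bm{n}_i$, which is precisely what eliminates the residual $F^{\otimes L}(G)$-eigenvalue obstruction that would otherwise appear for a generic $\bm{n}$ and that is absent in the Lie-group case, where one only needs invariance of the eigenvalue across the two states rather than the stronger condition that the eigenvalue equals $1$.
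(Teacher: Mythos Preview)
Your proposal is correct and follows the paper's construction exactly: the states $\ket{\Psi_i}=\ket{\Phi(|G|\bm{n}_i)}$, the $|G|$-scaling to force each $F^{\otimes L}(g)$-eigenvalue to equal $1$, and the energy computation via Lemma~\ref{lem:eigenvectorcomposition}. The only difference is in how you produce $\bm{n}_0,\bm{n}_1$: you argue by contraposition through Lemma~\ref{lem:latticeduality}, whereas the paper observes more directly that non-triviality of $H$ gives indices $k_0,k_1\in K$ with $E_{k_0}\neq E_{k_1}$ and simply takes $n_{ik}:=\delta_{kk_i}$, so that $\bm{n}_0\cdot\bm{w}_0=\bm{n}_1\cdot\bm{w}_0=1$ and $\bm{n}_0\cdot\bm{E}=E_{k_0}\neq E_{k_1}=\bm{n}_1\cdot\bm{E}$ without invoking the lattice lemma at all.
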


\begin{proof}
	Since $H$ is not trivial, we can take $k_0, k_1\in K$ satisfying $E_{k_0}\neq E_{k_1}$, where $K:=\{(\lambda, j)\ |\ \lambda\in\Lambda_F, j=1, \cdots, m_\lambda\}$, $E_k:=E_{\lambda j}$ for $k=(\lambda, j)\in K$ and $E_{\lambda j}$ is defined in Eq.~\eqref{eq:Hamiltoniandecomp2}.
	We define $\bm{n}_0=(n_{0k})_{k\in K}, \bm{n}_1=(n_{1k})_{k\in K}\in\mathbb{N}^K$ and $\ket{\Psi_0}, \ket{\Psi_1}\in\mathcal{H}^{\otimes L}$ as $n_{ik}:=\delta_{k k_i}$ and $\ket{\Psi_i}:=\ket{\Phi(|G|\bm{n}_i)}$ with $\ket{\Phi(\bm{n})}$ defined as Eq.~\eqref{eq:determinantstate}, where $L:=D|G|$. 
	In the same way as $\ket{\Psi'_0}$ and $\ket{\Psi'_1}$ in Proposition~\ref{prop:finiteGEderivation}, $\ket{\Psi_0}$ and $\ket{\Psi_1}$ are simultaneous eigenstates of $F^{\otimes L}(G)$ and $H^{(L)}$ and the eigenvalues satisfy
	\begin{align}
		&\forall g\in G\ \braket{\Psi_0|F^{\otimes L}(g)|\Psi_0}=\braket{\Psi_1|F^{\otimes L}(g)|\Psi_1}=1,\\
		&\braket{\Psi_0|H^{(L)}|\Psi_0}=D|G|E_{k_0}\neq D|G|E_{k_1}=\braket{\Psi_1|H^{(L)}|\Psi_1}.
	\end{align}
\end{proof}

In Lemma~\ref{lem:U1extractedwork}, we prove the properties of the mapping $\mathcal{O}$ defined as Eq.~\eqref{eq:unitary}.
$\mathcal{O}$ is useful for the construction of unitary operators that extract positive work in the proofs of Propositions~\ref{prop:workextop1} and \ref{prop:virtualtemperature}.

\begin{lemma} \label{lem:U1extractedwork}
	Let $N\in\mathbb{N}$, $\epsilon\in\mathbb{R}$, $C_{ij}\in\mathcal{B}(\mathcal{H}^{\otimes N})$ for $i, j\in\{0, 1\}$ and $\mathcal{O}$ be defined as Eq.~\eqref{eq:unitary} .
	If $\{C_{ij}\}_{i, j\in\{0, 1\}}$ satisfies
	\begin{align}
		\forall i, j, k, l\in\{0,1\}\ C_{ij}^\dagger=C_{ji},\ C_{ij}C_{kl}=\delta_{jk}C_{il}, \label{eq:Ccond1}
	\end{align}
	then $\mathcal{O}(\{C_{ij}\})$ is unitary.
	In addition, if $\{C_{ij}\}$ also satisfies
	\begin{align}
		\forall i, j\in\{0,1\}\ [H^{(N)}, C_{ij}]=\epsilon(i-j)C_{ij}, \label{eq:Ccond2}
	\end{align}
	then
	\begin{align}
		&W(\rho^{\otimes N}, H^{(N)}, \mathcal{O}(\{C_{ij}\}))=\epsilon(\mathrm{tr}(\rho^{\otimes N}C_{11})-\mathrm{tr}(\rho^{\otimes N}C_{00})), \label{eq:extractedwork1}\\
		&[\mathcal{O}(\{C_{ij}\})^\dag H^{(N)}\mathcal{O}(\{C_{ij}\}), H^{(N)}]=0.
	\end{align}
\end{lemma}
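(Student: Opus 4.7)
The plan is to reorganize the four operators $\{C_{ij}\}$ into the two symmetric combinations
\begin{align}
P := C_{00}+C_{11}, \qquad Q := C_{01}+C_{10},
\end{align}
so that a short computation gives $\mathcal{O}(\{C_{ij}\})=I-P+Q$. From the multiplication table~\eqref{eq:Ccond1} I expect to extract the following compact algebra: $P$ and $Q$ are both Hermitian, $P^{2}=P$, $Q^{2}=P$, and $PQ=QP=Q$. The verification is just four applications of $C_{ij}C_{kl}=\delta_{jk}C_{il}$ together with $C_{ij}^{\dagger}=C_{ji}$; for instance $Q^{2}=C_{01}C_{10}+C_{10}C_{01}=C_{00}+C_{11}=P$, and similarly $PQ=C_{00}C_{01}+C_{11}C_{10}=C_{01}+C_{10}=Q$.

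Given this algebra, $U:=\mathcal{O}(\{C_{ij}\})=I-P+Q$ is manifestly Hermitian, and direct expansion yields
\begin{align}
U^{2} = I-2P+2Q+P^{2}-PQ-QP+Q^{2} = I-2P+2Q+P-Q-Q+P = I,
\end{align}
so $U$ is a Hermitian involution; in particular $U$ is unitary and $U^{\dagger}=U$. This proves the first claim.

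For the second claim I would use~\eqref{eq:Ccond2}: the diagonal pieces $C_{00},C_{11}$ commute with $H^{(N)}$, hence $[H^{(N)},P]=0$, while $[H^{(N)},Q]=\epsilon(C_{10}-C_{01})$. Exploiting $U^{2}=I$, write
\begin{align}
H^{(N)}-U^{\dagger}H^{(N)}U = -U[H^{(N)},U] = -\epsilon\,U(C_{10}-C_{01}).
\end{align}
Expanding $U(C_{01}-C_{10})$ using the multiplication table (the cross terms cancel against the contributions from $P$) collapses to $C_{11}-C_{00}$, so
\begin{align}
H^{(N)}-U^{\dagger}H^{(N)}U = \epsilon\,(C_{11}-C_{00}).
\end{align}
Taking the expectation in $\rho^{\otimes N}$ yields~\eqref{eq:extractedwork1}, and since $[C_{00},H^{(N)}]=[C_{11},H^{(N)}]=0$, the identity $U^{\dagger}H^{(N)}U = H^{(N)}+\epsilon(C_{00}-C_{11})$ immediately gives $[U^{\dagger}H^{(N)}U,H^{(N)}]=0$.

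The only place where care is needed is the bookkeeping of the products $PQ$, $QP$, $Q^{2}$ and the cancellations in $U(C_{01}-C_{10})$; these are routine but must be done consistently with $C_{ij}C_{kl}=\delta_{jk}C_{il}$. Once the algebra of $P$ and $Q$ is in hand, the remainder of the lemma is essentially a one-line manipulation driven by $U^{2}=I$ and $[H^{(N)},P]=0$, so no genuine obstacle is anticipated.
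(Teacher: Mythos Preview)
Your proof is correct and is essentially the same argument as the paper's, just repackaged: the paper sets $\Pi:=\tfrac12\sum_{i,j}(-1)^{i-j}C_{ij}=\tfrac12(P-Q)$ and shows $\Pi$ is a projection (so $\mathcal{O}=I-2\Pi$ is a Hermitian involution), whereas you work directly with $P$ and $Q$ and verify $U^{2}=I$; the underlying multiplication-table computation is identical. The derivation of $H^{(N)}-U^{\dagger}H^{(N)}U=\epsilon(C_{11}-C_{00})$ likewise matches the paper's Eq.~\eqref{eq:Hdifference}, only with the index sums replaced by your $P,Q$ bookkeeping.
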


\begin{proof}
	We define 
	\begin{align}
		\Pi:=\frac{1}{2}\sum_{i,j\in\{0,1\}} (-1)^{i-j} C_{ij} \label{eq:projectionforunitary}
	\end{align} 
	and show that $\Pi$ is a projection operator.
	If $\{C_{ij}\}$ satisfies Eq.~\eqref{eq:Ccond1},
	\begin{align}
		\Pi^\dagger=\frac{1}{2}\sum_{i,j\in\{0,1\}} (-1)^{i-j} C_{ij}^\dag=\frac{1}{2}\sum_{i,j\in\{0,1\}} (-1)^{j-i} C_{ji}=\Pi,
	\end{align}
	and thus $\Pi$ is Hermitian.
	In addition, since 
	\begin{align}
		\Pi^2=&\left(\frac{1}{2}\sum_{i,j\in\{0,1\}} (-1)^{i-j} C_{ij}\right)\left(\frac{1}{2}\sum_{k,l\in\{0,1\}} (-1)^{k-l} C_{kl}\right) \nonumber\\
		=&\frac{1}{4}\sum_{i,j,k,l\in\{0,1\}} (-1)^{i-j+k-l} C_{ij}C_{kl} \nonumber\\
		=&\frac{1}{4}\sum_{i,j,k,l\in\{0,1\}} (-1)^{i-j+k-l} \delta_{jk}C_{il} \nonumber\\
		=&\frac{1}{4}\left(\sum_{j,k\in\{0,1\}} (-1)^{-j+k} \delta_{jk}\right)\left(\sum_{i,l\in\{0,1\}} (-1)^{i-l} C_{il}\right) \nonumber\\
		=&\frac{1}{4}\cdot 2\sum_{i,l\in\{0,1\}} (-1)^{i-l} C_{il} \nonumber\\
		=&\Pi,
	\end{align}
	$\Pi$ is a projection operator.
	Therefore, $\mathcal{O}(\{C_{ij}\})=I-2\Pi$ is unitary and Hermitian.

	If $\{C_{ij}\}$ satisfies Eq.~\eqref{eq:Ccond2}, 
	\begin{align}
		[H^{(N)},\mathcal{O}(\{C_{ij}\})]=-\epsilon\sum_{i,j\in\{0,1\}} (i-j)(-1)^{i-j} C_{ij}. \label{eq:HO_commutation}
	\end{align}
	Then we obtain
	\begin{align}
		&[H^{(N)},\mathcal{O}(\{C_{ij}\})]2\Pi \nonumber\\
		=&-\epsilon\left(\sum_{i,j\in\{0,1\}} (i-j)(-1)^{i-j} C_{ij}\right)\left(\sum_{k,l\in\{0,1\}} (-1)^{k-l} C_{kl}\right) \nonumber\\
		=&-\epsilon\sum_{i,j,k,l\in\{0,1\}} (i-j)(-1)^{i-j+k-l} C_{ij}C_{kl} \nonumber\\
		=&-\epsilon\sum_{i,j,k,l\in\{0,1\}} (i-j)(-1)^{i-j+k-l} \delta_{jk}C_{il} \nonumber\\
		=&-\epsilon\sum_{i,j,l\in\{0,1\}} (i-j)(-1)^{i-l} C_{il} \nonumber\\
		=&-\epsilon\sum_{i,l\in\{0,1\}} (2i-1)(-1)^{i-l} C_{il} \nonumber\\
		=&-\epsilon\sum_{i,j\in\{0,1\}} (2i-1)(-1)^{i-j} C_{ij}. \label{eq:HO_commutation_Pi}
	\end{align}
	From the Hermiticity of $\mathcal{O}(\{C_{ij}\})$ and Eqs.~\eqref{eq:HO_commutation} and \eqref{eq:HO_commutation_Pi},
	\begin{align}
		&H^{(N)}-\mathcal{O}(\{C_{ij}\})^\dagger H^{(N)} \mathcal{O}(\{C_{ij}\}) \nonumber\\
		=&[H^{(N)},\mathcal{O}(\{C_{ij}\})^\dagger]\mathcal{O}(\{C_{ij}\}) \nonumber\\
		=&[H^{(N)},\mathcal{O}(\{C_{ij}\})](I-2\Pi) \nonumber\\
		=&-\epsilon\sum_{i,j\in\{0,1\}} (1-i-j)(-1)^{i-j} C_{ij} \nonumber\\
		=&\epsilon(C_{11}-C_{00}). \label{eq:Hdifference}
	\end{align}
	Therefore, we get
	\begin{align}
		&W(\rho^{\otimes N}, H^{(N)}, \mathcal{O}(\{C_{ij}\})) \nonumber\\
		=&\mathrm{tr}\left(\rho^{\otimes N}\left(H^{(N)}-\mathcal{O}(\{C_{ij}\})^\dagger H^{(N)} \mathcal{O}(\{C_{ij}\})\right)\right) \nonumber\\
		=&\epsilon(\mathrm{tr}(\rho^{\otimes N}C_{11})-\mathrm{tr}(\rho^{\otimes N}C_{00})).
	\end{align}
	From Eq.~\eqref{eq:Ccond2}, $[C_{ii}, H^{(N)}]=0$.
	From this and Eq.~\eqref{eq:Hdifference}, we obtain $[\mathcal{O}(\{C_{ij}\})^\dagger H^{(N)}\mathcal{O}(\{C_{ij}\}), H^{(N)}]=[H^{(N)}-\epsilon(C_{11}-C_{00}), H^{(N)}]=0$.
\end{proof}

In Lemma~\ref{lem:SCFQcorrespondence}, we confirm that Eq.~\eqref{eq:SCFQcorrespondence} can be regarded as the definition of a mapping from $\mathcal{U}(\mathcal{H})$ to $\mathcal{U}^\mathrm{WS}(\mathcal{H}, \mathcal{H}_\mathrm{W})$. 
The commutativity of $V$ defined as Eq.~\eqref{eq:Vdef} with $H\otimes I_{\mathcal{H}_\mathrm{W}}+I_\mathcal{H}\otimes x$ is shown in Ref.~\cite{Kitaev2004}, but we also show it here for the sake of self-containedness.

\begin{lemma} \label{lem:SCFQcorrespondence}
	Let $H\in\mathcal{B}^\mathrm{H}(\mathcal{H})$, $U\in\mathcal{U}(\mathcal{H})$ and $V$ be defined as 
	\begin{align}
		V:=\int_{-\infty}^\infty dq\ e^{\mathrm{i}qH}Ue^{-\mathrm{i}qH}\otimes\ket{q}_{p\ p}\bra{q}, \label{eq:Vdef}
	\end{align}
	where $\ket{q}_p$ is the momentum eigenstate of the work storage with eigenvalue $q$.
	Then $V\in\mathcal{U}^\mathrm{WS}(\mathcal{H}, \mathcal{H}_\mathrm{W})$.
\end{lemma}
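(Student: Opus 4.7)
The plan is to verify the three defining properties of $\mathcal{U}^{\mathrm{WS}}(\mathcal{H},\mathcal{H}_{\mathrm{W}})$ in turn, namely (a) $V$ is unitary, (b) $[V, H\otimes I_{\mathcal{H}_{\mathrm{W}}} + I_{\mathcal{H}}\otimes x]=0$, and (c) $[V, I_{\mathcal{H}}\otimes p]=0$. Throughout I will write $V_q := e^{\mathrm{i}qH}Ue^{-\mathrm{i}qH}$, so that $V=\int dq\, V_q\otimes\ket{q}_{p\,p}\!\bra{q}$ and $V^{\dagger}=\int dq\, V_q^{\dagger}\otimes\ket{q}_{p\,p}\!\bra{q}$, and I will freely use the normalization ${}_p\!\braket{q|r}_p=\delta(q-r)$ and the completeness $\int dq\,\ket{q}_{p\,p}\!\bra{q}=I_{\mathcal{H}_{\mathrm{W}}}$.

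For (a), I would multiply the integral expressions directly: the momentum orthogonality collapses the double integral into a single one, giving
\begin{equation*}
V^{\dagger}V=\int dq\, V_q^{\dagger}V_q\otimes\ket{q}_{p\,p}\!\bra{q}=\int dq\, I_{\mathcal{H}}\otimes\ket{q}_{p\,p}\!\bra{q}=I_{\mathcal{H}\otimes\mathcal{H}_{\mathrm{W}}},
\end{equation*}
and similarly $VV^{\dagger}=I$, which yields unitarity. Property (c) is immediate from the same observation that $V$ is ``diagonal'' in the momentum eigenbasis: writing $I\otimes p=\int dq\, q\, I\otimes\ket{q}_{p\,p}\!\bra{q}$, both operators are spectrally decomposed in the same work-storage basis and therefore commute.

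Property (b) is the nontrivial step. The $\mathcal{H}$-side piece is easy: since $[H,V_q]=[H,e^{\mathrm{i}qH}Ue^{-\mathrm{i}qH}]=e^{\mathrm{i}qH}[H,U]e^{-\mathrm{i}qH}$ and $\frac{d}{dq}V_q=\mathrm{i}[H,V_q]$, one gets $[H\otimes I,V]=\int dq\,[H,V_q]\otimes\ket{q}_{p\,p}\!\bra{q}=-\mathrm{i}\int dq\,(\partial_q V_q)\otimes\ket{q}_{p\,p}\!\bra{q}$. For the $x$-side, the key identity is that $x$ generates translations in momentum, $e^{\mathrm{i}\alpha x}\ket{q}_p=\ket{q+\alpha}_p$, from which $[x,\ket{q}_{p\,p}\!\bra{q}]=-\mathrm{i}\,\partial_q(\ket{q}_{p\,p}\!\bra{q})$. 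Inserting this and integrating by parts in $q$ (justified on the dense domain of Schwartz-class work-storage states, where boundary terms vanish) gives
\begin{equation*}
[I\otimes x,V]=\int dq\, V_q\otimes[x,\ket{q}_{p\,p}\!\bra{q}]=\mathrm{i}\int dq\,(\partial_q V_q)\otimes\ket{q}_{p\,p}\!\bra{q},
\end{equation*}
which is precisely $-[H\otimes I,V]$, establishing $[V,H\otimes I+I\otimes x]=0$.

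The main obstacle is the rigorous handling of the continuous, non-normalizable momentum basis: the expression for $V$ is an operator-valued distribution, and both the integration-by-parts in (b) and the collapse of the double integral in (a) need a proper functional-analytic setting (e.g.\ defining $V$ initially as a strong operator limit on a dense domain of smooth, compactly supported work-storage wavefunctions, and then extending by continuity). Once that distributional bookkeeping is in place, the three properties follow from the short computations above; no further structural input is required.
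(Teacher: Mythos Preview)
Your argument is correct. For unitarity and for $[V,I\otimes p]=0$ you do exactly what the paper does. For the energy-conservation condition (b), however, the paper takes a slightly different and cleaner route: rather than computing $[H\otimes I,V]$ and $[I\otimes x,V]$ separately via the distributional identity $[x,\ket{q}_{p\,p}\!\bra{q}]=-\mathrm{i}\,\partial_q(\ket{q}_{p\,p}\!\bra{q})$ and an integration by parts, the paper conjugates $V$ by the full unitary $e^{\mathrm{i}\theta(H\otimes I+I\otimes x)}$, uses $e^{\mathrm{i}\theta x}\ket{q}_p=\ket{q+\theta}_p$ together with a shift $q\mapsto q-\theta$ in the integral to see that the conjugate equals $V$ for every $\theta$, and only then differentiates at $\theta=0$. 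The advantage of the paper's version is precisely that it sidesteps the ``main obstacle'' you identified: one never needs to make sense of the commutator of the unbounded $x$ with the rank-one distribution $\ket{q}_{p\,p}\!\bra{q}$ or to justify boundary terms, since the computation stays at the level of bounded unitaries until the final derivative. Your approach, on the other hand, makes the cancellation mechanism $[H\otimes I,V]=-[I\otimes x,V]$ more transparent and shows explicitly why the $q$-dependence $e^{\mathrm{i}qH}Ue^{-\mathrm{i}qH}$ is exactly what is needed.
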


\begin{proof}
	For any $\theta\in\mathbb{R}$, 
	\begin{align}
		e^{\mathrm{i}\theta(H\otimes I+I\otimes x)}V e^{-\mathrm{i}\theta(H\otimes I+I\otimes x)}=&\int_{-\infty}^\infty dq\ e^{\mathrm{i}(q+\theta)H}Ue^{-\mathrm{i}(q+\theta)H}\otimes e^{\mathrm{i}\theta x}\ket{q}_{p\ p}\bra{q}e^{-\mathrm{i}\theta x} \nonumber\\
		=&\int_{-\infty}^\infty dq\ e^{\mathrm{i}(q+\theta)H}Ue^{-\mathrm{i}(q+\theta)H}\otimes \ket{q+\theta}_{p\ p}\bra{q+\theta} \nonumber\\
		=&\int_{-\infty}^\infty dq\ e^{\mathrm{i}qH}Ue^{-\mathrm{i}qH}\otimes\ket{q}_{p\ p}\bra{q} \nonumber\\
		=&V.
	\end{align}
	By taking the derivative at $\theta=0$, we get $[V, H\otimes I+I\otimes x]=0$.
	Since $[\ket{q}_{p\ p}\bra{q}, p]=0$ for all $q\in\mathbb{R}$, $V$ satisfies $[V, I\otimes p]=0$. 
	Since $U$ is unitary, 
	\begin{align}
		V^\dag V=&\int_{-\infty}^\infty dq\int_{-\infty}^\infty dr\ e^{\mathrm{i}qH}U^\dagger e^{-\mathrm{i}qH}e^{\mathrm{i}rH}Ue^{-\mathrm{i}rH}\otimes \ket{q}_{p\ p}\braket{q|r}_{p\ p}\bra{r} \nonumber\\
		=&\int_{-\infty}^\infty dq\ e^{\mathrm{i}qH}U^\dagger e^{-\mathrm{i}qH}e^{\mathrm{i}qH}Ue^{-\mathrm{i}qH}\otimes \ket{q}_{p\ p}\bra{q} \nonumber\\
		=&I,
	\end{align}
	and we can also prove that $VV^\dag=I$ in the same way.
	This implies that $V$ is unitary. 
	Therefore, $V\in\mathcal{U}^\mathrm{WS}(\mathcal{H}, \mathcal{H}_\mathrm{W})$.
\end{proof}

In Lemma~\ref{lem:SCFQbijection}, we prove that the mapping $\mathcal{C}$ defined as Eq.~\eqref{eq:SCFQcorrespondence} is bijective.

\begin{lemma} \label{lem:SCFQbijection}	
	The mapping $\mathcal{C}$: $\mathcal{U}(\mathcal{H})\to\mathcal{U}^\mathrm{WS}(\mathcal{H}, \mathcal{H}_\mathrm{W})$ defined as Eq.~\eqref{eq:SCFQcorrespondence} is bijective.
\end{lemma}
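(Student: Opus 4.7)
The plan is to prove injectivity and surjectivity of $\mathcal{C}$ separately, leveraging the two defining conditions of a WS-operator (strict energy conservation and invariance under energy translation of the work storage) to extract the pre-image.

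Injectivity is the easy part. If $\mathcal{C}(U) = \mathcal{C}(U')$, then evaluating both sides on a vector of the form $|\psi\rangle \otimes |q_0\rangle_p$ (or, equivalently, extracting the matrix element ${}_p\langle q_0 | \cdot | q_0 \rangle_p$ in the work-storage slot) yields $e^{\mathrm{i}q_0 H} U e^{-\mathrm{i}q_0 H} = e^{\mathrm{i}q_0 H} U' e^{-\mathrm{i}q_0 H}$ for every $q_0 \in \mathbb{R}$, so $U = U'$ follows by choosing $q_0 = 0$.

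For surjectivity, I would take an arbitrary $V \in \mathcal{U}^\mathrm{WS}(\mathcal{H}, \mathcal{H}_\mathrm{W})$ and use the two WS-conditions in succession. First, the condition $[V, I_\mathcal{H} \otimes p] = 0$ means that $V$ preserves the momentum of the work storage, and so admits the decomposition $V = \int_{-\infty}^\infty dq\, V_q \otimes |q\rangle_p{}_p\langle q|$ for some operator-valued family $\{V_q\}_{q\in\mathbb{R}}$ on $\mathcal{H}$. Second, the condition $[V, H\otimes I + I \otimes x] = 0$ is equivalent to the invariance $e^{\mathrm{i}\theta(H\otimes I + I\otimes x)} V e^{-\mathrm{i}\theta(H\otimes I + I\otimes x)} = V$ for every $\theta \in \mathbb{R}$. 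Using the canonical relation $[x,p]=\mathrm{i}$, one has $e^{\mathrm{i}\theta x}|q\rangle_p = |q+\theta\rangle_p$, and substituting the integral expression for $V$ and matching coefficients yields the covariance relation $V_q = e^{\mathrm{i}qH} V_0 e^{-\mathrm{i}qH}$ for (almost) every $q$. Finally, the unitarity $V^\dagger V = V V^\dagger = I$ together with the momentum-diagonal form forces each $V_q$ to be unitary, so in particular $U := V_0 \in \mathcal{U}(\mathcal{H})$. Plugging this back into the integral gives $V = \mathcal{C}(U)$, proving surjectivity.

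The main obstacle is a technical rather than a conceptual one: the momentum eigenstates $|q\rangle_p$ are distributional and the decomposition $V = \int dq\, V_q \otimes |q\rangle_p{}_p\langle q|$ must be interpreted as an operator-valued measure (or validated on a dense domain), and similarly the identity $\int dq\, |q\rangle_p{}_p\langle q| = I_{\mathcal{H}_\mathrm{W}}$ must be used carefully when comparing $V^\dagger V$ with $I$. However, every such manipulation parallels what has already been carried out in the proofs of Lemma~\ref{lem:SCFQcorrespondence} and Proposition~\ref{prop:SCFQextractedwork}, so the same level of formal rigor suffices; once the momentum-diagonal form is granted, the remainder is essentially algebraic and requires no new machinery.
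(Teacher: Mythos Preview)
Your proposal is correct and follows essentially the same approach as the paper's proof: injectivity by reading off the $q$-fibre of $\mathcal{C}(U)$, and surjectivity by first using $[V,I\otimes p]=0$ to obtain the momentum-diagonal decomposition $V=\int dq\,V_q\otimes|q\rangle_p{}_p\langle q|$, then using the exponentiated form of $[V,H\otimes I+I\otimes x]=0$ together with $e^{\mathrm{i}\theta x}|q\rangle_p=|q+\theta\rangle_p$ to deduce $V_q=e^{\mathrm{i}qH}V_0e^{-\mathrm{i}qH}$, and finally using unitarity of $V$ to show $V_0\in\mathcal{U}(\mathcal{H})$. Your acknowledgment of the distributional caveat matches the paper's formal treatment.
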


\begin{proof}
	First, we show that $\mathcal{C}$ is injective.
	Take arbitrary $U_1, U_2\in\mathcal{U}(\mathcal{H})$ that satisfy $\mathcal{C}(U_1)=\mathcal{C}(U_2)$, i.e., 
	\begin{align}
		\int_{-\infty}^\infty dq\ e^{\mathrm{i}qH}U_1 e^{-\mathrm{i}qH}\otimes\ket{q}_{p\ p}\bra{q}=\int_{-\infty}^\infty dq\ e^{\mathrm{i}qH}U_2 e^{-\mathrm{i}qH}\otimes\ket{q}_{p\ p}\bra{q}.
	\end{align}
	By comparing the both sides of this equation, we get $e^{\mathrm{i}qH}U_1 e^{-\mathrm{i}qH}=e^{\mathrm{i}qH}U_2 e^{-\mathrm{i}qH}$ and thus $U_1=U_2$.
	This implies that $\mathcal{C}$ is injective.
	
	Next, we show that $\mathcal{C}$ is surjective.
	Take arbitrary $V\in\mathcal{U}^\mathrm{WS}(\mathcal{H}, \mathcal{H}_\mathrm{W})$. 
	Since $V$ satisfies $[V, I\otimes p]=0$, $V$ can be written as 
	\begin{align}
		V=\int_{-\infty}^\infty dq\ U(q)\otimes\ket{q}_{p\ p}\bra{q}
	\end{align}
	with $U(q)\in\mathcal{B}(\mathcal{H})$. Since $V$ satisfies $[V, H\otimes I+I\otimes x]=0$, for any $\theta\in\mathbb{R}$, 
	\begin{align}
		\int_{-\infty}^\infty dq\ U(q)\otimes\ket{q}_{p\ p}\bra{q}=&V \nonumber\\
		=&e^{\mathrm{i}\theta(H\otimes I+I\otimes x)}Ve^{-\mathrm{i}\theta(H\otimes I+I\otimes x)} \nonumber\\
		=&\int_{-\infty}^\infty dq\ e^{\mathrm{i}(q+\theta)H}U(q)e^{-i(q+\theta)H}\otimes e^{\mathrm{i}\theta x}\ket{q}_{p\ p}\bra{q}e^{-i\theta x} \nonumber\\
		=&\int_{-\infty}^\infty dq\ e^{\mathrm{i}(q+\theta)H}U(q)e^{-\mathrm{i}(q+\theta)H}\otimes \ket{q+\theta}_{p\ p}\bra{q+\theta} \nonumber\\
		=&\int_{-\infty}^\infty dq\ e^{\mathrm{i}qH}U(q-\theta)e^{-\mathrm{i}qH}\otimes\ket{q}_{p\ p}\bra{q}.
	\end{align}
	By comparing the both sides of this equation, we get $U(q)=e^{\mathrm{i}qH}U(q-\theta)e^{-\mathrm{i}qH}$.
	By substituting $\theta=q$, we get $U(q)=e^{\mathrm{i}qH}U(0)e^{-\mathrm{i}qH}$.
	Therefore, $V$ can be written as 
	\begin{align}
		V=\int_{-\infty}^\infty dq\ e^{\mathrm{i}qH}U(0)e^{-\mathrm{i}qH}\otimes\ket{q}_{p\ p}\bra{q}.
	\end{align}
	Since $V$ is unitary,
	\begin{align}
		&\int_{-\infty}^\infty dq\ e^{\mathrm{i}qH}Ie^{-\mathrm{i}qH}\otimes\ket{q}_{p\ p}\bra{q} \nonumber\\
		=&I \nonumber\\
		=&V^\dag V \nonumber\\
		=&\int_{-\infty}^\infty dq\int_{-\infty}^\infty dr\ e^{\mathrm{i}qH}U(0)^\dag e^{-\mathrm{i}qH}e^{\mathrm{i}rH}U(0)e^{-\mathrm{i}rH}\otimes\ket{q}_{p\ p}\braket{q|r}_{p\ p}\bra{r} \nonumber\\
		=&\int_{-\infty}^\infty dq\ e^{\mathrm{i}qH}U(0)^\dagger U(0)e^{-\mathrm{i}qH}\otimes\ket{q}_{p\ p}\bra{q}.
	\end{align}
	By comparing both sides of this equation, we get $U(0)^\dag U(0)=I$. 
	We can prove that $U(0)U(0)^\dag=I$ in the same way.
	This implies that $U(0)$ is unitary. 
	Therefore, for any $V\in\mathcal{U}^\mathrm{WS}(\mathcal{H}, \mathcal{H}_\mathrm{W})$, $V$ can be written as $\mathcal{C}(U(0))$ with $U(0)\in\mathcal{U}(\mathcal{H})$.
	This implies that $\mathcal{C}$ is surjective.
	Therefore, $\mathcal{C}$ is bijective.
\end{proof}

Lemma~\ref{lem:GFSCFQcorrespondence} states that the mapping $\mathcal{C}$ gives a one-to-one correspondence from $\mathcal{U}_{G, F}(\mathcal{H})$ to $\mathcal{U}_{G, F}^\mathrm{WS}(\mathcal{H}, \mathcal{H}_\mathrm{W})$.

\begin{lemma} \label{lem:GFSCFQcorrespondence}
	Let $G$ be a group with a unitary representation $F$ acting on $\mathcal{H}$.
	The mapping $\mathcal{C}$: $\mathcal{U}(\mathcal{H})\to\mathcal{U}^\mathrm{WS}(\mathcal{H}, \mathcal{H}_\mathrm{W})$ defined as Eq.~\eqref{eq:SCFQcorrespondence} satisfies $\mathcal{C}(\mathcal{U}_{G, F}(\mathcal{H}))=\mathcal{U}_{G, F}^\mathrm{WS}(\mathcal{H}, \mathcal{H}_\mathrm{W})$.
\end{lemma}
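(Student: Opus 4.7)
The plan is to establish the two set inclusions $\mathcal{C}(\mathcal{U}_{G, F}(\mathcal{H})) \subseteq \mathcal{U}_{G, F}^\mathrm{WS}(\mathcal{H}, \mathcal{H}_\mathrm{W})$ and $\mathcal{U}_{G, F}^\mathrm{WS}(\mathcal{H}, \mathcal{H}_\mathrm{W}) \subseteq \mathcal{C}(\mathcal{U}_{G, F}(\mathcal{H}))$ separately, each by exploiting the explicit integral representation of the Kitaev construction. Throughout, I will use freely that $H$ commutes with $F(G)$, as well as the three already-established facts: (i) $\mathcal{C}(U) \in \mathcal{U}^\mathrm{WS}(\mathcal{H}, \mathcal{H}_\mathrm{W})$ for every $U \in \mathcal{U}(\mathcal{H})$ (Lemma~\ref{lem:SCFQcorrespondence}); (ii) $\mathcal{C}$ is a bijection onto $\mathcal{U}^\mathrm{WS}(\mathcal{H}, \mathcal{H}_\mathrm{W})$ (Lemma~\ref{lem:SCFQbijection}); and (iii) the uniqueness argument in the proof of that bijectivity, which shows that two operators of the form $\int dq\, A(q) \otimes \ket{q}_{p\ p}\bra{q}$ agree if and only if $A(q)$ agrees for every $q$.

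For the forward inclusion, I would take an arbitrary $U \in \mathcal{U}_{G, F}(\mathcal{H})$ and show that $\mathcal{C}(U)$ commutes with $F(g) \otimes I_{\mathcal{H}_\mathrm{W}}$ for every $g \in G$. Because $[U, F(g)] = 0$ and $[H, F(g)] = 0$, the integrand satisfies $[e^{\mathrm{i}qH} U e^{-\mathrm{i}qH}, F(g)] = 0$ for every $q \in \mathbb{R}$; combined with $[\ket{q}_{p\ p}\bra{q}, I_{\mathcal{H}_\mathrm{W}}] = 0$, this yields $[\mathcal{C}(U), F(g) \otimes I_{\mathcal{H}_\mathrm{W}}] = 0$. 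Together with Lemma~\ref{lem:SCFQcorrespondence}, this places $\mathcal{C}(U)$ in $\mathcal{U}_{G, F}^\mathrm{WS}(\mathcal{H}, \mathcal{H}_\mathrm{W})$.

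For the reverse inclusion, I would start with $V \in \mathcal{U}_{G, F}^\mathrm{WS}(\mathcal{H}, \mathcal{H}_\mathrm{W})$. By Lemma~\ref{lem:SCFQbijection}, there exists a unique $U \in \mathcal{U}(\mathcal{H})$ with $V = \mathcal{C}(U)$, so the task reduces to showing $U$ commutes with every $F(g)$. Expanding the identity $(F(g) \otimes I_{\mathcal{H}_\mathrm{W}})\mathcal{C}(U) = \mathcal{C}(U)(F(g) \otimes I_{\mathcal{H}_\mathrm{W}})$ and using $[H, F(g)] = 0$, both sides become operators of the form $\int dq\, A_g(q) \otimes \ket{q}_{p\ p}\bra{q}$ whose integrands are $F(g)\, e^{\mathrm{i}qH}Ue^{-\mathrm{i}qH}$ and $e^{\mathrm{i}qH}Ue^{-\mathrm{i}qH}\, F(g)$ respectively. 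The uniqueness lemma for such momentum-diagonal representations forces these integrands to agree for every $q$; setting $q = 0$ gives $[U, F(g)] = 0$, so $U \in \mathcal{U}_{G, F}(\mathcal{H})$ and hence $V \in \mathcal{C}(\mathcal{U}_{G, F}(\mathcal{H}))$.

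The only mildly subtle point is the integrand-extraction step in the reverse inclusion, which must be handled with the same care as the corresponding argument in the proof of Lemma~\ref{lem:SCFQbijection}; one pairs both sides with $I_{\mathcal{H}} \otimes \ket{r}_{p\ p}\bra{r'}$ and takes partial traces on $\mathcal{H}_\mathrm{W}$ to isolate $A_g(q)$ at each $q$. Apart from that, the argument is essentially a direct verification, and no new ideas beyond the intertwining property $[H, F(g)] = 0$ are needed.
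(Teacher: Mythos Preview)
Your proposal is correct and follows essentially the same approach as the paper. The paper packages the computation into a single conjugation identity $(F(g)\otimes I)^\dag\,\mathcal{C}(U)\,(F(g)\otimes I)=\mathcal{C}(F(g)^\dag U F(g))$ and then reads off both inclusions (using injectivity of $\mathcal{C}$ for the reverse), whereas you verify the two inclusions separately via the integrand; but the underlying mechanism---pushing $F(g)$ past $e^{\pm \mathrm{i}qH}$ using $[H,F(g)]=0$ and then invoking the bijectivity/uniqueness from Lemma~\ref{lem:SCFQbijection}---is identical.
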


\begin{proof}
	For any $U\in\mathcal{U}(\mathcal{H})$ and $g\in G$, we obtain
	\begin{align}
		(F(g)\otimes I)^\dag \mathcal{C}(U)(F(g)\otimes I)=&\int_{-\infty}^\infty dq\ F(g)^\dag e^{\mathrm{i}qH}Ue^{-\mathrm{i}qH}F(g)\otimes\ket{q}_{p\ p}\bra{q} \nonumber\\
		=&\int_{-\infty}^\infty dq\ e^{\mathrm{i}qH}F(g)^\dag UF(g)e^{-\mathrm{i}qH}\otimes\ket{q}_{p\ p}\bra{q} \nonumber\\
		=&\mathcal{C}(F(g)^\dag UF(g)). \label{eq:GFSCFQcorresp}
	\end{align}
	For any $U\in\mathcal{U}_{G, F}(\mathcal{H})$, from Eq.~\eqref{eq:GFSCFQcorresp}, $(F(g)\otimes I)^\dag\mathcal{C}(U)(F(g)\otimes I)=\mathcal{C}(F(g)^\dag UF(g))=\mathcal{C}(U)$ for all $g\in G$, and thus $\mathcal{C}(U)\in\mathcal{U}_{G, F}^\mathrm{WS}(\mathcal{H}, \mathcal{H}_\mathrm{W})$.
	On the other hand, for any $V\in\mathcal{U}_{G, F}^\mathrm{WS}(\mathcal{H}, \mathcal{H}_\mathrm{W})$, from Eq.~\eqref{eq:GFSCFQcorresp}, $F(g)^\dag\mathcal{C}^{-1}(V)F(g)=\mathcal{C}^{-1}((F(g)\otimes I)^\dag V(F(g)\otimes I))=\mathcal{C}^{-1}(V)$ for all $g\in G$, and thus $\mathcal{C}^{-1}(V)\in\mathcal{U}_{G, F}(\mathcal{H})$. 
	Therefore, $\mathcal{C}$ satisfies $\mathcal{C}(\mathcal{U}_{G, F}(\mathcal{H}))=\mathcal{U}_{G, F}^\mathrm{WS}(\mathcal{H}, \mathcal{H}_\mathrm{W})$.
\end{proof}

Lemma~\ref{lem:TSCFQcorrespondence} states that the mapping $\mathcal{C}$ gives a one-to-one correspondence from $\mathcal{U}_\mathcal{T}(\mathcal{H})$ to $\mathcal{U}_\mathcal{T}^\mathrm{WS}(\mathcal{H}, \mathcal{H}_\mathrm{W})$.

\begin{lemma} \label{lem:TSCFQcorrespondence}
	The mapping $\mathcal{C}: \mathcal{U}(\mathcal{H})\to\mathcal{U}^\mathrm{WS}(\mathcal{H}, \mathcal{H}_\mathrm{W})$ defined as Eq.~\eqref{eq:SCFQcorrespondence} satisfies $\mathcal{C}(\mathcal{U}_\mathcal{T}(\mathcal{H}))=\mathcal{U}_\mathcal{T}^\mathrm{WS}(\mathcal{H}, \mathcal{H}_\mathrm{W})$.
\end{lemma}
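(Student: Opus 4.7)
The plan is to mirror the proof of Lemma~\ref{lem:GFSCFQcorrespondence}: establish a conjugation identity of the form $\mathcal{T}^{-1}\mathcal{C}(U)\mathcal{T} = \mathcal{C}(\mathcal{T}_\mathcal{H}^{-1} U\mathcal{T}_\mathcal{H})$, and then combine it with the injectivity of $\mathcal{C}$ (Lemma~\ref{lem:SCFQbijection}) to conclude both inclusions. The subtlety, relative to the group case, is that $\mathcal{T}$ is anti-unitary and, per the discussion in Sec.~\ref{sec:work_storage}, acts on the composite system $\mathcal{H}\otimes\mathcal{H}_\mathrm{W}$. Writing $\mathcal{T}=\mathcal{T}_\mathcal{H}\otimes\mathcal{T}_{\mathcal{H}_\mathrm{W}}$ with respect to a product basis built from the reference basis of $\mathcal{T}$ on $\mathcal{H}$ and the position basis on $\mathcal{H}_\mathrm{W}$, complex conjugation flips the sign of momentum, so $\mathcal{T}_{\mathcal{H}_\mathrm{W}}\ket{q}_p=\ket{-q}_p$.

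The key computation uses that, in the hypothesis of the surrounding theorems, $[H,\mathcal{T}_\mathcal{H}]=0$, so $\mathcal{T}_\mathcal{H}^{-1}e^{\mathrm{i}qH}=e^{-\mathrm{i}qH}\mathcal{T}_\mathcal{H}^{-1}$. Applied to the Kitaev construction \eqref{eq:SCFQcorrespondence}, this gives
\begin{align}
\mathcal{T}^{-1}\mathcal{C}(U)\mathcal{T}
&=\int_{-\infty}^\infty dq\ e^{-\mathrm{i}qH}(\mathcal{T}_\mathcal{H}^{-1}U\mathcal{T}_\mathcal{H})e^{\mathrm{i}qH}\otimes\ket{-q}_{p\ p}\bra{-q} \nonumber\\
&=\int_{-\infty}^\infty dq'\ e^{\mathrm{i}q'H}(\mathcal{T}_\mathcal{H}^{-1}U\mathcal{T}_\mathcal{H})e^{-\mathrm{i}q'H}\otimes\ket{q'}_{p\ p}\bra{q'}
=\mathcal{C}(\mathcal{T}_\mathcal{H}^{-1}U\mathcal{T}_\mathcal{H}),
\end{align}
after the change of variable $q'=-q$. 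This is the analog of the identity used in \eqref{eq:GFSCFQcorresp}.

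With this identity in hand, the two inclusions follow exactly as in Lemma~\ref{lem:GFSCFQcorrespondence}. For $U\in\mathcal{U}_\mathcal{T}(\mathcal{H})$, the right-hand side equals $\mathcal{C}(U)$, showing that $\mathcal{C}(U)$ commutes with $\mathcal{T}$ on the composite system, and since $\mathcal{C}(U)\in\mathcal{U}^\mathrm{WS}(\mathcal{H},\mathcal{H}_\mathrm{W})$ by Lemma~\ref{lem:SCFQcorrespondence}, we obtain $\mathcal{C}(U)\in\mathcal{U}_\mathcal{T}^\mathrm{WS}(\mathcal{H},\mathcal{H}_\mathrm{W})$. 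Conversely, for any $V\in\mathcal{U}_\mathcal{T}^\mathrm{WS}(\mathcal{H},\mathcal{H}_\mathrm{W})$, surjectivity of $\mathcal{C}$ (Lemma~\ref{lem:SCFQbijection}) yields a unique $U\in\mathcal{U}(\mathcal{H})$ with $V=\mathcal{C}(U)$; the identity then gives $\mathcal{C}(\mathcal{T}_\mathcal{H}^{-1}U\mathcal{T}_\mathcal{H})=\mathcal{T}^{-1}V\mathcal{T}=V=\mathcal{C}(U)$, and injectivity of $\mathcal{C}$ forces $\mathcal{T}_\mathcal{H}^{-1}U\mathcal{T}_\mathcal{H}=U$, i.e.\ $U\in\mathcal{U}_\mathcal{T}(\mathcal{H})$.

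The main obstacle is purely bookkeeping: one has to choose the reference basis on $\mathcal{H}_\mathrm{W}$ so that the natural factorization $\mathcal{T}=\mathcal{T}_\mathcal{H}\otimes\mathcal{T}_{\mathcal{H}_\mathrm{W}}$ holds, and keep track of the sign flip $\ket{q}_p\mapsto\ket{-q}_p$ that compensates, via the change of variable, for the sign flip of $e^{\mathrm{i}qH}$ produced by pulling $\mathcal{T}_\mathcal{H}$ through the exponentials. The rest is a direct transcription of the group-symmetry argument, and no new machinery beyond Lemmas~\ref{lem:SCFQcorrespondence} and \ref{lem:SCFQbijection} is needed.
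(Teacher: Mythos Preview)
Your proof is correct and follows essentially the same approach as the paper: establish the conjugation identity $\mathcal{T}^{-1}\mathcal{C}(U)\mathcal{T}=\mathcal{C}(\mathcal{T}^{-1}U\mathcal{T})$ via the sign flip $\ket{q}_p\mapsto\ket{-q}_p$ and the change of variable $q\mapsto -q$, then deduce both inclusions using bijectivity of $\mathcal{C}$. The only difference is cosmetic: you explicitly factor $\mathcal{T}=\mathcal{T}_\mathcal{H}\otimes\mathcal{T}_{\mathcal{H}_\mathrm{W}}$, whereas the paper overloads the symbol $\mathcal{T}$ for both the system and composite time-reversal.
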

	
\begin{proof}
	For any $U\in\mathcal{U}(\mathcal{H})$, we obtain
	\begin{align}
		\mathcal{T}^{-1}\mathcal{C}(U)\mathcal{T}=&\int_{-\infty}^\infty dq\ \mathcal{T}^{-1}e^{\mathrm{i}qH}Ue^{-\mathrm{i}qH}\mathcal{T}\otimes\mathcal{T}^{-1}\ket{q}_{p\ p}\bra{q}\mathcal{T} \nonumber\\
		=&\int_{-\infty}^\infty dq\ e^{-\mathrm{i}qH}\mathcal{T}^{-1}U\mathcal{T}e^{\mathrm{i}qH}\otimes\ket{-q}_{p\ p}\bra{-q} \nonumber\\
		=&\int_{-\infty}^\infty dq\ e^{\mathrm{i}qH}\mathcal{T}^{-1}U\mathcal{T}e^{-\mathrm{i}qH}\otimes\ket{q}_{p\ p}\bra{q} \nonumber\\
		=&\mathcal{C}(\mathcal{T}^{-1}U\mathcal{T}). \label{eq:TSCFQcorresp}
	\end{align}
	For any $U\in\mathcal{U}_\mathcal{T}(\mathcal{H})$, from Eq.~\eqref{eq:TSCFQcorresp}, $\mathcal{T}^{-1}\mathcal{C}(U)\mathcal{T}=\mathcal{C}(\mathcal{T}^{-1}U\mathcal{T})=\mathcal{C}(U)$, and thus $\mathcal{C}(U)\in\mathcal{U}_\mathcal{T}^\mathrm{WS}(\mathcal{H}, \mathcal{H}_\mathrm{W})$.
	On the other hand, for any $V\in\mathcal{U}_\mathcal{T}^\mathrm{WS}(\mathcal{H}, \mathcal{H}_\mathrm{W})$, from Eq.~\eqref{eq:TSCFQcorresp}, $\mathcal{T}^{-1}\mathcal{C}^{-1}(V)\mathcal{T}=\mathcal{C}^{-1}(\mathcal{T}^{-1}V\mathcal{T})=\mathcal{C}^{-1}(V)$, and thus $\mathcal{C}^{-1}(V)\in\mathcal{U}_\mathcal{T}(\mathcal{H})$. 
	Therefore, $\mathcal{C}$ satisfies $\mathcal{C}(\mathcal{U}_\mathcal{T}(\mathcal{H}))=\mathcal{U}_\mathcal{T}^\mathrm{WS}(\mathcal{H}, \mathcal{H}_\mathrm{W})$.
\end{proof}


\end{document}